\documentclass[acmsmall,dvipsnames,x11names,svgnames,cleveref,screen,nonacm]{acmart}

\hypersetup{breaklinks=true}

\usepackage[utf8]{inputenc}
\usepackage[T1]{fontenc}
\usepackage{microtype}

\clubpenalty=10000
\widowpenalty=10000
\displaywidowpenalty=10000

\usepackage{graphicx}
\usepackage{amsmath}
\usepackage{amsfonts}
\usepackage{adjustbox}
\usepackage{csquotes}
\usepackage{stmaryrd}
\usepackage{mathtools}
\usepackage{todonotes}
\usepackage{xspace}
\usepackage{color}
\usepackage{xcolor}
\usepackage[capitalise]{cleveref}
\usepackage{caption}
\usepackage{subcaption}
\usepackage{yfonts}
\usepackage[official]{eurosym}
\usepackage{wrapfig}
\usepackage{xfrac}
\usepackage{nicefrac}
\usepackage{enumitem}
\usepackage{multirow}
\usepackage{makecell}
\usepackage[twoside]{rotating}
\usepackage{tablefootnote}

\usepackage{tikz}
\usetikzlibrary{patterns,decorations.pathreplacing,arrows,arrows.meta,decorations.pathmorphing,positioning,fit,trees,shapes,shadows,automata,calc}

\usepackage{macros}
\newtheorem*{convention}{Convention}{\itshape}{\rmfamily}
\newtheorem*{claim}{Claim}{\itshape}{\rmfamily}
\AtBeginDocument{\theoremstyle{acmdefinition}\newtheorem{remark}[theorem]{Remark}}

\AtBeginDocument{%
  \providecommand\BibTeX{{%
    \normalfont B\kern-0.5em{\scshape i\kern-0.25em b}\kern-0.8em\TeX}}}

\bibliographystyle{ACM-Reference-Format}
\citestyle{acmauthoryear}

\begin{document}

\title{Weighted Programming}
\subtitle{A Programming Paradigm for Specifying Mathematical Models}

\author{Kevin Batz}
\email{kevin.batz@cs.rwth-aachen.de}
\orcid{0000-0001-8705-2564}
\affiliation{%
    \institution{RWTH Aachen University}
    \city{Aachen}
    \country{Germany}
}

\author{Adrian Gallus}
\email{adrian.gallus@rwth-aachen.de}
\orcid{0000-0002-2176-5075}
\affiliation{%
    \institution{RWTH Aachen University}
    \city{Aachen}
    \country{Germany}
}

\author{Benjamin Lucien Kaminski}
\email{b.kaminski@ucl.ac.uk}
\orcid{0000-0001-5185-2324}
\affiliation{%
	\institution{Saarland University, Saarland Informatics Campus}
	\city{Saarbrücken}
	\country{Germany}
}
\affiliation{%
	\institution{University College London}
	\city{London}
	\country{United Kingdom}
}

\author{Joost-Pieter Katoen}
\email{katoen@cs.rwth-aachen.de}
\orcid{0000-0002-6143-1926}
\affiliation{%
    \institution{RWTH Aachen University}
    \city{Aachen}
    \country{Germany}
}

\author{Tobias Winkler}
\email{tobias.winkler@cs.rwth-aachen.de}
\orcid{0000-0003-1084-6408}
\affiliation{%
    \institution{RWTH Aachen University}
    \city{Aachen}
    \country{Germany}
}

\renewcommand{\shortauthors}{Batz \and Gallus \and Kaminski \and Katoen \and Winkler}

\begin{CCSXML}
<ccs2012>
<concept>
<concept_id>10003752.10003753</concept_id>
<concept_desc>Theory of computation~Models of computation</concept_desc>
<concept_significance>500</concept_significance>
</concept>
<concept>
<concept_id>10003752.10003790.10003806</concept_id>
<concept_desc>Theory of computation~Programming logic</concept_desc>
<concept_significance>500</concept_significance>
</concept>
<concept>
<concept_id>10003752.10010124.10010131.10010133</concept_id>
<concept_desc>Theory of computation~Denotational semantics</concept_desc>
<concept_significance>500</concept_significance>
</concept>
<concept>
<concept_id>10003752.10010124.10010138.10010139</concept_id>
<concept_desc>Theory of computation~Invariants</concept_desc>
<concept_significance>100</concept_significance>
</concept>
<concept>
<concept_id>10003752.10010124.10010138.10010141</concept_id>
<concept_desc>Theory of computation~Pre- and post-conditions</concept_desc>
<concept_significance>100</concept_significance>
</concept>
<concept>
<concept_id>10003752.10010124.10010131</concept_id>
<concept_desc>Theory of computation~Program semantics</concept_desc>
<concept_significance>500</concept_significance>
</concept>
</ccs2012>
\end{CCSXML}

\ccsdesc[500]{Theory of computation~Models of computation}
\ccsdesc[500]{Theory of computation~Programming logic}
\ccsdesc[500]{Theory of computation~Denotational semantics}
\ccsdesc[100]{Theory of computation~Invariants}
\ccsdesc[100]{Theory of computation~Pre- and post-conditions}
\ccsdesc[500]{Theory of computation~Program semantics}

\keywords{weighted programming, denotational semantics, weakest preconditions}

\titlenote{Accepted for publication (\url{https://doi.org/10.1145/3527310}).}


\begin{abstract}
	We study weighted programming, a programming paradigm for specifying mathematical models.
	More specifically, the weighted programs we investigate are like usual imperative programs with two additional features: (1) nondeterministic \emph{branching} and (2) \emph{weighting} execution traces.
	Weights can be numbers but also other objects like words from an alphabet, polynomials, formal power series, or cardinal numbers.
	We argue that weighted programming as a paradigm can be used to specify mathematical models beyond probability distributions (as is done in probabilistic programming).
	
	We develop weakest-precondition- and weakest-liberal-precondition-style calculi \`{a} la Dijkstra for reasoning about mathematical models specified by weighted programs.
	We present several case studies.
	For instance, we use weighted programming to model the \emph{ski rental problem} --- an \emph{optimization problem}.
	We model not only the optimization problem itself, but also the best deterministic online algorithm for solving this problem as weighted programs.
	By means of weakest-precondition-style reasoning, we can determine the \emph{competitive ratio} of the online algorithm \emph{on source code level}.
\end{abstract}

%
\maketitle
%

\section{Introduction and Overview}
\label{sec:introduction}

Weighted programs are usual programs with two distinct features: (1) nondeterministic \emph{branching} and (2) the ability to \emph{weight} the current execution trace.
A prime and very well-studied example of weighted programs are \emph{probabilistic programs} which can branch their execution depending on the outcome of a random coin flip.
For instance, the program $\PCHOICE{C_1}{\tfrac{1}{3}}{C_2}$ weights the trace that executes $C_1$ with probability $\sfrac{1}{3}$ and the trace executing $C_2$ with $1 - \sfrac{1}{3} = \sfrac{2}{3}$.
The weighted outcomes of the two branches are then --- simply put --- summed together.

Besides applications as \emph{randomized algorithms} for speed-up in solving computationally intractable problems, probabilistic programming has over the past decade gained rapidly increasing attention in machine learning.
There, probabilistic programs serve as \emph{intuitive algorithmic descriptions} of complicated probability distributions.
As~\citet{DBLP:conf/icse/GordonHNR14} put it:%
\begin{quote}
\slshape
\enquote{The goal of probabilistic programming is to enable probabilistic modeling \textnormal{\gray{[\dots]}}
to be accessible to the working programmer, who has sufficient domain expertise, but perhaps not enough expertise in probability theory \textnormal{\gray{[\dots]}}.}%
\end{quote}%
\noindent{}%
In this paper, we consider more general weights than probabilities --- in fact: more general than numbers.
We should stress that \emph{we are not the first} to consider weighted programs (see e.g.~\cite{DBLP:conf/mfps/AguirreK20,DBLP:conf/esop/BrunelGMZ14,DBLP:conf/esop/GaboardiKOS21} and see \Cref{sec:related} for detailed comparisons).

Our goal was, however, \emph{not} to merely go from probabilistic to weighted programming, just for the sake of generalization.
Instead, we advocate \emph{weighted programming} as a \emph{programming paradigm for specifying mathematical models}.
In particular, our prime goal is to take a step towards making mathematical modeling more accessible to people with a programming background.
In a nutshell:
		\begin{quote}
		\textbf{Render mathematical modeling accessible to the working programmer},\newline 
		who has sufficient domain expertise, 
		but perhaps not enough expertise\newline in the respective mathematical theory.%
		\end{quote}%
Towards that goal, let us have a look at how such modeling could work in practice.

\subsection*{Weighted Programming as a Paradigm for Specifying Mathematical Models}

As a motivating example, we consider the classical \emph{Ski Rental Problem}~\cite{DBLP:series/txtcs/Komm16}, a classical \emph{optimization problem}, studied also in the context of online algorithms and competitive analysis~\cite{DBLP:books/daglib/0097013}.
A precise \emph{textual} description of the problem is as follows:%
\begin{quotation}
	\itshape%
	\textbf{The Scenario:} A person does not own a pair of skis but is going on a skiing trip for $\varvaclen$ days. 
	At the beginning of each day, the person can chose between two options:
	Either \emph{rent} a pair of skis, costing $1\eur$ for that day; or \emph{buy} a pair of skis, costing $\varbuycost\eur$ (and then go skiing for all subsequent days free of charge).
	\smallskip
	
	\noindent{}%
	\textbf{The Question:} What is the optimal (i.e.\ minimal) amount of money that the person has to spend for a pair of skis for the entire length of the trip?
\end{quotation}%
\begin{wrapfigure}[16]{r}{0.29\textwidth}
  \begin{minipage}{.99\linewidth}
    \abovedisplayskip=-1.125em%
    \belowdisplayskip=0pt%
\begin{align*}
	& \hspace*{-1.7em}\textbf{\textit{The Scenario:}}\\
				\textnormal{\gray{\footnotesize 1:}}\quad &\WHILE{\varvaclen > 0} \\
				%
				%
				\textnormal{\gray{\footnotesize 2:}}\quad & \qquad \ASSIGN{\varvaclen}{\varvaclen - 1}\fatsemi \\
				%
				%
				& \qquad \{\\
				\textnormal{\gray{\footnotesize 3:}}\quad & \qquad\qquad\WEIGH{1}\\
				\textnormal{\gray{\footnotesize 4:}}\quad & \qquad \} \mathrel{\BranchSymbol} \{ \\
				\textnormal{\gray{\footnotesize 5:}}\quad & \qquad \qquad \COMPOSE{\WEIGH{\varbuycost}}{\ASSIGN{\varvaclen}{0}} \\
				& \qquad \} \\
				%
				& \}\\[1em]
				%
	& \hspace*{-1.7em}\textbf{\textit{The Question:}}\\
	& \hspace*{0em}\wp{\mathtt{opt}}{\sone} \eeq ?
\end{align*}%
\normalsize%
  \end{minipage}%
\end{wrapfigure}%
\noindent{}%
Using weighted programming, we can \emph{model the scenario of this optimization problem} in a quite natural, simple, and intuitive way by the weighted program $\mathtt{opt}$ on the right.
Intuitively, this weighted program tests each day whether the vacation is already over (Line 1).
If not, it does the following (Lines 2--5): 
First, it decrements the vacation length by~1 day (Line 2).
It then models the two options that the person has for each day by a \emph{nondeterministic branching} (Line 4).
In the left branch, it realizes the first option: Paying $1\eur$ (Line 3).
In the right branch, it realizes the second option: Paying $y\eur$ and then setting the remaining vacation length to $0$ (Line 5), because with respect to having to pay for a pair of skis (not with respect to the joy of skiing) the vacation has effectively ended.

If we now want to answer \emph{the question} of the optimization problem, we first chose a suitable \emph{semiring}.
In this setting of optimizing (i.e.\ minimizing) incurred cost, the \emph{tropical} semiring $\tropringdef$ comes to mind.
The carrier set of this semiring are the extended natural numbers.
The addition ($\sadd$) in this semiring is taking the minimum of two numbers.
In the program, this is reflected by the fact that in Line 4 we would like to make whatever choice is cheaper for us.
The multiplication ($\smul$) is the standard \emph{addition} of numbers.
In the program, this is reflected, for example, in Line 3, where we add a 1 to the current execution trace.

For actually \emph{answering the question} of the optimization problem, 
we determine a weakest precondition of sorts, but interpreted here in a more general \enquote{quantitative} setting, with respect to post\enquote{condition} $\sone$ --- the multiplicative identity of the semiring; in this case, $\sone$ is the natural number~$0$.
Intuitively, this will for each path multiply together the weights along the path (recall: semiring multiplication is natural number addition).
Then, we sum over the weights of all paths (recall: semiring summation is natural number minimization), thus yielding the accumulated costs along the least expensive path.
As a result, our weakest-precondition-style calculus will yield%
\begin{align*}
	\wp{\mathtt{opt}}{\sone} \eeq n \sadd y~,
\end{align*}
i.e.\ the minimum of the numbers $n$ and $y$.
This is precisely the solution to our optimization problem:
If the trip length $n$ is larger than the cost of buying skis, we should buy skis which will cost us $y\eur$.
If $n$ is smaller than $y$, we should instead rent each day (at cost $1\eur/\textnormal{day}$) which will cost us $n\eur$.

Toward competitive analysis, we can now model the cost of a deterministic online algorithm $\mathtt{onl}$ that solves the ski rental problem and determine $\wp{\mathtt{onl}}{\sone}$.
Then, we can compute the ratio $\wp{\mathtt{onl}}{\sone} / \,\wp{\mathtt{opt}}{\sone}$ to determine the \emph{competitive ratio} of the online algorithm $\mathtt{onl}$.

We stress that program $\mathtt{opt}$ from above is \emph{not} strictly speaking executable.
For that, one would need some sort of scheduler who determinizes the nondeterministic choices.
It is also not immediately clear what weighting the individual execution traces on a physical computer would mean.
Instead, the above weighted program \emph{encodes a mathematical model}, namely an optimization problem, by means of an \emph{algorithmic representation} --- much in the spirit of a probabilistic program that is also not necessarily meant to be executed but instead models a probability distribution.

Lastly, we would like to note that determining weakest preconditions is related to \emph{inference} in probabilistic programming~\cite{DBLP:conf/icse/GordonHNR14}, where one is concerned with, e.g., determining the probability that the probabilistic program establishes some postcondition.

\subsection*{Contributions}
Our main technical contribution is the --- to the best of our knowledge --- first weakest precondition-style reasoning framework for weighted programs, which conservatively extends both Dijkstra's classical weakest preconditions and weakest \emph{liberal} weakest preconditions. 
Our weakest pre calculi capture the semantics of \emph{unbounded and potentially nonterminating loops} effortlessly, while other works explicitly avoid partiality (see \Cref{sec:related}).
Our weakest liberal preweightings even give a nuanced semantics to nonterminating runs in order to reason about such traces as well.
We demonstrate the applicability of our framework by several examples.

To achieve a high degree of generality and applicability, our framework is parameterized by a monoid of \emph{weights} for weighting computations of programs and so-called \emph{weightings} that take over the role of \enquote{quantitative assertions}. We prove well-definedness and healthiness conditions of our calculi, provide formal connections to an operational semantics, and develop easy-to-apply invariant-based reasoning techniques. 

\subsection*{Outline}
\cref{sec:preliminaries} provides preliminaries on monoids and semirings. \cref{sec:wgcl} introduces the syntax and operational semantics of weighted programs. 
We introduce our weakest (liberal) preweighting calculi for reasoning about weighted programs in \cref{sec:weighted-semantics}. 
Invariant-style reasoning for loops is presented in \cref{sec:loop-verification}. In \cref{sec:app}, we demonstrate the efficacy of our framework by means of several examples.
In \Cref{sec:related}, we give an overview of and a comparison to other works that study weighted computations.
We conclude in \Cref{sec:conclusion}.

\section{Monoids and Semirings}
\label{sec:preliminaries}
The weights occurring in our programs are elements from a monoid.
Intuitively, this is because we would like to \enquote{multiply} the weights on a program's computation trace together in order to obtain the total weight of that trace.
In particular, this multiplication should be associative and allow for neutral, i.e. effectless weighting.
In \cref{sec:weightings} further below, we introduce \emph{monoid modules} that are another important ingredient for our theory. 
\begin{definition}[Monoids]
	A \emph{monoid} $\Monoid = (\monoid,\, \monop,\, \monone)$ consists of a \emph{carrier set} $\monoid$, an \emph{operation} $\monop \colon \monoid \times \monoid \to \monoid$, and an \emph{identity} $\monone \in \monoid$, such that for all $a,b,c \in \monoid$,%
	\begin{enumerate}
	\item the operation $\monop$ is associative, i.e.\qquad $a \monop ( b \monop c) \eeq ( a \monop b ) \monop c$,\quad and
	\item $\monone$ is an identity with respect to $\monop$, i.e.\qquad $a \monop \monone \eeq \monone \monop a \eeq a$.
	\end{enumerate}
    The monoid $\Monoid$ is called \emph{commutative} if moreoever $a \monop b = b \monop a$ holds.
    \qedtriangle
\end{definition}%
\noindent{}%
Important examples of monoids are the \emph{words monoid} $(\ab^*,\, \:\langConcat\:,\, \epsilon)$ over alphabet $\ab \neq \emptyset$ and the \emph{probability monoid} $([0,1],\,  \:\cdot\:,\, 1)$ (the latter is commutative).
Another algebraic structure that plays a key role in this paper are \emph{semirings}.
Even though they are not strictly required for our theory, they render the application of our framework easier and more intuitive.
This is because every semiring is a monoid module over itself, which we explain in more detail in \cref{sec:weightings}.
Our definition of semirings is stated below; for an in-depth introduction, we refer to~\cite[Ch.~1, 2]{Droste2009}.
As usual, multiplication $\smul$ binds stronger than addition $\sadd$ and we omit parentheses accordingly.%
\begin{definition}[Semirings]
	A \emph{semiring} $\sringdef$ consists of a \emph{carrier set} $\sring$, an \emph{addition}~${\sadd}\colon \sring \times \sring \to \sring$, a \emph{multiplication}~${\smul}\colon \sring \times \sring \to \sring$, a \emph{zero} $\snull \in \sring$, and a \emph{one} $\sone \in \sring$, 
	such that%
	\begin{enumerate}
	\item $(\sring,\, {\sadd},\, \snull)$ forms a \emph{commutative monoid};
	\item $(\sring,\, {\smul},\, \sone)$ forms a (possibly non-commutative) \emph{monoid};
	\item multiplication \emph{distributes} over addition, i.e.\ for all $a,b,c \in \sring$,%
	 \begin{align*}
		a \ssmul (b \sadd c) \eeq a \smul b \ssadd a \smul c
		\qand
		(a \sadd b) \ssmul c \eeq a \smul c \ssadd b \smul c~; \qqand
	\end{align*}%
	\item multiplication by zero \emph{annihilates} $\sring$, i.e.\ \quad $~\snull \smul a \eeq a \smul \snull \eeq \snull$~.\qedtriangle
	\end{enumerate}%
\end{definition}%
\noindent{}%
Our \enquote{cheat sheet} in \Cref{fig:module-overview} lists various example semirings along with possible applications in a weighted programming context.
Further well-known semirings not considered specifically in this paper include
(i) the \emph{Łukasiewicz} semiring~\cite{lukasiewicz-semiring-2003,lukasiewicz-semiring-2005} motivated by multivalued logics and related to tropical geometry~\cite{DBLP:journals/fss/GavalecNS15},
(ii) the \emph{resolution} semiring~\cite{DBLP:phd/hal/Bagnol14} from proof theory,
(iii) the \emph{categorial} and \emph{lexicographic} semirings~\cite{DBLP:journals/coling/SproatYSR14} used in natural language processing,
(iv) the \emph{thermodynamic} semirings~\cite{DBLP:journals/corr/abs-1108-2874,Marcolli_2014} employed in information theory, and
(v) the \emph{confidence-probability} semiring~\cite{WirschingHuberKoelbl2010}.
We leave the study of applications of weighted programs over these semirings for future work.
Finally, we mention that more complicated semirings can be created from existing ones through algebraic constructions like matrices, tensors, polynomials, or formal power series.

\begin{table}
    \centering
	\caption{Weighted programming cheat sheet.}
	\label{fig:module-overview}
	\raggedright
	\textbf{Optimization} via the \textbf{Tropical semiring} ${(\NatsInf,\, {\min},\, {+},\, {+\infty},\, 0)}$ \\[1ex]
	{
		\scriptsize
		\begin{tabular}{@{}lll@{}}
			\textbf{Weighting $\boldsymbol{\WEIGH{a}}$:} & \multicolumn{2}{l}{\textbf{Branching $\boldsymbol{\sadd}$:}} \\
			\parbox{.31\linewidth}{Accumulate cost $a$} &
			\multicolumn{2}{l}{\parbox{.64\linewidth}{Choose branch that will accumulate \emph{minimal} cost}}
			\\[1ex]
			\textbf{Postweighting $\boldsymbol{f}$:} & \textbf{$\boldsymbol{\wp{C}{0}}$:} & \textbf{$\boldsymbol{\wlp{C}{f}}$:} \\
			\parbox{.31\linewidth}{Cost that is accumulated after program termination; typically choose $f = 0$} &
			\parbox{.32\linewidth}{Minimal accumulated cost amongst all terminating executions of $C$} &
			\parbox{.32\linewidth}{Minimum of $\wp{C}{f}$ and the minimal accumulated cost amongst all non-terminating executions of $C$}
		\end{tabular}
	}

	\vspace{.7ex}

	\textbf{Optimization} via the \textbf{Arctic semiring} ${(\NatsInf_{-\infty},\, {\max},\, {+},\, {-\infty},\, 0)}$ \\[1ex]
	{
		\scriptsize
		\begin{tabular}{@{}lll@{}}
			\textbf{Weighting $\boldsymbol{\WEIGH{a}}$:} & \multicolumn{2}{l}{\textbf{Branching $\boldsymbol{\sadd}$:}} \\
			\parbox{.31\linewidth}{Accumulate cost $a$} &
			\multicolumn{2}{l}{\parbox{.64\linewidth}{Choose branch that will accumulate \emph{maximal} cost}}
			\\[1ex]
			\textbf{Postweighting $\boldsymbol{f}$:} & \textbf{$\boldsymbol{\wp{C}{0}}$:} & \textbf{$\boldsymbol{\wlp{C}{f}}$:} \\
			\parbox{.31\linewidth}{Cost that is accumulated after program termination; typically choose $f = 0$} &
			\parbox{.32\linewidth}{Maximal accumulated cost amongst all terminating executions of $C$} &
			\parbox{.32\linewidth}{Same as $\wp{C}{f}$ if all executions of $C$ terminate, else $+\infty$}
		\end{tabular}
	}

	\vspace{.7ex}

	\textbf{Optimization} via the \textbf{Bottleneck semiring}\tablefootnote{Also known as \emph{max-min-semiring}. The name \emph{Bottleneck semiring} is taken from \cite{semirings-for-breakfast}. Quantitative verification using this semiring is further studied in \cite{quantitative-strongest-post-arxiv,quantitative-strongest-post-oopsla}.} ${(\Reals_{-\infty}^{+\infty},\, {\max},\, {\min},\, {-\infty},\, {+\infty})}$ \\[1ex]
	{
		\scriptsize
		\begin{tabular}{@{}lll@{}}
			\textbf{Weighting $\boldsymbol{\WEIGH{a}}$:} & \multicolumn{2}{l}{\textbf{Branching $\boldsymbol{\sadd}$:}} \\
			\parbox{.31\linewidth}{Restrict capacity of current branch to $a$} &
			\multicolumn{2}{l}{\parbox{.64\linewidth}{Choose branch with accumulate \emph{maximal} capacity}}
			\\[1ex]
			\textbf{Postweighting $\boldsymbol{f}$:} & \textbf{$\boldsymbol{\wp{C}{f}}$:} & \textbf{$\boldsymbol{\wlp{C}{f}}$:} \\
			\parbox{.31\linewidth}{Upper bound after program termination; typically choose $f = +\infty$} &
			\parbox{.32\linewidth}{Maximum bottleneck amongst all \emph{terminating} executions of $C$} &
			\parbox{.32\linewidth}{Maximum bottleneck amongst all executions of $C$}
		\end{tabular}
	}

	\vspace{.7ex}

	\textbf{Model Checking} via the \textbf{Formal languages semiring} ${\bigl(2^{\ab^*},\, {\cup},\, \,{\langConcat}\,,\, \emptyset,\, \set{\varepsilon}\bigr)}$ \\[1ex]
	{
		\scriptsize
		\begin{tabular}{@{}lll@{}}
			\textbf{Weighting $\boldsymbol{\WEIGH{a}}$:} & \multicolumn{2}{l}{\textbf{Branching $\boldsymbol{\sadd}$:}} \\
			\parbox{.31\linewidth}{Append symbol $a$ to current trace} &
			\multicolumn{2}{l}{\parbox{.64\linewidth}{Account for/aggregate behavior of both branches}}
			\\[1ex]
			\textbf{Postweighting $\boldsymbol{f}$:} & \textbf{$\boldsymbol{\wp{C}{\{\varepsilon\}}}$:} & \textbf{$\boldsymbol{\wlp{C}{\{\varepsilon\}}}$:} \\
			\parbox{.31\linewidth}{Language that is appended to each terminated trace; typically choose $f = \{\epsilon\}$} &
			\parbox{.32\linewidth}{Language of all terminating traces of $C$} &
			\parbox{.32\linewidth}{Language of all terminating and nonterminating traces of $C$}
		\end{tabular}
	}

	\vspace{.7ex}

	\textbf{Combinatorics} via the (extended) \textbf{Natural Numbers semiring} ${(\NatsInf,\, {+},\, \,{\cdot}\,,\, 0,\, 1)}$ \\[1ex]
	{
		\scriptsize
		\begin{tabular}{@{}lll@{}}
			\textbf{Weighting $\boldsymbol{\WEIGH{a}}$:} & \multicolumn{2}{l}{\textbf{Branching $\boldsymbol{\sadd}$:}} \\
			\parbox{.31\linewidth}{Make $a$ copies of current path/trace} &
			\multicolumn{2}{l}{\parbox{.64\linewidth}{Sum up number of paths/traces of both branches}}
			\\[1ex]
			\textbf{Postweighting $\boldsymbol{f}$:} & \textbf{$\boldsymbol{\wp{C}{1}}$:} & \textbf{$\boldsymbol{\wlp{C}{f}}$:} \\
			\parbox{.31\linewidth}{Number of copies that is made of each terminated path/trace; typically choose $f = 1$} &
			\parbox{.32\linewidth}{Number of all terminating paths/traces of $C$} &
			\parbox{.32\linewidth}{Same as $\wp{C}{f}$ if all executions of $C$ terminate, else $+\infty$}
		\end{tabular}
	}

	\vspace{.7ex}

	\textbf{Hidden Markov Models} via the \textbf{Viterbi semiring} ${([0,1],\, {\max},\, \,{\cdot}\,,\, {0},\, {1})}$ \\[1ex]
	{
		\scriptsize
		\begin{tabular}{@{}lll@{}}
			\textbf{Weighting $\boldsymbol{\WEIGH{a}}$:} & \multicolumn{2}{l}{\textbf{Branching $\boldsymbol{\sadd}$:}} \\
			\parbox{.31\linewidth}{Let what follows happen with probability $a$} &
			\multicolumn{2}{l}{\parbox{.64\linewidth}{Choose branch of maximal probability}}
			\\[1ex]
			\textbf{Postweighting $\boldsymbol{f}$:} & \textbf{$\boldsymbol{\wp{C}{\iverson{\varphi}}}$:} & \textbf{$\boldsymbol{\wlp{C}{\iverson{\varphi}}}$:} \\
			\parbox{.31\linewidth}{Probability additionally multiplied to each terminated trace; typically choose $f = \iverson{\varphi}$, i.e.\ the indicator function of some event $\varphi$} &
			\parbox{.32\linewidth}{Maximal probability of a terminating execution establishing $\varphi$} &
			\parbox{.32\linewidth}{Maximal probability of a non-terminating execution, or a terminating execution establishing $\varphi$}
		\end{tabular}
	}

	\vspace{.7ex}

	\textbf{Verification/Debugging} via the \textbf{Boolean semiring} ${(\{0,1\},\, {\lor},\, {\land},\, 0,\, 1)}$ \\[1ex]
	{
		\scriptsize
		\begin{tabular}{@{}lll@{}}
			\textbf{Weighting $\boldsymbol{\WEIGH{a}}$:} & \multicolumn{2}{l}{\textbf{Branching $\boldsymbol{\sadd}$:}} \\
			\parbox{.31\linewidth}{Assert predicate $a$} &
			\multicolumn{2}{l}{\parbox{.64\linewidth}{Angelic choice: choose \enquote{most true} branch}}
			\\[1ex]
			\textbf{Postweighting $\boldsymbol{f}$:} & \textbf{$\boldsymbol{\wp{C}{f}}$:} & \textbf{$\boldsymbol{\wlp{C}{f}}$:} \\
			\parbox{.31\linewidth}{Postcondition (a predicate) that should be established after program termination} &
			\parbox{.32\linewidth}{Weakest precondition of $f$, the weakest predicate $g$ so that starting in $g$, program $C$ \emph{can} terminate in state $\tau \models f$} &
			\parbox{.32\linewidth}{Weakest liberal precondition of $f$, the weakest predicate $g$ so that starting in $g$, program $C$ \emph{can} either diverge or terminate in state $\tau \models f$}
		\end{tabular}
	}

	\vspace{.7ex}

	\textbf{Feature Selection} via the \textbf{Why semiring} ${(\text{propositional \emph{positive} DNF}, {\lor}, {\land}, 0, 1)}$ over a finite set of variables $X_1, \ldots, X_n$ (cf. \cite{DBLP:journals/corr/abs-1910-07910}) \\[1ex]
	{
		\scriptsize
		\begin{tabular}{@{}lll@{}}
			\textbf{Weighting $\boldsymbol{\WEIGH{a}}$:} & \multicolumn{2}{l}{\textbf{Branching $\boldsymbol{\sadd}$:}} \\
			\parbox{.31\linewidth}{Use resource $a$} &
			\multicolumn{2}{l}{\parbox{.64\linewidth}{Alternatives: use resources via $C_1$ or via $C_2$}} \\[1ex]
			\textbf{Postweighting $\boldsymbol{f}$:} & \textbf{$\boldsymbol{\wp{C}{\iverson{\varphi}}}$:} & \textbf{$\boldsymbol{\wlp{C}{\iverson{\varphi}}}$:} \\
			\parbox{.31\linewidth}{Combinations of resources used after termination; typically choose $f = \iverson{\varphi}$} &
			\parbox{.32\linewidth}{Possible alternatives (disjunction) of resource sets (conjunction) to reach event $\varphi$} &
			\parbox{.32\linewidth}{Possible alternatives (disjunction) of resource sets (conjunction) to either not terminate or reach event $\varphi$}
		\end{tabular}
	}
\end{table}



\section{Weighted Programs}
\label{sec:wgcl}
For a monoid $\Monoid = (\monoid,\, \monop,\, \monone)$ of weights, we study the $\Monoid$-weighted guarded command language~\mbox{$\Monoid$-$\wgcl$} featuring --- in addition to standard control-flow instructions --- \emph{branching} and \emph{weighting}.
If the monoid $\Monoid$ is evident from the context, we omit the symbol and write just $\wgcl$.

\subsection{Syntax}
\label{sec:syntax}

$\wgcl$ programs $C$ adhere to the grammar
\iffalse
\begin{alignat*}{2}
	C \quad{}\longrightarrow{}
	& \quad &&\ASSIGN{x}{E}
	\tag{assignment} \\
	\qmid & &&\COMPOSE{C_1}{C_2}
	\tag{sequential composition} \\
	\qmid & &&\ITE{\guard}{C_1}{C_2}
	\tag{conditional choice} \\
	\qmid & &&\WHILEDO{\guard}{C_1}
	\tag{while loop} \\
	\qmid & &&\BRANCH{C_1}{C_2}
	\tag{branching} \\
	\qmid & &&\WEIGH{a}
	\tag{weighting} \\
	\gray{\qmid} & && \gray{\SKIP}
	\tag*{\gray{(syntactic sugar for $\WEIGH{\sone}$)}} \\
	\gray{\qmid} & &&\gray{\WCHOICE{C_1}{a}{b}{C_2}}
	\tag*{\gray{(syntactic sugar for $\BRANCH{\COMPOSE{\WEIGH{a}}{C_1}}{\COMPOSE{\WEIGH{b}}{C_2}}$)}}
\end{alignat*}%
\else
\begin{alignat*}{3}
	C \quad{}\longrightarrow{}
	& \quad &&\BRANCH{C_1}{C_2} &&\qmid \WEIGH{a}
	\tag{branching | weighting} \\
	\qmid & &&\ASSIGN{x}{E} &&\qmid \ITE{\guard}{C_1}{C_2}
	\tag{assignment | conditional choice} \\
	\qmid & && \COMPOSE{C_1}{C_2} &&\qmid \WHILEDO{\guard}{C_1}
	\tag{sequential composition | loop} \\
	\gray{\qmid} & && \gray{\SKIP ~\equiv~ \WEIGH{\sone}} &&\gray{\qmid} \gray{\WCHOICE{C_1}{a}{b}{C_2} ~\equiv~ \BRANCH{\COMPOSE{\WEIGH{a}}{C_1}}{\COMPOSE{\WEIGH{b}}{C_2}}}
	\tag*{\gray{(syntactic sugar)}}
\end{alignat*}
\fi
where $x$ is a program variable from a countable set $\Vars$, $E$ is an \emph{arithmetic expression} over $\Vars$, $\guard$ is a Boolean expression (also called \emph{guard}), and $a$ is a \emph{weight} from the monoid's carrier $\monoid$.

Our programs feature \emph{branching} \enquote{$\BRANCH{C_1}{C_2}$} and \emph{weighting} \enquote{$\WEIGH{a}$} of the current computation path where $a \in \monoid$ represents some weight.
For example, we can express a probabilistic choice \enquote{execute $C_1$ with probability $\nicefrac{1}{3}$ and $C_2$ otherwise} as $\WCHOICE{C_1}{\nicefrac{1}{3}}{\nicefrac{2}{3}}{C_2}$ over the monoid $([0,1],\, {\cdot},\, 1)$.
We allow for syntactic sugar in weightings, e.g.\ $\WEIGH{a^x}$ for some $a \in \monoid$, $x \in \Vars$.\footnote{The corresponding program $\COMPOSE{\ASSIGN{i}{x}}{\WHILEDO{i > 0}{\COMPOSE{\WEIGH{a}}{\ASSIGN{i}{i-1}}}}$ requires introducing a fresh loop-variable $i \in \Vars$.}
The symbol $\WeighSymbol$ used in the weight-statement is reminiscent of the corresponding monoid operation in $\Monoid$.
The symbol $\BranchSymbol$ we use for the branching-statement will become evident in \cref{sec:weighted-semantics}.

\subsection{Program States}
\label{sec:program-states}
A program state $\sigma$ maps each variable in $\Vars$ to its value in $\Vals$.
To ensure that the set of program states is countable,\footnote{
    We restrict $\States$ a priori to avoid technical issues; even $\wgcl$ programs over uncountable $\States$ reach just countably many states.
} 
we restrict to states in which at most finitely many variables have a non-zero value.
Intuitively, those that appear in a given program are possibly assigned a non-zero value.
Formally, the set $\States$ of program states is given by
\begin{align*}
	\States \qcoloneqq \setcomp{ 
		\sigma\colon \Vars \to \Vals
	}{
		\Set{ x \in \Vars }{ \sigma(x) \neq 0} \textnormal{ is finite}
	} .
\end{align*}
We overload notation and denote by $\eval{\xi}{\sigma}$ the evaluation of the (arithmetic, Boolean, or weight) expression $\xi$ in $\sigma$, i.e. the value obtained from evaluating $\xi$ after replacing every variable $x$ in $\xi$ by~$\sigma(x)$.
We denote by $\sigma\statesubst{x}{v}$ the \emph{update} of variable $x$ by value $v$ in state $\sigma$. Formally:
%
\begin{align*}
	\sigma\statesubst{x}{v} \qqcoloneqq \lam{y} \begin{cases}
		v & \text{if } y = x , \\
		\sigma(y) & \text{otherwise} .
	\end{cases}
\end{align*}%
%

\subsection{Operational Semantics}
\label{sec:operational-semantics}

To formalize our notion of \emph{weighted computation paths}, we define small-step operational semantics \cite{DBLP:journals/jlp/Plotkin04a} in terms of a weighted computation graph, or rather a weighted \emph{computation forest}.\footnote{A path-based semantics in terms of trees is convenient for technical reasons; it allows to distinguish programs like $\SKIP$ from $\BRANCH{\SKIP}{\SKIP}$. The latter has two terminating computation paths with weight $\sone$ and the former has only one.}
Apart from our special weight operation $\WEIGH{a}$, the operational semantics is standard but we include the details for the sake of completeness.
Intuitively, the computation forest of $\Monoid$-$\wgcl$ contains one tree
$T_{C,\sigma}$ per program $C$ and initial state $\sigma$ representing the computation of $C$ on initial state $\sigma$.%
\begin{definition}[Computation Forest of $\Monoid$-$\wgcl$]
    For the monoid $\monoiddef$, the \emph{computation forest} of $\Monoid$-$\wgcl$ is the (countably infinite) directed weighted graph $\compGraph = (Q,\, \Delta)$, where
    \begin{itemize}
        \item $Q = \left(\wgcl \cup \set{\termState} \right) \times \Sigma \times \Nats \times \set{L,R}^*$ is the set of vertices (called \emph{configurations});
        \item $\Delta \subseteq Q \times \monoid \times Q$ is the set of directed weighted edges (called \emph{transitions}) which is defined as the smallest set satisfying the SOS-rules in Fig.~\ref{fig:sosrules}. \qedtriangle
    \end{itemize}
\end{definition}%
\noindent{}%
We use the notation $\sosTrans{\conf_1}{a}{\conf_2}$ instead of $(\conf_1, a, \conf_2) \in \Delta$.
Intuitively, for a configuration \mbox{$\sosState{C}{\state}{n}{\beta}$}, the component $C$ represents the program that still needs to be executed (thus playing the role of a \enquote{program counter}) and $C = \termState$ indicates termination; 
$\state$~is the current program state (variable valuation); $n$ is the number of computation steps that have been executed so far and $\beta$ is the history of left and right branches that have been taken.
Remembering the number of computation steps and the history of left and right branches ensures that $\compGraph$ is indeed a forest.
Moreover, it is easy to check that $\compGraph$ has no multi-edges, i.e. there is at most one weighted edge between any two configurations.



Note that $\compGraph$ is finitely branching and that the only rules that alter the branching history $\beta$ are \sosRule{l.\ branch} and \sosRule{r.\ branch}.
In particular, \sosRule{if} and \sosRule{else} do not change $\beta$ because they are not \emph{truly} branching:
Indeed, all configurations of the form $\sosState{\ITE{\guard}{C_1}{C_2}}{\state}{n{+}1}{\beta}$ have a \emph{unique} successor configuration which depends on whether or not $\state$ satisfies $\guard$.

An \emph{initial} configuration is of the form $\conf = \sosState{C}{\state}{0}{\epsilon}$ for arbitrary $C$ and $\state$.
For initial configurations we also write $\sosStateAbbr{C}{\state}$ instead of $\sosState{C}{\state}{0}{\epsilon}$.
By definition of $\Delta$, initial configurations have no incoming transitions; they are thus the roots of the trees in the computation forest.
Similarly, a configuration $\sosState{\termState}{\state}{n}{\beta}$ is called \emph{final}.
Final configurations are the leaves of the forest.

We can now define computation paths.
For $\conf \in Q$, let $\succConfs(\conf) \coloneqq \Set{ \conf' \in Q }{ \exists a \in \monoid \colon \sosTrans{\conf}{a}{\conf'} }$ be the set of all possible successor configurations.
A \emph{computation path} of length $n \in \Nats$ is a finite path $\compPath = \conf_0 \conf_1 \ldots \conf_n$ in $\compGraph$, i.e. $\conf_{i+1} \in \succConfs(\conf_{i})$ for all $i = 0,\ldots,n-1$, such that $\conf_0$ is initial.
The set of all such paths is denoted $\pathsOfLengthStartingIn{n}{\conf_0}$.
Since $\compGraph$ has no multi-edges, for two configurations $\conf$ and $\conf' \in \succConfs(\conf)$ we denote the unique weight $a \in \monoid$ such that $\sosTrans{\conf}{a}{\conf'}$ by $\weight{\conf}{\conf'}$.
The \emph{weight} of a computation path $\conf_0 \conf_1 \ldots \conf_n$ is then defined as
\[
	\pathWeight(\conf_0 \conf_1 \ldots \conf_n)
	\ccoloneqq \monbigop_{i=0}^{n-1} \weight{\conf_i}{\conf_{i+1}}
	\eeq \weight{\conf_0}{\conf_1} \monop \weight{\conf_1}{\conf_2} \monop \cdots \monop \weight{\conf_{n-1}}{\conf_n} ~.
\]
Recall that our monoids are not commutative in general, and thus the order of the above product matters.
The \emph{last state} of computation path $\compPath = \conf_0 \conf_1 \ldots \conf_n$ is the program state at configuration $\conf_n$ and is denoted $\lastState(\compPath) \in \States$.
The computation path $\compPath$ is called \emph{terminal} if $\conf_n$ is final.
Given an initial configuration $\conf_0$, we define the set of \emph{terminating computation paths} starting in $\conf_0$ as
\[
    \termPathsStartingIn{\conf_0}
    \qeq
    \bigcup_{n \in \Nats} \Set{\compPath \in \pathsOfLengthStartingIn{n}{\conf_0}}{\compPath \text{ is terminal}} ~.
\]
Note that $\termPathsStartingIn{\conf_0}$ is a countable set for each $\conf_0$.
An \emph{infinite} computation path is an infinite sequence $\conf_0 \conf_1 \ldots$ such that $\conf_0 \ldots \conf_i$ is a computation path for all $i \geq 0$.
%

\begin{figure}[t]
    \centering
    \setlength{\jot}{1ex}
	\begin{adjustbox}{max width=0.95\textwidth}
	\parbox{\linewidth}{
		{\scriptsize
        \begin{gather*}
			%
			\frac{
				\state' = \state\statesubst{x}{\sem{E}(\state)}
			}{
				\sosTrans{
					\sosState{\ASSIGN{x}{E}}{\state}{n}{\beta}
				}{\tinysone}{
					\sosState{\termState}{\state'}{n{+}1}{\beta}
				}
			} ~ \sosRule{assign}
			\qquad
			%
			\frac{}{
				\sosTrans{
					\sosState{\WEIGH{a}}{\state}{n}{\beta}
				}{a}{
					\sosState{\termState}{\state}{n{+}1}{\beta}
				}
			} ~ \sosRule{weight}
			\\ 
			%
			\frac{
				\sosTrans{
					\sosState{C_1}{\state}{n}{\beta}
				}{a}{
					\sosState{\termState}{\state'}{n{+}1}{\beta}
				}
			}{
				\sosTrans{
					\sosState{\COMPOSE{C_1}{C_2}}{\state}{n}{\beta}
				}{a}{
					\sosState{C_2}{\state'}{n{+}1}{\beta}
				}
			} ~ \sosRule{seq.\ 1}
			\qquad
			%
			\frac{
				\sosTrans{
					\sosState{C_1}{\state}{n}{\beta}
				}{a}{
					\sosState{C_1'}{\state'}{n{+}1}{\beta}
				}
				\qand
				C_1' \neq \termState
			}{
				\sosTrans{
					\sosState{\COMPOSE{C_1}{C_2}}{\state}{n}{\beta}
				}{a}{
					\sosState{\COMPOSE{C_1'}{C_2}}{\state'}{n{+}1}{\beta}
				}
			} ~ \sosRule{seq.\ 2}
			\\
			%
			\frac{
				\state \models \guard
			}{
				\sosTrans{
					\sosState{\ITE{\guard}{C_1}{C_2}}{\state}{n}{\beta}
				}{\tinysone}{
					\sosState{C_1}{\state}{n{+}1}{\beta}}
				} ~ \sosRule{if}
			\qquad
			%
			\frac{
				\state \models \neg \guard
			}{
				\sosTrans{
					\sosState{\ITE{\guard}{C_1}{C_2}}{\state}{n}{\beta}}{\tinysone}{\sosState{C_2}{\state}{n{+}1}{\beta}}} ~ \sosRule{else} \\
			%
			\frac{}{
				\sosTrans{
					\sosState{\BRANCH{C_1}{C_2}}{\state}{n}{\beta}
				}{\tinysone}{
					\sosState{C_1}{\state}{n{+}1}{\beta L}}
				} ~ \sosRule{l.\ branch}
			\qquad
			%
			\frac{}{
				\sosTrans{
					\sosState{\BRANCH{C_1}{C_2}}{\state}{n}{\beta}
				}{\tinysone}{
					\sosState{C_2}{\state}{n{+}1}{\beta R}
				}
			} ~ \sosRule{r.\ branch}
			\\
			%
			\frac{
				\state \models \guard
			}{
				\sosTrans{
					\sosState{\WHILEDO{\guard}{C}}{\state}{n}{\beta}
				}{\tinysone}{
					\sosState{\COMPOSE{C}{\WHILEDO{\guard}{C}}}{\state}{n{+}1}{\beta}
				}
			} ~ \sosRule{while}
			\qquad
			%
			\frac{
				\state \models \neg \guard
			}{
				\sosTrans{
					\sosState{\WHILEDO{\guard}{C}}{\state}{n}{\beta}
				}{\tinysone}{
					\sosState{\termState}{\state}{n{+}1}{\beta}
				}
			} ~ \sosRule{break}
        \end{gather*}
	}}\end{adjustbox}
    \caption{Structural operational semantics of \wgcl-programs.}
    \label{fig:sosrules}
\end{figure}


\section{Weighting Transformer Semantics}
\label{sec:weighted-semantics}

Throughout this section, we develop a weakest-precondition-style calculus \`a la \citet{DBLP:journals/cacm/Dijkstra75} for reasoning about weighted programs on source code level.
We start with a recap on Dijkstra's weakest preconditions. We then gradually lift weakest preconditions to weakest pre\emph{weightings}.

\subsection{Weakest Preconditions}
\label{sec:wp-classical}

Dijkstra's weakest precondition calculus is based on \emph{predicate transformers}%
\begin{align*}
	\wpC{C}\colon\quad \Predicates \morespace{\to} \Predicates~, \qquad \textnormal{where}\quad \Predicates \eeq \{0,\, 1\}^\States~\text{is the set of \emph{predicates} over $\States$},
\end{align*}%
which associate to each nondeterministic program $C$ a mapping from predicates to predicates.
Somewhat less common, we consider here an \emph{angelic} setting, where the nondeterminism is resolved to our advantage.
Specifically, the \emph{angelic} weakest precondition transformer $\wpC{C}$ maps a \emph{post}{\-}condition~$\psi$ over final states to a \emph{pre}condition $\wp{C}{\psi}$ over initial states, such that executing the program $C$ on an initial state satisfying $\wp{C}{\psi}$ guarantees that $C$~\emph{can}\footnote{Recall that $C$ is a \emph{nondeterministic} program.}~terminate in a final state satisfying~$\psi$, see also \Cref{fig:wp-set}.
More symbolically, if $\sem{C}_{\sigma}$ denotes the set of all final states reachable from executing $C$ on $\sigma$, then
\begin{align*}
	\sigma \mmodels \wp{C}{\psi} \qqimplies \exists\, \tau \in \sem{C}_{\sigma}\colon \quad \tau \mmodels \psi~.
\end{align*}%
While the above is a \emph{set perspective} on $\wpsymbol$, a different, but equivalent, perspective on $\wpsymbol$ is the \emph{map perspective}, see \Cref{fig:wp-map}.
\begin{figure}[t]
	\begin{subfigure}[t]{.45\textwidth}
		\begin{center}
			\begin{adjustbox}{max width=.99\textwidth}
				\begin{tikzpicture}[node distance=0cm,decoration={snake,pre=lineto,pre length=.5mm,post=lineto,post length=.5mm, segment length=10mm, amplitude=1.5mm}]
					\draw[help lines,white] (-4,-2.4) grid (4, 4);
					
					\draw [thick] (-3.5, 0) rectangle (-1.5, 4);
					\coordinate (leftboxtopleft) at (-3.5, 4) {};
					\coordinate (leftboxtopmiddleleft) at (-3.5, 3.5) {};
					\coordinate (leftboxbottommiddleleft) at (-3.5, 0.5) {};
					\node[above left=of leftboxtopleft] (Sigma1) {\Large $\Sigma$};
					\node[right=of leftboxtopmiddleleft] (G) {\footnotesize$\phantom{\neg} \wp{C}{\psi}$};
					\node[right=of leftboxbottommiddleleft,yshift=.6cm,xshift=0.0cm] (nG) {\footnotesize $\neg \wp{C}{\psi}$};
					\draw [thick] plot [smooth,tension=.75] coordinates {(-3.5,2) (-3,2.5) (-2.5,1.5) (-1.5,2)};
					
					\draw [thick] (1.5, 0) rectangle (3.5, 4);
					\coordinate (rightboxtopleft) at (1.5, 4) {};
					\coordinate (rightboxtopmiddleright) at (3.5, 3.5) {};
					\coordinate (rightboxbottommiddleright) at (3.5, 0.5) {};
					\node[above left=of rightboxtopleft] (Sigma2) {\Large $\Sigma$};
					\node[left=of rightboxtopmiddleright,yshift=-.3cm,xshift=-.1cm] (G) {\large $\phantom{\neg} \psi$};
					\node[left=of rightboxbottommiddleright,yshift=.3cm,xshift=-.1cm] (nG) {\large $\neg \psi$};
					\draw [thick] plot [smooth,tension=2] coordinates {(1.5,2) (2.75,2.25) (2.5,1.5) (3.5,2)};
					
					\tikzcomputationarrow{-2}{3}{2.75}{3.5}
					\tikzboxcomputationarrow{0.1}{3.1}{1.8}{1.5}
					
					
					
					\path [draw,-{Stealth[scale=1]}, decoration={snake,pre=lineto,pre length=1mm,post=lineto,post length=1mm, segment length=8.05mm, amplitude=1.2mm}, decorate] (0.1,3.0) -- (0.5,2.0) -- (-0.25,1.5) -- (-0.75,2.0) -- (-0.5,2.25);

					
					
					\tikzpossiblecomputationarrow{-1.75}{.5}{2.25}{0.5}
					
					\path [dashdotted,draw,-{Stealth[scale=1]}, decoration={snake,pre=lineto,pre length=1mm,post=lineto,post length=1mm, segment length=7.05mm, amplitude=1.1mm}, decorate] (-2.25,0.25) -- (-1.5,-0.25) -- (-1,-0.25) -- (-0.5,-0.75) -- (-0.5,-2.25) -- (0,-2.25) -- (0.5,-1.75) -- (0.5,-1.25) -- (0.5,-1.25) -- (0,-1) -- (-1.25,-1.75) -- (-1.75,-1.35) -- (-1.25,-0.65); \node (bottomdivergestart) at (-2.25,0.25) {}; \fill (bottomdivergestart) + (-.05,0.03) circle [radius=1.75pt];
				\end{tikzpicture}
			\end{adjustbox}
		\end{center}
		\caption{\textbf{The set perspective:}
			Starting in $\wp{C}{\psi}$, $C$~can terminate in $\psi$. 
			Starting in $\neg \wp{C}{\psi}$, $C$~either diverges or terminates in $\neg \psi$, but it cannot terminate in $\psi$.}
		\label{fig:wp-set}
	\end{subfigure}
	\hfill
	\begin{subfigure}[t]{.45\textwidth}
		\begin{center}
			\begin{adjustbox}{max width=.99\textwidth}
				\begin{tikzpicture}[node distance=4mm, decoration={snake,pre=lineto,pre length=.5mm,post=lineto,post length=1mm, amplitude=.2mm}]
					\draw[use as bounding box,white] (-4.55,-1.5) grid (3.25, 4.75);
					\node (sigma) at (0, 4) {\Large$\boldsymbol{~\sigma}$};
					\node[gray,inner sep=0pt, outer sep=0pt] (firstplus) at (0,3) {\LARGE$\Box$};
					
					\node[gray,inner sep=0pt, outer sep=0pt] (branch) at (-0.225, 1.4) {\LARGE$\Box$};
					
					\node (tau1) at (-2, 0) {$\bullet$};
					\node (tau2) at (-.5, .25) {$\bullet$};
					\node (tau3) at (1, .125) {$\bullet$};
					\node[inner sep=0pt] (taudots) at (2.5, 0) {$\ddots$};
					
					\node[below of=tau1] {$\psi(\tau_1)$};
					\node[below of=tau2] {$\psi(\tau_2)$};
					\node[below of=tau3] {$\psi(\tau_3)$};

					\node (Exp) at (-3.1, -0.25) {\huge $\boldsymbol{\bigvee}$\Large~$\boldsymbol{\Bigl[}$};
					\node at (3.1, -0.25) {\Large $\boldsymbol{\Bigr]}$};

					\node[gray] (C) at (0.5, 1.5) {$C$};
					
					\draw[gray,decorate,thick] (sigma) -- (firstplus);
					\draw[gray,decorate,->,thick] (firstplus) -- (tau1);
					
					\draw[gray,decorate,thick] (firstplus) -- (branch);
					
					\draw[gray,decorate,->,thick] (branch) -- (tau2);
					\draw[gray,decorate,->,thick] (branch) -- (tau3);
					\draw[gray,decorate,thick] (firstplus) -- (taudots);
					
					\draw[decorate,lightgray] (tau1) -- (tau2) -- (tau3) -- (taudots);
					
					\draw[gray] (sigma) edge[|->,bend right=45,above left,thick] node {\textcolor{black}{\Large$\wp{C}{\psi}$}} (Exp);
				\end{tikzpicture}
			\end{adjustbox}
		\end{center}
		\caption{\textbf{The map perspective:}
			Given initial state~$\sigma$, $\wp{C}{\psi}$ determines all final states $\tau_i$ reachable from executing $C$ on $\sigma$, evaluates $\psi$ in each~$\tau_i$, and returns the disjunction over \mbox{these truth values}.}
		\label{fig:wp-map}
	\end{subfigure}
	\caption{Two perspectives on the angelic weakest precondition of program $C$ with respect to postcondition $\psi$.}
	\label{fig:wp-views}
\end{figure}%
%
%
From this perspective, the postcondition is a function $\psi\colon \States \to \{0,\, 1\}$ mapping program states to truth values.
The predicate $\wp{C}{\psi}$ is then a function that takes as input an initial state $\sigma$, determines for each reachable final state $\tau \in \sem{C}_\sigma$ the (truth) value $\psi(\tau)$, and finally returns the disjunction over all these truth values.
More symbolically,%
\begin{align*}
	\wp{C}{\psi}(\sigma) \qqeq\quad \bigvee_{\mathclap{\tau \in \sem{C}_{\sigma}}} \quad \psi(\tau)~.
\end{align*}%
It is this map perspective which we will now gradually lift to a weighted setting.
For that, we first need to leave the realm of Boolean values in which the predicates live.
Instead of acting on Boolean-valued predicates, our calculus will instead act on more general objects called \emph{weightings}.

\subsection{Weightings and Modules}
\label{sec:weightings}

For probabilistic programs, \citet{DBLP:journals/jcss/Kozen85} and later \citet{DBLP:journals/toplas/MorganMS96} have generalized predicates to real-valued functions $f\colon \States \to \PosRealsInf$ (called \emph{expectations}~\cite{DBLP:series/mcs/McIverM05}) associating a quantity to every program state.
With \emph{weightings}, we generalize further by associating a more general \enquote{quantity} to every program state.
%
Our $\wpsymbol$-style calculus acts on these weightings instead of Boolean-valued predicates.
Weightings form --- just like first-order logic for weakest preconditions --- the \emph{assertion \enquote{language} of weakest preweighting reasoning}.

Let us fix a monoid $\Monoid$ of weights.
As with predicates and expectations, we need notions of addition \emph{and} multiplication operations for our weightings.
The monoid $\Monoid$ constituting our programs' weights, however, only provides a multiplication $\monop$.
We hence require that our weightings form a \emph{$\Monoid$-module} $\Smodule$ which \emph{does} provide an addition.
This is inspired by the probabilistic setting where the program weights --- the probabilities --- are taken from the interval $[0,1]$\footnote{Note that probabilities form a monoid under multiplication.}. Expectations, however, map program states to arbitrary extended reals in $\PosRealsInf$ to reason about, e.g.\ expected values of program variables.
Another advantage of distinguishing between $\Monoid$ and $\Smodule$ in general is explained in \cref{ex:semimodules} further below.

We now define modules formally.
(Monoid)-modules are similar to vector spaces over fields in that they also have a well-behaved \emph{scalar multiplication}.
\begin{definition}[Monoid-Modules]
	\label{def:monoid-module}
	\def\smop{\otimes}
	\def\msop{\otimes}
    Let $\monoiddef$ be a monoid.
    A \emph{(left) $\Monoid$-module} $\smoduledef$ is a \emph{commutative monoid} $(\smodule,\, {\madd},\, {\mnull})$ equipped with a (left) action called \emph{scalar multiplication} $\smop\colon \monoid \times \smodule \to \smodule$, such that for all monoid elements $v,w \in \monoid$ and \mbox{module elements $a,b \in \smodule$},
    \begin{enumerate}
        \item
        the scalar multiplication $\smop$ is \emph{associative}, i.e.\ \quad $(v \monop w) \smop a \eeq v \smop (w \smop a)$~,
        
        \item
        the scalar multiplication $\smop$ is \emph{distributive}, i.e.\ \quad
        $
        v \smop (a \madd b) \eeq (v \smop a) \madd (v \smop b)~,
        $
        
        \item $\monone$ is \emph{neutral} w.r.t.\ $\smop$ and $\mnull$ \emph{annihilates}, i.e.\ \quad $\monone \smop a = a$\qand$v \smop \mnull = \mnull$.
        \hfill$\triangle$
    \end{enumerate}%
\end{definition}%
\noindent{}%
We are now in a position to define \emph{weightings}:%
\begin{definition}[Weightings]
    Given a $\Monoid$-module $\Smodule$, a function $f\colon \States \to \smodule$ associating a weight from $\Smodule$ to each program state is called \emph{weighting}.
    We denote the set of all weightings by~$\Weightings$. Elements of $\Weightings$ are denoted by $f, g, h, \dots$ and variations thereof.
    \hfill$\triangle$
\end{definition}%
\noindent{}%
The structure \mbox{$(\Weightings,\, {\sadd},\, {\mnull},\, {\otimes})$}, where $\sadd$, $\mnull$, and $\otimes$ are lifted pointwise, also forms a $\Monoid$-module.
We refer to $\Weightings$ as the \emph{module of weightings} over $\Smodule$.
We emphasize that all the results developed in this paper apply to the important --- and simpler --- special case where the monoid and the module together form a \emph{semiring}:
The multiplication $\smul$ of a semiring $\sringdef$ is then the left-action $\otimes$ of the multiplicative monoid of $(\sring,\, {\smul},\, \sone)$ to the additive monoid $(\sring,\, {\sadd},\, \snull)$.
For this reason, we write $\odot$ instead of $\otimes$ (as both are associative and it should be clear from the rightmost multiplicant's type) and adopt the following convention:%
\begin{convention}
    In all examples in this paper, unless stated otherwise, both the monoid $\Monoid$ and the $\Monoid$-module $\Smodule$ are given in terms of a semiring $\Sring$ that will be clear from the context.
\end{convention}%
%
%
\noindent{}%
Towards our goal of defining a weakest-precondition-style calculus for weighted programs, we restrict to \emph{naturally ordered}\footnote{More generally, partially ordered modules (where the partial order is compatible with the algebraic structure, \eg addition and left-action are monotone) also work.
However, the natural order is the least (w.r.t.\ $\subseteq$) such partial order.
We employ the natural order for simplicity.} \emph{$\omega$-bicontinuous} modules:
\begin{definition}[Natural Order]
	\label{def:natural-order}
	Given a module $\Smodule$, the binary relation ${\natord} \subseteq \smodule\times\smodule$ given by
    \[
		a \nnatord b \qqiff \exists\, c \in \smodule \colonq a \sadd c \eeq b ~
    \]
	is called the \emph{natural order} on $\Smodule$.
    If $\natord$ is a partial order, we call $\Smodule$ \emph{naturally ordered}.
    \hfill$\triangle$
\end{definition}
\noindent{}%
%
%
%
The unique \emph{least element} of a naturally ordered module is $\mnull$.
We say that $\Smodule$ is \emph{$\omega$-bicontinuous} if (1) both the natural order $\natord$ and the reversed natural order $\rnatord$ are (pointed\footnote{We additionally require existence of a \emph{least} element $\bot$.}) $\omega$-cpos \cite[Sec.\ 2.2.4]{DBLP:books/lib/Abramsky94}, and (2) the operations $\madd$ and $\smop$ are $\omega$-continuous \cite[Sec.\ 2.2.4]{DBLP:books/lib/Abramsky94} functions (w.r.t.\ both $\natord$ and $\rnatord$).
In particular, for $\Smodule$ to be $\omega$-bicontinuous, we require the natural order $\natord$ to possess a \emph{greatest} element $\mTop$.
The definition of naturally ordered $\omega$-bicontinuous \emph{semirings} is completely analogous.
All these properties translate to the aforementioned module of weightings:
$\Weightings$ is naturally ordered\footnote{The natural order on $\Weightings$ is the point-wise lifted natural order on $\Smodule$.}
if the underlying module $\Smodule$ is naturally ordered and joins/meets can be defined pointwise.
For example, the Boolean semiring $\booleanSring$ and the tropical semiring $\tropring$ described in \cref{sec:preliminaries} are $\omega$-continuous.
Please confer \cref{app:preliminaries} for details on the above terms.

\begin{example}[Modules]
    \label{ex:semimodules}
    Let $\ab$ be a non-empty alphabet.
    The structure $\mixedWordLangSmodule{\ab} \coloneqq (2^{\ab^\infty},\, {\cup},\, {\emptyset},\, {\langConcat})$ forms a module over the word monoid $\ab^*$.
    Here, $\ab^\infty \coloneqq \ab^* \cup \ab^\omega$ is the set of all finite and $\omega$-words over~$\ab$, and the subsets of $\ab^\infty$ are subsequently called \emph{$\omega$-potent} languages over $\ab$.
    Our interest in $\mixedWordLangSmodule{\ab}$ stems from the fact that we want to study infinite program runs as in \cref{sec:wlp}.
	We stress that this cannot be achieved by simply defining a \emph{semiring} on $2^{\ab^\infty}$. In fact, even though such a semiring can be defined, its multiplication would \emph{not be $\omega$-cocontinuous} (a counterexample is given in \cref{app:proofs2:mixed-formal-languages-cocontinuity-counterexample}).
	On the other hand, $\mixedWordLangSmodule{\ab}$ \emph{does} form an $\omega$-\emph{bi}continuous module (cf. \cref{app:proofs2:mixed-formal-languages-module}).
	\qedtriangle
%
\end{example}

%
%
%
%
%
%

\subsection{Weakest Preweightings}
\label{sec:wp}

We now define a calculus for formal reasoning about weighted programs \`a la Dijkstra.
In reference to Dijkstra's \emph{weakest precondition calculus} and McIver \& Morgan's \emph{weakest preexpectation calculus}, we name our verification system \emph{weakest preweighting calculus}.

First, we notice that predicates just form a specific semiring, namely $(\Predicates,\, {\vee},\, {\wedge},\, 0,\, 1)$  and thus they are in particular \emph{modules} over their underlying \enquote{Boolean monoid} $(\set{0,1},\, {\wedge},\, 1)$.
We refer to this as the \emph{module of predicates}.
With that in mind, we can now generalize the \emph{map perspective} of weakest preconditions to weakest preweightings, see \Cref{fig:wp-general}. 
\begin{figure}[t]
	\begin{subfigure}[t]{.45\textwidth}
		\begin{center}
			\begin{adjustbox}{max width=.99\textwidth}
				\begin{tikzpicture}[node distance=4mm, decoration={snake,pre=lineto,pre length=.5mm,post=lineto,post length=1mm, amplitude=.2mm}]
							\draw[use as bounding box,white] (-4.55,-0.75) grid (3.25, 4.25);
							\node (sigma) at (0, 4) {\Large$\boldsymbol{~\sigma}$};
							\node[gray,inner sep=0pt, outer sep=0pt] (firstplus) at (0,3) {\LARGE$\oplus$};
							
							\node[gray,inner sep=0pt, outer sep=0pt] (branch) at (-0.225, 1.4) {\LARGE$\oplus$};
							
							\node[gray] (tau1) at (-0.8, 2.2) {$a$};
							\node[gray] (tau1) at (-1.2, 1.6) {$a$};
							\node[gray] (tau1) at (-1.6, 1.0) {$b$};
							
							\node[gray] (tau1) at (-0.3, 2) {$b$};
							\node[gray] (tau1) at (-0.55, 0.85) {$a\vphantom{b}$};
							\node[gray] (tau1) at (0.25, 0.65) {$b$};
							
							\node[gray] (tau1) at (0.45, 2.2) {$b$};
							\node[gray] (tau1) at (1.25, 1.2) {$b$};
							\node[gray] (tau1) at (1.75, 0.6) {$b$};

							\node (tau1) at (-2, 0) {$\bullet$};
							\node (tau2) at (-.5, .25) {$\bullet$};
							\node (tau3) at (1, .125) {$\bullet$};
							\node[inner sep=0pt] (taudots) at (2.5, 0) {$\ddots$};
							
							\node[below of=tau1] {$f(\tau_1)$};
							\node[below of=tau2] {$f(\tau_2)$};
							\node[below of=tau3] {$f(\tau_3)$};

							\node (Exp) at (-3.1, -0.25) {\huge $\boldsymbol{\bigoplus}$\Large~$\boldsymbol{\Bigl[}$};
							\node at (3.1, -0.25) {\Large $\boldsymbol{\Bigr]}$};

							\node[gray] (C) at (0.5, 1.5) {$C$};
							
							\draw[gray,decorate,thick] (sigma) -- (firstplus);
							\draw[gray,decorate,->,thick] (firstplus) -- (tau1);
							
							\draw[gray,decorate,thick] (firstplus) -- (branch);
							
							\draw[gray,decorate,->,thick] (branch) -- (tau2);
							\draw[gray,decorate,->,thick] (branch) -- (tau3);
							\draw[gray,decorate,thick] (firstplus) -- (taudots);
							
							\draw[decorate,lightgray] (tau1) -- (tau2) -- (tau3) -- (taudots);
							
							\draw[gray] (sigma) edge[|->,bend right=45,above left,thick] node {\textcolor{black}{\Large$\wp{C}{f}$}} (Exp);
						\end{tikzpicture}
			\end{adjustbox}
		\end{center}
		\caption{\textbf{Weakest preweightings:}
		Given initial state~$\sigma$, $\wp{C}{f}(\sigma)$ determines the weight $w = a_1 a_2 {\cdots}$ of each path starting in $\sigma$ and terminating in some final state $\tau_i$, scalar-multiplies $w$ to the corresponding postweight $f(\tau)$, and returns the module sum over all so-determined weights.}
		\label{fig:wp-general}
	\end{subfigure}
	\hfill
	\begin{subfigure}[t]{.45\textwidth}
		\begin{center}
			\begin{adjustbox}{max width=.99\textwidth}
				\begin{tikzpicture}[node distance=4mm, decoration={snake,pre=lineto,pre length=.5mm,post=lineto,post length=1mm, amplitude=.2mm}]
					\draw[use as bounding box,white] (-4.55,-0.75) grid (3.25, 3.25);
					\node (sigma) at (0, 4) {\Large$\boldsymbol{~\sigma}$};
					\node[gray,inner sep=0pt, outer sep=3pt] (firstplus) at (0,3) {\large$\tfrac{1}{3}$};
					
					\node[gray,inner sep=0pt, outer sep=3pt] (branch) at (-0.225, 1.4) {\large$\tfrac{1}{2}$};
					
					\node (tau1) at (-2, 0) {$\bullet$};
					\node (tau2) at (-.5, .25) {$\bullet$};
					\node (tau3) at (1, .125) {$\bullet$};
					\node[inner sep=0pt] (taudots) at (2.5, 0) {$\ddots$};
					
					\node[below of=tau1] {$f(\tau_1)$};
					\node[below of=tau2] {$f(\tau_2)$};
					\node[below of=tau3] {$f(\tau_3)$};

					\node (Exp) at (-3.1, -0.25) {\Large \textbf{\textsf{Exp}}~$\boldsymbol{\Bigl[}$};
					\node at (3.1, -0.25) {\Large $\boldsymbol{\Bigr]}$};

					\node[gray] (C) at (0.5, 1.5) {$C$};
					
					\draw[gray,decorate,thick] (sigma) -- (firstplus);
					\draw[gray,decorate,->,thick] (firstplus) -- (tau1);
					
					\draw[gray,decorate,thick] (firstplus) -- (branch);
					
					\draw[gray,decorate,->,thick] (branch) -- (tau2);
					\draw[gray,decorate,->,thick] (branch) -- (tau3);
					\draw[gray,decorate,thick] (firstplus) -- (taudots);
					
					\draw[decorate,lightgray] (tau1) -- (tau2) -- (tau3) -- (taudots);
					
					\draw[gray] (sigma) edge[|->,bend right=45,above left,thick] node {\textcolor{black}{\Large$\wp{C}{f}$}} (Exp);
				\end{tikzpicture}
			\end{adjustbox}
		\end{center}
		\caption{\textbf{Weakest preexpectations:}
		Given initial state~$\sigma$, $\wp{C}{f}(\sigma)$ determines the expected value (with respect to the probability distribution generated by executing $C$ on $\sigma$) of $f$ evaluated in the final states reached after executing $C$ on $\sigma$.}
		\label{fig:wp-prob}
	\end{subfigure}
	\caption{The meaning of weakest preweightings generally and weakest preexpectations specifically.}
	\label{fig:wp-weighted}
\end{figure}%
Instead of a postcondition, we now have a post\emph{weighting} $f\colon \States \to \smodule$ mapping program states to elements from our $\Monoid$-module~$\Smodule$.
The weakest preweighting $\wp{C}{f}$ is then a function that takes as input an initial state $\sigma$, determines the weight $w$ of each path starting in $\sigma$ and terminating in some final state $\tau$, 
scalar-multiplies the path's weight $w$ to the corresponding postweight $f(\tau)$ from the module $\Smodule$,
and finally returns the module sum over all these so-determined weights, see \Cref{fig:wp-general}.

\Cref{fig:wp-prob} depicts how the general weighted setting is instantiated to a probabilistic setting: the postweightings become real-valued functions (expectations), the path weights become the paths' probabilities, and the summation remains a summation, thus obtaining an expected value.

One of the main advantages of Dijkstra's calculus is that the weakest preconditions can be defined by induction on the program structure, thus allowing for \emph{compositional reasoning}.
Indeed, the same applies to our weighted setting.
In the following, we fix an ambient monoid $\Monoid$ of programs weights and an $\omega$-bicontinuous $\Monoid$-module $\Smodule$ that constitutes the habitat of our weightings $\Weightings$.
We now go over each construct of $\wgcl$ and see how a weakest preweighting semantics can be developed and understood analogously to Dijkstra's weakest preconditions.

\subsubsection*{\textbf{Assignment}}
The weakest precondition of an assignment is given by%
\begin{align*}
	\wp{\ASSIGN{x}{E}}{\psi} \eeq \psi\subst{x}{E}~,
\end{align*}%
where $\psi\subst{x}{E}$ is the replacement of every occurrence of variable $x$ in the postcondition $\psi$ by the expression $E$.
For weakest preweightings, we proceed analogously.
That is, we \enquote{replace} every \enquote{occurrence} of $x$ in $f$ by $E$. Since $f$ is actually not a syntactic object, we more formally define
\begin{align*}
	\wp{\ASSIGN{x}{E}}{f} \eeq f\subst{x}{E} \morespace{\coloneqq} \lam{\sigma} f\Bigl( \sigma\statesubst{x}{\eval{E}{\sigma}} \Bigr)~.
\end{align*}%
So the weighting $f$ of the final state reached after executing the assignment $\ASSIGN{x}{E}$ is precisely $f$ evaluated at the state $\sigma\statesubst{x}{\eval{E}{\sigma}}$ --- the state obtained from $\sigma$ by updating variable $x$ to $\eval{E}{\sigma}$.

\subsubsection*{\textbf{Weighting}}

Consider the classical statement $\ASSERT{\varphi}$.
Operationally, when executing $\ASSERT{\varphi}$ on some initial state $\sigma$, we check whether $\sigma$ satisfies the predicate $\varphi$.
If $\sigma \models \varphi$, the execution trace \enquote{passes through} the assertion and potentially proceeds with whatever program comes after the assertion.
If, however, $\sigma \not\models \varphi$, then the execution trace at hand is so-to-speak \enquote{annihilated}.
Intuitively, these two cases can be thought of as multiplying (or weighting) the execution trace either by a multiplicative identity \emph{one} or by an annihilating \emph{zero}, respectively.

Denotationally, the weakest precondition of $\ASSERT{\varphi}$ is given by%
\begin{align*}
	\wp{\ASSERT{\varphi}}{\psi} \eeq \varphi \wedge \psi~.	
\end{align*}%
Indeed, whenever an initial state $\sigma$ satisfies the precondition $\varphi \wedge \psi$, then (a) executing $\ASSERT{\varphi}$ will pass through asserting $\varphi$ and moreover --- since the assertion itself does not alter the current program state --- (b) it terminates in state $\sigma$ which also satisfies the postcondition $\psi$.
Dually, if $\sigma$ does not satisfy $\varphi \wedge \psi$, then either (a) executing $\ASSERT{\varphi}$ does not pass through asserting $\varphi$ or (b)~it does pass through the assertion but $\sigma$ does not satisfy the postcondition $\psi$.

When viewing the above through our monoid and module glasses, $\wedge$ is just the scalar-multipli-cation in the module of predicates.
So in other words, $\wp{\ASSERT{\varphi}}{\psi}$ weights (multiplies) $\psi$ with~$\varphi$.
Therefore, we generalize from \emph{conjunction with a predicate} to (scalar-)\emph{multiplication with a monoid element} $a$ from $\Monoid$ and introduce the statement $\WEIGH{a}$ into the programming language.
Operationally, our execution traces are weighted and the $\WEIGH{a}$ statement scalar-multiplies the current execution trace's weight by an $a$.
Denotationally, the weakest preweighting of $\WEIGH{a}$ is given by%
\begin{align*}
	\wp{\WEIGH{a}}{f} \eeq a \smop f~.
\end{align*}%
\begin{remark}[On Non-commutativity and Notation] Recall that multiplication of weights is generally not commutative --- think, for example, about the word monoid $\ab^*$.
In the light of potential non-commutativity, the flipping of the sides, i.e.\ $\WEIGH{a}$ in the program syntax \emph{versus} $a \smul \cloze{f}$ in the denotational weighting transformer semantics, is on purpose:
Programs are usually read (and executed) in a forward manner.
Assuming that the weights along an execution trace are collected from left to right, from initial to final state, weighting by $a$ is a \emph{right-multiplication}, appending at the end of the current execution trace the weight $a$.

Weakest preweightings, on the other hand, are backward-moving:
The $f$ is a \emph{post}weighting that potentially abstracts or summarizes the effects of subsequent computations. 
Whenever we encounter on our way from the back to the front of a program a weighting by $a$, we thus have to \emph{pre}pend $a$ to the current postweighting $f$, yielding a \emph{left-multiplication} $a \smop f$ in the denotations.
\qedtriangle
\end{remark}

\subsubsection*{\textbf{Branching}}

We now consider the classical angelic nondeterministic choice $\NDCHOICE{C_1}{C_2}$.
Operationally, when \enquote{executing} this choice on some initial state $\sigma$, \emph{either} the program $C_1$ \emph{or} the program~$C_2$ will be executed, chosen nondeterministically.
Hence, the execution will reach either a final state in which executing~$C_1$~on~$\sigma$ terminates or a final state in which executing~$C_2$~on~$\sigma$ terminates (or no final state if both computations diverge).

Denotationally, the \emph{angelic} weakest precondition of $\NDCHOICE{C_1}{C_2}$ is given by%
\begin{align*}
	\wp{\NDCHOICE{C_1}{C_2}}{\psi} \eeq \wp{C_1}{\psi} \morespace{\vee} \wp{C_2}{\psi}~.
\end{align*}%
Indeed, whenever an initial state $\sigma$ satisfies the precondition $\wp{C_1}{\psi} \vee \wp{C_2}{\psi}$ then executing $C_1$ or executing $C_2$ will terminate in some final state satisfying the postcondition $\psi$.

Again viewed through our module glasses, $\vee$ is just the addition of the module of predicates.
So in other words, $\wp{\NDCHOICE{C_1}{C_2}}{\psi}$ unions (adds) $\wp{C_1}{\psi}$ and $\wp{C_2}{\psi}$.
We thus generalize from \emph{disjunction of two predicates} to \emph{addition of two module elements} and introduce the statement $\BRANCH{C_1}{C_2}$ into the programming language.
Operationally, we have the same interpretation as in the classical case:
Either the program $C_1$ can be executed or the program $C_2$.
Denotationally, the weakest preweighting of $\BRANCH{C_1}{C_2}$ is given by%
\begin{align*}
	\wp{\BRANCH{C_1}{C_2}}{f} \eeq \wp{C_1}{f} \mmadd \wp{C_2}{f}~.
\end{align*}%
$\wp{C_i}{f}$ tells us what element we obtain if $C_i$ is executed, and the module addition $\madd$ tells us how to account for the fact that either $C_1$ or $C_2$ could have been executed.

\subsubsection*{\textbf{Conditional Choice}}

We now consider the classical conditional choice $\ITE{\varphi}{C_1}{C_2}$.
Operationally, when executing $\ITE{\varphi}{C_1}{C_2}$ on some initial state $\sigma$, we check whether $\sigma$ satisfies the predicate $\varphi$.
If $\sigma \models \varphi$, the program $C_1$ is executed; otherwise the program~$C_2$.


Denotationally, the weakest precondition of $\ITE{\varphi}{C_1}{C_2}$ (and in fact also the weakest precondition of $\NDCHOICE{\COMPOSE{\ASSERT{\varphi}}{C_1}}{\COMPOSE{\ASSERT{\neg\varphi}}{C_2}}$) is given by%
\begin{align*}
	\wp{\ITE{\varphi}{C_1}{C_2}}{\psi} \eeq \varphi\wedge\wp{C_1}{\psi} \morespace{\vee} \neg\varphi\wedge\wp{C_2}{\psi}~.
\end{align*}%
Indeed, whenever an initial state $\sigma$ satisfies the above precondition then either $\sigma \models \varphi$ and then \mbox{--- since} then $\sigma$ must also satisfy $\wp{C_1}{\psi}$ --- executing $C_1$ will terminate in a final state satisfying~$\varphi$, or $\sigma \not\models \varphi$ and --- since then $\sigma$ must also satisfy $\wp{C_2}{\psi}$ --- executing $C_2$ will terminate in a final state satisfying $\varphi$.

In terms of monoids and modules, $\varphi\wedge \cloze{\psi}$ \emph{could} be viewed as a scalar-multiplication by either $\monone$ (leaving the right operand unaltered) or by $\mnull$ (annihilating the right operand). However, general monoids do not posses an annihilating $\mnull$.
%
%
In order to reenact the desired behavior, we introduce the \emph{Iverson bracket} $\iverson{\guard}$ of a predicate $\guard$, which for a weighting $f \in \Weightings$ defines the weighting%
\[
	(\iverson{\phi} \ivop f)(\sigma) \qeq
	\begin{cases}
		f(\sigma) & \textnormal{if } \sigma \models \guard ~,\\
		\mnull & \textnormal{otherwise} ~.
	\end{cases}
\]
With this notation at hand, we define the weakest preweighting of $\ITE{\varphi}{C_1}{C_2}$ by%
\begin{align*}
	\wp{\ITE{\varphi}{C_1}{C_2}}{f} \eeq \iverson{\varphi} \ivop \wp{C_1}{f} \ssadd \iverson{\neg\varphi} \ivop \wp{C_2}{f}~.
\end{align*}%
%
By convention, $\iverson{\phi} \ivop$ binds stronger than $\madd$.
Depending on the truth value of $\varphi$, the above weakest preweighting thus \emph{selects} either the preweighting $\wp{C_1}{f}$ or the preweighting $\wp{C_2}{f}$.

\subsubsection*{\textbf{Sequential Composition}}
Our composite statement $\COMPOSE{C_1}{C_2}$ is standard.
Operationally, $C_1$ is executed first and then --- provided that $C_1$ terminates --- $C_2$ is executed.
A distinguishing feature of the classical weakest precondition transformer is that it moves \emph{backwards} through the program, and the same applies to our weighting transformer, i.e.
\[
    \wp{\COMPOSE{C_1}{C_2}}{f} \eeq \wp{C_1}{\wp{C_2}{f}} ~.
\]
Indeed, to compute the weakest preweighting of the composition $\COMPOSE{C_1}{C_2}$ w.r.t.\ to some $f \in \Weightings$, we first compute an intermediate weighting $\wp{C_2}{f}$, which we then feed into $\wpC{C_1}$.

\subsubsection*{\textbf{Looping}}
Operationally, a loop $\WHILEDO{\guard}{C}$ is equivalent to the infinite nested conditional 
\[
    \ITE{\guard}{\COMPOSE{C}{\ITE{\guard}{\COMPOSE{C}{\ITE{\guard}{\COMPOSE{C}{\ldots}}{\SKIP}}}{\SKIP}}}{\SKIP}~,
\]
which is the same as saying that $\WHILEDO{\guard}{C} \equiv \ITE{\guard}{\COMPOSE{C}{\WHILEDO{\guard}{C}}}{\SKIP}$.
With the rules for conditional choice and composition as explained above, it is thus reasonable to require that the preweighting $\wp{\WHILEDO{\guard}{C}}{f}$ should be a \emph{fixed point} of the function
\[
    X \morespace{\mapsto} \iverson{\guard} \ivop \wp{C}{X} \mmadd \iverson{\neg \guard} \ivop f
\]
which is indeed just the $\wpsymbol$-characteristic function defined above.
For both the classical weakest precondition transformer as well as for our weighted $\wpsymbol$, we choose the semantics to be the \emph{least} fixed point, which  exists uniquely if the ambient module $\Smodule$ \emph{is $\omega$-continuous} (see \cref{thm:wp-continuous} below).
In the classical Boolean setting, this corresponds to choosing the \emph{strongest} (least) possible predicate that satisfies the fixed point equation.
This ensures that the weakest precondition contains only those initial states where the loop can actually \emph{terminate} in a state satifying the postcondition --- but no such states for which the loop cannot terminate at all.
Taking the least fixed point in the weighted setting generalizes this intuition as we will show in \cref{thm:wlp-operational-semantics} below.

\subsubsection*{\textbf{Properties of $\wpsymbol$}}
Based on the above discussion, we now define $\wpsymbol$ formally, state healthiness and soundness properties, and provide several examples.%
\begin{table}
	\centering%
	\caption{Rules defining the weakest preweighting $\wp{C}{f}$ of program $C$ w.r.t.\ postweighting $f$.}%
	\label{tab:wp-semantics}%
	\renewcommand{\arraystretch}{1.2}%
	\begin{tabular}{l@{\qquad} l}
		$C$ & $\wp{C}{f}$ \\
		\hline
		%
		$\ASSIGN{x}{E}$ &
		$f\subst{x}{E}$ \\
		$\COMPOSE{C_1}{C_2}$ &
		$\wp{C_1}{\wp{C_2}{f}}$ \\
		$\ITE{\guard}{C_1}{C_2}$ &
		$\iverson{\guard} \ivop \wp{C_1}{f} \mmadd \iverson{\neg\guard} \ivop \wp{C_2}{f}$ \\
		$\BRANCH{C_1}{C_2}$ &
		$\wp{C_1}{f} \mmadd \wp{C_2}{f}$ \\
		$\WEIGH{a}$ &
		$a \smop f$ \\
		$\WHILEDO{\guard}{C'}$ &
		$\lfp{X} \iverson{\neg \guard} \ivop f \ssadd \iverson{\guard} \ivop \wp{C'}{X}$
	\end{tabular}%
\end{table}%
\begin{definition}[Weakest Preweighting Transformer]
    The transformer $\wpsymbol\colon \wgcl \to (\Weightings \to \Weightings)$ is defined by induction on the structure of $\wgcl$ according to the rules in \cref{tab:wp-semantics}.
    The function%
    \begin{align*}
    \charwp{\guard}{C}{f} \colon\quad \Weightings \to \Weightings, \quad X \morespace{\mapsto} \iverson{\neg \guard} \ivop f \ssadd \iverson{\guard} \ivop \wp{C'}{X} 
    \end{align*}%
    whose least fixed point defines the weakest preweighting of $\WHILEDO{\guard}{C}$ is called the \emph{$\wpsymbol$-characteristic function} of $\WHILEDO{\guard}{C}$ with respect to postweighting $f$.\qedtriangle
\end{definition}%
\noindent{}%
\begin{theorem}[Well-Definedness of $\wpsymbol$]
	\label{thm:wp-continuous}
	Let the monoid module $\Smodule$ over $\Monoid$ be $\omega$-continuous.
    For all $\Monoid$-$\wgcl$ programs $C$, the weighting transformer $\wpC{C}$ is a well-defined $\omega$-continuous endofunction on the module of weightings over $\Smodule$.
    In particular, if \,$\charwp{}{}{f}$ is the $\wpsymbol$-characteristic function of $\WHILEDO{\guard}{C}$ with respect to postweighting $f$, then
    \[
		\wp{\WHILEDO{\guard}{C}}{f} \qeq \bigjoin_{i \in \Nats} \charwp{}{}{f}^i(\snull) ~.
    \]
\end{theorem}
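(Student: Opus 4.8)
The plan is to prove this theorem by structural induction on the program $C$, establishing simultaneously that each $\wpC{C}$ is well-defined and $\omega$-continuous as an endofunction on $\Weightings$. The base cases --- $\ASSIGN{x}{E}$ and $\WEIGH{a}$ --- are where the heavy lifting of continuity reduces to the assumed $\omega$-continuity of the module operations. For $\WEIGH{a}$, the transformer is $f \mapsto a \smop f$, which is $\omega$-continuous by the $\omega$-continuity of the scalar multiplication $\smop$ postulated in the definition of an $\omega$-bicontinuous module. The assignment transformer $f \mapsto f\subst{x}{E}$ is essentially a precomposition with a state-update map, and since joins in $\Weightings$ are computed pointwise, continuity here is immediate.

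For the inductive cases I would proceed as follows. Sequential composition $\wpC{\COMPOSE{C_1}{C_2}} = \wpC{C_1} \circ \wpC{C_2}$ is $\omega$-continuous because the composition of $\omega$-continuous functions is again $\omega$-continuous (using the induction hypotheses on $C_1$ and $C_2$). For branching $\BRANCH{C_1}{C_2}$, the transformer is $f \mapsto \wp{C_1}{f} \madd \wp{C_2}{f}$; here I would invoke both the induction hypotheses and the $\omega$-continuity of $\madd$ (again from $\omega$-bicontinuity), noting that addition commutes with suprema of $\omega$-chains in each argument. Conditional choice is handled the same way, with the extra observation that $\iverson{\guard} \ivop (\cdot)$ is itself $\omega$-continuous: pointwise it is either the identity or the constant $\mnull$, both trivially continuous.

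The main work is the loop case. By \cref{def:natural-order} and the standing assumption that $\Smodule$ is $\omega$-continuous, $\Weightings$ is a pointed $\omega$-cpo with least element $\snull$. The characteristic function $\charwp{\guard}{C}{f}$ is built from the inductively-handled constructs $\iverson{\guard}\ivop(\cdot)$, $\madd$, and $\wpC{C}$ (continuous by the induction hypothesis on the loop body), so it is itself an $\omega$-continuous endofunction on $\Weightings$. By the Kleene fixed point theorem, an $\omega$-continuous endofunction on a pointed $\omega$-cpo has a least fixed point given by the supremum of its iterates on $\bot = \snull$, which yields precisely
\[
	\wp{\WHILEDO{\guard}{C}}{f} \qeq \bigjoin_{i \in \Nats} \charwp{}{}{f}^i(\snull)~,
\]
establishing both that the least fixed point exists uniquely and the claimed iteration formula. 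It then remains to check that the resulting transformer $f \mapsto \lfp{X} \charwp{\guard}{C}{f}(X)$ is itself $\omega$-continuous in $f$, which closes the induction and is needed so that the loop can appear as a subprogram.

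The step I expect to be the main obstacle is precisely this last continuity-in-$f$ of the loop transformer. Showing that $f$ ranges over an $\omega$-chain and that the least fixed points commute with the supremum over that chain requires an exchange-of-suprema argument: one writes the fixed point as $\bigjoin_i \charwp{}{}{f}^i(\snull)$, observes that each iterate is $\omega$-continuous in $f$ (by continuity of $\madd$ and $\smop$ and the induction hypothesis), and then swaps the two $\omega$-indexed suprema using the standard fact that suprema of $\omega$-chains in a cpo commute. This is the one place where one genuinely needs the full $\omega$-continuity of all the operations rather than mere monotonicity, and where care is needed to verify that the doubly-indexed family is indeed directed so that the exchange is valid.
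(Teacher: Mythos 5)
Your proposal is correct and follows essentially the same route as the paper: structural induction with the base cases reduced to $\omega$-continuity of $\smop$ and pointwise joins, composition/branching/conditional handled via continuity of composition, $\madd$, and $\iverson{\guard}\ivop(\cdot)$, and the loop case settled by showing the characteristic function is $\omega$-continuous in both its argument and the postweighting and then invoking Kleene iteration. The only cosmetic differences are that the paper cites the standard result that the $\lfpop$ operator preserves $\omega$-continuity (rather than unfolding your exchange-of-suprema argument, which is exactly how that result is proved), and that the continuity of $\smop$ and $\madd$ you need is already part of the $\omega$-\emph{continuous} module hypothesis of the theorem, so no appeal to \emph{bi}continuity is required.
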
%
\noindent{}%
Our $\wpsymbol$ satisfies the following so-called \emph{healthiness criteria} (see e.g.~\cite{DBLP:journals/jacm/Hoare78,DBLP:conf/lics/HinoKH016,DBLP:journals/entcs/Keimel15,DBLP:series/mcs/McIverM05}) or homomorphism properties \cite{DBLP:books/daglib/0096285}:%
\begin{theorem}[Healthiness]
    \label{thm:wp-properties}
	Let the monoid module $\Smodule$ over $\Monoid$ be $\omega$-continuous.
    For all $\Monoid$-$\wgcl$ programs $C$, the $\wpsymbol$ transformer is 
    \begin{enumerate}
        \item \emph{monotone}, i.e.\ for all  $f, g \in \Weightings$,\qquad  $f\nnatord g$ \qimplies 
        $
        \wp{C}{f} \nnatord \wp{C}{g}
        $,
        \item \emph{strict}, i.e.\ \qquad
        $
        \wp{C}{\snull} \eeq \snull 
        $,
        \item \emph{additive}, i.e.\ for all $f, g \in \Weightings$, \qquad
        $
            \wp{C}{f \mmadd g} \eeq \wp{C}{f} \madd \wp{C}{g}
        $.
        \item Moreover, if $\Monoid$ is commutative, then $\wpsymbol$ is \emph{homogeneous}, i.e.\ for all $a \in \monoid$ and $f\in \Weightings$,
        \[
            \wp{C}{a \smop f} \eeq a \smop \wp{C}{f} ~,
        \]%
        and together with \textnormal{(3)}, $\wpsymbol$ then becomes \emph{linear}.
    \end{enumerate}
\end{theorem}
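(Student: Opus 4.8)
The plan is to prove all four properties by a single structural induction on the program $C$, following the inductive definition of $\wpsymbol$ in \cref{tab:wp-semantics}. The atomic statements $\ASSIGN{x}{E}$ and $\WEIGH{a}$ serve as base cases, and the composite statements $\COMPOSE{C_1}{C_2}$, $\ITE{\guard}{C_1}{C_2}$, and $\BRANCH{C_1}{C_2}$ are handled by invoking the induction hypothesis on their subprograms; the loop $\WHILEDO{\guard}{C}$ is the only genuinely difficult case and is treated last, via the Kleene-iteration characterization granted by \cref{thm:wp-continuous}. Throughout, I would exploit that the $\omega$-continuity assumption on $\Smodule$ supplies exactly the algebraic facts needed: $\omega$-continuity of $\madd$ and $\smop$ entails their \emph{monotonicity}, which drives property (1); the module annihilation axiom $a \smop \mnull = \mnull$ together with $\snull$ being absorbing drives property (2); and module distributivity $a \smop (g \madd h) = a\smop g \madd a\smop h$ drives property (3).

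For the base and compositional cases the arguments are short and local. \emph{Monotonicity} follows because substitution $f \mapsto f\subst{x}{E}$ is a pointwise reindexing, the Iverson bracket $\iverson{\guard}\ivop(-)$ is monotone, and $\smop$, $\madd$ are monotone; sequential composition is then a composition of monotone maps. \emph{Strictness} is checked by pushing $\snull$ through each rule: $\snull\subst{x}{E} = \snull$, $a \smop \mnull = \mnull$, $\snull \madd \snull = \snull$, and $\iverson{\guard}\ivop\snull = \snull$, with $\COMPOSE{C_1}{C_2}$ handled by applying the hypothesis twice. \emph{Additivity} reduces in each rule to distributivity of $\smop$ over $\madd$ and to commutativity/associativity of $\madd$, the sequential case being $\wp{C_1}{\wp{C_2}{f \madd g}} = \wp{C_1}{\wp{C_2}{f} \madd \wp{C_2}{g}} = \wp{C_1}{\wp{C_2}{f}} \madd \wp{C_1}{\wp{C_2}{g}}$ by the hypotheses on $C_2$ and then $C_1$. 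For \emph{homogeneity} one needs $\iverson{\guard}\ivop(a\smop g) = a \smop (\iverson{\guard}\ivop g)$ (clear from $a\smop\mnull=\mnull$) and $a\smop(g \madd h) = a\smop g \madd a\smop h$; the only place commutativity of $\Monoid$ is actually used is $\WEIGH{b}$, where $\wp{\WEIGH{b}}{a\smop f} = b \smop(a \smop f) = (b \monop a)\smop f$ must equal $a\smop(b\smop f) = (a\monop b)\smop f$, which holds precisely when $a \monop b = b \monop a$. This pinpoints why commutativity is hypothesized only for (4).

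The main obstacle is the loop. Here I would use \cref{thm:wp-continuous} to write $\wp{\WHILEDO{\guard}{C}}{f} = \bigjoin_{i\in\Nats}\charwp{}{}{f}^i(\snull)$ and establish each property at the level of the finite approximants $\charwp{}{}{f}^i(\snull)$ by a secondary induction on $i$, then pass to the join. For monotonicity, the characteristic function $\charwp{}{}{f}$ is monotone in the parameter $f$, so its iterates are monotone in $f$, and the join of a monotone family is monotone. For additivity and homogeneity the crux is a lemma of the form $\charwp{}{}{f \madd g}^i(\snull) = \charwp{}{}{f}^i(\snull) \madd \charwp{}{}{g}^i(\snull)$ (respectively $\charwp{}{}{a \smop f}^i(\snull) = a \smop \charwp{}{}{f}^i(\snull)$), proved by induction on $i$ using the outer induction hypothesis (additivity, resp.\ homogeneity, of $\wp{C}{-}$ on the loop body) together with distributivity to split $\charwp{}{}{-}$ across $\madd$ and $\smop$. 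Finishing requires interchanging $\bigjoin_i$ with $\madd$ and with $a\smop(-)$, which is sound exactly because $\madd$ and $\smop$ are $\omega$-continuous; this interchange, rather than any single algebraic identity, is the delicate point, since without $\omega$-continuity the limit of the approximants need not preserve additivity or homogeneity. Property~(3) combined with~(4) then yields linearity immediately.
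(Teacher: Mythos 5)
Your proof is correct, but it takes a genuinely different route from the paper's. The paper dispatches monotonicity as an immediate corollary of \cref{thm:wp-continuous} ($\omega$-continuity of $\wpC{C}$ implies monotonicity), and then derives strictness, additivity, and homogeneity \emph{not} by structural induction but by invoking the soundness theorem (\cref{thm:wp-operational-semantics}): each property becomes a short manipulation of the countable path-sum $\wp{C}{f}(\state) = \mbigadd_{\compPath} \pathWeight(\compPath) \smop f(\lastState(\compPath))$, using annihilation $\pathWeight(\compPath) \smop \mnull = \mnull$ for strictness, pointwise splitting of the sum plus extended distributivity for additivity, and commutativity of $\Monoid$ to swap $\pathWeight(\compPath) \monop a = a \monop \pathWeight(\compPath)$ for homogeneity. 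Your route instead stays entirely at the denotational level: a structural induction on $C$ whose only hard case is the loop, handled by a nested induction on the Kleene approximants $\charwp{}{}{f}^i(\snull)$ followed by an interchange of $\bigjoin_i$ with $\madd$ and with $a \smop (\cdot)$ --- an interchange that is sound by $\omega$-continuity and is in fact already available in the paper as \cref{thm:madd-cont-both-args}. What each approach buys: the paper's proof is very short \emph{given} the operational soundness theorem, but it inherits that theorem's machinery (computation forests, $\wgcl$-functionals, countable sums in $\omega$-finitary modules); yours needs only \cref{thm:wp-continuous} and binary algebraic laws, avoids infinite sums altogether, and cleanly localizes where each hypothesis is consumed --- in particular your observation that commutativity of $\Monoid$ is needed exactly at the base case $\WEIGH{b}$, where $(b \monop a) \smop f$ must equal $(a \monop b) \smop f$, is a sharper accounting than the paper's, which uses commutativity wholesale inside the path-weight product. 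The price is more case analysis and the secondary induction for loops, which the paper's proof sidesteps entirely.
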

\noindent
Homogeneity does not hold in general: Consider the formal languages semiring $\finWordLangSring{\set{a, b}}$ and the program $C = \WEIGH{\set{a}}$ with the constant postweighting $f = \sone = \set{\epsilon}$.
Then%
\[
    \wp{C}{\set{b} \langConcat f} \eeq \set{a} \langConcat \set{b} \eeq \set{ab} \morespace{\neq} \set{ba} \eeq \set{b} \langConcat \set{a} \eeq \set{b} \langConcat \wp{C}{f} ~.
\]%
The next theorem states that $\wpsymbol$ indeed generalizes the \emph{map perspective} on classical weakest preconditions as anticipated at the beginning of \cref{sec:wp}.
The operational semantics as well as $\termPathsStartingIn{\sosStateAbbr{C}{\state}}$, $\pathWeight$, and $\lastState$ are defined in \cref{sec:operational-semantics}.%
\begin{theorem}[Soundness of $\wpsymbol$]
    \label{thm:wp-operational-semantics}
	Let the monoid module $\Smodule$ over $\Monoid$ be $\omega$-continuous.
    For all $C \in \wgcl$, $\state \in \States$ and $f \in \Weightings$, 
    \begin{align}
		\label{eq:wp-operational-semantics}
		\wp{C}{f}(\state)
		\qeq \mbigadd_{\compPath \in \termPathsStartingIn{\sosStateAbbr{C}{\state}}}
			\pathWeight(\compPath) \smop f(\lastState(\compPath))
		~.
    \end{align}
\end{theorem}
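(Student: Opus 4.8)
The plan is to prove the identity by structural induction on $C$, showing that the denotational $\wp{C}{f}$ from \cref{tab:wp-semantics} agrees pointwise with the operational sum; write $S_C(f)(\state)$ for the right-hand side of \cref{eq:wp-operational-semantics}. Since $\termPathsStartingIn{\sosStateAbbr{C}{\state}}$ is countable, the sum defining $S_C(f)(\state)$ is read as the supremum of its finite partial sums in the $\omega$-cpo $(\smodule, \natord)$; $\omega$-continuity of $\madd$ and $\smop$ guarantees this is well defined and that the rearrangements below are sound. The two base cases are immediate: the tree rooted at $\sosStateAbbr{\ASSIGN{x}{E}}{\state}$ has a single terminal path of length one, carrying weight $\monone$ (rule \sosRule{assign}) and ending in $\state\statesubst{x}{\eval{E}{\state}}$, so $S_{\ASSIGN{x}{E}}(f)(\state) = \monone \smop f(\state\statesubst{x}{\eval{E}{\state}}) = f\subst{x}{E}(\state)$ by neutrality of $\monone$; for $\WEIGH{a}$ the single path carries weight $a$ and fixes $\state$, giving $a \smop f(\state)$.

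The conditional and branching cases exploit that the leading transition always carries weight $\monone$ and hence does not alter path weights. For $\ITE{\guard}{C_1}{C_2}$, rules \sosRule{if} and \sosRule{else} put the terminal paths, according to whether $\state \models \guard$, in a weight- and last-state-preserving bijection with those of $C_1$ or of $C_2$, reproducing the Iverson-bracket selection once the induction hypothesis is applied. For $\BRANCH{C_1}{C_2}$, the terminal paths partition by whether their history begins with $L$ or $R$ (rules \sosRule{l.\ branch}/\sosRule{r.\ branch}) and biject, preserving weight and last state, with the terminal paths of $C_1$ and $C_2$; this partition splits the sum into $S_{C_1}(f)(\state) \madd S_{C_2}(f)(\state)$, which by the induction hypothesis equals $\wp{C_1}{f}(\state) \madd \wp{C_2}{f}(\state)$.

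The key non-loop case is sequential composition, for which I first isolate a path-splitting lemma: every terminal path of $\COMPOSE{C_1}{C_2}$ from $\sosStateAbbr{\COMPOSE{C_1}{C_2}}{\state}$ factors uniquely as a terminal path $\compPath_1$ of $C_1$ from $\state$ followed by a terminal path $\compPath_2$ of $C_2$ from $\lastState(\compPath_1)$, and conversely; moreover $\pathWeight(\compPath) = \pathWeight(\compPath_1) \monop \pathWeight(\compPath_2)$ and $\lastState(\compPath) = \lastState(\compPath_2)$ (rules \sosRule{seq.\ 1}/\sosRule{seq.\ 2} thread the step counters and histories through without disturbing this correspondence, even though the composed path skips the explicit $\termState$-configuration of $C_1$). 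Applying the induction hypothesis to $C_1$ with postweighting $\wp{C_2}{f}$ and to $C_2$ with postweighting $f$, then using distributivity of $\smop$ over $\madd$ and associativity of $\smop$ to collapse $\pathWeight(\compPath_1) \smop (\pathWeight(\compPath_2) \smop f(\lastState(\compPath_2)))$ into $(\pathWeight(\compPath_1) \monop \pathWeight(\compPath_2)) \smop f(\lastState(\compPath_2))$, and finally the splitting bijection to flatten the double sum, yields $S_{\COMPOSE{C_1}{C_2}}(f)(\state)$. Reindexing the countable double sum into a single sum over terminal paths is the one point here that needs care; it rests on a Fubini-style fact for sums defined as suprema of finite partial sums in an $\omega$-cpo, which follows from $\omega$-continuity.

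The main obstacle is the loop $\WHILEDO{\guard}{C}$. By \cref{thm:wp-continuous} we have $\wp{\WHILEDO{\guard}{C}}{f} = \bigjoin_{i \in \Nats} \charwp{}{}{f}^i(\snull)$, so it suffices to match the approximants against a bounded operational semantics. I would define $S^{(i)}_{\WHILEDO{\guard}{C}}(f)(\state)$ as the sum restricted to terminal paths that evaluate the guard at most $i$ times and prove, by induction on $i$, that $\charwp{}{}{f}^i(\snull)(\state) = S^{(i)}_{\WHILEDO{\guard}{C}}(f)(\state)$; the step unfolds one iteration via rules \sosRule{while}/\sosRule{break} and invokes the already-established composition and conditional cases for the body, with the base cases aligning through strictness ($\wp{C}{\snull} = \snull$, so $\charwp{}{}{f}^0(\snull) = \snull$ and $\charwp{}{}{f}^1(\snull) = \iverson{\neg\guard} \ivop f$). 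Because every terminal path is finite, the restricted index sets form an increasing chain exhausting $\termPathsStartingIn{\sosStateAbbr{\WHILEDO{\guard}{C}}{\state}}$, so taking the supremum over $i$ and interchanging it with the countable module sum gives $\bigjoin_i S^{(i)}(f)(\state) = S_{\WHILEDO{\guard}{C}}(f)(\state)$. The two delicate ingredients are the exact iteration-bound invariant and the two supremum–sum interchanges, both of which lean squarely on the $\omega$-continuity hypothesis.
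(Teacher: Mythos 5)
Your proposal is correct, but it takes a genuinely different route from the paper's proof. You prove the equality directly by structural induction, handling sequential composition via a path-splitting bijection plus a Fubini-style flattening of the double countable sum, and handling loops by matching the Kleene iterates $\charwp{}{}{f}^i(\snull)$ against sums over terminal paths with at most $i$ guard evaluations, followed by a supremum--sum interchange. The paper instead splits the equality into two inequalities mediated by a universal property: it defines \emph{$\wgcl$-functionals} (extended transformers satisfying the one-step SOS unfolding equation $\ewt{\Phi}{C}{f}(\state) = \mbigadd_{\sosTrans{\sosStateAbbr{C}{\state}}{a}{\conf'}} a \smop \ewt{\Phi}{C'}{f}(\state')$), shows that the operational transformer $\ext{\opsymbol}$ is the \emph{least} $\wgcl$-functional (least-ness proven by induction on path length), shows that $\ext{\wpsymbol}$ is also a $\wgcl$-functional (yielding $\ext{\opsymbol} \ewtord \ext{\wpsymbol}$), and proves the converse inequality by structural induction where the loop case is discharged by Park induction, i.e.\ by showing that the operational semantics is a prefixed point of the $\wpsymbol$-characteristic function. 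What your approach buys is a single-pass equality proof whose loop case makes the correspondence between fixed-point iteration and bounded loop unrollings explicit; what it costs is heavier combinatorial bookkeeping (the splitting bijection, the rearrangement of double countable sums, and the exact iteration-count invariant), all of which must be justified from $\omega$-continuity. The paper's route localizes all path reasoning to a single first-transition split combined with extended distributivity, replaces the bounded-unrolling analysis by Park induction, and factors out reusable lemmas (the operational composition law $\op{\COMPOSE{C_1}{C_2}}{f} = \op{C_1}{\op{C_2}{f}}$ and the operational loop-unfolding equation) that the paper later reuses in the soundness proof for $\wlpsymbol$. Both arguments ultimately rest on the same foundation: countable sums defined as suprema of finite partial sums, their order-independence, and the extended distributive law in $\omega$-continuous modules.
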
%
\noindent{}%
%
%
\subsubsection*{\textbf{$\wpsymbol$-Annotations}}
In the spirit of Hoare-style reasoning, we will annotate programs as is shown abstractly in \cref{fig:wp-annotations} and concretely in \cref{fig:ex:wp-calc-a}.
Read the annotations \emph{from bottom to top} as follows:
\begin{enumerate}[align=left]
\item[(1)\hspace{.93em}$\annotate{f}$]
	This first annotation states that we start our reasoning from postweighting $f \in \Weightings$.
\item[(2)\hspace{1.075em}$\wpannotate{g}$]
	The superscript $\wpsymbol$ before the annotation indicates that this annotation is obtained from applying $\wp{\cloze{C}}{\cloze{f}}$.
	The program passed into $\wpsymbol$ is the line immediately below this annotation \mbox{--- in} this \mbox{case $C$ ---} and the continuation passed into $\wpsymbol$ is the annotation immediately below the program --- in this case $f$.
	Hence, this annotation states $g = \wp{C}{f}$.
\item[(3)\hspace{.75em}$\bowtieannotate{g'}$] 
	This last annotation states that $g \bowtie g'$, for ${\bowtie} \in \{{\preceq},{=},{\succeq}\}$.
	We thus allow rewriting, or (like the classical rule of \emph{consequence} in Hoare logic) to perform a monotonic relaxation.
\end{enumerate}%
Let us illustrate $\wpsymbol$ by means of two examples.
Recall our convention that the monoid $\Monoid$ and the module $\Smodule$ stem from a semiring $\Sring$ unless stated otherwise.%
\begin{example}
    \label{ex:wp-tropical}
	Let $\Sring = \tropringdef$ be the \emph{tropical} semiring and consider a $\wgcl$-program $C$.
    \cref{thm:wp-operational-semantics} implies that for all states $\state \in \States$,
    \[
        \wp{C}{0}(\state) \qeq \text{ \enquote{minimum weight of all terminating computation paths that start in $\state$}}
    \]
    where the weight of a path is the \emph{usual sum} of all weights along that path.
    Notice that the above \enquote{0} is the map to the \emph{natural number} $0$ and \emph{not} the semiring $\snull = \infty$.
	For instance, we can verify that
    \[
        \wp{\quad \ITE{x > 0}{\COMPOSE{\WEIGH{1}}{\WEIGH{1}}}{\BRANCH{\WEIGH{2}}{\WEIGH{3}}}\quad  }{0} \qeq 2 ~.
    \]
	For that, consider the program annotations in \cref{fig:ex:wp-calc-a}, which express that $2 = \iverson{x>0} \ivop (1 + 1) \mmadd \iverson{x = 0} \ivop (2 \min 3) = \wp{C}{0}$ where $C = \texttt{if}\: (x > 0)\: \{{\ldots}\}$.
    This reflects that if $x > 0$ initially, then the only possible terminating path has weight $1+1 = 2$.
	Otherwise, i.e.\ if $x = 0$, then the minimum weight of the two possible paths is also $2$.\qedtriangle
\end{example}%
%
%
%
%
\begin{figure}[t]
	\centering
    \begin{subfigure}[t]{0.4\linewidth}
		\abovedisplayskip=-0em%
		\belowdisplayskip=0pt%
		\begin{align*}
			&\bowtieannotate{g'} \tag{meaning $g \bowtie g'$}\\
			&\wpannotate{g} \tag{meaning $g = \wp{C}{f}$}\\
			&C \\
			&\annotate{f} \tag{postweighting is $f$}
		\end{align*}%
		\normalsize%
        \subcaption{Style for $\wpsymbol$ annotations. ${\bowtie} \in \{{\preceq},{=},{\succeq}\}$.}
        \label{fig:wp-annotations}
    \end{subfigure}
    \hfill
    \begin{subfigure}[t]{0.55\linewidth}
		\abovedisplayskip=0em%
		\belowdisplayskip=0pt%
		\begin{align*}
			& \eqannotate{2 \quad \gray{\eeq g'}} \\
			& \wpannotate{\iverson{x>0}\ivop (1 + 1) \mmadd  \iverson{x = 0} \ivop (2 \min 3) \quad \gray{\eeq g}} \\
			& \ITE{x > 0}{\COMPOSE{\WEIGH{1}}{\WEIGH{1}}}{\BRANCH{\WEIGH{2}}{\WEIGH{3}}}\\
			& \annotate{0 \quad \gray{\eeq f}}
		\end{align*}%
		\normalsize%
		\caption{Annotations for \cref{ex:wp-tropical}.}
        \label{fig:ex:wp-calc-a}
    \end{subfigure}
    \caption{Annotations for weakest preweightings. It is more intuitive to read these from the bottom to top.}
    \label{fig:annotationzz}
\end{figure}%
\begin{example}
    \label{ex:wp-lang}
    Let $\Sring = \finWordLangSringDef{\ab}$ be the semiring of formal languages over $\ab$.
	Similarly to \cref{ex:semimodules}, we now choose the monoid $\Monoid = \ab^*$ of words and view $\finWordLangSring{\ab}$ as a $\ab^*$-module, i.e.\ the weighting-statements are of the form $\WEIGH{w}$ for some single \emph{word} $w \in \finWordMonoid{\ab}$.
	The weightings~$\Weightings$, however, associate an entire \emph{language} to each state.
    For all initial states $\state \in \States$, we have
    \[
        \wp{C}{\set{\epsilon}}(\state) \qeq \text{ \enquote{language of all terminating computation paths starting in $\state$}}~,
    \]
    where each terminating path contributes the single word obtained from concatenating all symbols occurring in the weight-statements along this path (this may also yield the empty word $\epsilon = \monone$).\qedtriangle
\end{example}%

\subsection{Weakest Liberal Preweightings}
\label{sec:wlp}

\newcommand{\mayNotTerminate}[2]{\sem{#1}_{#2}^\Uparrow}

The weakest preweighting calculus developed in the previous section assigns a weight to each initial state $\state$ based on the \emph{terminating} computation paths starting in $\state$ and the postweighting~$f$.
In particular, $\wpsymbol$ ignores (more precisely: assigns weight $\mnull$ to) nonterminating behavior, i.e.\ the preweighting $\wp{C}{f}$ is independend of the \emph{infinite} computation paths of $C$.
For instance, in the formal languages semiring
$\finWordLangSring{\set{a,b}}$ with monoid $\Monoid = \set{a,b}^*$, we have for all $f \in \Weightings$ that
\[
    \wp{\WHILEDO{\true}{\WEIGH{a}}}{f} \qeq \emptyset \qeq \wp{\WHILEDO{\true}{\WEIGH{b}}}{f} ~,
\]
even though the computation trees of the two programs are clearly distinguishable.

In this section, we define weakest \emph{liberal} preweightings ($\wlpsymbol$) as a means to reason about such infinite, i.e.\ nonterminating, program behaviors, thus generalizing Dijkstra's classical weakest liberal preconditions.
Unlike Dijkstra's weakest liberal preconditions who just assign $\true$ (instead of $\false$) to \emph{any} nonterminating behavior, our weakest liberal preweightings can inspect nonterminating behavior more nuancedly.
As a teaser:
our weakest liberal preweightings \emph{can} distinguish between $\WHILEDO{\true}{\WEIGH{a}}$ and $\WHILEDO{\true}{\WEIGH{b}}$ as we will demonstrate below.

Reconsidering the \emph{map perspective} on weakest preconditions explained in \cref{sec:wp}, the weakest liberal precondition of program $C$ with respect to postcondition $\psi$ maps an initial state $\state$ to $\true$ iff
(i) $C$ started on $\state$ \emph{can} terminate in a state $\tau$ satisfying $\psi$, or (ii) it is \emph{possible} that $C$ does not terminate at all, or both.
In more symbolic terms,
\[
    \wlp{C}{\psi}(\state)
    \qeq
    \bigvee_{\mathclap{\tau \in \sem{C}_{\state}}} \psi(\tau)
    ~\lor~
    \mayNotTerminate{C}{\state}
    ~,
\]
where $\mayNotTerminate{C}{\state}$ holds iff the nondeterministic program $C$ \emph{may not terminate} on $\state$.\footnote{Recall that we consider angelic nondeterminism.} 
We have
\begin{align}
    \label{eq:classical-wlp}
    \wlp{C}{\psi}(\state) \eeq \wp{C}{\psi}(\state) \morespace{\lor} \mayNotTerminate{C}{\state}
    \qqand
    \wlp{C}{\false}(\state) \eeq \mayNotTerminate{C}{\state}
    ~,
\end{align}
implying that $\wlp{C}{\false}$ captures precisely the nonterminating behavior of $C$, and hence $\wlp{C}{\false}$ characterizes precisely the difference between $\wp{C}{f}$ and $\wlp{C}{f}$.

In the realm of monoids and modules, the predicate $\false$ is the zero $\snull$ of the Boolean semiring.
We now define a weakest liberal pre\emph{weighting} calculus generalizing \eqref{eq:classical-wlp} by satisfying
\[
    \forall\, f \in \Weightings \colonq\quad \wlp{C}{f} \qeq \wp{C}{f} \mmadd \wlp{C}{\mnull} ~.
\]
%
Intuitively, $\wlp{C}{\mnull}$ captures the weights of the nonterminating paths in $C$:
For the two example programs from the beginning of this subsection considered over the $\ab^*$-module $\mixedWordLangSmodule{\set{a,b}}$ of $\omega$-potent formal languages, we get for example
\[
   \wlp{\WHILEDO{\true}{\WEIGH{a}}}{\snull} \eeq\set{a^\omega} \qand \wlp{\WHILEDO{\true}{\WEIGH{b}}}{\snull} \eeq \set{b^\omega}~.
\]
%
%
%
\begin{definition}[Weakest Liberal Preweighting Transformer]
    The transformer $\wlpsymbol\colon \wgcl \to (\Weightings \to \Weightings)$ is inductively defined according to \cref{tab:wp-semantics} with $\lfpop$ replaced by $\gfpop$ and with every occurrence of $\wpsymbol$ replaced by $\wlpsymbol$.
    In particular, $\wlp{\WHILEDO{\guard}{C'}}{f}$ is defined as the \emph{greatest} fixed point of the \emph{characteristic function}
    \[
        \charwlp{\guard}{C}{f} \colon \Weightings \to \Weightings, \quad X \quad \mapsto \quad \iverson{\neg \guard} \ivop f \mmadd \iverson{\guard} \ivop \wlp{C'}{X} ~. \tag*{\qedtriangle}
    \]
\end{definition}%
\noindent{}%
%
%
%
%
We obtain a well-definedness result analogous to \cref{thm:wp-continuous}:%
\begin{theorem}[Well-Definedness of $\wlpsymbol$]
    \label{thm:wlp-continuous}
	Let $\Smodule$ be an $\omega$-\emph{co}continuous $\Monoid$-module.
    For all $\Monoid$-$\wgcl$ programs $C$, the transformer $\wlpC{C}$ is a well-defined $\omega$-\emph{co}continuous endofunction on the module of weightings over $\Smodule$.
    In particular, if \:$\charwlp{}{}{f}$ is the $\wlpsymbol$-characteristic function of $\WHILEDO{\guard}{C}$ with respect to postweighting $f$, then
    \[
        \wlp{\WHILEDO{\guard}{C}}{f} \qeq \bigmeet_{i \in \Nats} \charwlp{}{}{f}^i(\top) ~.
    \]
\end{theorem}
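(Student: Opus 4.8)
The plan is to mirror the proof of \cref{thm:wp-continuous} under order-duality, systematically replacing the natural order $\natord$ by its reverse $\rnatord$ throughout. The key observation is that $\wlpsymbol$ is defined exactly as $\wpsymbol$ in \cref{tab:wp-semantics}, the only difference being that loops take the \emph{greatest} rather than the least fixed point. Since a greatest fixed point with respect to $\natord$ is a least fixed point with respect to $\rnatord$, and since the hypothesis that $\Smodule$ is $\omega$-cocontinuous states precisely that $\madd$ and $\smop$ are $\omega$-continuous on the $\rnatord$-$\omega$-cpo---whose least element is the $\natord$-greatest element $\top$---the structure of the argument from \cref{thm:wp-continuous} carries over essentially unchanged after this dualization.

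First I would proceed by structural induction on $C$, establishing in each case that $\wlpC{C}$ is a well-defined $\omega$-cocontinuous endofunction on $\Weightings$ (where meets are taken pointwise). For the base cases, $\wlp{\ASSIGN{x}{E}}{f} = f\subst{x}{E}$ is merely a reindexing of states and hence preserves arbitrary pointwise meets, and $\wlp{\WEIGH{a}}{f} = a \smop f$ is $\omega$-cocontinuous directly by the module hypothesis. The compositional cases---sequential composition, branching, and conditional choice---follow because $\omega$-cocontinuous functions are closed under composition, and because pointwise addition $\madd$ and Iverson selection $\iverson{\guard} \ivop (\cdot)$ are $\omega$-cocontinuous (the former by hypothesis, the latter because on each coordinate it acts either as the identity or as the constant $\mnull$). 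These steps are the exact duals of the corresponding cases for $\wpsymbol$ and introduce no new ideas.

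The loop case carries the real content. Using the induction hypothesis that $\wlpC{C'}$ is $\omega$-cocontinuous, I would show that the characteristic function $\charwlp{}{}{f}$ is itself $\omega$-cocontinuous, once again via closure under composition together with cocontinuity of $\madd$ and of the Iverson selections. Because the $\rnatord$-$\omega$-cpo is pointed by $\top$ (its least element, whose existence is guaranteed by $\omega$-cocontinuity of $\Smodule$), the Kleene fixed point theorem applied in the reversed order $\rnatord$ yields a least $\rnatord$-fixed point of $\charwlp{}{}{f}$, namely $\bigmeet_{i \in \Nats} \charwlp{}{}{f}^i(\top)$. Translating back, this least $\rnatord$-fixed point is precisely the greatest $\natord$-fixed point, which is the defining semantics of $\wlp{\WHILEDO{\guard}{C}}{f}$; this gives the displayed identity.

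The main obstacle I anticipate is bookkeeping rather than conceptual difficulty: one must carefully confirm that $\omega$-cocontinuity of $\Smodule$ lifts to the module of weightings $\Weightings$ and that \emph{every} operation appearing in \cref{tab:wp-semantics} is cocontinuous with respect to $\rnatord$. The delicate point is that scalar multiplication $a \smop (\cdot)$ must preserve descending $\omega$-chains (meets), which is exactly what $\omega$-cocontinuity provides but which would fail for a merely $\omega$-continuous module---so the hypothesis cannot be weakened. Once cocontinuity of $\charwlp{}{}{f}$ is secured, the fixed-point characterization is an immediate instance of Kleene's theorem in the dual order.
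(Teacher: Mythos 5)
Your proposal is correct and matches the paper's approach exactly: the paper's own proof of this theorem consists of the single remark that it is ``fully analogous to'' the proof of \cref{thm:wp-continuous}, i.e., precisely the order-dualization (structural induction, cocontinuity of each construct, and Kleene/Park iteration from $\top$ in the reversed order) that you carry out. Your additional observations---pointwise meets for weightings, cocontinuity of Iverson selection, and the necessity of the cocontinuity hypothesis for scalar multiplication---are consistent with the paper's development and fill in the same details the paper delegates to the dual argument.
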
%
%
\noindent{}%
As stated above, we furthermore get the following fundamental property:%
%
\begin{theorem}[Decomposition of $\wlpsymbol$]
    \label{thm:wlp-decomp}
	Let $\Smodule$ be an $\omega$-\emph{bi}continuous $\Monoid$-module.
    Then for all programs~$C$ and postweightings $f$, 
    \[
        \wlp{C}{f} \qeq \wp{C}{f} \mmadd \wlp{C}{\mnull}~.
     \]
\end{theorem}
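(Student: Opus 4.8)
The plan is to prove the identity by structural induction on $C$, with induction hypothesis that $\wlp{C_0}{g} \eeq \wp{C_0}{g} \madd \wlp{C_0}{\mnull}$ holds for every strict subprogram $C_0$ of $C$ and every postweighting $g \in \Weightings$. For the non-loop constructs the argument is routine bookkeeping with the defining rules of \cref{tab:wp-semantics} and the healthiness properties of $\wpsymbol$ from \cref{thm:wp-properties}. Concretely: for $\ASSIGN{x}{E}$ and $\WEIGH{a}$ one uses $\mnull\subst{x}{E} = \mnull$ and $a \smop \mnull = \mnull$, so the extra $\wlp{\cdot}{\mnull}$-term vanishes; for $\BRANCH{C_1}{C_2}$ and $\ITE{\guard}{C_1}{C_2}$ one applies the induction hypothesis to each branch and reassociates the two $\madd$-summands, using that $\iverson{\guard}\ivop$ distributes over $\madd$. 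The sequential case $\COMPOSE{C_1}{C_2}$ is the only mildly involved non-loop case: starting from $\wlp{C_1}{\wlp{C_2}{f}}$, one applies the induction hypothesis first to $C_1$ (with postweighting $\wlp{C_2}{f}$) and then to $C_2$, invokes additivity of $\wp{C_1}{\cdot}$ to split $\wp{C_1}{\wp{C_2}{f}\madd\wlp{C_2}{\mnull}}$, and finally folds $\wp{C_1}{\wlp{C_2}{\mnull}}\madd\wlp{C_1}{\mnull}$ back into $\wlp{C_1}{\wlp{C_2}{\mnull}}$ by the induction hypothesis again.

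The real work is the loop $\WHILEDO{\guard}{C'}$. Abbreviate $w \coloneqq \wp{\WHILEDO{\guard}{C'}}{f} = \lfpop\,\charwp{}{}{f}$ and $l_0 \coloneqq \wlp{\WHILEDO{\guard}{C'}}{\mnull} = \gfpop\,\charwlp{}{}{\mnull}$; the goal is $\gfpop\,\charwlp{}{}{f} \eeq w \madd l_0$. The key preparatory step is to feed the induction hypothesis for the body $C'$ into the characteristic functions: since $\wlp{C'}{X} \eeq \wp{C'}{X}\madd\wlp{C'}{\mnull}$ for every $X$ and $\iverson{\guard}\ivop$ distributes over $\madd$, writing $c \coloneqq \iverson{\guard}\ivop\wlp{C'}{\mnull}$ (which is independent of $X$) gives
\[
  \charwlp{}{}{f}(X) \eeq \charwp{}{}{f}(X) \madd c \qqand \charwlp{}{}{\mnull}(X) \eeq \iverson{\guard}\ivop\wp{C'}{X} \madd c~.
\]
Thus all three characteristic functions agree up to the additive, $X$-linear core $\iverson{\guard}\ivop\wp{C'}{X}$ and constant shifts, which is what makes the least and greatest fixed points interact.

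I would then establish the two inequalities separately. For $w\madd l_0 \preceq \gfpop\,\charwlp{}{}{f}$ it suffices to verify that $w\madd l_0$ is a \emph{fixed point} of $\charwlp{}{}{f}$ and invoke that the greatest fixed point dominates every fixed point. Expanding $\charwlp{}{}{f}(w\madd l_0)$ via additivity of $\wp{C'}{\cdot}$, the summands regroup into $\charwp{}{}{f}(w) \madd \iverson{\guard}\ivop\bigl(\wp{C'}{l_0}\madd\wlp{C'}{\mnull}\bigr)$; the first factor equals $w$ since $w$ is a fixed point of $\charwp{}{}{f}$, and the second equals $\iverson{\guard}\ivop\wlp{C'}{l_0} = \charwlp{}{}{\mnull}(l_0) = l_0$ by the induction hypothesis and the fixed-point property of $l_0$. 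For the reverse inequality $\gfpop\,\charwlp{}{}{f} \preceq w \madd l_0$ I would use the Kleene characterization $\gfpop\,\charwlp{}{}{f} = \bigmeet_{n} \charwlp{}{}{f}^n(\top)$ from \cref{thm:wlp-continuous} and prove by induction on $n$ that
\[
  \charwlp{}{}{f}^n(\top) \quad\preceq\quad w \madd \charwlp{}{}{\mnull}^n(\top)~;
\]
the step uses monotonicity and additivity of $\iverson{\guard}\ivop\wp{C'}{\cdot}$ together with the fixed-point equation $\charwp{}{}{f}(w) = w$ to regroup the $\charwp{}{}{f}$- and $\charwlp{}{}{\mnull}$-contributions. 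Taking the meet over $n$ and pulling the constant $w$ out of the infimum — legitimate because $\madd$ is $\omega$-cocontinuous, i.e.\ continuous for the reversed natural order, which is exactly where $\omega$-\emph{bi}continuity enters — yields $\gfpop\,\charwlp{}{}{f} \preceq w \madd \gfpop\,\charwlp{}{}{\mnull} = w\madd l_0$.

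The main obstacle is this loop case, and within it the reverse inequality: the statement equates a \emph{least} fixed point (for $\wpsymbol$) with a \emph{greatest} fixed point (for $\wlpsymbol$), so one cannot simply transport a single fixed-point induction principle between the two. The decisive ideas are (i) collapsing all three characteristic functions onto the shared linear core $\iverson{\guard}\ivop\wp{C'}{\cdot}$ via the induction hypothesis for the body, and (ii) exploiting $\omega$-bicontinuity to commute the map $Y \mapsto w\madd Y$ with the descending Kleene chain of $\charwlp{}{}{\mnull}$. I expect the only delicate points to be the distributivity of $\iverson{\guard}\ivop$ over $\madd$ and the verification that the relevant sequences are genuine $\omega$-chains in the respective (reversed) natural orders, both of which follow from the module axioms and \cref{thm:wp-continuous,thm:wlp-continuous}.
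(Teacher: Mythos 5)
Your proof is correct. Outside the loop case it is essentially identical to the paper's proof: the same structural induction, the same use of strictness for $\ASSIGN{x}{E}$ and $\WEIGH{a}$, distributivity of $\iverson{\guard} \ivop$ over $\madd$ for branching and conditionals, and the same two-sided application of the induction hypothesis around additivity of $\wpsymbol$ for sequential composition.

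For the loop, however, you take a genuinely different route. The paper proves, by an inner induction on $n$, the \emph{exact} per-iterate decomposition
\[
    \charwlpn{\guard}{C'}{f}{n}(\mTop) \eeq \charwpn{\guard}{C'}{f}{n}(\mnull) \mmadd \charwlpn{\guard}{C'}{\mnull}{n}(\mTop)~,
\]
and then passes to the limit using a dedicated mixed-chain lemma (\cref{thm:chains-up-and-down}): if $(a_n)_n$ ascends, $(b_n)_n$ descends, and $(a_n \madd b_n)_n$ descends, then $\bigmeet_n (a_n \madd b_n) = \bigjoin_n a_n \mmadd \bigmeet_n b_n$. That lemma is where $\omega$-bicontinuity is spent, and its nontrivial direction requires an auxiliary claim comparing an ascending against a descending chain. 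You instead prove the two inequalities separately: the lower bound is obtained by checking that $w \madd l_0$ is a fixed point of $\charwlp{}{}{f}$ --- your computation is correct, using additivity of $\wpC{C'}$, the body induction hypothesis, and the fixed-point equations for $w$ and $l_0$ --- after which domination by the greatest fixed point is immediate; the upper bound replaces the paper's exact equality of approximants by the inequality $\charwlp{}{}{f}^n(\mTop) \natord w \madd \charwlp{}{}{\mnull}^n(\mTop)$, proved by the same style of inner induction but with the least fixed point $w$ substituted for its Kleene approximants $\charwp{}{}{f}^n(\mnull)$, so that passing to the infimum only needs $\omega$-cocontinuity of $Y \mapsto w \madd Y$, i.e.\ pulling a constant out of a meet. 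The trade-off: the paper's argument yields a stronger intermediate fact (equality at every finite stage $n$) and isolates the order-theoretic content in a reusable lemma, whereas yours is more elementary at the limit step --- Park-style reasoning for one direction, plain cocontinuity of addition with a constant for the other --- precisely because using $w$ itself rather than its approximants avoids ever mixing an ascending and a descending chain inside the same infimum. Both arguments consume the same hypotheses overall ($\omega$-continuity for $\wpsymbol$ and its additivity, $\omega$-cocontinuity for $\wlpsymbol$), so neither is more general; yours is arguably the cleaner derivation, the paper's the more informative one.
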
%
%
%
\noindent{}%
Moreover, we get a statement relating (infinite) computation paths and $\wlp{C}{\mnull}$:%
%
\begin{theorem}[Soundness of $\wlpsymbol$]
    \label{thm:wlp-operational-semantics}
	Let the monoid module $\Smodule$ over $\Monoid$ be $\omega$-\emph{bi}continuous\footnote{In the statement, we assert $\omega$-\emph{bi}continuity: our proof makes heavy use of \cref{thm:wlp-decomp} and thus we need $\wpsymbol$ to be well-defined. However, it might be possible to prove a link between operational semantics and $\wlpsymbol$ assuming only $\omega$-\emph{co}continuity. But a proof seems much more convoluted than our current one.}.
    Then for all programs $C$ and initial states $\state$, 
    \begin{align}
		\label{eq:wlp-operational-semantics}
		\wlp{C}{\mnull}(\state) 
		\qeq \bigmeet_{n \in \Nats} \,
			\sbigadd_{\compPath \in \pathsOfLengthStartingIn{n}{\sosStateAbbr{C}{\state}} }
				\pathWeight(\compPath) \smop \sTop
		~.
    \end{align}
\end{theorem}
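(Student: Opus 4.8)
The plan is to write, for brevity, $S_n(C,\state) \coloneqq \sbigadd_{\compPath \in \pathsOfLengthStartingIn{n}{\sosStateAbbr{C}{\state}}} \pathWeight(\compPath) \smop \sTop$ for the inner sum, so that the goal becomes $\wlp{C}{\mnull}(\state) = \bigmeet_{n \in \Nats} S_n(C,\state)$. First I would verify that this meet is well-defined by showing that $(S_n(C,\state))_{n}$ is a descending chain with respect to the natural order $\natord$. The elementary facts driving this are that $\sTop$ is the greatest element --- hence $a \smop \sTop \natord \sTop$ for every weight $a$, and $\sTop \madd \sTop = \sTop$ --- and that every length-$(n{+}1)$ path extends a length-$n$ path through a \emph{non-final} configuration, whereas a terminal path of length $n$ simply disappears from $\pathsOfLengthStartingIn{n+1}{\sosStateAbbr{C}{\state}}$. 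Summing the one-step extensions of a non-final configuration reached by a length-$n$ prefix $\pi$ gives $\pathWeight(\pi) \smop s \natord \pathWeight(\pi) \smop \sTop$, where $s = \sbigadd_{q' \in \succConfs(q)} \weight{q}{q'} \smop \sTop \natord \sTop$ is the one-step sum; by monotonicity of $\smop$ this yields $S_{n+1} \natord S_n$, and since $\rnatord$ is an $\omega$-cpo the meet $\bigmeet_n S_n$ exists.

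Next I would argue by induction on the structure of $C$, exploiting that the weighted subtree rooted at any successor configuration is isomorphic to the computation tree of the residual program (weights depend only on the program and state components, not on the step count or branch history). Unfolding the first SOS step yields a recursion expressing $S_{n+1}(C,\state)$ through the $S_n$ of the immediate successors. For $\ASSIGN{x}{E}$ and $\WEIGH{a}$ the tree is exhausted after one step, so $S_n = \snull$ for $n \ge 2$ and $\bigmeet_n S_n = \snull = \wlp{C}{\mnull}$. For $\BRANCH{C_1}{C_2}$ and $\ITE{\guard}{C_1}{C_2}$ the recursion reads $S_{n+1}(\BRANCH{C_1}{C_2},\state) = S_n(C_1,\state) \madd S_n(C_2,\state)$, respectively the Iverson-guarded selection of $S_n(C_1,\state)$ and $S_n(C_2,\state)$; here I would push $\bigmeet_n$ inside using $\omega$-cocontinuity of $\madd$ and of $\iverson{\cdot}\ivop$ and then invoke the induction hypothesis. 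The index shift is harmless because $S_0 = \sone \smop \sTop = \sTop$ is the top and thus drops out of the meet.

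The substantial cases are sequential composition and loops, where the plan is to lean on \cref{thm:wlp-decomp} and \cref{thm:wp-operational-semantics}. For $\COMPOSE{C_1}{C_2}$ I would classify each length-$n$ path by the step $k$ at which $C_1$ terminates: such a path either (A) follows a terminal $C_1$-path of length $k \le n$ and then a length-$(n{-}k)$ path of $C_2$, or (B) is a non-terminal $C_1$-path of length $n$. This splits $S_n(\COMPOSE{C_1}{C_2},\state)$ into $\sbigadd_{\pi_1 \text{ terminal},\, |\pi_1| \le n} \pathWeight(\pi_1) \smop S_{n-|\pi_1|}(C_2, \lastState(\pi_1))$ plus the contribution of the non-terminal $C_1$-paths. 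Taking $\bigmeet_n$, the $C_2$-factor of the (A)-part tends to $\wlp{C_2}{\mnull}(\lastState(\pi_1))$ by the induction hypothesis on $C_2$, and the whole (A)-part tends --- now ranging over \emph{all} terminal $C_1$-paths --- to $\wp{C_1}{\wlp{C_2}{\mnull}}(\state)$ via \cref{thm:wp-operational-semantics}, while the (B)-part tends to $\wlp{C_1}{\mnull}(\state)$ by the induction hypothesis on $C_1$; by \cref{thm:wlp-decomp} these recombine to $\wlp{C_1}{\wlp{C_2}{\mnull}} = \wlp{\COMPOSE{C_1}{C_2}}{\mnull}$. For the loop $\WHILEDO{\guard}{C'}$, structural induction only supplies the operational characterisation of the body $C'$, so I would instead match approximants: one-step unfolding shows that bounding the number of loop iterations by $i$ (with residual $\sTop$) reproduces, on the operational side, exactly the iterate $\charwlp{}{}{\mnull}^i(\sTop)$ --- using the body's soundness inside a single unrolling --- and letting $i \to \infty$ identifies $\bigmeet_n S_n$ with $\bigmeet_i \charwlp{}{}{\mnull}^i(\sTop) = \wlp{\WHILEDO{\guard}{C'}}{\mnull}$ by \cref{thm:wlp-continuous}.

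I expect the main obstacle to lie in the composition and loop cases, specifically in exchanging the operational meet over the path length $n$ with the (generally infinite) module sum over the terminal paths of the first component and with $\smop$. At each finite $n$ the (A)-part mixes an \emph{increasing} family (ever more terminal $C_1$-paths become available) with a \emph{decreasing} family (the residual $S_{n-|\pi_1|}(C_2,\cdot)$ shrinks toward $\wlp{C_2}{\mnull}$), so the factorisation predicted by \cref{thm:wlp-decomp} is not immediate and must be justified through $\omega$-cocontinuity together with the descending-chain property from the first step. One must also check that the over-approximating factor $\sTop$ attached to \emph{every} length-$n$ path does not contribute spuriously in the limit: terminating paths drop out once their length is exceeded, and their $\sTop$-weighted contributions are dominated in $\rnatord$ by the surviving extensions, precisely because $a \smop \sTop \natord \sTop$. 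Making these limit exchanges rigorous, and identifying the step-indexed operational approximants with the greatest-fixed-point iterates, is the technical heart of the argument.
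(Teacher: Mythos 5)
Your proposal is correct in outline and shares the paper's overall skeleton --- the descending-chain argument for well-definedness of the meet (the paper's \cref{thm:trailing-paths-form-omega-chain}), structural induction on $C$, and the use of \cref{thm:wlp-decomp} together with \cref{thm:wp-operational-semantics} to reassemble the composition case --- but it routes the two hard cases genuinely differently. For $\COMPOSE{C_1}{C_2}$ you do the path surgery globally: you split each length-$n$ path at the step where $C_1$ terminates and then exchange $\bigmeet_n$ with the resulting infinite sums. The paper avoids any such global exchange: it introduces the operational functional $\olpsymbol$ as a first-class object, proves it satisfies the one-step unfolding identity (\cref{thm:liberal-wgcl-functional}) and, from that, the composition lemma $\olp{\COMPOSE{C_1}{C_2}} = \op{C_1}{\olp{C_2}} \mmadd \olp{C_1}$ (\cref{thm:olp-composition}) by induction over single SOS transitions, so every limit exchange is local to one step and needs only $\omega$-cocontinuity of $\madd$ and $\smop$ together with the extended distributive law. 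For loops you propose to identify iteration-bounded operational approximants with the Kleene iterates $\charwlp{}{}{\mnull}^i(\sTop)$ and pass to the limit; the paper instead argues in two halves: $\olp{\WHILEDO{\guard}{C'}}$ is shown to be a fixed point of the characteristic function (using the composition lemma and the body's induction hypothesis), so greatest-fixed-point maximality yields $\olp{\WHILEDO{\guard}{C'}} \natord \wlp{\WHILEDO{\guard}{C'}}{\mnull}$, and the converse inequality is obtained by a separate induction on path length. Your route can be completed: the two-index family (path length $n$ versus iteration bound $i$) is descending in both indices and a diagonal argument closes it, since a length-$n$ path begins at most $n$ loop iterations, and the ascending/descending mixture in your composition case is exactly the situation handled by \cref{thm:chains-up-and-down}. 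But these global limit exchanges --- which you correctly flag as the technical heart and leave unproven --- are precisely the steps the paper's formulation is engineered to avoid. What your approach buys is a more directly semantic proof that never names $\olpsymbol$; what the paper's buys is that all the analytic delicacy is quarantined inside two small reusable lemmas, leaving the structural induction itself purely algebraic.
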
%
\noindent{}%
Note that \Cref{thm:wlp-operational-semantics} is phrased in terms of the \emph{finite} computation paths $\pathsOfLengthStartingIn{n}{\sosStateAbbr{C}{\state}}$.
This is because it is somewhat difficult to define a general \emph{infinite product} in $\Monoid$ that is compliant with the way our $\wlpsymbol$ assigns weights to infinite computation paths.
Nevertheless, the right-hand side of \eqref{eq:wlp-operational-semantics} depends \emph{only} on the infinite paths which can be seen intuitively as follows:
For arbitrary $n \in \Nats$ consider the \emph{finite} (not necessarily terminating) computation paths up to length $n$.
Let $v(n)$ denote the sum their weights, where the weight of each path is additionally multiplied by $\sTop$, the top element of $\Smodule$.
Then $(v(n))_{n \in \Nats}$ is a \emph{decreasing} chain in the module $\Smodule$.
In the limit (i.e. infimum), all \emph{terminating} computation paths will be ruled out as each of them has some finite length.
The limit/infimum of the $v(n)$ exists by our theory and is independent of the program's terminating paths.
In fact, for programs that do not exhibit infinite paths, we can show that the limit is $\snull$ using Kőnig's classic infinity lemma.
We discuss the implications of this in \Cref{sec:uniquefp}.
%
%

Note that \cref{thm:wlp-operational-semantics} indeed implies that $\wlpsymbol$ is backward compatible to classical weakest liberal preconditions:
In the Boolean semiring, the right-hand side of \eqref{eq:wlp-operational-semantics} equals $\true$ iff there exists an infinite computation path starting in $\state$, and thus $\wlp{C}{\mnull}(\state) \morespace{\equiv} \mayNotTerminate{C}{\state}$ holds as expected.%
%
%
\begin{example}
    \label{ex:wlp-tropical}
    Reconsider the tropical semiring $\tropringdef$ with $\sTop = 0$.
	The infimum~$\bigmeet$ in the natural order is the supremum in the standard order on $\NatsInf$, and multiplication with the top element $\sTop = 0$ is effectless as $a \smul \sTop = a + 0 = a$.
    It follows from \cref{thm:wlp-operational-semantics} that 
    \[
        \wlp{C}{\mnull}(\state) \qeq \text{\enquote{minimum weight of all \emph{infinite} computation paths starting in $\state$}} ~,
    \]
    or $\infty$ --- the tropical $\mnull$ --- if no infinite path exists.
    Hence $\wlp{C}{0}(\state)$ --- where the natural number $0$ is the tropical $\monone$ --- is the minimum path weight among all finite \emph{and infinite} computation paths starting in $\state$.
    For example, for the program $C$ given by
	\[
		\WHILEDO{x = 2}{\quad {\BRANCH{\COMPOSE{\ASSIGN{x}{3}}{\WEIGH{5}}}{\SKIP}} \quad}
	\]
    and initial state $\sigma$ with $\sigma(x) = 2$, we have $\wp{C}{0}(\sigma) = 5$ 
    %
	but $\wlp{C}{0}(\sigma) = 0$ because there exists an infinite path (only performing $\SKIP$) with weight $0 < 5$.\qedtriangle
\end{example}%
\begin{example}
    \label{ex:wlp-lang}
    Let $\Smodule = \mixedWordLangSmodule{\ab}$ be the module of $\omega$-potent formal languages over the monoid of words $\Monoid = \ab^*$ (cf.~\cref{ex:wp-lang}).
    Thus, weightings $f \in \Weightings$ associate states with languages that contain \emph{both} finite \emph{and $\omega$-words}.
    Let $C$ be a $\ab^*$-$\wgcl$ program.
    It follows from \cref{thm:wlp-operational-semantics} that 
    \[
        \wlp{C}{\snull}(\state) \qeq \parbox{0.55\textwidth}{\enquote{language of all (finite and $\omega$-)words that have some\linebreak \phantom{"}\emph{infinite} computation path starting in $\state$ as prefix}} ~,
    \]
    where we have identified computation paths with the words they are labelled with.
    In particular, if all infinite paths of $C$ are weighted with an $\omega$-word, then $\wlp{C}{\snull}(\state)$ is precisely the language consisting of all these words.
    For example, let $\ab = \{a,b\}$ and consider the following program $C$:
    \begin{align*}
        \WHILE{x = 1} 
        \quad {\BRANCH{\COMPOSE{\ASSIGN{x}{0}}{\WEIGH{a}}}{\WEIGH{b}}} \quad
        \}
    \end{align*}
    If initially $\sigma(x)=1$, then $\wp{C}{\monone}(\sigma) = \set{a, ba, bba, \ldots }$, where $\monone = \set{\epsilon}$, but $\wlp{C}{\snull}(\state) = \{b^\omega\}$ and hence
    \[
        \wlp{C}{\monone}(\state) \eeq \wp{C}{\set{\epsilon}}(\sigma) \morespace{\cup} \wlp{C}{\snull}(\state) \eeq  \set{b^\omega, a, ba, bba, \ldots } ~.\tag*{$\triangle$}
    \]
    %
\end{example}%
\begin{remark}[Probabilistic Weakest Liberal Preexpectations]
\newcommand{\pwlpsymbol}{\sfsymbol{pwlp}}
\newcommand{\pwlp}{\wt{\pwlpsymbol}}
\newcommand{\wlpparamsymbol}[1]{\sfsymbol{wlp}^{\natord #1}}
\newcommand{\wlpparam}[1]{\wt{\wlpparamsymbol{#1}}}

\citet{DBLP:series/mcs/McIverM05} and \citet{DBLP:journals/jcss/Kozen85} define a probabilistic $\wlpsymbol$-semantics where $\wlp{C}{0}(\state)$ yields the probability that $C$ diverges on input $\state$.
Technically, probabilistic programs are $\wgcl$-programs over the real semiring $(\PosRealsInf, +, \cdot, 0, 1)$ where branching and weighting is restricted to statements of the form $\WCHOICE{\ldots}{p}{q}{\ldots}$, where $p + q = 1$.
However, by \cref{thm:wlp-operational-semantics}, our $\wlpsymbol$ over $\PosRealsInf$ yields for \emph{all loops} and \emph{all states} $\wlp{\mathit{loop}}{0}(\state) \in \{\snull = 0, \sTop = \infty\}$ and is thus trivial.
We cannot simply fix this by choosing probabilities in $[0,1]$ as our module $\Smodule$ since $[0,1]$ is not closed under addition.
Nonetheless, we can recover the $\wlpsymbol$ of \citet{DBLP:series/mcs/McIverM05,DBLP:journals/jcss/Kozen85} by considering the \emph{greatest fixed point below or equal to $1$} instead of the true gfp in $\PosRealsInf$, i.e. we would consider a modified transformer $\wlpparamsymbol{1}$.
It is easy to show that this is well-defined and still satisfies \cref{thm:wlp-decomp} and \cref{thm:wlp-operational-semantics} (with the multiplication $\smul \sTop$ on the right hand side of \eqref{eq:wlp-operational-semantics} omitted).
\qedtriangle
\end{remark}


\section{Verification of Loops}
\label{sec:loop-verification}

For loop-free programs, weakest (liberal) preweightings can be obtained essentially by means of syntactic reasoning.
For loops, however, this is not the case since we need to reason about fixed points. 
This section introduces easy-to-apply proof rules for bounding weakest (liberal) preweightings of loops, generalizing rules from the probabilistic setting~\cite{DBLP:series/mcs/McIverM05}.

\subsection{Invariant-Based Verification of Loops}



Let us fix throughout the rest of the section an ambient monoid $\Monoid$ of program weights and an ambient $\omega$-bicontinuous $\Monoid$-module $\Smodule$.
Since the $\wpsymbol$- and $\wlpsymbol$-characteristic functions of loops are $\omega$-(co)continuous (see \cref{thm:wp-continuous} and \ref{thm:wlp-continuous}), we obtain proof rules for loops by Park induction \cref{app:thm:kleene-park}:

\begin{theorem}[Induction Rules for Loops]
    \label{thm:park-wp}
    Let \,$\charwp{\guard}{C}{f}$ and \,$\charwlp{\guard}{C}{f}$ be the $\wpsymbol$- and $\wlpsymbol$-characteristic functionals of the loop $\WHILEDO{\guard}{C}$ with respect to postweighting $f$.
    Then for all $I \in \Weightings$,
    \begin{align*}
       \charwp{\guard}{C}{f}(I) \nnatord I &\qqimplies \wp{\WHILEDO{\guard}{C}}{f} \nnatord I~, \qqand \\
       I \nnatord \charwlp{\guard}{C}{f}(I) &\qqimplies I \nnatord \wlp{\WHILEDO{\guard}{C}}{f}  ~.
    \end{align*}
\end{theorem}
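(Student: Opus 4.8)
The plan is to recognize both implications as instances of \emph{Park induction} and to discharge them using the Kleene-style fixed point characterizations already established in \cref{thm:wp-continuous} and \cref{thm:wlp-continuous}. Write $\Phi \coloneqq \charwp{\guard}{C}{f}$ and $\Psi \coloneqq \charwlp{\guard}{C}{f}$. As noted at the beginning of this section, both characteristic functionals are $\omega$-(co)continuous by \cref{thm:wp-continuous} and \cref{thm:wlp-continuous}, and in particular \emph{monotone} with respect to the natural order $\natord$; monotonicity is the only structural property of $\Phi$ and $\Psi$ that the argument will use. Moreover, those same theorems give the iterative forms $\wp{\WHILEDO{\guard}{C}}{f} = \bigjoin_{i \in \Nats} \Phi^i(\snull)$ and $\wlp{\WHILEDO{\guard}{C}}{f} = \bigmeet_{i \in \Nats} \Psi^i(\top)$, which I would feed directly into the two inductions below.

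For the first (upper-bound) rule, assume $\Phi(I) \nnatord I$, i.e.\ $I$ is a prefixed point of $\Phi$. I would first show by induction on $i$ that $\Phi^i(\snull) \nnatord I$ for every $i \in \Nats$: the base case $\Phi^0(\snull) = \snull \nnatord I$ holds because $\snull$ is the least element of $\Smodule$, and the inductive step follows from monotonicity together with the hypothesis, since $\Phi^{i+1}(\snull) = \Phi(\Phi^i(\snull)) \nnatord \Phi(I) \nnatord I$. Thus $I$ is an upper bound of the ascending chain $(\Phi^i(\snull))_{i \in \Nats}$, and as $\wp{\WHILEDO{\guard}{C}}{f}$ is the least upper bound of that chain, I conclude $\wp{\WHILEDO{\guard}{C}}{f} \nnatord I$.

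The second (lower-bound) rule is entirely dual. Assuming $I \nnatord \Psi(I)$, i.e.\ $I$ is a postfixed point, I would show by induction that $I \nnatord \Psi^i(\top)$ for all $i$: the base case $I \nnatord \top$ uses that $\top$ is the greatest element of $\Smodule$ (its existence is guaranteed by $\omega$-bicontinuity), and the step uses $I \nnatord \Psi(I) \nnatord \Psi(\Psi^i(\top)) = \Psi^{i+1}(\top)$. Hence $I$ is a lower bound of the chain $(\Psi^i(\top))_{i \in \Nats}$, which descends with respect to $\natord$, and since $\wlp{\WHILEDO{\guard}{C}}{f}$ is its greatest lower bound, $I \nnatord \wlp{\WHILEDO{\guard}{C}}{f}$ follows. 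Alternatively, both rules are immediate corollaries of the general Park induction principle \cref{app:thm:kleene-park}, once \cref{thm:wp-continuous} and \cref{thm:wlp-continuous} supply the required monotonicity.

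I do not anticipate a genuine obstacle: the entire content is the two short chain inductions plus the least-upper-bound / greatest-lower-bound step, and all the heavy lifting --- existence of the fixed points as suprema/infima of the Kleene chains, and (co)continuity hence monotonicity of the functionals --- has already been carried out. The only point requiring a little care is to keep the order conventions straight, since for $\wlpsymbol$ the ``greatest fixed point'' and the chain descending from $\top$ must both be read with respect to the natural order $\natord$, exactly as the cocontinuity statement of \cref{thm:wlp-continuous} is phrased.
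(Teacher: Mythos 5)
Your proposal is correct and follows essentially the same route as the paper: the paper obtains both rules directly by Park induction (\cref{app:thm:kleene-park}), whose proof is precisely the chain induction you spell out --- $\Phi^i(\snull) \nnatord I$ by monotonicity and the least-upper-bound step for $\wpsymbol$, with the dual argument (order reversal) for $\wlpsymbol$. Your explicit unfolding plus the closing remark that everything is an instance of \cref{app:thm:kleene-park} matches the paper's reasoning exactly.
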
%
\noindent{}%
%
%
The weightings $I$ are called \emph{$\wpsymbol$-superinvariants} and \emph{$\wlpsymbol$-subinvariants}, respectively (or just \emph{invariants} if clear from context). 
In many cases --- in particular for loop-free loop bodies --- the above proof rules are easy to apply as they only require to apply the respective characteristic functional \emph{once}. \cref{ex:termination-arctic} demonstrates invariant-based reasoning and our annotation-style for loops.

What about the converse directions, i.e.\ lower bounds for $\wpsymbol$ and upper bounds for $\wlpsymbol$? For that, the analogous formulations of the above proof rules do not hold in general~\cite{DBLP:phd/dnb/Kaminski19}.
In the next subsection we show that in the case of \emph{terminating} programs, these formulations \emph{do} hold.

\subsection{Terminating Programs and Unique Fixed Points}
\label{sec:uniquefp}

The notion of \emph{universal certain termination} is central to the results of this section:%
\begin{definition}[Universal Certain Termination]
    \label{def:uct}
    A $\wgcl$-program $C$ \emph{terminates certainly} on initial state $\state \in \States$ if there does not exist an infinite computation path starting in $\sosStateAbbr{C}{\state}$. 
    Moreover, $C$ is \emph{universally certainly terminating} (UCT) if it terminates certainly on all $\state \in \States$. \qedtriangle
\end{definition}%
\noindent{}%
Certain termination of a program $C \in \wgcl$ is also known as \emph{demonic termination} of the program obtained from $C$ by ignoring all weight-statements and interpreting branching as demonic non-determinism.
Note that all loop-free programs are trivially UCT.
A well-established method for proving certain termination is by use of \emph{ranking functions}~\cite{DBLP:journals/cacm/Dijkstra75}.
An important consequence of UCT is that $\wpsymbol$ and $\wlpsymbol$ coincide. This is intuitively clear because if $C$ is UCT then $\wlpC{C}$ has no additional nonterminating behavior to account for compared to $\wpC{C}$.
Formally:
%
%
%
%
\begin{theorem}[Unique Fixed Points by Universal Certain Termination]
	\label{thm:strong_term_for_state_unique_fp}
	Let $\WHILEDO{\guard}{C}$ have a UCT loop body $C$ and let $\charwp{\guard}{C}{f}$ and $\charwlp{\guard}{C}{f}$ be its $\wpsymbol$- and $\wlpsymbol$-characteristic functionals with respect to an arbitrary postweighting~$f$. 
	Then $\charwp{\guard}{C}{f} = \charwlp{\guard}{C}{f}$.
	
	Furthermore, let $I, J \in \Weightings$ be fixed points of $\charwp{\guard}{C}{f}$.
	Then
%
    \[
        \WHILEDO{\guard}{C} \textnormal{ terminates certainly on } \state \qqimplies I(\state) \eeq J(\state)~.
    \]
    Moreover, if $\WHILEDO{\guard}{C}$ is UCT, then $\charwp{\guard}{C'}{f}$ has a unique fixed point and%
    \begin{align*}
    	\wp{\WHILEDO{\guard}{C}}{f} \eeq \wlp{\WHILEDO{\guard}{C}}{f}~.
    \end{align*}%
\end{theorem}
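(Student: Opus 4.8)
The plan is to establish the three claims in sequence, turning statements about fixed points into explicit sums over computation paths via the soundness theorems. For the equality of characteristic functionals, note that $\charwp{\guard}{C}{f}$ and $\charwlp{\guard}{C}{f}$ differ only in that one applies $\wpC{C}$ and the other $\wlpC{C}$ to the loop body, so it suffices to prove $\wlp{C}{X} = \wp{C}{X}$ for all $X \in \Weightings$. By \cref{thm:wlp-decomp} this reduces to showing $\wlp{C}{\mnull} = \snull$. Here I would invoke \cref{thm:wlp-operational-semantics}: because $C$ is UCT, the tree $T_{C,\state}$ is finitely branching and has no infinite path, so König's lemma makes it finite. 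Hence there is an $N$ with $\pathsOfLengthStartingIn{n}{\sosStateAbbr{C}{\state}} = \emptyset$ for all $n > N$, so the inner sums in \eqref{eq:wlp-operational-semantics} equal $\snull$ from some point on and the infimum of this decreasing chain is $\snull$. Plugging $\wlp{C}{\mnull} = \snull$ back into the decomposition gives $\wlp{C}{X} = \wp{C}{X}$, and thus $\charwp{\guard}{C}{f} = \charwlp{\guard}{C}{f}$.

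For the agreement of fixed points on certainly-terminating states, let $I, J$ be fixed points of $\Phi \coloneqq \charwp{\guard}{C}{f}$. I would prove $I(\state) = J(\state)$ by well-founded induction on $\mathrm{rank}(\state)$, the length of the longest computation path of $\WHILEDO{\guard}{C}$ started in $\sosStateAbbr{\WHILEDO{\guard}{C}}{\state}$, which is finite precisely because the loop terminates certainly on $\state$ (König's lemma again). Evaluating $\Phi(I) = I$ and $\Phi(J) = J$ at $\state$: if $\state \not\models \guard$, the equation collapses to $I(\state) = f(\state) = J(\state)$. If $\state \models \guard$, it gives $I(\state) = \wp{C}{I}(\state)$ and $J(\state) = \wp{C}{J}(\state)$, which by \cref{thm:wp-operational-semantics} (using that $C$ is UCT, so all body paths are terminal) become $\mbigadd_{\compPath} \pathWeight(\compPath) \smop I(\lastState(\compPath))$ and likewise for $J$, ranging over the terminal paths of $C$ from $\state$. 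Each post-iteration state $\lastState(\compPath)$ sits inside the finite loop-tree of $\state$, so the loop terminates certainly there with strictly smaller rank; the induction hypothesis then yields $I(\lastState(\compPath)) = J(\lastState(\compPath))$ termwise, making the two sums coincide.

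For uniqueness and $\wpsymbol = \wlpsymbol$, observe that if the whole loop is UCT then every state terminates certainly, so the previous part forces any two fixed points of $\Phi$ to agree everywhere, i.e.\ $\Phi$ has a unique fixed point. By the first part $\Phi = \charwp{\guard}{C}{f} = \charwlp{\guard}{C}{f}$; hence $\wp{\WHILEDO{\guard}{C}}{f}$ (the least fixed point, \cref{thm:wp-continuous}) and $\wlp{\WHILEDO{\guard}{C}}{f}$ (the greatest fixed point, \cref{thm:wlp-continuous}) are two fixed points of the \emph{same} functional and are therefore equal.

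I expect the main obstacle to be the inductive step of the second part: one must argue that the subtree hanging off each post-iteration state $\lastState(\compPath)$ is an isomorphic copy of the loop-tree started fresh in $\lastState(\compPath)$ (up to offsets in step count and branch history), so that both certain termination and a \emph{strictly} smaller rank transfer to it. The reliance on $C$ being UCT, which guarantees the body always finishes so that \cref{thm:wp-operational-semantics} captures a full one-step loop unrolling, is likewise essential at this point.
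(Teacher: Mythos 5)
Your proposal is correct, and its first and last parts coincide with the paper's own proof: the equality $\charwp{\guard}{C}{f} = \charwlp{\guard}{C}{f}$ is obtained exactly as in the paper (body UCT plus K\H{o}nig's lemma gives bounded path lengths, so $\wlp{C}{\mnull} = \mnull$ by \cref{thm:wlp-operational-semantics}, and \cref{thm:wlp-decomp} then yields $\wlpC{C} = \wpC{C}$), and the \enquote{Moreover} part follows in both proofs by applying the middle claim at every state. Where you genuinely diverge is the middle claim. The paper does no induction at all: it observes that any fixed point $I$ of $\charwp{\guard}{C}{f}$ is sandwiched as $\lfpop \charwp{\guard}{C}{f} \natord I \natord \gfpop \charwp{\guard}{C}{f}$ pointwise (the greatest fixed point exists because, by the first part, $\charwp{\guard}{C}{f}$ equals the $\omega$-cocontinuous $\charwlp{\guard}{C}{f}$), and that at a certainly-terminating state $\state$ the two extremal fixed points coincide --- again by \cref{thm:wlp-decomp} and \cref{thm:wlp-operational-semantics}, now applied to the loop itself --- so all fixed points are squeezed to a single value at $\state$ by antisymmetry of $\natord$. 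Your rank induction replaces this squeeze by an operational, termwise comparison of the sums from \cref{thm:wp-operational-semantics}; it is more laborious, since you must formalize the subtree isomorphism and the strict rank decrease, which the paper's route avoids entirely, but it also buys something: it never invokes $\wlpsymbol$, greatest fixed points, or even UCT of the loop body for that claim --- certain termination of the loop on $\state$ alone suffices --- so it is more self-contained and marginally more general. One small inaccuracy worth fixing: \cref{thm:wp-operational-semantics} does not require the body to be UCT (its sum ranges over terminating paths by definition), so your parenthetical \enquote{using that $C$ is UCT, so all body paths are terminal} is superfluous, though harmless.
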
%
\noindent{}%
%
%
%
Hence, the converse directions of the rules in \cref{thm:park-wp} \emph{do} hold for UCT loops with UCT loop-body. In particular, we can reason about \emph{exact} weakest (liberal) preweightings of such loops.%
%
%
\begin{corollary}
    \label{cor:strong_term_unique_fp}
    If both $\WHILEDO{\guard}{C}$ and $C$ are UCT, then 
    for all $f \in \Weightings$ and all $I \in \Weightings$,
    \begin{align*}
    &I \nnatord \charwp{\guard}{C}{f}(I)  \qqimplies I \nnatord \wp{\WHILEDO{\guard}{C}}{f} ~,~\text{and} \\
    & \charwlp{\guard}{C}{f}(I) \nnatord I \qqimplies \wlp{\WHILEDO{\guard}{C}}{f} \nnatord I ~.
    \end{align*}
\end{corollary}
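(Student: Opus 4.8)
The plan is to derive the two claimed implications directly from the two ingredients already at our disposal: the Park induction rules of \cref{thm:park-wp} and the collapse of $\wpsymbol$ and $\wlpsymbol$ under universal certain termination established in \cref{thm:strong_term_for_state_unique_fp}. The key observation is that \cref{thm:park-wp} supplies only the \emph{easy} directions --- upper bounds for $\wpsymbol$ via superinvariants and lower bounds for $\wlpsymbol$ via subinvariants --- whereas the corollary asks for the opposite directions. Under UCT, however, the two calculi become indistinguishable, so each Park rule can be reread as a statement about the \emph{other} transformer, thereby supplying exactly the bound that is otherwise missing.

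Concretely, since both $\WHILEDO{\guard}{C}$ and $C$ are UCT, \cref{thm:strong_term_for_state_unique_fp} hands me two equalities to exploit: first, the characteristic functionals coincide, $\charwp{\guard}{C}{f} = \charwlp{\guard}{C}{f}$; and second, their fixed points collapse so that $\wp{\WHILEDO{\guard}{C}}{f} = \wlp{\WHILEDO{\guard}{C}}{f}$. For the first implication, I start from the hypothesis $I \nnatord \charwp{\guard}{C}{f}(I)$ and rewrite the functional on the right via the first equality to obtain $I \nnatord \charwlp{\guard}{C}{f}(I)$. This is precisely the premise of the $\wlpsymbol$-Park rule of \cref{thm:park-wp}, which yields $I \nnatord \wlp{\WHILEDO{\guard}{C}}{f}$. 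Rewriting the right-hand side once more via the fixed-point equality then concludes $I \nnatord \wp{\WHILEDO{\guard}{C}}{f}$, as desired.

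The second implication is entirely symmetric. Starting from $\charwlp{\guard}{C}{f}(I) \nnatord I$, I replace $\charwlp{\guard}{C}{f}$ by $\charwp{\guard}{C}{f}$ using the functional equality, matching the premise $\charwp{\guard}{C}{f}(I) \nnatord I$ of the $\wpsymbol$-Park rule; this rule gives $\wp{\WHILEDO{\guard}{C}}{f} \nnatord I$, and the fixed-point equality then turns the left-hand side into $\wlp{\WHILEDO{\guard}{C}}{f}$, establishing $\wlp{\WHILEDO{\guard}{C}}{f} \nnatord I$.

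I do not expect any genuine obstacle here: all the heavy lifting --- in particular proving that UCT forces the unique-fixed-point collapse, which itself relies on Kőnig's lemma ruling out infinite paths --- is already discharged in \cref{thm:strong_term_for_state_unique_fp}. The only point demanding a little care is the bookkeeping of the directions of the natural order: one must match each hypothesis to the \emph{correct} one of the two Park rules and make sure the functional equality and the fixed-point equality are each invoked on the appropriate side of the respective inequality.
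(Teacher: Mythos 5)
Your proposal is correct and matches the paper's intended derivation exactly: the paper obtains this corollary by combining the two equalities of \cref{thm:strong_term_for_state_unique_fp} (coincidence of the characteristic functionals from UCT of the body $C$, and coincidence of the fixed points $\wp{\WHILEDO{\guard}{C}}{f} = \wlp{\WHILEDO{\guard}{C}}{f}$ from UCT of the loop) with the two Park induction rules of \cref{thm:park-wp}, swapping each hypothesis over to the other transformer precisely as you do. Your care about matching each hypothesis to the correct Park rule and applying the equalities on the appropriate sides is exactly the only bookkeeping the argument requires.
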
%
\noindent{}%
%
%
Let us now look at reasoning about loops in action.
For this, we extend our annotation scheme to loops as shown in \cref{fig:loop-annotations}.
Again, read the annotations from bottom to top as follows and consider ${\bowtie}$ as ${\natord}$ for simplicity:%
\begin{figure}[t]
    \begin{subfigure}[t]{.45\linewidth}
            \begin{minipage}{1\linewidth}
                \begin{align*}
					&\bowtieannotate{I} \tag{meaning $g \bowtie I$}\\
					&\phiannotate{g} \tag{$g = \iverson{\neg \guard} \ivop f \sadd \iverson{\guard} \ivop I''$}\\
					&\WHILE{\guard} \\
					&\qquad \bowtieannotate{I''} \tag{meaning $I' \bowtie I''$}\\
					&\qquad \wpannotate{I'} \tag{meaning $g = \wp{C}{f}$}\\
					&\qquad C \\
					&\qquad \starannotate{I} \tag{we employ invariant $I$}\\
					&\} \\
					&\annotate{f} \tag{postweighting is $f$}
                \end{align*}
            \end{minipage}
		\subcaption{Annotation style for loops using invariants.}
        \label{fig:loop-annotations}
    \end{subfigure}
    \quad\hfill
    \begin{subfigure}[t]{.5\linewidth}
            \begin{minipage}{1\linewidth}
                \begin{align*}
                &\eqannotate{\scriptstyle \iverson{x =0  \vee y = 0} \ivop 0 \mmadd \iverson{x = 0  \vee y = 0} \ivop (2x + y) \gray{\eeq I}} \\
                &\phiannotate{\scriptstyle \iverson{x = 0  \vee y = 0} \ivop 0 \mmadd \iverson{x > 0  \wedge y > 0} \ivop I'' \gray{\eeq g}} \\
                & \WHILE{x > 0  \wedge y > 0} \\
                & \qquad \eqannotate{\scriptstyle \iverson{(x >1) \vee (x > 0 \wedge y > 1)} \ivop (2x+y) \mmadd 1 \smop \iverson{x \leq 1 \vee y \leq 1} \ivop 0 \gray{\eeq I''}} \\
                & \qquad \wpannotate{\scriptstyle \textit{(expression ommited)} \gray{\eeq I'}}  \\
                & \qquad \BRANCH{\COMPOSE{\ASSIGN{x}{x{-}1}}{\ASSIGN{y}{y{+}1}}}{\ASSIGN{y}{y{-}1}} \fatsemi \WEIGH{1} \\
                & \qquad \starannotate{\scriptstyle \iverson{x =0  \vee y = 0} \ivop 0 \mmadd \iverson{x = 0  \vee y = 0} \ivop  (2x + y) \gray{\eeq I}} \\
				& \} \\
				& \annotate{0 \gray{\eeq f}}
                \end{align*}
            \end{minipage}
        \subcaption{$\wpsymbol$ loop annotations for \cref{ex:termination-arctic}.}
        \label{fig:ex:wp-upper-bounds}
    \end{subfigure}
    \caption[]{
        Inside the loop, we push an invariant $I$ (provided externally, denoted by $\annocolor{{\talloblong}\!{\talloblong}\:I}$) through the loop body, thus obtaining $I''$ which is (possibly an over- or underapproximation of) $I' = \wp{\mathit{C'}}{I}$.
        Above the loop head, we then annotate $g = \iverson{\neg\guard} \ivop f \sadd \iverson{\guard} \ivop I''$.
        In the first line, we establish $g \bowtie I$, for~\mbox{${\bowtie} \in \{{\preceq},\, {=},\, {\succeq}\}$}. 
    }
    \label{fig:loop-annotationzzz}
\end{figure}
\begin{enumerate}[align=left]
\item[(1)\hspace{0.975em}$\annotate{f}$\hspace{.35em}]
	We start our reasoning from postweighting $f \in \Weightings$.
\item[(2)\hspace{0.95em}$\starannotate{I}$\hspace{.55em}] We choose (\emph{creatively}) an invariant $I$ which we are going to push through the loop body.
\item[(3)\hspace{1em}$\wpannotate{I'}$\hspace{.25em}]
	This annotation is obtained (\emph{uncreatively}) from applying $\wp{\cloze{C}}{\cloze{f}}$, just as in \cref{fig:annotationzz}.
	The program passed into $\wpsymbol$ is the loop body $C$ and the continuation is the invariant $I$.
	Hence, this annotation states $I' = \wp{C}{I}$ and by that we have pushed $I$ through \mbox{the loop body}.
\item[(4)\hspace{0.95em}$\preceqannotate{I''}$] 
	This annotation states that $I' \natord I''$, i.e.\ $I''$ overapproximates $I'$, just as in \cref{fig:annotationzz}.
\item[(5)\hspace{1em}$\phiannotate{g}$\hspace{.475em}] 
	This annotation $g$ is obtained from $I''$ --- the result of pushing the invariant $I$ through the loop body (and possibly overapproximating the result) --- by constructing $g = \iverson{\neg \guard} \ivop f \sadd \iverson{\guard} \ivop I''$. 
	This annotation states that $g \succeq \charwp{}{}{f}(I)$.
\item[(6)\hspace{0.975em}$\preceqannotate{I}$\hspace{.55em}] 
	This annotation states that $g \natord I$, just as in \cref{fig:annotationzz}.
	Since $\charwp{}{}{f}(I) \natord g \natord I$, this final annotation states by \Cref{thm:park-wp} that $\wp{\mathit{loop}}{f} \natord I$ and we could continue reasoning with $I$.
\end{enumerate}%
%
%
%
%
%
\begin{example}
    \label{ex:termination-arctic}
	Consider the \emph{arctic} semiring $\arcring = ({\NatsInf \cup \set{-\infty}},\, {\max},\, {+},\, {-\infty},\, {0})$ and the program
    \[
        C
        \qeq
        \WHILEDO{x > 0  \wedge y > 0}{ \quad
            \COMPOSE{\BRANCH{\COMPOSE{\ASSIGN{x}{x-1}}{\ASSIGN{y}{y+1}}}{\ASSIGN{y}{y-1}}}{\quad \WEIGH{1} \quad }} ~.
    \]
    $C$ is UCT, witnessed by the ranking function $r = 3x + 2y$:
	Both branches of the loop body strictly decrease the value of $r$.
	We verify that $I = \iverson{\neg \guard} \ivop 0 \sadd \iverson{\guard} \ivop (2(x - 1) + y)$, where $\guard = ( x > 0 \land y > 0)$, is a fixed point of $\charwp{}{}{0}$ in \cref{fig:ex:wp-upper-bounds}.
	Hence, by \cref{cor:strong_term_unique_fp}, we get $\wp{C}{0} = I$.
    By \cref{thm:wp-operational-semantics}, $\wp{C}{0}$ is the maximum weight among all terminating computations paths.
    In $C$, the weight of a path is the number of times it passes through the loop body.
	We thus conclude that the number of $C$'s loop iterations is bounded by $2(x - 1) + y$ if initially $x > 0  \wedge y > 0$ holds.
	This bound is sharp.
\end{example}%

\section{Case Studies}
\label{sec:app}


\subsection{Competitive Analysis of Online Algorithms by Weighted Programming}
\label{sec:competitive-analysis}

%

\begin{center}
    \vspace{-1ex}
    \begin{adjustbox}{max width=1\linewidth}
        \fbox{%
            \parbox{1.2\textwidth}{%
                \smallskip%
                \hspace*{.5em}
                \begin{minipage}{.33\linewidth}
                    \begin{tabular}{l@{\quad}l}
                        \textbf{Field:}& Competitive Analysis\\
                        \textbf{Problem:}& Ski Rental Problem\\
                    \end{tabular}%
                \end{minipage}%
                \hfill
                \begin{minipage}{.37\linewidth}
                    \begin{tabular}{l@{\quad}l}
                        \textbf{Model:}& Optimization Problem \\
                        \textbf{Semiring:}& Tropical Semiring
                    \end{tabular}%
                \end{minipage}%
                \hfill
                \begin{minipage}{.25\linewidth}\begin{tabular}{l@{\quad}l}
                        \textbf{Techniques:}& $\wpsymbol$
                    \end{tabular}%
                \end{minipage}%
            }%
        }%
    \end{adjustbox}\medskip
\end{center}

\noindent
We now demonstrate \emph{how to model optimization problems by means of weighted programming} and how to reason about \emph{competitive ratios of online algorithms}~\cite{DBLP:books/daglib/0097013,DBLP:conf/dagstuhl/1996oa}  \emph{on source code level} by means of our $\wpsymbol$ calculus with the aid of invariants. 
In particular, we model both the \emph{optimal solution to the Ski Rental Problem itself} as well as \emph{the optimal deterministic online algorithm for the problem} as weighted programs.
We argue that weighted programming provides a natural formalism for reasoning about the competitive ratio of online algorithms since weighted programs enable the succinct integration of \emph{cost models}.

\subsubsection{Online Algorithms and Competitive Analysis}

Online algorithms perform their computation without knowing the entire input a priori. 
Rather, parts of the input are revealed to the online algorithm during the course of the computation. 
We 
consider here the well-known \emph{Ski Rental Problem} ~\cite{DBLP:series/txtcs/Komm16}: 
Suppose we go an a ski trip for an \emph{a priori unknown} number of $\varvaclen \geq 1$ days and we do \emph{not} own a pair of skis.
At the beginning of each day, we must choose between either renting skis for exactly one day~(cost: $1$ Euro) or to buy a pair of skis~(cost: $\varbuycost$ euros).

The optimization goal is to minimize the total cost for the whole trip. 
If we knew the duration $\varvaclen$ of the trip a priori, the optimal solution would be rather obvious: 
If $\varvaclen \geq \varbuycost$, we \emph{buy} the skis. 
Otherwise, we are cheaper off renting every day. 
This situation would correspond to an \emph{offline} setting, with both $\varvaclen$ and $\varbuycost$ at hand, allowing for an optimal solution. 
Conversely, if the trip duration $\varvaclen$ is \emph{unkown} and only the cost $\varbuycost$ of the skis is known, we are in an online setting of the Ski Rental Problem.

Lacking knowledge about the entire input a priori often comes at the cost of \emph{non-optimality}: 
An online algorithm typically performs worse than the optimal offline algorithm. 
\emph{Competitive analysis}~\cite{DBLP:books/daglib/0097013} is a technique for measuring the degree of optimality of an online algorithm. The central notion is the \emph{competitive ratio} of an online algorithm. Given a problem instance $\rho$, denote by $\algonl(\rho)$ and $\algopt(\rho)$ the cost of an online algorithm $\algonl$ and the cost of its optimal offline counterpart $\algopt$ on $\rho$, respectively. The competitive ratio of $\algonl$ is defined as%
%
\begin{align*}
     \sup_\rho \frac{\algonl(\rho)}{\algopt(\rho)}  ~,
\end{align*}%
\normalsize%
i.e.\ the smallest constant upper-bounding the ratio between the cost of $\algonl$ and $\algopt$ for all problem instances $\rho$. We determine such competitive ratios by $\wpsymbol$-reasoning on weighted programs.


\begin{figure}
	\centering
	\begin{subfigure}[t]{0.45\textwidth}
			\begin{align*}
				%
				%
				%
				%
				&\WHILE{\varvaclen > 0} \\
				%
				%
				%
				& \qquad \ASSIGN{\varvaclen}{\varvaclen - 1}\fatsemi \\
				%
				%
				%
				%
				& \qquad \{ ~ \texttt{\textcolor{gray}{(* rent *)}}  \\
				%
				%
				&\qquad \qquad \WEIGH{1} \\
				%
				%
				&\qquad \} \BranchSymbol \{  ~ \texttt{\textcolor{gray}{(* buy *)}} \\
				%
				%
				%
				& \qquad \qquad \WEIGH{\varbuycost} \fatsemi \\
				%
				%
				%
				%
				&\qquad \qquad \ASSIGN{\varvaclen}{0}  ~ \texttt{\textcolor{gray}{(* terminate *)}}\\
				%
				%
				&\qquad \} ~ \}
				%
				%
			\end{align*}
		\subcaption{The program $\crentnondet$.}
	\end{subfigure}
	\hfill
	\begin{subfigure}[t]{0.45\textwidth}
			\begin{align*}
			   %
				&\ASSIGN{\varvaccount}{0}\fatsemi \\
				%
				%
				%
                 %
				%
				&\WHILE{\varvaclen > 0} \\
				%
				%
				%
				%
				%
				& \qquad \ASSIGN{\varvaclen}{\varvaclen - 1}\fatsemi
				%
				%
				\ASSIGN{\varvaccount}{\varvaccount + 1}\fatsemi \\
				%
				%
				& \qquad \IF{\varvaccount < \varbuycost} \\
				%
				%
				& \qquad \qquad \WEIGH{1} \\
				%
				%
				& \qquad \ELSE \\
				%
				%
				%
				& \qquad \qquad \WEIGH{\varbuycost}\fatsemi 
				%
				%
				%
				%
				\ASSIGN{\varvaclen}{0} \\
				%
				%
				& \qquad \} ~ \}
				%
				%
			\end{align*}
			%
		\subcaption{The program $\crentonline$.}
	\end{subfigure}
	%
	\caption{The \emph{optimal solution} to the Ski Rental Problem is modeled by $\crentnondet$. The program $\crentonline$ implements the optimal \emph{deterministic online algorithm}.}
	\label{fig:ski_rental_programs_verification}
\end{figure}

\subsubsection{Modeling Infinite-State Online Algorithms as Weighted Programs}

Together with $\wpsymbol$-reasoning, weighted programs over the tropical semiring $\tropring$ provide an appealing formalism for the competitive analysis of \emph{infinite-state} online algorithms since (1) (nondeterministic) programs naturally describe algorithmic problems- and solutions, and (2) reasoning on \emph{source code level} enables reasoning about \emph{infinite-state} models. 
Modeling online algorithms as weighted programs is inspired by \cite{DBLP:conf/soda/AminofKL09,DBLP:journals/talg/AminofKL10}, who employ finite-state weighted automata for the automated competitive analysis of \emph{finite-state} online algorithms. We drop the restriction to finite-state algorithms which comes, however, at the cost of full automation of their verification.

Consider the nondeterministic weighted program $\crentnondet$ on the left-hand side of Figure~\ref{fig:ski_rental_programs_verification} (let us ignore the annotations for the moment). An initial program state $\sigma \in \States$ fixes an instance of the Ski Rental Problem given by the duration $\sigma(\varvaclen)$ of the trip and the cost $\sigma(\varbuycost)$ of the skis. Every execution of $\crentnondet$ on $\sigma$ corresponds to one  possible solution: Each iteration of the loop corresponds to one day of the ski trip. As long as the trip did not end ($\varvaclen >0$), we can either rent the skis (first branch) or buy the skis (second branch). If we buy the skis, there is no further choice to be taken, so the loop terminates. The cost of each choice is modeled by weighing the respective branches appropriately. 

Now recall that in the tropical semiring $\tropring$ we have $\sadd = \min$, $\smul = +$, $\snull = \infty$, and $\sone=0$. Thus, the weight of a terminating computation path $\compPath$ is the sum of the weights along $\compPath$, i.e.\ the cost of the solution given by $\compPath$.
This enables determining the \emph{optimal cost} for every initial program state $\sigma$, i.e.\ every instance of the Ski Rental Problem, by $\wpsymbol$-reasoning (cf.\ \cref{ex:wp-lang}) since 
\[
\wp{\crentnondet}{\sone}(\state) \eeq \text{ \enquote{minimum weight of all terminating computation paths starting in $\state$}}~.
\]%
%
%
%
 Program $\crentonline$ on the right-hand side in Figure~\ref{fig:ski_rental_programs_verification} implements the optimal solution for the \emph{online} version of the Ski Rental Problem. The decisions made by $\crentonline$ must therefore not depend on $\varvaclen$. Let us compare the programs $\crentnondet$ and $\crentonline$. Program $\crentonline$ is obtained from $\crentnondet$ by introducing a counter $\varvaccount$ keeping track of the elapsed time and by replacing the nondeterministic choice in $\crentnondet$ by a \emph{deterministic} one. As long as the current duration of the trip is smaller than the cost of the skis, we rent the skis. As soon as this duration is at least the cost of the skis, we buy the skis. 
Since $\crentonline$ is deterministic, the \emph{cost} of $\crentonline$ on $\sigma$ is given by $\wp{\crentonline}{\sone}(\sigma)$.
%
%
%
%
%
%
\subsubsection{Determining Competitive Ratios by $\wpsymbol$-Reasoning}
Due to the above reasoning,
\[
     \sup_{\sigma \in \States} \,
     %
     \frac{
     \wp{\crentonline}{\sone}(\sigma)}
      {
       \wp{\crentnondet}{\sone}(\sigma)}~.
     %
\]
is the competitive ratio of $\crentonline$.
Hence, we obtain the competitive ratio of $\crentonline$ by determining $\wp{\crentnondet}{\sone}$ and $\wp{\crentonline}{\sone}$. This can be done in an \emph{invariant-based} manner:
\begin{theorem}
	\label{thm:ski_rental_wp}
	We have 
\begin{align*}
	\wp{\crentnondet}{\sone} = \varvaclen \sadd \varbuycost~\text{, and}~
	\wp{\crentonline}{\sone} =  \iverson{\varvaclen=0} \ivop 0    \sadd \iverson{0< \varbuycost}\ivop
	\big( (2\varbuycost  -1) \sadd \iverson{\varvaclen \leq \varbuycost - 1}\ivop\varvaclen \big)~.
\end{align*}
\end{theorem}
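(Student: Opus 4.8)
The plan is to handle both programs with the unique--fixed--point machinery for universally certainly terminating (UCT) loops, i.e.\ \cref{thm:strong_term_for_state_unique_fp} and \cref{cor:strong_term_unique_fp}. I would first note that both loops are UCT: the variable $\varvaclen$ is a ranking function, since every iteration is guarded by $\varvaclen > 0$ and the body executes either $\ASSIGN{\varvaclen}{\varvaclen - 1}$ or $\ASSIGN{\varvaclen}{0}$, both of which strictly decrease the nonnegative integer $\varvaclen$. By \cref{cor:strong_term_unique_fp} the relevant $\wpsymbol$-characteristic function then has a \emph{unique} fixed point, so it suffices to \emph{guess} a weighting and \emph{verify} that it is a fixed point; it is then automatically the least fixed point, hence the sought weakest preweighting. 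Throughout I work in the tropical instantiation, where $\madd = \sadd = \min$, weighting $\WEIGH{a}$ acts as $f \mapsto a \smop f = a + f$, branching is $\sadd$, $\iverson{\cdot}$ inserts $\snull = \infty$ off the guard, and assignments act by substitution.

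For $\crentnondet$ I would take the candidate $I = \varvaclen \sadd \varbuycost$ and evaluate the loop's characteristic function, which unfolds the body $\COMPOSE{\ASSIGN{\varvaclen}{\varvaclen-1}}{\BRANCH{\WEIGH{1}}{(\COMPOSE{\WEIGH{\varbuycost}}{\ASSIGN{\varvaclen}{0}})}}$ to $\charwp{}{}{\sone}(X) = \iverson{\varvaclen = 0}\ivop 0 \sadd \iverson{\varvaclen > 0}\ivop\bigl((1 + X\subst{\varvaclen}{\varvaclen-1}) \sadd (\varbuycost + X\subst{\varvaclen}{0})\bigr)$. On $\varvaclen = 0$ both sides are $0$. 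On $\varvaclen > 0$, using $I\subst{\varvaclen}{\varvaclen-1} = (\varvaclen-1)\sadd \varbuycost$ and $I\subst{\varvaclen}{0} = 0 \sadd \varbuycost = 0$, I compute $\bigl(1 + ((\varvaclen-1)\sadd \varbuycost)\bigr) \sadd \varbuycost = \bigl(\varvaclen \sadd (\varbuycost + 1)\bigr) \sadd \varbuycost = \varvaclen \sadd \varbuycost = I$, so $I$ is a fixed point and $\wp{\crentnondet}{\sone} = \varvaclen \sadd \varbuycost$. As a cross-check, \cref{thm:wp-operational-semantics} yields the same value: the terminating paths are ``rent $j$ days, then buy'' of weight $j + \varbuycost$ and ``rent all $\varvaclen$ days'' of weight $\varvaclen$, whose minimum is $\varvaclen \sadd \varbuycost$.

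For $\crentonline$ I would first peel off the initialization so that $\wp{\crentonline}{\sone} = \bigl(\wp{\mathit{loop}}{\sone}\bigr)\subst{\varvaccount}{0}$, and then exhibit a fixed point of the loop. Because the body increments $\varvaccount$ \emph{before} the test $\varvaccount < \varbuycost$, its characteristic function is $\charwp{}{}{\sone}(X) = \iverson{\varvaclen = 0}\ivop 0 \sadd \iverson{\varvaclen > 0}\ivop\bigl(\iverson{\varvaccount + 1 < \varbuycost}\ivop(1 + X\subst{\varvaccount}{\varvaccount+1}\subst{\varvaclen}{\varvaclen-1}) \sadd \iverson{\varvaccount + 1 \geq \varbuycost}\ivop(\varbuycost + X\subst{\varvaclen}{0}\subst{\varvaccount}{\varvaccount+1})\bigr)$. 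Operationally the online cost from a state $(\varvaclen, \varvaccount)$ is $0$ if $\varvaclen = 0$; and for $\varvaclen > 0$ it is $\varbuycost$ once $\varvaccount \geq \varbuycost$ (buy at once), it is $\varvaclen$ while the whole remaining trip is shorter than the buy day ($\varvaclen \leq \varbuycost - 1 - \varvaccount$), and it is $2\varbuycost - 1 - \varvaccount$ otherwise (rent up to day $\varbuycost - 1$, then buy). I would encode exactly this as the invariant $J$ and verify $\charwp{}{}{\sone}(J) = J$ by splitting on $\varvaclen = 0$ versus $\varvaclen > 0$ and, in the latter case, on $\varvaccount + 1 < \varbuycost$ versus $\varvaccount + 1 \geq \varbuycost$, carefully tracking the off-by-one between $\varvaccount$ and $\varvaccount + 1$. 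Finally, substituting $\varvaccount \mapsto 0$ collapses $J$ --- using $\varbuycost > 0$, so that $0 \leq \varbuycost - 1$ and the buy-at-once region is vacuous --- to $\iverson{\varvaclen = 0}\ivop 0 \sadd \iverson{0 < \varbuycost}\ivop\bigl((2\varbuycost - 1) \sadd \iverson{\varvaclen \leq \varbuycost - 1}\ivop \varvaclen\bigr)$, as claimed.

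The main obstacle is the $\crentonline$ part: pinning down the invariant's case boundaries --- in particular the off-by-one from testing the \emph{incremented} counter and the separate buy-at-once region $\varvaccount \geq \varbuycost$, which is invisible in the final formula but must be present for $J$ to be a \emph{global} fixed point --- and then executing the multi-case check $\charwp{}{}{\sone}(J) = J$. One caveat worth flagging is that the stated equality is really the meaningful-instance identity: the factor $\iverson{0 < \varbuycost}$ forces the right-hand side to $\snull$ when $\varbuycost = 0$ and $\varvaclen > 0$, whereas the program would then buy for free; restricting to $\varbuycost \geq 1$, the standard ski-rental assumption, removes this corner. Everything else reduces to routine tropical arithmetic.
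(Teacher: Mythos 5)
Your proposal is correct and takes essentially the same approach as the paper: both argue that $\crentnondet$ and $\crentonline$ are UCT via the ranking function $\varvaclen$, invoke \cref{cor:strong_term_unique_fp} to reduce everything to a guess-and-verify fixed-point check, use the same $\varvaccount$-parameterized invariant for the loop of $\crentonline$ (including the buy-at-once region $\varvaccount \geq \varbuycost$ that vanishes in the final formula), and conclude by substituting $\varvaccount \mapsto 0$. Your caveat about $\varbuycost = 0$ is also consistent with the paper, which restricts to $\varvaclen, \varbuycost > 0$ in the subsequent competitive-ratio analysis.
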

\begin{proof}
	Since both $\crentnondet$ and $\crentonline$ are UCT (witnessed by the ranking function $\varvaclen$), it suffices to show that the above weightings are fixed points of the respective characteristic functional by \cref{cor:strong_term_unique_fp}. We proceed by annotating the programs. See \cref{sec:app_ski_rental} for details.
\end{proof}%
\noindent{}%
The fact that $\wp{\crentnondet}{\sone} = \varvaclen \sadd \varbuycost = \lambda \sigma.~ \sigma(\varvaclen) \min \sigma(\varbuycost)$ corresponds to our informal description from the beginning of this section: Depending on whether the duration of the trip $\varvaclen$ exceeds the cost $\varbuycost$ of the skis, it is optimal to either immediately buy the skis or to keep renting them every day. The cost $\wp{\crentonline}{\sone}$ of $\crentonline$ is more involved. If $\varvaclen = 0$, the cost of $\crentonline$ is $0$. Otherwise, i.e.\ if the trip lasts for at least one day, there are two cases. If $\varvaclen$ is strictly smaller than $\varbuycost$, then the cost of $\crentonline$ is the minimum of $2\varbuycost - 1$ and $\varvaclen$. Otherwise, i.e.\ if $\varvaclen$ is at least $\varbuycost$, the cost of $\crentonline$ is $2\varbuycost - 1$.

We can now determine the competitive ratio of $\crentonline$. 
Let, for simplicity, both $\varvaclen > 0$ and $\varbuycost >0$ so that $\varvaclen \madd \varbuycost >0$. 
This assumption is reasonable since the problem becomes trivial if the trip ends immediately or the skis are gratis. 
Given two weightings $f,g$ with $g > 0$, we define $\frac{f}{g} = \lambda \sigma. \frac{f(\sigma)}{g(\sigma)}$. We conclude that the competitive ratio of $\crentonline$ is $2$ since $2$ is the smallest constant upper bounding
\belowdisplayskip=0pt%
\begin{align*}
	  \frac{\wp{\crentonline}{\sone}}{\wp{\crentnondet}{\sone}} 
	 \eeq  \frac
	      {  (2\varbuycost  -1) \sadd \iverson{\varvaclen \leq \varbuycost - 1}\ivop \varvaclen}
	      {\varvaclen \sadd \varbuycost } 
	 %
	 %
	 %
	  \eeq \iverson{n \geq y} \ivop \big( 2- \frac
	 { 1}{\varbuycost } \big)
	 \sadd 
	 \iverson{\varvaclen < \varbuycost }\ivop 1~.
\end{align*}%
\normalsize%
%

\subsection{Mutual Exclusion}

\begin{center}
    \vspace{-1ex}
    \begin{adjustbox}{max width=1\linewidth}
        \fbox{%
            \parbox{1.2\textwidth}{%
                \smallskip%
                \hspace*{.5em}
                \begin{minipage}{.33\linewidth}
                    \begin{tabular}{l@{\quad}l}
                        \textbf{Field:}& Formal Verification\\
                        \textbf{Problem:}& Mutual Exclusion
                    \end{tabular}%
                \end{minipage}%
                \hfill
                \begin{minipage}{.37\linewidth}
                    \begin{tabular}{l@{\quad}l}
                        \textbf{Model:}& Computation Traces \\
                        \textbf{Module:}& $\omega$-potent Formal Languages
                    \end{tabular}%
                \end{minipage}%
                \hfill
                \begin{minipage}{.25\linewidth}\begin{tabular}{l@{\quad}l}
                        \textbf{Techniques:}& $\wlpsymbol$
                    \end{tabular}%
                \end{minipage}%
            }%
        }%
    \end{adjustbox}\medskip
\end{center}


\newcommand{\letw}{\ensuremath{W}\xspace}
\newcommand{\letr}{\ensuremath{R}\xspace}
\newcommand{\letc}{\ensuremath{C}\xspace}
\newcommand{\cmut}{\ensuremath{C_\text{mut}}\xspace}
\newcommand{\varprocid}{\ensuremath{i}\xspace}
\newcommand{\varprocssymbol}{\ensuremath{\ell}\xspace}
\newcommand{\varprocs}[1]{\ensuremath{\varprocssymbol[#1]}\xspace}
\newcommand{\varprocsid}{\ensuremath{\ell[\varprocid]}\xspace}
\newcommand{\varsem}{\ensuremath{y}\xspace}
\newcommand{\varnumprocs}{\ensuremath{N}\xspace}
\newcommand{\varaux}{\ensuremath{k}\xspace}

\newcommand{\valw}{\ensuremath{w}\xspace}
\newcommand{\valn}{\ensuremath{n}\xspace}
\newcommand{\valc}{\ensuremath{c}\xspace}
\newcommand{\valchoose}{\ensuremath{j}\xspace}

\smallskip
\noindent{}%
In this case study, we instantiate weighted programming with the module of $\omega$-potent formal languages to reason about \emph{infinite} behaviors of a semaphore-based mutual exclusion algorithm.
This is done in an invariant-based manner enabled by $\wlpsymbol$-reasoning.

\subsubsection{A Mutual Exclusion Protocol}

Consider the program $\cmut$ shown in~\cref{fig:mutual_exclusion_program_annot} and disregard the weightings for the moment.
Program $\cmut$ models $\varnumprocs$ processes participating in a semaphore-based mutual exclusion protocol.
On each iteration of the non-terminating while loop, a scheduler selects one of the $\varnumprocs$ processes.
The status $\varprocsid$ of the selected process $\varprocid$ is either idle ($\valn$), waiting ($\valw$), or critical ($\valc$).
If the process $\varprocid$ idles, it enters the waiting state.
If the process $\varprocid$ is waiting, it checks whether the binary semaphore (modeled by the shared variable $\varsem$) allows to enter the critical section ($\varsem > 0$) and, if so, enters the critical section.
Otherwise, i.e.\ if $\varsem = 0$, the process must continue waiting.
Finally, if the process $\varprocid$ is in the critical section, it releases the critical section and updates the semaphore appropriately.

It can be shown by standard means that the protocol modeled by $\cmut$ indeed ensures mutual exclusion.
That is, whenever we start in a state where at most one process is in the critical section and $\varsem = 0$, it will never be the case that more than one process is in the critical section.
However, the protocol exhibits unfair behavior.
Suppose the semaphore forbids some waiting process $\varaux$ to enter the critical section, i.e.\ $\varsem=0$ and $\varprocs{\varaux} = \valw$.
It is then possible that the scheduler behaves in an adversarial manner such that process $\varaux$ is going to \emph{starve}, i.e.\ wait \emph{forever}.

\subsubsection{Reasoning about Infinite Behavior by $\wlpsymbol$-Reasoning}
We prove that the protocol exhibits unfair behavior by weighted programming and $\wlpsymbol$-reasoning.
To that end, we instantiate our framework with the $\ab^*$-module\footnote{This is the only example in this section where we do not pursue the \emph{default} method of specifying both the monoid $\Monoid$ and the module $\Smodule$ at once by means of a single semiring $\Sring$.} $\mixedWordLangSmodule{\ab}$ of $\omega$-potent formal languages over $\ab$ (cf.\ \cref{ex:wlp-lang}), where
\[
	\ab \eeq \bigcup_{j \in \Nats\setminus\{0\}}\:  \{ \letr_j,\letw_j,\letc_j  \} ~.
\]
Recall that our module addition $\madd$ is union $\cup$, the monoid and scalar-multiplications are concatenations and the zero element is $\mnull = \emptyset$.
Intuitively, (finite or infinite) behaviors of $\cmut$ correspond to (finite or infinite) words over $\ab$.
For instance, the $\omega$-word $\letc_1\letw_2^\omega$ indicates that process $1$ enters the critical section and that subsequently process $2$ waits forever.
This is realized by weighing the branches of the loop body in $\cmut$ appropriately: If the process $\varprocid$ enters the critical section, waits, or releases the critical section, we weight the corresponding branch by $\letc_\varprocid$, $\letw_\varprocid$, or $\letr_\varprocid$, respectively.
This is similar to labeling the states of a transition system by atomic propositions to express properties of the system in, e.g.\ LTL\ \cite{DBLP:books/daglib/0020348}.
Notice, however, that the transition system underlying $\cmut$ is \emph{infinite} so that standard finite-state model checking techniques do not apply.
Now recall from \cref{ex:wlp-lang} that the \emph{language of $\omega$-words} produced by the loop in $\cmut$ on initial state $\sigma$ is $\wlp{\cmut}{\snull}(\sigma)$.
Since the natural ordering $\natord$ on $\mixedWordLangSmodule{\ab}$ is $\subseteq$, verifying that $\cmut$ indeed exhibits the described unfair behavior boils down to proving that 
\[
     \iverson{1 \leq \varaux \leq \varnumprocs \wedge \varprocs{\varaux} = \valw \wedge \varsem=0} \ivop \letw_\varaux^\omega ~{}\natord{}~ \wlp{\cmut}{\snull}~,
\]
i.e.\ we are obliged to prove a \emph{lower bound} on the weakest liberal preweighting of $\cmut$ w.r.t.\ (irrelevant) postweighting $\snull$, which is done in an invariant-based manner \cref{sec:app_mutual_exclusion}.
The above property indeed states that $\cmut$ exhibits unfair behavior:
Whenever some process $\varaux$ is waiting and the semaphore forbids entering the critical section ($\varsem=0$), the behavior $\letw_\varaux^\omega$ is possible, i.e.\ process $\varaux$ might wait forever.

\begin{figure}[t]
	\centering
    \begin{subfigure}[t]{.53\linewidth}
                \begin{minipage}{1\linewidth}
                    \begin{align*}
						&\WHILE{\true} \\
						& \qquad \BIGBRANCH{\valchoose=1}{\varnumprocs}{\ASSIGN{\varprocid}{\valchoose}} \fatsemi \\
						& \qquad \IF{\varprocsid = \valn} \\
						& \qquad\qquad \ASSIGN{\varprocsid}{\valw} \\
						& \qquad \ELSEIF{\varprocsid = \valw} \\
						& \qquad\qquad \IF{\varsem > 0} \,
							\WEIGH{\letc_\varprocid } \fatsemi
								\ASSIGN{\varsem}{\varsem - 1} \fatsemi
								\ASSIGN{\varprocsid}{\valc} \,\} \\
						& \qquad\qquad \ELSESYMBOL \, \{ \WEIGH{ \letw_\varprocid  }\,\} \\
						& \qquad \ELSEIF{\varprocsid = \valc} \\
						& \qquad\qquad \WEIGH{ \letr_\varprocid } \fatsemi
							\ASSIGN{\varsem}{\varsem + 1} \fatsemi
							\ASSIGN{\varprocsid}{\valn} \\
						& \qquad \} \\
						& \}
                    \end{align*}
                \end{minipage}
		\subcaption{The program $\cmut$.}
		\label{fig:mutual_exclusion_program_annot}
    \end{subfigure}
    \hfill
    \begin{subfigure}[t]{.43\linewidth}
                \begin{minipage}{1\linewidth}
                    \begin{align*}
						& \COMPOSE{\ASSIGN{m}{0}}{\ASSIGN{c}{0}} \,\fatsemi \COMMENT{$\ASSIGN{res}{\texttt{[]}} \,\fatsemi$} \\
						& \WHILE{n > 0} \\
						& \qquad \ASSIGN{n}{n-1} \,\fatsemi \, \{ \\
						& \qquad\qquad \ASSIGN{c}{0}\, \COMMENT{$\APPEND{res}{0} \,\fatsemi$} \\
						& \qquad \} \BranchSymbol \{ \\
						& \qquad\qquad \ASSIGN{c}{c+1} \,\fatsemi \COMMENT{$\APPEND{res}{1} \,\fatsemi$} \\
						& \qquad\qquad \ASSIGN{m}{\max(m,c)} \\
						& \qquad \} \\
						& \}
                    \end{align*}
                \end{minipage}
		\subcaption{The program $\Cfib$.}
		\label{fig:fib}
    \end{subfigure}
	%
	\caption{The program $\cmut$ is a mutual exclusion protocol adapted from~\cite{DBLP:books/daglib/0020348}. The program $\Cfib$ generates $n$-bit strings and stores the maximum number of consecutive 1's in $m$.}
	\label[figure]{fig:bb}
\end{figure}

\subsection{Proving a Combinatorial Identity by Program Analysis}

\begin{center}
    \vspace{-1ex}
    \begin{adjustbox}{max width=1\linewidth}
        \fbox{%
            \parbox{1.2\textwidth}{%
                \smallskip%
                \hspace*{.5em}
                \begin{minipage}{.33\linewidth}
                    \begin{tabular}{l@{\quad}l}
                        \textbf{Field:}& Combinatorics\\
                        \textbf{Problem:}& Counting bit patterns
                    \end{tabular}%
                \end{minipage}%
                \hfill
                \begin{minipage}{.37\linewidth}
                    \begin{tabular}{l@{\quad}l}
                        \textbf{Model:}& Combinatorial class \\
                        \textbf{Semiring:}& Natural numbers
                    \end{tabular}%
                \end{minipage}%
                \hfill
                \begin{minipage}{.25\linewidth}\begin{tabular}{l@{\quad}l}
                        \textbf{Techniques:}& $\wpsymbol$
                    \end{tabular}%
                \end{minipage}%
            }%
        }%
    \end{adjustbox}\medskip
\end{center}


\smallskip%
\noindent{}%
We instantiate our framework with the semiring $(\NatsInf, +, \cdot, 0, 1)$ to count the number of computation paths in our programs.
If a program $C$ does not contain weight-statements, it follows from \cref{thm:wp-operational-semantics} that the number of terminating computation paths starting in $\state$ is given by $\wp{C}{1}(\state)$.
More generally, given a predicate $\guard$ over the program variables, the number of paths terminating in a state satisfying $\guard$ on initial state $\state$ is given by $\wp{C}{\iverson{\guard} \ivop 1}(\sigma)$.
Thus, counting computation paths reduces to weakest preweighting-reasoning as illustrated in the following example.

Suppose we were to count the number of bit strings of length $n$ that avoid the pattern \enquote{11}.
Program $\Cfib$ in \cref{fig:fib} non-deterministically \enquote{constructs} bit strings of length equal to the (input) variable $n$ and simultaneously keeps track of the maximum amount of consecutive 1's that have occurred in variable $m$. Since we are interested in counting strings not containing \enquote{11}, we have to determine $\wp{\Cfib}{\iverson{m \leq 1} \ivop 1}$.
To handle the loop in $\Cfib$, we employ the loop invariant
\[
    I \qcoloneqq \iverson{m \leq 1} \iivop (\iverson{c=0}\ivop \fib(n+2) \ssadd \iverson{c>0}\ivop \fib(n+1))
\]
and verify that $I$ is indeed a fixed point of the $\wpsymbol$-characteristic functional of the loop \cref{app:path_counting}.
Here, $\fib(n)$ is the $n$-th \emph{Fibonacci number} defined recursively via $\fib(0) \coloneq 0, \fib(1) \coloneq 1$, and for all $n \geq 2$, $\fib(n) \coloneq \fib(n-1) + \fib(n-2)$.
Since $\Cfib$ is obviously certainly terminating and $I$ is a fixed point of the $\wpsymbol$-characteristic function of the loop w.r.t.\ postweighting $\iverson{m \leq 1} \ivop 1$, we have
\[
    \wp{\Cfib}{\iverson{m \leq 1} \ivop 1} \qeq I\subst{c}{0}\subst{m}{0} \qeq \fib(n+2) ~,
\]
by \cref{thm:strong_term_for_state_unique_fp}, i.e. the number of \enquote{11}-avoiding bit strings of length $n$ is equal to $\fib(n+2)$.



\section{Related Work}
\label{sec:related}

We organize related works in three categories:
(1)~Other generalized predicate transformers, (2)~semiring programming paradigms, (3)~other approaches to modelling optimization problems.

\subsection{Generalized Predicate Transformers and Hoare Logics}

A well-known concrete instance of generalized, quantitative predicates are \emph{potential functions} $\Phi \colon \States \to \PosReals$.
Such functions are used in amortized complexity analysis~\cite{tarjan1985amortized} can be regarded an instance of the weightings introduced in this paper.
\citet{DBLP:conf/pldi/Carbonneaux0S15,carbonneaux2018modular} present a resource bound verification system for a subset of C programs based on potential functions.
A non-trivial subset of their verification rules can be recovered by instantiating our framework with the tropical semiring, and interpreting their resource consumption statement $\code{tick(n)}$ as our weight primitive $\WEIGH{n}$.
More specifically, \citet{DBLP:conf/pldi/Carbonneaux0S15} define a quantitative Hoare triple $\set{\Phi} P \set{\Phi'}$, where $\Phi$, $\Phi'$ are potential functions, and $P$ is a (deterministic) program.
Such a triple is valid iff for all initial states $\sigma \in \States$ such that $P$ terminates in a final state $\sigma'$ it holds that $\Phi(\sigma) \geq n + \Phi'(\sigma')$, where $n$ is the resource consumption of $P$ started on $\sigma$.
It follows that $\set{\wp{P}{\Phi'}} P \set{\Phi'}$ is always a valid triple;
furthermore, $\wp{P}{\Phi'}$ is the \emph{least} potential $X$ that validates the triple $\set{X} P \set{\Phi'}$.
While the programming language from~\cite{DBLP:conf/pldi/Carbonneaux0S15} has advanced features such as procedures and recursion, it lacks a non-deterministic choice as present in $\wgcl$.
A promising direction for future work is to investigate whether the automatic inference algorithm of~\citet{DBLP:conf/pldi/Carbonneaux0S15} can be extended to non-deterministic programs.

Very recent works have studied predicate transformers and Hoare-style logics from an abstract categorical perspective.
A generic approach to define predicate transformers, like our $\wpsymbol$ and $\wlpsymbol$, is given by \citet{DBLP:conf/mfps/AguirreK20}, but \emph{only for loop-free programs}.
On the loop-free fragment of $\wgcl$, our weakest preweighting transformer $\wpsymbol$ is an instance of their framework.
They capture the computational side effects, like our weightings, in a \emph{monad}.
More precisely, our transformer is obtained from the composed monad \code{MSet (Writer w -)} of a multiset monad \code{MSet -} that distributes over a writer monad \code{Writer w -}.
This specific instance, however, is not discussed explicitly by \citet{DBLP:conf/mfps/AguirreK20}.
The writer monad corresponds to our weighting monoid $\Monoid$, whereas the multiset monad captures the branching construct $\BRANCH{C_1}{C_2}$ that we treat via $\Monoid$-modules.
In contrast to their work, our $\wpsymbol$ is defined for loops.
Moreover, we introduce \emph{two} transformers, $\wpsymbol$ for finite computations and $\wlpsymbol$ that additionally accounts for \emph{infinite computations}.
Finally, the correspondence to an operational semantics is not established in \cite{DBLP:conf/mfps/AguirreK20}.
An interesting direction for future work is to explicitly construct a \emph{strongest postcondition} transformer for weighted programming, which \citet{DBLP:conf/mfps/AguirreK20} define non-constructively as an adjoint to $\wpsymbol$.
Problems with defining strongest postexpectations for probabilistic programs, see \cite{claire90}, demonstrate that giving a \emph{concrete} strongest post semantics is far less easy, even if it can be defined abstractly as an adjoint.

In a similar spirit, \citet{DBLP:conf/esop/GaboardiKOS21} introduce the notion of \emph{graded categories} to unify \emph{graded monadic} and \emph{graded comonadic} effects.
The gradings are over partially ordered monoids (pomonoids) and can, for example, model probabilities or resources like our weightings.
In the setting of \emph{imperative} languages, they consider it \enquote{natural to have just the multiplicative structure of the semiring as a pomonoid}~\cite[Sec.\ 6]{DBLP:conf/esop/GaboardiKOS21} because their programs only have one input and output.
The additive structure of semiring gradings has been used to join multiple \emph{inputs} for resource consumption in the $\lambda$-calculus with \emph{comonadic contexts} \cite{DBLP:conf/esop/BrunelGMZ14,DBLP:conf/esop/GhicaS14,DBLP:conf/icfp/PetricekOM14}.
In contrast, we use addition to join multiple \emph{outputs} in \emph{monadic computations}, \eg \emph{branching} in our examples (\cref{fig:ski_rental_programs_verification} and \cref{fig:bb}).
Hence, it might be interesting future work to extend their categorical semantics with branching.
They go on to construct a \emph{Graded Hoare Logic (GHL)} with \emph{judgments} $\vdash_{w} \{\phi\}\,C\,\{\psi\}$ corresponding to $\iverson{\phi} \ivop w \natord \wp{C}{\iverson{\psi}\ivop\sone}$ given \emph{(Boolean)} pre- and postconditions $\phi, \psi$, program $C$, and a weight $w$ from a semiring.
Although \emph{unbounded} loops have been studied in \emph{concrete instances}, they restrict to \emph{bounded} loops in the general setting:
\enquote{%
    This allows us to focus on the grading structures for total functions, leaving the study of the interaction between grading and partiality to future work.%
}~\cite[Sec.\ 2]{DBLP:conf/esop/GaboardiKOS21}.
Our work does not impose such restrictions.
Both of our verification calculi $\wpsymbol$ and $\wlpsymbol$ deal with possibly unbounded loops.

\citet{DBLP:journals/pacmpl/SwierstraB19} handle effects by monads, focussing on \emph{functional} rather than \emph{imperative} programming. 
They show how to \emph{synthesize} programs from specifications using general results on predicate transformers.
Combining these synthesis techniques with the above monad instance of \cite{DBLP:conf/mfps/AguirreK20} for the synthesis of weighted programs is an interesting direction for future work.

\subsection{Computing with Semirings}
There exist a number of computation and programming paradigms in the literature that --- similarly to our approach --- are parameterized by a semiring.
\citet{OConner11} and \citet{DBLP:conf/icfp/Dolan13} show that computational problems such as shortest paths, deriving the regular expression of a finite automaton, dataflow analysis, and others can be reduced to linear algebra over a suitable semiring.
They also provide concise Haskell implementations solving the resulting matrix problems in a unified way.
The heart of these techniques is to compute the so-called \emph{star} or \emph{closure} $x^* = 1 + x + x^2 + \ldots$ where $x$ is a matrix over the semiring.
The same $x^*$ is also the least solution of the equation $x^* = 1 + x \cdot x^*$.
This fixed point equation is closely related to the $\lfpop$ occurring in our $\wpsymbol$.
In fact, it can be interpreted as an automata-theoretic explicit-state analog to our $\wpsymbol$.
Our framework, however, extends this to infinite state spaces and allows reasoning in a symbolic fashion.
The above techniques, on the other hand, would require an infinite transition matrix $x$, and are therefore limited to \emph{finite-state} problems, \eg shortest paths in \emph{finite} graphs.

Functional and declarative approaches for programming with semirings have also been explored.
\citet{DBLP:conf/lics/LairdMMP13} and \citet{DBLP:conf/esop/BrunelGMZ14} consider functional languages parameterized by a semiring and provide a categorical semantics.
Their languages feature weighting computation steps similar to our language.
Additionally, \citet{DBLP:conf/esop/BrunelGMZ14} provide static analysis techniques to obtain upper bounds on the weight of a computation.
Indeed, with an appropriate semiring, the semantics defined in these works also allows reasoning about \eg best/worst-case resource consumption, reachability probabilities, or expected values.
In contrast to \cite{DBLP:conf/lics/LairdMMP13,DBLP:conf/esop/BrunelGMZ14}, our programming language is imperative and our semantics generalizes weakest preconditions. Moreoever, while \citet{DBLP:conf/lics/LairdMMP13} exemplify how their framework can be used to detect infinite reduction sequences, it does not provide a general way to assign a weight to diverging computation paths as our $\wlpsymbol$ does.
\citet{DBLP:conf/esop/BrunelGMZ14} do not deal with infinite computations.

\citet{DBLP:journals/ijar/BelleR20} pursue a declarative approach by computing the \emph{weighted model count} of logical formulae in some theory where the literals are weighted in a semiring.
Applications include matrix factorization, computing polyhedral volumes, or probabilistic inference.
Furthermore, \citet{DBLP:conf/iclp/CohenSS08,DBLP:conf/iwpt/BalkirGC20} study \emph{weighted logic programs} with a focus on parsers.
This declarative paradigm is, however, rather different from our weighted programs which allow specifying models in an algorithmic, imperative manner.

Kleene Algebras with Tests (KAT) \cite{DBLP:conf/lics/Kozen99,DBLP:journals/tocl/Kozen00} can model imperative programs in an abstract fashion by identifying them with the objects from a \emph{Kleene algebra} with an embedded Boolean subalgebra.
An important application of KAT is \emph{equational reasoning}; and hence to \eg derive the rules of Hoare logic by applying algebraic manipulations.
Note that a Kleene algebra is itself an idempotent semiring whose purpose, however, is not to model weights of any kind but the programs themselves.
Nonetheless, to reason about weighted computations similar to us, KAT was recently generalized to Graded KAT \cite{DBLP:journals/cuza/GomesMB19} by replacing the Boolean subalgebra with a more general object that can be viewed as a semiring with additional operations and axioms.
The elements of this semiring constitute the graded (or weighted) outcomes of the tests.
However, (Graded) KAT are no concrete programming languages; their main purpose is to prove general results about imperative languages with loops and conditionals in an abstract fashion.
Indeed, investigating which of our $\wpsymbol$ ($\wlpsymbol$) and invariant-based proof rules can be derived in Graded KAT is an appealing direction for future work.

\subsection{Optimization}
There exists a large amount of work on modelling and solving optimization problems.
A prominent example is constrained optimization (\eg linear programming \cite{DBLP:books/daglib/0090562,DBLP:journals/networks/Horen85}) for which standardized- and domain-specific languages exist \cite{DBLP:conf/cp/NethercoteSBBDT07,Lofberg2004YALMIPA}.
Modelling and solving optimization problems with weighted programming differs mainly in two aspects from these techniques.
(1) The way \emph{how} optimization problems are modelled and (2) \emph{what} is modelled and for what purpose.
Regarding aspect (1), techniques like integer linear programming or languages like MiniZinc model optimization problems in a constraint-based manner.
With weighted programs, we describe these problems instead in an algorithmic fashion.
As an intuition, constrained optimization \vs weighted programming could be considered analogous to logic programming \vs imperative programming.

Regarding aspect (2), constraint-based techniques often model \emph{one particular} problem instance for which an optimal solution is \emph{computed}.
Weighted programs, on the other hand, provide a means to model and reason about every (out of possibly infinitely many) problem instance at once.
This comes, however, at the cost of computability.
The case study on the ski rental problem exemplifies this: We verify the competitive ratio of the optimal online algorithm for \emph{every} trip duration of the ski rental problem.
Automating this verification process is an appealing direction for future work. 

More closely related is the work by \citet{DBLP:conf/ijcai/BistarelliMR97}, who generalize Constraint Logic Programming (CLP) by parameterizing CLP with a semiring $S$.
Elements and operations of $S$ take over the role Boolean constants and connectives.
This allows to, \eg solve optimization problems by finding atom instantiations of minimal cost.


\section{Conclusion}
\label{sec:conclusion}
We have studied weighted programming as a programming paradigm for specifying mathematical models. 
We developed a weakest (liberal) precondition-style verification framework for reasoning about both finite and infinite computations of weighted programs and demonstrated the efficacy of our framework on several case studies. 
Future work includes automated reasoning about weighted programs using, e.g., generalizations of $k$-induction \cite{DBLP:conf/cav/BatzCKKMS20,DBLP:conf/fmcad/SheeranSS00} and weighted program synthesis \cite{DBLP:series/natosec/AlurBDF0JKMMRSSSSTU15,DBLP:journals/toplas/MannaW80}.
Further directions are weighted separation logics \cite{DBLP:conf/popl/IshtiaqO01,DBLP:conf/lics/Reynolds02,DBLP:journals/pacmpl/BatzKKMN19} as well as to investigate \enquote{sampling} algorithms for weighted programs.
For instance, what would be an analogon to MCMC sampling in a weighted setting?

\begin{acks}
	This work was supported by the \grantsponsor{erc}{ERC}{http://dx.doi.org/10.13039/501100000781} \grantnum[https://cordis.europa.eu/project/id/787914]{erc}{AdG Frappant (787914)} and \grantnum[https://gepris.dfg.de/gepris/projekt/282652900]{dfg}{RTG 2236 UnRAVeL} funded by the \grantsponsor{dfg}{German Research Foundation}{https://doi.org/10.13039/501100001659}.
	Part of this work was carried out at Schloss Dagstuhl -- Leibniz Center for Informatics. We thank Lena Verscht and Linpeng Zhang for the fruitful discussions at Schloss Dagstuhl.
\end{acks}

\bibliography{references}
\appendix

\newpage
\section{Background on Semirings, Semimodules, and Fixed Point Theory}
\label{app:preliminaries}

\subsection{Fixed Points}
\label{app:fixed-point-theory}
We apply fixed point iteration and fixed point induction to our $\wpsymbol$ and $\wlpsymbol$ calculi.
Hence, we recall the required material from Domain Theory here.
For a thorough introduction, we refer to \cite[Ch.\ Domain Theory]{DBLP:books/lib/Abramsky94}.

A \emph{reflexive}, \emph{transitive}, and \emph{antisymmetric} binary relation $\ord$ on a set $\poset$ is a \emph{partial order} and we call $\Poset$ a \emph{partially ordered set (poset)}.
Let $B \subseteq A$.
We say that $b \in B$ is a \emph{least element} in $B$ if $b \ord b'$ for all $b' \in B$.
Note that there exists at most one least element.
The least element of $B = A$ is written $\bot$ whenever it exists.
Further, if the set $\Set{c \in A}{\forall b \in B \colon b \ord c}$ has a least element, then we call it the \emph{least upper bound} or \emph{supremum} of $B$ and denote it with $\bigjoin B$.
An infinite sequence $(a_i)_{i \in \Nats}$ of elements from $A$ is called an \emph{ascending $\omega$-chain} if
\[
    a_0 \oord a_1 \oord a_2 \oord \ldots \quad,
\]
i.e. for all $i \in \Nats$ we have $a_i \ord a_{i+1}$.

It is easy to verify that the structure $(\poset, \rord)$ that results from reverting the order $\ord$ is also a poset.
\emph{Greatest elements}, \emph{greatest lower bounds} or \emph{infima}, and \emph{descending $\omega$-chains} in $\Poset$ are defined as least elements, suprema and ascending $\omega$-chains in $(\poset, \rord)$, respectively.
The greatest element in $\poset$ is denoted $\top$ if it exists and we adapt the notation $\bigmeet B$ for infima.
\begin{definition}[$\omega$-cpo]
	\label{app:def:omega-cpo}
	The poset $\Poset$ is a (pointed) \emph{$\omega$-complete partial order} ($\omega$-cpo) if there exists\footnote{Some authors define $\omega$-cpo without requiring the existence of least elements, and speak of $\omega$-cpo \emph{with bottom} or \emph{pointed} $\omega$-cpo.} a least element $\bot$ and every ascending $\omega$-chain $(a_i)_{i \in \Nats}$ has a supremum $\bigjoin_{i \in \Nats} a_i$.
	Dually, we call $\Poset$ a (pointed) \emph{$\omega$-cocomplete partial order} ($\omega$-cocpo) if $(\poset,\rord)$ is an $\omega$-cpo, i.e. if there exists a greatest element $\top$ and every descending $\omega$-chain $(a_i)_{i \in \Nats}$ has an infimum $\bigmeet_{i \in \Nats} a_i$.
    If $\Poset$ is both an $\omega$-cpo and an $\omega$-cocpo, then we call it an \emph{$\omega$-bicpo}.
	\qedtriangle
\end{definition}%
\noindent{}%
Consider a poset $\Poset$ with a function $f \colon \poset \to \poset$.
$f$ is called \emph{monotone} if
\[
    \forall a_1, a_2 \in \poset \colonq a_1 \oord a_2 \qimplies f(a_1) \oord f(a_2) ~.
\]
Note that $\omega$-chains are preserved under monotone functions:
If $(a_i)_{i \in\Nats}$ is ascending (descending, respectively), then the same holds for $(f(a_i))_{i \in\Nats}$.

\begin{definition}[$\omega$-continuous functions]
	\label{app:def:omega-continuity}
	Let $\Poset$ be a poset and $f \colon \poset \to \poset$ a function.
	$f$ is called \emph{$\omega$-continuous} if it preserves suprema, i.e. for all ascending $\omega$-chains $(a_i)_{i \in \Nats}$ in $\poset$ we have
    \[
        f  \pr{\bigjoin_{i \in \Nats} a_i}  \qeq \bigjoin_{i \in \Nats} f(a_i) ~.
    \]
    Dually, $f$ is called \emph{$\omega$-cocontinuous} if it preserves infima of descending chains.
    If $f$ is both $\omega$-continuous and -cocontinuous, then $f$ is called \emph{$\omega$-bicontinuous}.
	\hfill$\triangle$
\end{definition}
\noindent{}%
It is easy to see that $\omega$-(co)continuity implies monotonicity, but the converse is false in general.
\begin{lemma}
	\label{app:thm:compositional-continuity}
	If $f\colon \Poset \to \Poset$ and $g\colon \Poset \to \Poset$ are $\omega$-(co)continuous functions, then their composition $g \circ f\colon \Poset \to \Poset$ is also $\omega$-(co)continuous.
\end{lemma}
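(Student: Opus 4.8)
The plan is to treat the $\omega$-continuous case directly and to obtain the $\omega$-cocontinuous case by duality. First I would fix an arbitrary ascending $\omega$-chain $(a_i)_{i \in \Nats}$ in $\Poset$ and aim to establish $(g \circ f)\pr{\bigjoin_{i \in \Nats} a_i} \eeq \bigjoin_{i \in \Nats} (g \circ f)(a_i)$, which by definition of $\omega$-continuity is exactly the statement that $g \circ f$ is $\omega$-continuous.

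The key preliminary observation is that $\omega$-continuity of $f$ entails its monotonicity (as noted immediately above the lemma), so the image sequence $(f(a_i))_{i \in \Nats}$ is again an ascending $\omega$-chain. This is precisely what makes it legitimate to invoke the continuity hypothesis on $g$, whose statement only speaks about genuine $\omega$-chains. With this in hand, the computation is a two-step unfolding: applying $\omega$-continuity of $f$ gives $f\pr{\bigjoin_i a_i} \eeq \bigjoin_i f(a_i)$, and then applying $\omega$-continuity of $g$ to the chain $(f(a_i))_i$ gives $g\pr{\bigjoin_i f(a_i)} \eeq \bigjoin_i g(f(a_i))$. Chaining these two equalities yields the claim.

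For the $\omega$-cocontinuous case I would not repeat the argument but instead pass to the reversed poset $(\poset,\, \rord)$. By definition, $\omega$-cocontinuity of a function is precisely $\omega$-continuity with respect to $\rord$, and descending $\omega$-chains together with their infima in $\Poset$ are exactly the ascending $\omega$-chains together with their suprema in $(\poset,\, \rord)$. Hence the assertion that $g \circ f$ is $\omega$-cocontinuous in $\Poset$ is literally the assertion that $g \circ f$ is $\omega$-continuous in $(\poset,\, \rord)$, and since $f$ and $g$ are $\omega$-continuous with respect to $\rord$ by assumption, the first part applies verbatim. The $\omega$-bicontinuous case then follows at once by combining the two parts.

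I expect no serious obstacle here, as the result is routine. The only point deserving explicit care is the chain-preservation step: one must invoke the monotonicity of $f$ to ensure that $(f(a_i))_i$ is a bona fide $\omega$-chain before appealing to the continuity of $g$, so that the composed argument is well-typed.
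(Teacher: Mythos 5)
Your proof is correct. The paper itself states this lemma without proof, treating it as a routine fact from domain theory, and your argument is exactly the standard one that is implicitly being relied upon: use monotonicity (a consequence of $\omega$-continuity) to ensure the image sequence $(f(a_i))_{i}$ is again a chain, apply the two continuity hypotheses in succession, and dispatch the $\omega$-cocontinuous case by passing to the reversed order — so there is nothing to correct and no divergence from the paper to report.
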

\noindent{}%
The $n$-fold composition of a function $f\colon A \to A$ is recursively defined as $f^0 = \idFun$ and $f^n = f \circ f^{n-1}$ for all $n > 0$.

Let $\Poset$ be a poset.
A \emph{fixed point} of $f \colon \poset\to\poset$ is an element $a \in \poset$ such that $f(a) = a$.
A \emph{least} (\emph{greatest}) fixed point of $f$ is a least (greatest, respectively) element in the set of fixed points of $f$.


\begin{theorem}[Kleene Iteration \& Park Induction]
    \label{app:thm:kleene-park}
    Let $\Poset$ be a poset and $f \colon \poset \to \poset$.
    \begin{enumerate}
        \item \label{it:lfp}
            If $\Poset$ is an $\omega$-cpo and $f$ is $\omega$-continuous, then $f$ has a least fixed point $\lfpop f$ satisfying
            \[
                \lfpop f \eeq \bigjoin_{n \in \Nats} f^n(\bot)
                \qqand
                \forall a \in A \colonq f(a) \oord a \qimplies \lfpop f \oord a
                ~.
            \]
        \item \label{it:gfp}
           If $\Poset$ is an $\omega$-cocpo and $f$ is $\omega$-cocontinuous, then $f$ has a greatest fixed point $\gfpop f$ satisfying
            \[
                \gfpop f \eeq \bigmeet_{n \in \Nats} f^n(\top)
                \qqand
                \forall a \in A \colonq a \oord f(a) \qimplies a \oord \gfpop f
                ~.
            \]
    \end{enumerate}
\end{theorem}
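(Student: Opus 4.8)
The plan is to prove part~(1) directly by Kleene iteration and then obtain part~(2) by a clean order-theoretic dualization. This is the classical Kleene fixed-point theorem together with Park's induction principle, so the work lies in assembling the standard ingredients in the right order.

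First I would establish that the sequence $(f^n(\bot))_{n \in \Nats}$ forms an ascending $\omega$-chain. Since $\bot$ is the least element, $\bot \ord f(\bot)$; and since $\omega$-continuity implies monotonicity (as remarked just above the theorem), an easy induction on $n$ propagates $f^n(\bot) \ord f^{n+1}(\bot)$ by applying the monotone $f$ to both sides. Because $\Poset$ is an $\omega$-cpo, this chain has a supremum, and I set $a^\star \coloneqq \bigjoin_{n \in \Nats} f^n(\bot)$. Next I would verify that $a^\star$ is a fixed point via the computation
\[
  f(a^\star) \eeq f\Bigl( \bigjoin_{n \in \Nats} f^n(\bot) \Bigr) \eeq \bigjoin_{n \in \Nats} f^{n+1}(\bot) \eeq \bigjoin_{n \in \Nats} f^n(\bot) \eeq a^\star ~,
\]
where the second equality is exactly $\omega$-continuity, commuting $f$ with the join. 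I expect the third equality to be the only genuinely delicate point: the shifted chain $(f^{n+1}(\bot))_n$ has the same supremum as $(f^n(\bot))_n$ because dropping the least element $\bot = f^0(\bot)$ from an ascending chain leaves its least upper bound unchanged. This is precisely the place where $\omega$-continuity (rather than mere monotonicity) is indispensable.

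To obtain the Park induction property, and with it leastness, I would show $a^\star \ord a$ for every pre-fixed point $a$ satisfying $f(a) \ord a$. This is a second induction on $n$: the base case $\bot \ord a$ holds since $\bot$ is least, and the step follows from $f^{n+1}(\bot) = f(f^n(\bot)) \ord f(a) \ord a$, combining monotonicity, the induction hypothesis, and the assumption $f(a) \ord a$. As every $f^n(\bot)$ then lies below $a$, so does the supremum $a^\star$. Applying this to an arbitrary fixed point $a$ (which trivially obeys $f(a) \ord a$) shows $a^\star$ is below all fixed points, so $a^\star = \lfpop f$ is the least one.

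Finally, part~(2) requires no new work: the reversed structure $(\poset, \rord)$ is an $\omega$-cpo exactly when $\Poset$ is an $\omega$-cocpo, its least element is $\top$, and $f$ is $\omega$-continuous with respect to $\rord$ exactly when it is $\omega$-cocontinuous with respect to $\ord$. Invoking part~(1) for $(\poset, \rord)$ therefore yields a least fixed point in the reversed order --- which is a greatest fixed point in the original order --- together with the dual Park induction rule, giving the stated formula $\gfpop f = \bigmeet_{n \in \Nats} f^n(\top)$.
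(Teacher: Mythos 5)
Your proposal is correct and follows essentially the same route as the paper: the paper simply cites the classic Kleene Fixed Point Theorem for the existence and iteration formula, proves Park induction by exactly your induction ($f^n(\bot) \ord a$ via monotonicity and the pre-fixed-point assumption, then pass to the supremum), and obtains part~(2) by the same order-reversal argument. The only difference is that you spell out the Kleene part (chain property and the fixed-point computation via $\omega$-continuity) which the paper leaves to the cited classical result; do note, though, that continuity is used in commuting $f$ with the join (your second equality), while the shifted-chain equality is elementary.
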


\subsection{Semirings}
\label{app:sec:srings}
In the literature the term semiring is given different meanings; to prevent any confusion we restate the definition we use.
As usual, multiplication $\smul$ associates stronger than addition $\sadd$ and we drop parentheses accordingly.
For an in-depth introduction, we refer to \cite[Chapter 1, 2]{Droste2009}.
\begin{definition}[Monoids]
	\label{app:def:monoid}
	A \emph{monoid} $\Monoid = (\monoid,\, \monop,\, \monone)$ consists of a \emph{carrier set} $\monoid$, an \emph{operation} $\monop \colon \monoid \times \monoid \to \monoid$, and an \emph{identity} $\monone \in \monoid$, such that for all $a,b,c \in \monoid$,%
	\begin{enumerate}
		\item 
			the operation $\monop$ is associative, i.e.\ ,%
			\begin{align*}
				a \mmonop ( b \monop c) \eeq ( a \monop b ) \mmonop c~, \qqand
			\end{align*}%
		\item
			$\monone$ is an identity with respect to $\monop$, i.e.\ ,%
			\begin{align*}
				a \monop \monone \eeq \monone \monop a \eeq a~.
			\end{align*}%
	\end{enumerate}%
	The monoid $\Monoid$ is \emph{commutative}, if%
	\begin{enumerate}
		\setcounter{enumi}{2}
		\item 
			the operation $\monop$ is commutative: \quad $a \monop b \eeq b \monop a$~. \qedtriangle
	\end{enumerate}%
\end{definition}
\noindent{}%
\begin{definition}[Semirings]
	\label{app:def:sring}
	A \emph{semiring} $\sringdef$ consists of a \emph{carrier set} $\sring$, an \emph{addition}~${\sadd}\colon \sring \times \sring \to \sring$, a \emph{multiplication}~${\smul}\colon \sring \times \sring \to \sring$, a \emph{zero} $\snull \in \sring$, and a \emph{one} $\sone \in \sring$, 
	such that%
	\begin{enumerate}
		\item 
			$(\sring,\, {\sadd},\, \snull)$ forms a \emph{commutative monoid}, \qand
		\item
			$(\sring,\, {\smul},\, \sone)$ forms a (possibly not-commutative) \emph{monoid},
	\end{enumerate}%
	and for all $a,b,c \in \sring$,%
	\begin{enumerate}
		\setcounter{enumi}{2}
		\item
			multiplication \emph{distributes} over addition, i.e.\ ,%
			\begin{align*}
				a \ssmul (b \sadd c) \eeq a \smul b \ssadd a \smul c
				\qqand
				(a \sadd b) \ssmul c \eeq a \smul c \ssadd b \smul c ,
			\end{align*}%

		\item
			and multiplication by zero \emph{annihilates} $\sring$, i.e.\ ,%
			\begin{align*}
				\snull \smul a \eeq a \smul \snull \eeq \snull .
				\tag*{$\triangle$}
			\end{align*}%
	\end{enumerate}%
\end{definition}%
\noindent{}%

Given a semiring $\Sring$, we can construct a semiring of functions via point-wise lifting of the operations.
\begin{lemma}[Semirings of Semiring-valued Functions]
	\label{app:thm:function-sring}
	Let $\Sring = (\sring,\, {\sadd_{\Sring}},\, {\smul_{\Sring}},\, \snull_{\Sring}, \sone_{\Sring})$ be a semiring and $X$ be a non-empty set.
	Then $\Sring^X \coloneqq (\sring^X,\, {\sadd},\, {\smul},\, \snull,\, \sone)$, where $\sring^X$ is the set of functions of type~\mbox{$X \to \sring$} and for all $f, g \in \sring^X$,%
	\begin{align*}
		f \sadd g &\qqcoloneqq \qlam{x} f(x) ~{}\sadd_{\Sring}{}~ g(x) ,\\
		f \smul g &\qqcoloneqq \qlam{x} f(x) ~{}\smul_{\Sring}{}~ g(x) ,\\
		\snull &\qqcoloneqq \qlam{x} \snull_{\Sring} ,\\
		\sone &\qqcoloneqq \qlam{x} \sone_{\Sring} ~,
	\end{align*}%
	also forms a semiring which we call the \emph{lifting of $\Sring$ with respect to $X$}.\qedtriangle
\end{lemma}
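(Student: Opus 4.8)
The plan is to verify directly that the pointwise-lifted structure $\Sring^X$ satisfies the four defining conditions of a semiring from \cref{app:def:sring}. The guiding observation is that every semiring axiom is an \emph{equation} universally quantified over the carrier, and that addition, multiplication, zero, and one on $\sring^X$ are all defined \emph{pointwise}. Since two functions $f, g \in \sring^X$ are equal if and only if $f(x) = g(x)$ for all $x \in X$, each axiom for $\Sring^X$ reduces to the corresponding axiom for $\Sring$ applied argumentwise.

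Concretely, I would first check that $(\sring^X,\, {\sadd},\, \snull)$ is a commutative monoid. For associativity, fix arbitrary $f, g, h \in \sring^X$ and $x \in X$ and compute
\[
\bigl((f \sadd g) \sadd h\bigr)(x) \eeq \bigl(f(x) \sadd_{\Sring} g(x)\bigr) \sadd_{\Sring} h(x) \eeq f(x) \sadd_{\Sring} \bigl(g(x) \sadd_{\Sring} h(x)\bigr) \eeq \bigl(f \sadd (g \sadd h)\bigr)(x),
\]
where the middle step uses associativity of $\sadd_{\Sring}$; since $x$ was arbitrary, the two functions coincide. The neutrality of $\snull = \qlam{x} \snull_{\Sring}$ and the commutativity of $\sadd$ follow by the identical pointwise argument from the respective properties of $\Sring$. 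Next I would verify that $(\sring^X,\, {\smul},\, \sone)$ is a (not necessarily commutative) monoid in exactly the same fashion, using associativity of $\smul_{\Sring}$ and the fact that $\sone = \qlam{x} \sone_{\Sring}$ is neutral for $\smul_{\Sring}$.

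It then remains to lift distributivity and annihilation. For left-distributivity, for all $x$ we have $\bigl(f \smul (g \sadd h)\bigr)(x) = f(x) \smul_{\Sring} (g(x) \sadd_{\Sring} h(x)) = f(x)\smul_{\Sring} g(x) \sadd_{\Sring} f(x) \smul_{\Sring} h(x) = \bigl(f \smul g \sadd f \smul h\bigr)(x)$, using distributivity in $\Sring$; right-distributivity and the annihilation laws $\snull \smul f = f \smul \snull = \snull$ are handled identically. This exhausts all axioms of \cref{app:def:sring}.

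There is no genuine obstacle here: the statement is a routine verification, and the only thing to keep in mind is that equality in $\sring^X$ is \emph{extensional}, so that every pointwise identity in $\Sring$ lifts to a functional identity. Conceptually, this is an instance of the general fact that equationally axiomatized algebraic structures are closed under (pointwise) products; the non-emptiness of $X$ is not actually needed for the axioms to hold and is presumably assumed only to exclude the degenerate one-element lifting.
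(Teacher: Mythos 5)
Your proof is correct. The paper states this lemma without any proof at all, treating the pointwise verification as routine; your argument --- reducing each axiom of $\Sring^X$ to the corresponding axiom of $\Sring$ by extensionality of function equality --- is exactly the standard verification the paper implicitly relies on, and your side remark that non-emptiness of $X$ is not needed for the axioms themselves is also accurate.
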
%
\noindent{}%
There is a canonical embedding $\sring \to \sring^X, a \mapsto (\lam{x} a)$ mapping semiring elements in $\sring$ to constant functions in $\sring^X$.
For better readability, we overload notation and identify elements $a$ with their corresponding constant functions $\lam{x} a$, e.g.\ writing $a \smul f$ instead of $(\lam{x} a) \smul f$ where $f \in \sring^X$.

For our purpose of developing a weakest-precondition-style calculus for weighted programs, we need to impose additional structure on our semirings; most essentially: an order, in particular one which is compatible with the algebraic structure of the semiring.
Let us, for the remainder of this section, fix an ambient semiring $\sringdef$.%
\begin{definition}[Natural Order]
	\label{app:def:sring-natord}
	The relation $\natord$ is defined for all $a, b \in \sring$ by
	\[
		a \nnatord b \qqiff \exists\, c \in \sring \colonq a \sadd c \eeq b .
	\]
	If $\natord$ is a partial order, then we call $\Sring$ \emph{naturally ordered} and $\natord$ the \emph{natural order} on $\sring$.
	\hfill$\triangle$
\end{definition}
\noindent{}%
Note that a presence of additive inverses (other than the self-inverse $\snull$) prohibits the existence of a natural order as antisymmetry is violated.
Indeed, the relation $\natord$ degenerates to $\sring \times \sring$ for rings.%

\begin{lemma}[{Least Elements and Monotonicity of Algebraic Operations~\textnormal{\cite[Ch.\ 9, Thm.\ 2.1]{handbookFormalLang}}}]
	\label{app:thm:sring-natordprops}
	Let $\Sring$ be naturally ordered.
	Then,%
	\begin{enumerate}
		\item
			$\snull$ is the \emph{unique} least element, and
			
		\item
			$\sadd$ and $\smul$ are monotone, i.e. for all $a,b,c \in \sring$,%
			\[
				a \nnatord b
				\qqimplies
				a \sadd c \nnatord b \sadd c
				\qand
				a \smul c \nnatord b \smul c
				\qand
				c \smul a \nnatord c \smul b~.
			\]%
	\end{enumerate}%
\end{lemma}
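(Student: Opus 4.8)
The plan is to prove the two parts separately, both directly from the definition of the natural order $\natord$ together with the commutative-monoid and distributivity axioms of the semiring. Neither part requires anything beyond the axioms already available, so the proof is essentially a matter of exhibiting the right witnesses.

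For part (1), I would first show that $\snull$ is a least element and then invoke antisymmetry for uniqueness. To see that $\snull \natord a$ for every $a \in \sring$, I simply exhibit a witness $c$ with $\snull \sadd c = a$: taking $c = a$ works, since $\snull$ is the additive identity and hence $\snull \sadd a = a$. This establishes $\snull$ as a least element. Uniqueness is then automatic: because $\Sring$ is \emph{naturally ordered}, the relation $\natord$ is by assumption a partial order, hence antisymmetric, and in any poset a least element is unique whenever it exists.

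For part (2), I would fix $a \natord b$ and let $w$ be a witness, i.e.\ $a \sadd w = b$; I deliberately rename the existential witness to $w$ so as not to clash with the fixed shift element $c$ appearing in the monotonicity claims. For additive monotonicity I compute $b \sadd c = (a \sadd w) \sadd c = (a \sadd c) \sadd w$, using associativity and commutativity of $\sadd$, so that $w$ itself witnesses $a \sadd c \natord b \sadd c$. For multiplicative monotonicity on the right, right-distributivity yields $b \smul c = (a \sadd w) \smul c = a \smul c \sadd w \smul c$, whence $w \smul c$ is a witness for $a \smul c \natord b \smul c$; the left case is symmetric, using left-distributivity to obtain $c \smul b = c \smul a \sadd c \smul w$ with witness $c \smul w$.

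I expect no serious obstacle here, as the whole argument is book-keeping of witnesses. The only point that needs care is notational: the same letter is used for the fixed shift element in the monotonicity statement and for the existential witness in the definition of $\natord$, and keeping these apart (as above) avoids confusion. Conceptually, the one place where the hypothesis that $\Sring$ is naturally ordered is genuinely needed is the uniqueness claim in part (1), where antisymmetry of $\natord$ is used; for the \emph{existence} of a least element and for all of the monotonicity statements, only the bare semiring axioms are invoked and no order hypothesis is required.
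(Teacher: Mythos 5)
Your proof is correct and follows essentially the same route the paper takes: the paper cites this result from the handbook and, for the analogous module statement (\cref{app:thm:smodule-natordprops}), gives exactly this argument in condensed form — least element and additive monotonicity directly from the definition of the natural order, with distributivity supplying the witnesses $w \smul c$ and $c \smul w$ for the multiplicative cases. Your additional observation that antisymmetry (i.e.\ the hypothesis that $\Sring$ is naturally ordered) is needed only for uniqueness of the least element is also accurate.
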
%
\noindent{}%
If the natural order $\natord$ moreover has a greatest element, then this is unique and denoted $\sTop_\Sring$, where we drop the subscript $\Sring$ whenever it is clear from the context.

The partial order can be lifted pointwise to functions in the spirit of \cref{app:thm:function-sring}, i.e.\ a partial order ${\ord} \subseteq \sring^X \times \sring^X$ on $\Sring^X$ is given by 
\[
	f \oord g \qqiff \forall x \in X \colonq f(x) \nnatord g(x) ~.
\]
Moreover, if $\natord$ is the natural order on $\Sring$, then $\natord$ as defined above is the natural order on $\Sring^X$.
In case of $\omega$-(co)cpos, joins $\join$ (meets $\meet$) are hence given by pointwise joins (meets).

%
Second, in order to apply fixed point theory to semirings, we require some continuity constraints.
\begin{definition}[$\omega$-continuous semirings {\cite[Def.\ 14]{10.1007/978-3-540-85780-8_1}}]
    \label{app:def:continuous-sring}
	A semiring $\sringdef$ is \emph{$\omega$-continuous}, if $(\sring,\natord)$ is an $\omega$-cpo and addition and multiplication by constants are $\omega$-continuous functions, i.e. for all $a \in \sring$ and all ascending chains $(b_i)_{i \in \Nats}$ in $\sring$, we require
    \[
        a \ssadd \bigjoin_{i \in \Nats} b_i
        \qeq
        \bigjoin_{i \in \Nats} \left(a \sadd b_i \right)
    \]
    as well as
    \[
        a \ssmul \bigjoin_{i \in \Nats} b_i
        \qeq
        \bigjoin_{i \in \Nats} \left(a \smul b_i\right)
        \qqand
        \bigjoin_{i \in \Nats} b_i \ssmul a 
        \qeq
        \bigjoin_{i \in \Nats} \left(b_i \smul a \right)
        ~.
    \]
    Dually, $\Sring$ is \emph{$\omega$-cocontinuous} if $(\sring,\natord)$ is an $\omega$-cocpo and addition/multiplication with constants are $\omega$-cocontinuous functions.
    $\Sring$ is \emph{$\omega$-bicontinuous} if it is both $\omega$-continuous and -cocontinuous.
	\hfill$\triangle$
\end{definition}
\noindent{}%
Moreoever, it is clear that if $\Sring$ is $\omega$-(co)continuous, then the semiring $\Sring^X$ of functions from a set $X$ to $\Sring$ (cf.\ \cref{app:thm:function-sring}) is $\omega$-(co)continuous as well.

The $\omega$-continuity also allows to define \emph{countably infinite sums}.
Semirings that admit an infinite sum operation are called \emph{complete}.
We only consider the case of \emph{$\omega$-finitary} semirings \cite[Sec.\ 5]{Karner1992}, with respect to the \emph{natural order} $\natord$, where such infinite sums are defined as follows \cite[Thm.\ 19]{10.1007/978-3-540-85780-8_1}:
Given a family $(a_i)_{i \in I}$ in $\sring$ over a \emph{countable} index set $I$,
\begin{align}
    \label{eq:sring-infinite-sums}
	\sbigadd_{i \in I} a_i \qcoloneqq \bigjoin_{\substack{F \subseteq I,\\F \text{ finite} }} \sbigadd_{i \in F} a_i ~.
\end{align}
This definition enjoys two important properties:
(\romannumeral 1) If $I$ is \emph{finite}, the value coincides with the usual sum;
(\romannumeral 2) the summation order is irrelevant by definition.
In fact, if $I$ is infinite it can be shown~\cite[Ch.\ 9, Thm.\ 2.3]{handbookFormalLang}, (\cref{app:proof-finitary-sums-coincide}) that the right-hand side of \eqref{eq:sring-infinite-sums} is equal to the supremum of the partial sums associated with any \emph{arbitrary summation order}, i.e. for all $\Nats$-indexed families $(b_i)_{i \in \Nats}$ such that there exists a bijection $\tau \colon \Nats \to I$ with $b_i = a_{\tau(i)}$ for all $i \in \Nats$ we have
\begin{align}
    \label{eq:sring-partial-sums}
    \sbigadd_{i \in I} a_i
    \qeq
    \bigjoin_{n \in \Nats} \sbigadd_{i \leq n} b_i
    ~.
\end{align}
The latter formulation is used as definition of $\omega$-continuous semirings in \cite{DBLP:journals/tcs/Kuich91}.
It shows that being an $\omega$-cpo already suffices to define infinite sums since $\left(\sbigadd_{i \leq n} b_i\right)_{n \in \Nats}$ is clearly an $\omega$-chain.\footnote{The converse also is true, e.g.\ $\omega$-finitary semirings with respect to the natural order $\natord$ are an $\omega$-cpo with respect to $\natord$ \cite[Thm.\ 2.3]{DBLP:journals/tcs/Kuich91}. Their notion of $\omega$-continuous semiring is hence also equivalent to \cref{app:def:continuous-sring}.}
Moreover, it follows by $\omega$-continuity that the \emph{extended distributive laws} are satisfied:
For all $c \in \sring$,
\[
	c \smul \pr{ \sbigadd_{i \in I} a_i } \eeq \sbigadd_{i \in I} c \smul a_i
	\qand
	\pr{ \sbigadd_{i \in I} a_i } \smul c \eeq \sbigadd_{i \in I} a_i \smul c ~.
\]

For an in-depth discussion of \emph{complete}, \emph{finitary}, and \emph{continuous} semirings we refer to \cite{Karner1992,DBLP:journals/tcs/Kuich91,10.1007/978-3-540-85780-8_1,DBLP:journals/tcs/Karner04,handbookFormalLang,Goldstern2002,Golan2003SemiringsAA}.

\subsection{Modules over Monoids}
\label{app:sec:modules}
\emph{Like vector spaces over a field, modules over rings, or semimodules over semirings we define modules over monoids}.
Semimodules over semirings in the setting of weighted automata are studied in \cite{Droste2009}.
The modules represent what our programs act on -- they are a required generalization to study formal languages as \cref{ex:mixed-formal-languages-cocontinuity-counterexample} shows.
We present everything in parallel to \cref{app:sec:srings}.
\begin{definition}[Module over a Monoid]
	\label{app:def:mon-module}
	%
	\def\smop{\otimes}
	\def\msop{\otimes}
	Let $\monoiddef$ be a monoid.
	A \emph{(left) $\Monoid$-module} $\smoduledef$ is a \emph{commutative monoid} $(\smodule,\, {\madd},\, {\mnull})$ equipped with a (left) action called \emph{scalar multiplication} $\smop\colon \monoid \times \smodule \to \smodule$, such that
	\begin{enumerate}
		\item
			the scalar multiplication $\smop$ is \emph{associative}, i.e.\ for all $a, b \in \monoid$ and $v \in \smodule$,
			\[
				(a \smul b) \smop v \eeq a \smop (b \smop v) ~,
			\]

		\item
			the scalar multiplication $\smop$ is \emph{distributive}, i.e.\ for all $a \in \monoid$ and $v, w \in \smodule$,
			\[
				a \smop (v \madd w) \eeq (a \smop v) \madd (a \smop w)
				~,
			\]
		\item
			the monoid's one $\sone$ is \emph{neutral} and the module's zero $\snull$ \emph{annihilates}, i.e.\ for all $a \in \monoid$ and $v \in \smodule$,
			\[
				\monone \smop v \eeq v
				\qqand
				a \smop \mnull \eeq \mnull
				~.
				\tag*{$\triangle$}
			\]
	\end{enumerate}
\end{definition}
\noindent{}%
To simplify language we speak of modules (and forget about the \enquote{over a monoid} part); this should not be confused with a module over a ring.
We emphasize that all the results developed in this paper apply to the important special case where the monoid and the module together form a \emph{semiring}:
The multiplication $\smul$ of a semiring $\sringdef$ is then the left-action $\otimes\colon S \times S \to S$ of the multiplicative monoid of $(\sring,\, {\smul},\, \sone)$ to the additive monoid $(\sring,\, {\sadd},\, \snull)$.
We also do not differentiate multiplication $\smul$ and left-action $\otimes$ and write $\smop$ instead of $\otimes$ from now on -- both are associative and the operation should be clear from the rightmost multiplicant's type.

Analogous to \cref{app:thm:function-sring}, we also can construct a module of functions via point-wise lifting of the operations.
\begin{lemma}[Module of Module-valued Functions]
	\label{app:thm:function-smodule}
	Let $\Smodule = (\smodule,\, {\madd_{\Smodule}},\, {\mnull_{\Smodule}},\, {\smop_{\Smodule}})$ be an $\Monoid$-module and $X$ be a non-empty set.
	Then $\Smodule^X \coloneqq (\smodule^X,\, {\madd},\, {\mnull},\, {\smop})$, where $\smodule^X$ is the set of functions of type $X \to \smodule$ and for all $a \in \monoid$, $u, v \in \smodule^X$,%
	\begin{align*}
		u \madd v &\qqcoloneqq \qlam{x} u(x) ~{}\madd_{\Smodule}{}~ v(x) ~,\\
		a \smop v &\qqcoloneqq \qlam{x} a ~{}\smop_{\Smodule}{}~ v(x) ~,\\
		\mnull &\qqcoloneqq \qlam{x} \mnull_{\Smodule} ~,
	\end{align*}%
	also forms a $\Monoid$-module which we call the \emph{lifting of $\Smodule$ with respect to $X$}.\qedtriangle
\end{lemma}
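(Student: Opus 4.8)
The plan is to verify each of the three module axioms from \cref{app:def:mon-module}, together with the commutative-monoid requirement, directly for the lifted structure $\Smodule^X$, in every case reducing the claim pointwise to the corresponding law of $\Smodule$. The whole argument rests on function extensionality: two elements $u, v \in \smodule^X$ are equal iff $u(x) = v(x)$ for all $x \in X$. Since $\madd$, $\smop$, and $\mnull$ on $\Smodule^X$ are all defined pointwise, every equation to be established is an equation between functions $X \to \smodule$, and it holds iff it holds after evaluation at an arbitrary fixed $x \in X$, where it collapses to an instance of the analogous law in $\Smodule$ applied to the values $u(x), v(x), w(x) \in \smodule$. Before that, I would dispatch the (trivial) well-definedness: for $a \in \monoid$ and $u, v \in \smodule^X$ the pointwise formulas indeed yield functions $X \to \smodule$, so $\madd$, $\smop$, and $\mnull$ really are operations of the correct types on $\smodule^X$.

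First I would show that $(\smodule^X, \madd, \mnull)$ is a commutative monoid. This is exactly the commutative-monoid analogue of the semiring lifting already recorded in \cref{app:thm:function-sring}, so one may either invoke that construction or verify it by hand: for fixed $x$, associativity $((u \madd v) \madd w)(x) = (u \madd (v \madd w))(x)$, commutativity $(u \madd v)(x) = (v \madd u)(x)$, and the identity law $(u \madd \mnull)(x) = u(x) = (\mnull \madd u)(x)$ are immediate from the corresponding laws of $(\smodule, \madd_{\Smodule}, \mnull_{\Smodule})$ read off at the values.

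Then I would discharge the three scalar-multiplication conditions in turn. For associativity, fixing $a, b \in \monoid$, $v \in \smodule^X$, and $x \in X$ gives $((a \smul b) \smop v)(x) = (a \smul b) \smop_{\Smodule} v(x) = a \smop_{\Smodule} (b \smop_{\Smodule} v(x)) = (a \smop (b \smop v))(x)$ by the associativity axiom of \cref{app:def:mon-module} for $\Smodule$. Distributivity reduces in the same way to the distributivity axiom applied to $v(x)$ and $w(x)$, and the neutrality and annihilation laws $\monone \smop v = v$ and $a \smop \mnull = \mnull$ reduce pointwise to $\monone \smop_{\Smodule} v(x) = v(x)$ and $a \smop_{\Smodule} \mnull_{\Smodule} = \mnull_{\Smodule}$. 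I expect no genuine obstacle here: the proof is a mechanical, entirely routine pointwise verification that parallels \cref{app:thm:function-sring} line for line, and the only thing to keep track of is that each claimed identity is an equality of functions, read off via their values at an arbitrary point.
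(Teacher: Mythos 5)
Your proposal is correct: the paper itself states this lemma without proof, treating it as a routine fact, and your pointwise verification (commutative monoid structure plus the three scalar-multiplication axioms, each reduced via function extensionality to the corresponding law of $\Smodule$ evaluated at an arbitrary $x \in X$) is precisely the standard argument the paper leaves implicit. There is nothing missing and no divergence in approach.
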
%
\noindent{}%
%

Analogous to \cref{app:def:sring-natord}, we speak of naturally ordered modules $\Smodule$.
\begin{definition}[Natural Order]
	\label{app:def:smodule-natord}
	The relation $\natord$ is defined for all $a, b \in \smodule$ by
	\[
		a \nnatord b \qqiff \exists\, c \in \smodule \colonq a \sadd c \eeq b .
	\]
	If $\natord$ is a partial order, then we call $\Smodule$ \emph{naturally ordered} and $\natord$ the \emph{natural order} on $\Smodule$.
	\hfill$\triangle$
\end{definition}
\noindent{}%
Similar to \cref{app:thm:sring-natordprops}, we have for modules:
\begin{lemma}[Least Elements and Monotonicity of Algebraic Operations]
	\label{app:thm:smodule-natordprops}
	Let $\Smodule$ be naturally ordered.
	Then,%
	\begin{enumerate}
		\item
			$\mnull$ is the \emph{unique} least element, and
			
		\item
			$\madd$ and $\smop$ are monotone, i.e. for all $a \in \sring$ and $u,v,w \in \smodule$,%
			\[
				v \nnatord w
				\qqimplies
				u \madd v \nnatord u \madd w
				\qand
				a \smop v \nnatord a \smop w
				~.
			\]
	\end{enumerate}%
\end{lemma}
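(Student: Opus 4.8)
The plan is to verify both claims directly from the module axioms in \cref{app:def:mon-module}, mirroring the semiring argument behind \cref{app:thm:sring-natordprops}. Throughout I would work purely with the defining property of the natural order from \cref{app:def:smodule-natord}, namely that $a \nnatord b$ means precisely that $a \madd c \eeq b$ for some witness $c \in \smodule$.

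For the first claim, I would first observe that $\mnull$ is a lower bound: since $(\smodule,\, \madd,\, \mnull)$ is a commutative monoid, for every $a \in \smodule$ we have $\mnull \madd a \eeq a$, so taking the witness $c \coloneqq a$ yields $\mnull \nnatord a$. Hence $\mnull$ is a least element. Uniqueness is then immediate from the standing assumption that $\Smodule$ is \emph{naturally ordered}, i.e.\ that $\natord$ is a genuine partial order: if $b$ were another least element, then both $b \nnatord \mnull$ and $\mnull \nnatord b$ hold, so antisymmetry forces $b \eeq \mnull$.

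For the monotonicity claim, I would start by unfolding the hypothesis $v \nnatord w$ into a witness $c \in \smodule$ with $v \madd c \eeq w$. For additive monotonicity, associativity and commutativity of $\madd$ give $u \madd w \eeq u \madd (v \madd c) \eeq (u \madd v) \madd c$, so $c$ itself witnesses $u \madd v \nnatord u \madd w$. For monotonicity of the scalar action, distributivity of $\smop$ over $\madd$ gives $a \smop w \eeq a \smop (v \madd c) \eeq (a \smop v) \madd (a \smop c)$, so the element $a \smop c$ witnesses $a \smop v \nnatord a \smop w$.

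No step here is genuinely difficult; the only points to handle with care are (i) invoking antisymmetry --- which is exactly the hypothesis that $\natord$ is a partial order --- for the uniqueness in part one, and (ii) keeping track of which structural axiom (commutativity and associativity of $\madd$ versus distributivity of $\smop$) produces the required witness in each case of part two. In particular, unlike the semiring setting, neither annihilation nor neutrality of the scalar one is needed for this lemma.
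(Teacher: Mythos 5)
Your proposal is correct and follows essentially the same route as the paper, whose proof is merely the remark that the first claim and additive monotonicity follow directly from the natural order while scalar monotonicity additionally requires distributivity --- precisely the witness-based argument you spell out in detail. The only cosmetic quibble is that your additive step needs only associativity of $\madd$ (commutativity is not actually used to rewrite $u \madd (v \madd c)$ as $(u \madd v) \madd c$), and your closing observation that neither annihilation nor neutrality of $\monone$ is needed is accurate.
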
%
\begin{proof}
	The first two statements directly follow from the natural order.
	For the last one, distributivity is additionally required.
\end{proof}
\noindent{}%
If the natural order $\natord$ moreover has a greatest element, then this is unique and denoted $\mTop_\Smodule$, where we drop the subscript $\Smodule$ whenever it is clear from the context.

%
As for semirings, we also want to apply fixed point theory to modules.
Inspired by \cref{app:def:continuous-sring}, we define the following notion of $\omega$-continuous modules.
%
\begin{definition}[$\omega$-continuous module]
    \label{app:def:continuous-smodule}
	A $\Monoid$-module $\Smodule$ is \emph{$\omega$-continuous}, if $(\smodule,\natord)$ is an $\omega$-cpo and addition and scalar multiplication with constants are $\omega$-continuous functions, i.e. for all $a \in \monoid$, $u \in \smodule$ and all ascending chains $(v_i)_{i \in \Nats}$ in $\smodule$, we require
    \[
        u \mmadd \bigjoin_{i \in \Nats} v_i \qeq \bigjoin_{i \in \Nats} \pr{ a \madd v_i }
		\qqand
        a \ssmop \bigjoin_{i \in \Nats} v_i \qeq \bigjoin_{i \in \Nats} \pr{ a \smop v_i }
        ~.
    \]
    Dually, $\Smodule$ is \emph{$\omega$-cocontinuous} if $(\smodule,\natord)$ is an $\omega$-cocpo and addition/scalar multiplication with constants are $\omega$-cocontinuous functions.
    $\Smodule$ is \emph{$\omega$-bicontinuous} if it is both $\omega$-continuous and -cocontinuous.
	\hfill$\triangle$
\end{definition}
\noindent{}%
Moreover, if $\Smodule$ is $\omega$-(co)continuous, then the module $\Smodule^X$ of functions from a set $X$ to $\Smodule$ (cf.\ \cref{app:thm:function-smodule}) is $\omega$-(co)continuous as well.

As for semirings, $\omega$-continuity allows to define \emph{countably infinite sums}.
We call a module \emph{$\omega$-finitary} if the infinite sum is defined as follows:
Given a family $(v_i)_{i \in I}$ in $\smodule$ over a \emph{countable} index set $I$,
\begin{align}
    \label{eq:smodule-infinite-sums}
	\mbigadd_{i \in I} v_i
	\qcoloneqq \bigjoin_{\substack{F \subseteq I,\\F \text{ finite} }} \sbigadd_{i \in F} v_i
	\qcoloneqq \bigjoin_{n \in \Nats} \mbigadd_{i \leq n} w_i
	~,
\end{align}
where $(b_i)_{i \in \Nats}$ is any $\Nats$-indexed family such that there exists a bijection $\tau \colon \Nats \to I$ with $b_i = a_{\tau(i)}$ for all $i \in \Nats$.
Again, from $\omega$-continuity it follows that an \emph{extended distributive law} is satisfied:
For all $a \in \monoid$,
\[
	a \smop \pr{ \sbigadd_{i \in I} v_i } \eeq \sbigadd_{i \in I} a \smop v_i
\]


\newpage
\section{Proofs of Preliminaries \ref{app:preliminaries}}
\label{app:proofs2}

\subsection{Proof of Theorem \ref{app:thm:kleene-park}}

We only show \cref{it:lfp} which implies \cref{it:gfp} by reversing the order.
The first part of \cref{it:lfp} is simply an instance of the classic Kleene Fixed Point Theorem.
The second part---the Park Induction principle---can be seen as follows.
First we show by induction that $f^n(\bot) \ord a$ for all $n \in \Nats$:
For $n = 0$ we have $f^0(\bot) = \bot \ord a$.
For $n \geq 0$, we have by the I.H.\ and monotonicity of $f$ that
\[
    f^{n+1}(\bot)
    \qeq  f(f^n(\bot))
    \quad \ord \quad f(a)
    \quad \ord \quad a
    ~.
\]
By Kleene Fixpoint Theorem and the definition of suprema it follows that
\[
    \lfpop f
    \qeq
    \bigjoin_{n \in \Nats} f^n(\bot)
    \quad \ord \quad 
    a
    ~.\tag*{\qed}
\]

\subsection{Proof that Eq.\ \ref{eq:sring-infinite-sums} and Eq.\ \ref{eq:sring-partial-sums} (notions of $\boldsymbol{\omega}$-finitary) coincide}
\label{app:proof-finitary-sums-coincide}

\begin{lemma}
	\label{thm:proof-finitary-sums-coincide}
	Let $\sringdef$ be an $\omega$-continuous semiring, $(a_i)_{i \in I}$ a family in $\sring$ over a \emph{countably infinite} index set $I$, and $(b_j)_{j \in \Nats}$ a family in $\sring$ such that there exists a bijection $\tau \colon \Nats \to I$ with $b_j = a_{\tau(j)}$ for all $j \in \Nats$.
	Then,
	\begin{align*}
		\sbigadd_{i \in I} a_i
		\qeq \bigjoin_{\substack{F \subseteq I,\\F \text{ finite} }} \sbigadd_{i \in F} a_i
		\qeq \bigjoin_{n \in \Nats} \sbigadd_{j \leq n} b_j
		~.
	\end{align*}
	In particular, $\sbigadd_{i \in I} a_i$ is well-defined and \cref{eq:sring-infinite-sums} is compatible to \cref{eq:sring-partial-sums}.
\end{lemma}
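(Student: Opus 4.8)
The plan is to prove the two displayed equalities separately. The left equality is \emph{definitional} (it is exactly \cref{eq:sring-infinite-sums}), so the entire content lies in the right equality
\[
\bigjoin_{\substack{F \subseteq I,\\F \text{ finite}}} \sbigadd_{i \in F} a_i \qeq \bigjoin_{n \in \Nats} \sbigadd_{j \leq n} b_j~.
\]
Write $T \coloneqq \bigjoin_{n \in \Nats} \sbigadd_{j \leq n} b_j$ for the right-hand side. Since the partial sums $\bigl(\sbigadd_{j \leq n} b_j\bigr)_{n \in \Nats}$ form an ascending $\omega$-chain in $(\sring, \natord)$ --- each is obtained from its predecessor by adding one more summand, and $a \nnatord a \sadd c$ holds for every $c$ directly by \cref{app:def:sring-natord} --- the element $T$ exists because $(\sring, \natord)$ is an $\omega$-cpo. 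The strategy is then to show that this same $T$ is in fact the \emph{least upper bound} of the (a priori only directed, not chain-indexed) family $\bigl\{\sbigadd_{i \in F} a_i \mid F \subseteq I \text{ finite}\bigr\}$; this simultaneously establishes well-definedness of the left-hand supremum and the desired equality by antisymmetry of $\natord$.

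First I would record the two standard facts about \emph{finite} sums in the commutative additive monoid $(\sring, \sadd, \snull)$: they are independent of the summation order, and they are monotone under set inclusion, i.e.\ $F \subseteq F'$ implies $\sbigadd_{i \in F} a_i \nnatord \sbigadd_{i \in F'} a_i$ (split off the extra summands and combine $a \nnatord a \sadd c$ with monotonicity of $\sadd$ from \cref{app:thm:sring-natordprops}). To see that $T$ is an \emph{upper bound} of the finite-subset family, fix any finite $F \subseteq I$. Because $\tau \colon \Nats \to I$ is a bijection and $F$ is finite, $n \coloneqq \max\{\tau^{-1}(i) \mid i \in F\}$ is a well-defined natural number with $F \subseteq \tau(\{0, \ldots, n\})$. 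Using order-independence to re-index along $\tau$ and then monotonicity under inclusion,
\[
\sbigadd_{i \in F} a_i \nnatord \sbigadd_{i \in \tau(\{0,\ldots,n\})} a_i \eeq \sbigadd_{j \leq n} b_j \nnatord T~.
\]

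Conversely, $T$ is the \emph{least} such upper bound: any $U$ with $\sbigadd_{i \in F} a_i \nnatord U$ for all finite $F$ must in particular dominate each partial sum $\sbigadd_{j \leq n} b_j$ (the index set $\tau(\{0,\ldots,n\})$ being finite), so $U$ is an upper bound of the $\omega$-chain, whence $T \nnatord U$ since $T$ is the supremum of that chain. Thus $T$ is the supremum of the finite-subset family, which is exactly the left-hand side, completing the argument.

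The only genuine subtlety --- and the step I would flag as the main obstacle --- is the mismatch between the directed index set of \emph{all} finite subsets of $I$ and the mere $\omega$-completeness available in an $\omega$-cpo: one cannot directly take a supremum over an arbitrary directed family. The cofinality argument above (every finite $F$ is contained in some prefix $\tau(\{0,\ldots,n\})$) is precisely what bridges this gap, reducing the directed supremum to the supremum of a single ascending $\omega$-chain and thereby guaranteeing its existence.
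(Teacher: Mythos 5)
Your proof is correct and takes essentially the same approach as the paper's: both hinge on the cofinality observation that every finite $F \subseteq I$ lies inside a prefix set $\tau(\{0,\dots,n\})$ where $n$ is the maximum of the corresponding indices, together with the converse fact that each prefix set is itself a finite subset of $I$. Your packaging of the argument --- showing that the $\omega$-chain supremum is the \emph{least upper bound} of the directed family of finite partial sums --- treats well-definedness of the directed supremum somewhat more explicitly than the paper, whose two-inequality formulation only hedges with ``(if the left supremum exists)''.
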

\begin{proof}
	First, as $(\sbigadd_{j \leq n} b_j)_{n \in \Nats}$ is an $\omega$-chain under the natural order $\natord$.
	Thus the supremum is well defined in the $\omega$-continuous semiring.

	Next, let $F \subseteq I$ be a finite subset.
	We denote the image of $F$ under $\tau$ as $\tau(F) \coloneqq \Set{ \tau(i) }{ i \in F }$.
	Next, define the maximal index $n_F \coloneqq \max \tau(F) \in \Nats$ corresponding to $F$ (with respect to $\tau$).
	Then, we get the partition
	\[
		\Set{ j \in \Nats }{ 0 \leq j \leq n_F } \qeq \tau(F) ~\uplus~ \Set{ j \in \Nats }{ j \leq n_F, j \notin \tau(F) } ~.
	\] 
	By definition of the natural order $\natord$ it follows
	\[
		\sbigadd_{i \in F} a_i
		\eeq \sbigadd_{i \in F} b_{\tau(i)}
		\eeq \sbigadd_{j \in \tau(F)} b_j
		\qnatord \sbigadd_{j \in \tau(F)} b_j \ssadd \sbigadd_{\substack{j \leq n_F \\ j \notin \tau(F)}} b_{j}
		\eeq \sbigadd_{j \leq n_F} b_j
		\qnatord \bigjoin_{n \in \Nats} \sbigadd_{j \leq n} b_j
		~,
	\]
	and hence (if the left supremum exists)
	\[
		\bigjoin_{\substack{F \subseteq I,\\F \text{ finite} }} \sbigadd_{i \in F} a_i
		\qnatord \bigjoin_{n \in \Nats} \sbigadd_{j \leq n} b_j
		~.
	\]

	On the other hand, we denote for $n \in \Nats$ the preimage of $\bar{n} \coloneqq \Set{ j \in \Nats }{ 0 \leq j \leq n }$ under $\tau$ as $F_n \coloneqq \Set{ i \in F }{ \tau(i) \leq n }$.
	As this always is a finite subset of $I$,
	\[
		\bigjoin_{n \in \Nats} \sbigadd_{j \leq n} b_j
		\eeq \bigjoin_{n \in \Nats} \sbigadd_{j \leq n} a_{\tau^{-1}(j)}
		\eeq \bigjoin_{n \in \Nats} \sbigadd_{i \in F_n} a_i
		\qnatord \bigjoin_{\substack{F \subseteq I,\\F \text{ finite} }} \sbigadd_{i \in F} a_i
		~.
	\]
	Combining both inequalities, we obtain exactly the equality claimed above.
\end{proof}

\subsection{Problems with the Semiring of infinite words}
\label{app:proofs2:mixed-formal-languages-cocontinuity-counterexample}
In order to extend $\finWordLangSring{\ab}$ with $\omega$-words (i.e. words of countably infinite length) to obtain a semiring of \emph{mixed} languages, i.e. subsets of $\ab^\infty = \ab^* \cup \ab^\omega$, one might be tempted to define the concatenation of languages $L_1, L_2 \subseteq \ab^* \cup \ab^\omega$ as follows:
Partition $L_1 = K_1 \cup M_1$ where $K_1 \subseteq \ab^*$ are the finite and $M_1 \subseteq \ab^\omega$ are the $\omega$-words of $L_1$.
We set
\[
	L_1 \langConcat L_2 \qcoloneqq
	\begin{cases}
		K_1 \langConcat L_2  \cup M_1 &\text{if } L_2 \neq \emptyset ~,\\
		\emptyset &\text{if } L_2 = \emptyset ~.
	\end{cases}
\]
Intuitively, concatenating $\omega$-words from the left is absorptive.
This way, $(2^{\ab^* \cup \ab^\omega},\, {\cup},\, {\cdot},\, {\emptyset},\, {\set{\epsilon}})$ indeed is a $\omega$-continuous semiring.
However, the multiplication $\langConcat$ is not $\omega$-cocontinuos as the following example shows.
\begin{example}[Counterexample]
	\label{ex:mixed-formal-languages-cocontinuity-counterexample}
	Consider the singular alphabet $\ab = \set{ a }$.
	Define the descending $\omega$-chain $(L_n)_{n \in \Nats}$ where $L_n \coloneqq \Set{ a^i }{ i \geq n }$.
	Then,
	\begin{align*}
		\pr{ \bigmeet_{n \in \Nats} L_n } \langConcat \set{ a^\omega }
		\qeq \pr{ \bigcap_{n \in \Nats} L_n } \langConcat \set{ a^\omega }
		\qeq \emptyset \langConcat \set{ a^\omega }
		\qeq \emptyset ~.
	\end{align*}
	On the other hand,
	\begin{align*}
		\bigmeet_{n \in \Nats} ( L_n \langConcat \set{ a^\omega } )
		\qeq \bigcap_{n \in \Nats} ( L_n \langConcat \set{ a^\omega } )
		\qeq \bigcap_{n \in \Nats} \set{ a^\omega }
		\qeq \set{ a^\omega } ~.
	\end{align*}
	Hence, $\langConcat$ is not $\omega$-cocontinuos.
\end{example}
Even as a semimodule multiplication with a semiring element would not be $\omega$-cocontinuos.
An approach to resolve this problem is the notion of \emph{star semiring - omega semimodule pairs} or \emph{quemirings} \cite[Ch.\ 3]{Droste2009}.
Using semirings has the benefit that it allows to study matricial theories \cite[Ch.\ 1 \& 2]{Droste2009}.

Alternatively, we have the following approach.
Consider the alphabet $\ab$.
We obtain the \emph{word monoid} $\finWordMonoidDef{\ab}$ where $\langConcat$ is the usual concatenation.
Next, we define a module $\mixedWordLangSmodule{\ab} \coloneqq ()$ over this monoid.

\subsection{Proof that $\mixedWordLangSmodule{\ab}$ is a $\omega$-bicontinuous module}
\label{app:proofs2:mixed-formal-languages-module}

Let $\ab$ be a non-empty alphabet.
The \emph{word monoid} is $\finWordMonoidDef{\ab}$, where $\langConcat$ is the usual \emph{concatenation} and $\epsilon$ is the \emph{empty word}.
As a shorthand denote $\ab^\infty \coloneqq \ab^* \cup \ab^\omega$.
For a word $v \in \ab^*$ (from the monoid) and a formal language $L \subseteq \ab^\infty$ (from the module) we define their \emph{concatenation} $v \langConcat L \coloneqq \Set{vw}{w \in L}$ as usual.
Clearly, the \emph{module of mixed languages} $\mixedWordLangSmodule{\ab} \coloneqq (2^{\ab^\infty},\, {\cup},\, {\emptyset},\, {\langConcat})$ is a module over this monoid.

\begin{claim}
	This module is $\omega$-bicontinuous.
\end{claim}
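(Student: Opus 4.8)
The plan is to verify each clause of \cref{app:def:continuous-smodule} directly, exploiting the fact that $(2^{\ab^\infty},\, \subseteq)$ is a complete lattice. First I would pin down the natural order: since $L_1 \madd c = L_2$ is solvable in $c$ exactly when $L_1 \subseteq L_2$ (take $c = L_2$), the natural order $\natord$ on $\mixedWordLangSmodule{\ab}$ is ordinary set inclusion. Under $\subseteq$ the least element is $\emptyset = \mnull$ and the greatest element is $\ab^\infty = \mTop$, and every family of languages has both a supremum (its union) and an infimum (its intersection). In particular $(2^{\ab^\infty},\, \subseteq)$ is an $\omega$-bicpo, with $\bigjoin$ computed as $\bigcup$ and $\bigmeet$ as $\bigcap$, so the order-theoretic requirements of $\omega$-bicontinuity hold immediately.

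It then remains to check that addition $\madd = \cup$ and every scalar multiplication $L \mapsto v \langConcat L$ (for a fixed word $v \in \ab^*$) preserve suprema of ascending and infima of descending $\omega$-chains. For addition, fix a language $M$. Continuity, $M \cup \bigcup_n L_n = \bigcup_n (M \cup L_n)$, and cocontinuity, $M \cup \bigcap_n L_n = \bigcap_n (M \cup L_n)$, both hold by the complete distributivity of the powerset lattice and are verified by a one-line membership argument; they in fact hold for arbitrary families, not merely chains.

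The only genuinely delicate point is the cocontinuity of scalar multiplication, and this is exactly where the module succeeds while the naive semiring of \cref{ex:mixed-formal-languages-cocontinuity-counterexample} fails. Continuity $v \langConcat \bigcup_n L_n = \bigcup_n (v \langConcat L_n)$ is routine. For cocontinuity I would establish $v \langConcat \bigcap_n L_n = \bigcap_n (v \langConcat L_n)$ using that left-concatenation by the \emph{finite} word $v$ is injective, hence left-cancellative: any $x \in \bigcap_n (v \langConcat L_n)$ has, for each $n$, the form $x = v w_n$ with $w_n \in L_n$, and since $v$ is finite, $w_n$ is uniquely the suffix of $x$ obtained by deleting the prefix $v$; thus all $w_n$ coincide with a single $w \in \bigcap_n L_n$, giving $x = v w \in v \langConcat \bigcap_n L_n$, while the reverse inclusion is plain monotonicity. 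The contrast with the counterexample is the crux: there one multiplies by a language containing an $\omega$-word, and prefixing a finite string onto $a^\omega$ is absorptive (non-injective), which is precisely what destroys cocontinuity. Here scalars are single finite words drawn from the monoid $\ab^*$, so injectivity — and with it cocontinuity — is guaranteed. I expect this cancellation argument to be the main obstacle; everything else reduces to routine membership checks.
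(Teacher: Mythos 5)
Your proof is correct and follows essentially the same route as the paper's: identify the natural order as set inclusion (so joins and meets of chains are unions and intersections), dispatch addition and scalar continuity by routine membership checks, and settle scalar cocontinuity by observing that for a fixed element $w$ and fixed finite word $v$, all decompositions $w = v u_i$ force the $u_i$ to coincide — your phrasing of this as left-cancellativity of concatenation by a finite word is exactly the paper's "as $w$ is fixed, this implies $u_i = u_j$" step. Your remark that scalars being single finite words (rather than languages possibly containing $\omega$-words) is what rescues cocontinuity also matches the paper's footnote and its counterexample discussion.
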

\begin{proof}
	The natural order $\natord$ simply is set inclusion $\subseteq$ of languages.
	The infimum $\meet$ of a descending $\omega$-chain simply is the intersection $\cap$, and the supremum $\join$ of an ascending $\omega$-chain simply is the union $\cup$, i.e.\ subsets of $\ab^\infty$ by definition.
	Let $(A_i)_{i \in \Nats}$ be an ascending $\omega$-chain ($A_i \subseteq A_{i+1}$), $(D_i)_{i \in \Nats}$ be an descending $\omega$-chain ($D_i \supseteq D_{i+1}$), $L \subseteq \ab^\infty$, and $v \in \ab^*$ then
	\begin{itemize}
	\item Clearly, addition $\cup$ is $\omega$-continuous:
		\[
			L \cup \bigcup_{i \in \Nats} A_i \qeq \bigcup_{i \in \Nats} (L \cup A_i) ~.
		\]

	\item Also, addition $\cup$ is $\omega$-cocontinuous:
		\[
			L \cup \bigcap_{i \in \Nats} D_i \qeq \bigcap_{i \in \Nats} (L \cup D_i) ~.
		\]

	\item More interestingly, scalar multiplication $\langConcat$ also is $\omega$-continuous:
		\[
			v \langConcat \bigcup_{i \in \Nats} A_i \qeq \bigcup_{i \in \Nats} v \langConcat A_i ~.
		\]

		Let $w \in v \langConcat \bigcup_{i \in \Nats} A_i$, then there is an $u \in \bigcup_{i \in \Nats} A_i$ such that $w = vu$.
		But then there is an $i \in \Nats$ with $u \in A_i$ and hence $w = vu \in v \langConcat A_i$.

		On the other hand, let $w \in \bigcup_{i \in \Nats} v \langConcat A_i$, then there is an $j \in \Nats$ such that $w = vu$ for some $u \in A_j$.
		But this implies $u \in \bigcup_{i \in \Nats} A_i$ and hence $w = vu \in v \langConcat \bigcup_{i \in \Nats} A_i$.

	\item Finally, scalar multiplication $\langConcat$ is $\omega$-cocontinuous:\footnote{This is in general false if one were to allow languages (as opposed to single words) on the left hand side, i.e.\ $L_1 \langConcat L_2$ for $L_1 \subset \ab^*$ and $L_2 \subset \ab^\infty$}
		\[
			v \langConcat \bigcap_{i \in \Nats} D_i \qeq \bigcap_{i \in \Nats} v \langConcat D_i ~.
		\]

		Let $w \in v \langConcat \bigcap_{i \in \Nats} D_i$, then there is an $u \in \bigcap_{i \in \Nats} D_i$ such that $w = vu$.
		But then $u \in D_i$ and hence $w = vu \in v \langConcat D_i$ for all $i \in \Nats$.

		On the other hand, let $w \in \bigcap_{i \in \Nats} v \langConcat D_i$.
		Then, there is an $u_i \in L_i$ with $w = v u_i$ for all $i \in \Nats$.
		But as $w$ is fixed, this implies $u_i = u_j$ for all $i, j \in \Nats$.
		Hence, there is an $u = u_1 \in \bigcap_{i \in \Nats} D_i$ with $w = vu$, i.e.\ $v \langConcat \bigcap_{i \in \Nats} D_i$.
	\end{itemize}
\end{proof}

\newpage
\section{Proofs of Section \ref{sec:weighted-semantics}}
\label{app:proofs4}

First of all, note that $\iverson{\guard} \ivop$ distributes of $\madd$:
Let $f, g \in \Weightings$ and $\state \in \States$.
\begin{itemize}
\item Case $\state \models \guard$.
	\begin{align*}
		(\iverson{\guard} \ivop (f \madd g))(\state)
		\eeq f(\state) \madd g(\state)
		\eeq (\iverson{\guard} \ivop f)(\state) \mmadd (\iverson{\guard} \ivop f)(\state)
		\eeq (\iverson{\guard} \ivop f \madd \iverson{\guard} \ivop g))(\state)
	\end{align*}

\item Case $\state \not\models \guard$.
	\begin{align*}
		(\iverson{\guard} \ivop (f \madd g))(\state)
		\eeq \mnull
		\eeq \mnull(\state) \mmadd \mnull(\state)
		\eeq (\iverson{\guard} \ivop f)(\state) \mmadd (\iverson{\guard} \ivop f)(\state)
		\eeq (\iverson{\guard} \ivop f \madd \iverson{\guard} \ivop g)(\state)
	\end{align*}
\end{itemize}

\subsection{Proof of Theorem \ref{thm:wp-continuous}}
\label{app:proof:wp-continuous}

We will use the following:
\begin{lemma}
    \label{thm:madd-cont-both-args}
    Let $(a_i)_{i \in \Nats}$ and $(b_i)_{i \in \Nats}$ be ascending $\omega$-chains in an $\omega$-continuous module $\Smodule$.
    Then
    \[
        \bigjoin_{i \in \Nats} (a_i \madd b_i) \eeq \bigjoin_{i \in \Nats} a_i \mmadd \bigjoin_{i \in \Nats} b_i ~.
        \tag*{$\triangle$}
    \]
\end{lemma}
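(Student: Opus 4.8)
The plan is to upgrade the \emph{separate} $\omega$-continuity of $\madd$ in each argument --- which is exactly what \cref{app:def:continuous-smodule} provides --- to \emph{joint} continuity along the diagonal. First I would record two preliminary facts. By \cref{app:thm:smodule-natordprops} together with commutativity of $\madd$, the operation $\madd$ is monotone in \emph{both} arguments; hence $(a_i \madd b_i)_{i \in \Nats}$ is again an ascending $\omega$-chain, so that the join on the left-hand side exists in the $\omega$-cpo $(\smodule,\natord)$ and the statement is meaningful.

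Next I would rewrite the right-hand side by applying single-argument continuity twice. Holding the constant $u = \bigjoin_{j} b_j$ fixed and using continuity of $\madd$ along the chain $(a_i)_i$ (with commutativity to place the variable argument on the side required by the axiom), followed by continuity in the second argument for each fixed $i$, I obtain
\[
	\Bigl(\bigjoin_{i} a_i\Bigr) \madd \Bigl(\bigjoin_{j} b_j\Bigr)
	\eeq \bigjoin_{i} \Bigl( a_i \madd \bigjoin_{j} b_j \Bigr)
	\eeq \bigjoin_{i} \bigjoin_{j} \bigl( a_i \madd b_j \bigr)~.
\]

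It then remains to prove the \emph{diagonal lemma}: for the doubly-indexed family $x_{ij} \coloneqq a_i \madd b_j$, which is monotone in each index, one has $\bigjoin_i \bigjoin_j x_{ij} = \bigjoin_n x_{nn}$. I would argue by antisymmetry of $\natord$. For the inequality $\natord$, observe that for all $i,j$, setting $m = \max(i,j)$ gives $a_i \natord a_m$ and $b_j \natord b_m$, whence $x_{ij} \natord x_{mm} \natord \bigjoin_n x_{nn}$ by monotonicity; taking the double join yields $\bigjoin_i \bigjoin_j x_{ij} \natord \bigjoin_n x_{nn}$. The reverse inequality is immediate, since every diagonal term $x_{nn}$ already occurs in the double join. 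Combining this with the display above and noting $x_{nn} = a_n \madd b_n$ gives exactly the claimed identity.

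The only genuinely delicate point is the diagonal step, and there the sole hypothesis beyond the $\omega$-cpo structure is monotonicity of $\madd$ in both arguments; commutativity of the additive monoid is needed merely to align operands when invoking the one-sided continuity axiom. Everything else is routine bookkeeping of suprema, so I do not anticipate any real obstacle.
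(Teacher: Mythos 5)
Your proof is correct and is essentially the paper's argument written out in full: the paper simply observes that addition with constants is $\omega$-continuous and cites the standard domain-theoretic fact (Abramsky--Jung, Lem.~3.2.6) that separate continuity yields joint continuity along chains, and your two steps --- reducing $\bigl(\bigjoin_i a_i\bigr) \madd \bigl(\bigjoin_j b_j\bigr)$ to a double join via one-sided continuity (with commutativity to flip arguments), then collapsing the double join to the diagonal by monotonicity and antisymmetry --- are precisely the proof of that cited lemma. No gap; you have just inlined what the paper delegates to a reference.
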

\begin{proof}
	Follows because addition with constants is $\omega$-continuous (see \cref{app:def:continuous-smodule}) and applying e.g.~\cite[Ch.\ Domain Theory, Lem.\ 3.2.6]{DBLP:books/lib/Abramsky94}.
\end{proof}

\begin{theorem}[\cref{thm:wp-continuous}]
	Let the monoid module $\Smodule$ over $\Monoid$ be $\omega$-continuous.
    For all $\Monoid$-$\wgcl$ programs $C$ and $\omega$-continuous $\Monoid$-modules $\Smodule$ the weighting transformer $\wpC{C}\colon \Weightings \to \Weightings$ is a well-defined $\omega$-continuous function.
	If $C$ is of the form $\WHILEDO{\guard}{C'}$, the least fixed point is
    \[
		\wp{C}{f} \qeq \bigjoin_{i \in \Nats} \charwp{\guard}{C'}{f}^i(\mnull) ~.
		\tag*{$\triangle$}
    \]
\end{theorem}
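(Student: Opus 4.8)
The plan is to prove well-definedness and $\omega$-continuity of $\wpC{C}$ by structural induction on the program $C$, following the compositional definition in \cref{tab:wp-semantics}. The inductive hypothesis states that for each immediate subprogram, the transformer is a well-defined $\omega$-continuous endofunction on $\Weightings$. Since $\Weightings = \Smodule^{\States}$ is itself $\omega$-continuous (being a pointwise lifting of the $\omega$-continuous module $\Smodule$, cf.~\cref{app:thm:function-smodule}), the fixed-point apparatus from \cref{app:thm:kleene-park} is available throughout.

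First I would dispatch the base cases. For $\WEIGH{a}$, the transformer is $f \mapsto a \smop f$, which is $\omega$-continuous because scalar multiplication with a constant is $\omega$-continuous by \cref{app:def:continuous-smodule}. For $\ASSIGN{x}{E}$, the transformer $f \mapsto f\subst{x}{E}$ is precomposition with a fixed state-update map, hence trivially preserves pointwise suprema. For the inductive constructs: sequential composition $\wpC{\COMPOSE{C_1}{C_2}} = \wpC{C_1} \circ \wpC{C_2}$ is $\omega$-continuous as a composition of $\omega$-continuous functions (\cref{app:thm:compositional-continuity}). Branching $f \mapsto \wp{C_1}{f} \madd \wp{C_2}{f}$ uses \cref{thm:madd-cont-both-args} to push the supremum through the module sum, together with the IH. Conditional choice is handled analogously, additionally invoking the distributivity of $\iverson{\guard}\ivop$ over $\madd$ and the fact that $\iverson{\guard}\ivop$ is itself $\omega$-continuous (it acts pointwise as either the identity or the constant $\mnull$).

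The main obstacle — and the only genuinely non-routine case — is the loop $\WHILEDO{\guard}{C'}$. Here I must argue two things. First, the characteristic function $\charwp{\guard}{C'}{f}(X) = \iverson{\neg\guard}\ivop f \madd \iverson{\guard}\ivop \wp{C'}{X}$ is an $\omega$-continuous endofunction of $X$; this follows by assembling the pieces just established (the IH gives $\omega$-continuity of $\wpC{C'}$, then $\iverson{\guard}\ivop(\cdot)$ and $\madd (\iverson{\neg\guard}\ivop f)$ preserve this). Since $\Weightings$ is an $\omega$-cpo, \cref{app:thm:kleene-park}\eqref{it:lfp} then guarantees that $\lfpop \charwp{\guard}{C'}{f}$ exists and equals $\bigjoin_{i\in\Nats} \charwp{\guard}{C'}{f}^i(\mnull)$, which is exactly the claimed formula and establishes well-definedness of $\wpC{\WHILEDO{\guard}{C'}}$ as a \emph{function}.

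The subtler point is $\omega$-continuity of the loop transformer \emph{in its postweighting argument} $f$, i.e.\ that $f \mapsto \lfpop \charwp{\guard}{C'}{f}$ preserves suprema of ascending chains $(f_j)_{j\in\Nats}$. The standard route is to write $\lfpop \charwp{\guard}{C'}{f} = \bigjoin_i \charwp{\guard}{C'}{f}^i(\mnull)$ and show, by induction on $i$, that each iterate $f \mapsto \charwp{\guard}{C'}{f}^i(\mnull)$ is $\omega$-continuous in $f$; then the loop transformer is an $\omega$-continuous function as a pointwise supremum of $\omega$-continuous functions (again by \cref{app:thm:compositional-continuity}-style reasoning over $\omega$-cpos). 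The crux is the interchange of the two suprema $\bigjoin_i$ (over iterates) and $\bigjoin_j$ (over the chain $f_j$), which is justified because both indexing sets are $\omega$-chains and the diagonal argument for double $\omega$-chains applies in any $\omega$-cpo. I expect this double-supremum exchange to be where the real care is needed, since it is the one step that is not a direct consequence of an already-stated lemma; everything else reduces to bookkeeping with the inductive hypothesis and the continuity primitives from \cref{app:def:continuous-smodule}.
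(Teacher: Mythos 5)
Your proof is correct and follows the paper's own proof almost step for step: the same structural induction, the same handling of the base cases and of sequential composition, branching, and conditional choice (via \cref{app:thm:compositional-continuity} and \cref{thm:madd-cont-both-args}), and the same use of Kleene iteration (\cref{app:thm:kleene-park}) to get well-definedness of the loop case together with the formula $\bigjoin_{i \in \Nats} \charwp{\guard}{C'}{f}^i(\mnull)$. The one place you genuinely diverge is the last step of the loop case, namely $\omega$-continuity of $f \mapsto \lfpop\, \charwp{\guard}{C'}{f}$ in the postweighting. The paper shows that $f \mapsto \charwp{\guard}{C'}{f}$ is $\omega$-continuous as a map into the endofunctions on $\Weightings$ and then cites $\omega$-continuity of the fixed-point operator $\lfpop$ itself \cite[Ch.~Domain Theory, Thm.~2.1.19]{DBLP:books/lib/Abramsky94}, so that parameter-continuity follows simply by composing two continuous maps. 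You instead inline precisely the argument that underlies that cited theorem: an inner induction on $i$ showing each Kleene iterate $f \mapsto \charwp{\guard}{C'}{f}^i(\mnull)$ is $\omega$-continuous in $f$ (which indeed goes through using the structural induction hypothesis on $C'$, continuity of $\iverson{\guard}\ivop$, and \cref{thm:madd-cont-both-args}), followed by the interchange of the two suprema via the diagonal argument, which is legitimate here because the family $\charwp{\guard}{C'}{f_j}^i(\mnull)$ is monotone in both $i$ and $j$, so its diagonal is an $\omega$-chain whose supremum equals both iterated suprema. Both routes are sound and mathematically the same in substance; the paper's is shorter and more modular because it delegates the crux to a standard domain-theoretic fact, while yours is self-contained and makes explicit the step you correctly identify as the only non-routine one --- the double-supremum exchange that the paper's citation keeps hidden.
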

\begin{proof}
    We employ induction on the structure of $C$.
    Let $(f_i)_{i \in \Nats}$ be an ascending $\omega$-chain in $\Weightings$.
    Because $\Smodule$ and $\Weightings$ are $\omega$-continuous mondules, all the following joins exist.
    
    The following forms the induction base.
    \begin{itemize}
	\item
        The program $C$ is of the form $\ASSIGN{x}{E}$.
        \begin{align*}
			& \wp{C}{\bigjoin_{i \in \Nats} f_i} \\
			\cmmnt{Def.\ of $\wpsymbol$} \\
			\eeq& \Big( \bigjoin_{i \in \Nats} f_i \Big) \subst{x}{E} \\
			\cmmnt{Def.\ of $\subst{x}{E}$} \\
			\eeq& \lam{\sigma} \Big( \bigjoin_{i \in \Nats} f_i \Big) \bigl( \sigma\statesubst{x}{\eval{E}{\sigma}} \bigr) \\
			\cmmnt{Def.\ of $\bigjoin$} \\
			\eeq& \lam{\sigma} \bigjoin_{i \in \Nats} \Bigl( f_i \bigl( \sigma\statesubst{x}{\eval{E}{\sigma}} \bigr) \Bigr) \\
			\cmmnt{Def.\ of $\bigjoin$} \\
			\eeq& \bigjoin_{i \in \Nats} \Bigl( \lam{\sigma} f_i \bigl( \sigma\statesubst{x}{\eval{E}{\sigma}} \bigr) \Bigr) \\
			\cmmnt{Def.\ of $\subst{x}{E}$} \\
			\eeq& \bigjoin_{i \in \Nats} (f_i\subst{x}{E}) \\
			\cmmnt{Def.\ of $\wpsymbol$} \\
			\eeq& \bigjoin_{i \in \Nats} \wp{C}{f_i}
        \end{align*}
        
	\item
        The program $C$ is of the form $\WEIGH{a}$.
        This is an immediate consequence of \cref{app:thm:compositional-continuity} and the $\omega$-continuity of $\smop$.
        \begin{align*}
			& \wp{C}{\bigjoin_{i \in \Nats} f_i} \\
			\cmmnt{Def.\ of $\wpsymbol$} \\
			\eeq& a \smop \Big( \bigjoin_{i \in \Nats} f_i \Big) \\
			\cmmnt{$\smop$ is $\omega$-continuous in the second argument} \\
			\eeq& \bigjoin_{i \in \Nats} (a \smop f_i) \\
			\cmmnt{Def.\ of $\wpsymbol$} \\
			\eeq& \bigjoin_{i \in \Nats} \wp{C}{f_i}
        \end{align*}
    \end{itemize}
    
    The following forms the induction step.
    Hence, for each deconstruction of $C$, we assume both $\wpC{C_1}$ and $\wpC{C_2}$ to be $\omega$-continuous as our induction's hypothesis.
    Notice that the subprograms $C_1$, $C_2$, $C'$ are always shorter than $C$.
    \begin{itemize}
	\item
        The program $C$ is of the form $\COMPOSE{C_1}{C_2}$.
        This is an immediate consequence of \cref{app:thm:compositional-continuity}.
        \begin{align*}
			& \wp{C}{\bigjoin_{i \in \Nats} f_i} \\
			\cmmnt{Def.\ of $\wpsymbol$} \\
			\qeq& \wp{C_1}{ \wp{C_2}{ \bigjoin_{i \in \Nats} f_i } } \\
			\cmmnt{Induction Hypothesis} \\
			\qeq& \wp{C_1}{ \bigjoin_{i \in \Nats} \wp{C_2}{f_i} } \\
			\cmmnt{Induction Hypothesis} \\
			\qeq& \bigjoin_{i \in \Nats} \wp{C_1}{ \wp{C_2}{f_i} } \\
			\cmmnt{Def.\ of $\wpsymbol$} \\
			\qeq& \bigjoin_{i \in \Nats} \wp{C}{f_i}
        \end{align*}
        
	\item
        The program $C$ is of the form $\ITE{\guard}{C_1}{C_2}$.
        This is an immediate consequence of \cref{app:thm:compositional-continuity} and the module's $\omega$-continuity.
        \begin{align*}
			& \wp{C}{\bigjoin_{i \in \Nats} f_i} \\
			\cmmnt{Def.\ of $\wpsymbol$} \\
			\eeq& \iverson{\guard} \ivop \wp{C_1}{\bigjoin_{i \in \Nats} f_i} \mmadd \iverson{\neg\guard} \ivop \wp{C_2}{\bigjoin_{i \in \Nats} f_i} \\
			\cmmnt{Induction Hypothesis} \\
			\eeq& \iverson{\guard} \ivop \Big( \bigjoin_{i \in \Nats} \wp{C_1}{f_i} \Big) \mmadd \iverson{\neg\guard} \ivop \Big( \bigjoin_{i \in \Nats} \wp{C_2}{f_i} \Big) \\
			\cmmnt{$\mnull \ivop$, $\mTop \ivop$ are $\omega$-continuous} \\
			\eeq& \Big( \bigjoin_{i \in \Nats} \iverson{\guard} \ivop \wp{C_1}{f_i} \Big) \mmadd \Big( \bigjoin_{i \in \Nats} \iverson{\neg\guard} \ivop \wp{C_2}{f_i} \Big) \\
			\cmmnt{$\madd$ is $\omega$-continuous, \cref{thm:madd-cont-both-args}} \\
			\eeq& \bigjoin_{i \in \Nats} \big( \iverson{\guard} \ivop \wp{C_1}{f_i} \mmadd \iverson{\neg\guard} \ivop \wp{C_2}{f_i} \big) \\
			\cmmnt{Def.\ of $\wpsymbol$} \\
			\eeq& \bigjoin_{i \in \Nats} \wp{C}{f_i}
        \end{align*}
        
	\item
        The program $C$ is of the form $\BRANCH{C_1}{C_2}$.
        This is an immediate consequence of \cref{app:thm:compositional-continuity} and the $\omega$-continuity of $\madd$.
        \begin{align*}
			& \wp{C}{\bigjoin_{i \in \Nats} f_i} \\
			\cmmnt{Def.\ of $\wpsymbol$} \\
			\eeq& \wp{C_1}{ \bigjoin_{i \in \Nats} f_i } \mmadd \wp{C_2}{ \bigjoin_{i \in \Nats} f_i } \\
			\cmmnt{Induction Hypothesis} \\
			\eeq& \Big( \bigjoin_{i \in \Nats} \wp{C_1}{f_i} \Big) \mmadd \Big( \bigjoin_{i \in \Nats} \wp{C_2}{f_i} \Big) \\
			\cmmnt{$\madd$ is $\omega$-continuous, \cref{thm:madd-cont-both-args}} \\
			\eeq& \bigjoin_{i \in \Nats} \big( \wp{C_1}{f_i} \mmadd \wp{C_2}{f_i} \big) \\
			\cmmnt{Def.\ of $\wpsymbol$} \\
			\eeq& \bigjoin_{i \in \Nats} \wp{C}{f_i}
        \end{align*}

	\item
        The program $C$ is of the form $\WHILEDO{\guard}{C'}$.
		First we show the loop-characteristic function
		\[
			\charwp{\guard}{C'}{f}\colon \quad \Weightings \to \Weightings, \quad g ~{}\mapsto{}~ \iverson{\neg\guard} \ivop f \mmadd \iverson{\guard} \ivop \wp{C'}{g}
		\]
		to be $\omega$-continuous in both $f$ and its argument $g$.
		\begin{align*}
			& \charwp{\guard}{C'}{\bigjoin_{i \in \Nats} f_i} \\
			\cmmnt{Def.\ of $\charwp{\guard}{C'}{f}$} \\
			\eeq& \lam{g} \iverson{\neg\guard} \ivop \Big( \bigjoin_{i \in \Nats} f_i \Big) \mmadd \iverson{\guard} \ivop \wp{C'}{g} \\
			\cmmnt{$\mnull \ivop$, $\mTop \ivop$ are $\omega$-continuous} \\
			\eeq& \lam{g} \Big( \bigjoin_{i \in \Nats} \iverson{\neg\guard} \ivop f_i \Big) \mmadd \iverson{\guard} \ivop \wp{C'}{g} \\
			\cmmnt{$\madd$ is $\omega$-continuous} \\
			\eeq& \lam{g} \bigjoin_{i \in \Nats} \Big( \iverson{\neg\guard} \ivop f_i \mmadd \iverson{\guard} \ivop \wp{C'}{g} \Big) \\
			\cmmnt{Def.\ of $\join$ on $\Sring^{\Weightings}$} \\
			\eeq& \bigjoin_{i \in \Nats} \Big( \lam{g} \iverson{\neg\guard} \ivop f_i \mmadd \iverson{\guard} \ivop \wp{C'}{g} \Big) \\
			\cmmnt{Def.\ of $\charwp{\guard}{C'}{f}$} \\
			\eeq& \bigjoin_{i \in \Nats} \charwp{\guard}{C'}{f_i}
		\end{align*}
		Let $(g_i)_{i \in \Nats}$ be an (ascending) $\omega$-chain in $\Weightings$.
		\begin{align*}
			& \charwp{\guard}{C'}{f}\Big( \bigjoin_{i \in \Nats} g_i \Big) \\
			\cmmnt{Def.\ of $\charwp{\guard}{C'}{f}$} \\
			\eeq& \iverson{\neg\guard} \ivop f \mmadd \iverson{\guard} \ivop \wp{C'}{\bigjoin_{i \in \Nats} g_i} \\
			\cmmnt{Induction Hypothesis} \\
			\eeq& \iverson{\neg\guard} \ivop f \mmadd \iverson{\guard} \ivop \Big( \bigjoin_{i \in \Nats} \wp{C'}{g_i} \Big) \\
			\cmmnt{$\mnull \ivop$, $\mTop \ivop$ are $\omega$-continuous} \\
			\eeq& \iverson{\neg\guard} \ivop f \mmadd \bigjoin_{i \in \Nats} \big( \iverson{\guard} \ivop \wp{C'}{g_i} \big) \\
			\cmmnt{$\madd$ is $\omega$-continuous} \\
			\eeq& \bigjoin_{i \in \Nats} \big( \iverson{\neg\guard} \ivop f \mmadd \iverson{\guard} \ivop \wp{C'}{g_i} \big) \\
			\cmmnt{Def.\ of $\charwp{\guard}{C'}{f}$} \\
			\eeq& \bigjoin_{i \in \Nats} \charwp{\guard}{C'}{f}(g_i)
		\end{align*}
		By \cref{app:thm:kleene-park} the fixed point $\wp{\WHILEDO{\guard}{C'}}{f}$ is well defined and equals the claimed expression.
		It is $\omega$-continuous in $f$ by $\omega$-continuity of $f \mapsto \charwp{\guard}{C'}{f}$ and the fixed-point operator $\lfpop$, e.g.~\cite[Ch.\ Domain Theory, Thm.\ 2.1.19]{DBLP:books/lib/Abramsky94}.
    \end{itemize}
\end{proof}

\subsection{Proof of Theorem \ref{thm:wp-operational-semantics} (Soundness of $\wpsymbol$ w.r.t.\ operational semantics)}
\label{app:proof:wp-soundness}

The following proof is based on \cite[Appendix B]{DBLP:journals/corr/abs-1802-10467}.

\begin{theorem}[\cref{thm:wp-operational-semantics}]
	Let the monoid module $\Smodule$ over $\Monoid$ be $\omega$-continuous.
    For any $\wgcl$ program $C$, initial state $\state \in \States$, and post-weighting $f \in \Weightings$,
    \begin{align*}
		\wp{C}{f}(\state) \qeq \mbigadd_{\compPath \in \termPathsStartingIn{\sosState{C}{\state}{n}{v}}} \pathWeight(\compPath) \smop f(\lastState(\compPath)) ~.
    \end{align*}
\end{theorem}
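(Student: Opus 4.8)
The plan is to prove the identity by induction on the structure of $C$, writing $\mathsf{op}_C(f)(\state) \coloneqq \mbigadd_{\compPath \in \termPathsStartingIn{\sosStateAbbr{C}{\state}}} \pathWeight(\compPath) \smop f(\lastState(\compPath))$ for the right-hand side and aiming to show $\wp{C}{f} = \mathsf{op}_C(f)$ for all $f \in \Weightings$. The backbone of the argument is a collection of \emph{path-decomposition} observations read off from the SOS rules of \cref{fig:sosrules}, each setting up a weight- and last-state-preserving bijection between the terminating paths of a composite program and those of its constituents. For $\COMPOSE{C_1}{C_2}$, rules \sosRule{seq.\ 1} and \sosRule{seq.\ 2} show that every terminating path factors \emph{uniquely} as a terminating $C_1$-path reaching some intermediate state $\tau$ followed by a terminating $C_2$-path started in $\tau$; since $\pathWeight$ is the ordered monoid product along the path, the two path weights multiply \emph{in that order} (which matters as $\Monoid$ need not be commutative). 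For $\BRANCH{C_1}{C_2}$ the rules \sosRule{l.\ branch}/\sosRule{r.\ branch} prepend a single $\sone$-weighted step into $C_1$ or $C_2$, partitioning the terminating paths into two disjoint classes; for $\ITE{\guard}{C_1}{C_2}$ the rules \sosRule{if}/\sosRule{else} prepend a single $\sone$-weighted step into the branch selected by $\state \models \guard$. In each case the prepended step carries weight $\sone = \monone$ and hence leaves $\pathWeight$ unchanged.

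Using these decompositions together with the extended distributive law and the reordering property of $\omega$-finitary infinite sums in an $\omega$-continuous module (cf.\ \eqref{eq:smodule-infinite-sums} and \cref{thm:proof-finitary-sums-coincide}, which transfers to $\Monoid$-modules), I would show that $\mathsf{op}$ satisfies \emph{exactly} the defining equations of $\wpsymbol$ from \cref{tab:wp-semantics} for every non-loop construct: e.g.\ $\mathsf{op}_{\COMPOSE{C_1}{C_2}}(f) = \mathsf{op}_{C_1}(\mathsf{op}_{C_2}(f))$ follows by pulling $\pathWeight(\pi_1)\smop(\cdot)$ through the inner sum over $C_2$-paths, and $\mathsf{op}_{\BRANCH{C_1}{C_2}}(f) = \mathsf{op}_{C_1}(f)\madd \mathsf{op}_{C_2}(f)$ follows by splitting the sum along the two branch classes. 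The two atomic base cases are immediate: $\ASSIGN{x}{E}$ has the single path of weight $\sone$ ending in $\state\statesubst{x}{\eval{E}{\state}}$, giving $f\subst{x}{E}$, and $\WEIGH{a}$ has the single path of weight $a$ ending in $\state$, giving $a\smop f$. Combined with the structural induction hypothesis, this settles all cases except the loop.

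For $\WHILEDO{\guard}{C'}$ I would argue via bounded unrollings. Let $P_n \in \Weightings$ be the operational sum restricted to those terminating loop-paths that execute the body $C'$ \emph{at most $n$ times}; the $(P_n)_n$ form an ascending chain whose pointwise join is $\mathsf{op}_{\WHILEDO{\guard}{C'}}(f)$, because every terminating loop-path performs only finitely many iterations, so the finite subsets of $\termPathsStartingIn{\sosStateAbbr{\WHILEDO{\guard}{C'}}{\state}}$ are exhausted along the $P_n$ (here I invoke the definition of infinite sums as suprema of partial sums). I would then prove by induction on $n$ that $\charwp{\guard}{C'}{f}^{\,n+1}(\snull) = P_n$: the base case uses strictness $\wp{C'}{\snull} = \snull$ (\cref{thm:wp-properties}) to get $\charwp{\guard}{C'}{f}(\snull) = \iverson{\neg\guard}\ivop f = P_0$, and the step unrolls the loop once via \sosRule{while}/\sosRule{break}, applies the \emph{structural} induction hypothesis to the body (so $\wp{C'}{P_{n-1}} = \mathsf{op}_{C'}(P_{n-1})$), and recombines a body-path followed by a loop-path of $\leq n-1$ iterations into a loop-path of $\leq n$ iterations using the composition decomposition above. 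Taking suprema and using \cref{thm:wp-continuous}, $\wp{\WHILEDO{\guard}{C'}}{f} = \bigjoin_{n} \charwp{\guard}{C'}{f}^{\,n}(\snull) = \bigjoin_n P_n = \mathsf{op}_{\WHILEDO{\guard}{C'}}(f)$, as desired.

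I expect the main obstacle to be this loop limit step: one must justify that the supremum of the bounded-unrolling sums $P_n$ --- each of which is itself a countably infinite module sum --- coincides with the single infinite sum over all terminating loop-paths, which requires the associativity and reordering theory of $\omega$-finitary sums in $\omega$-continuous modules and the fact that the body's own (possibly infinitely many) terminating paths interleave correctly with the iteration count. A secondary point of care, present throughout, is the non-commutativity of $\Monoid$: all path concatenations must preserve the left-to-right order of weights, so the decomposition bijections have to be stated as \emph{ordered} factorizations rather than mere set bijections.
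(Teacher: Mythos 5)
Your proof is correct, but it takes a genuinely different route from the paper's. The paper proves the identity as two inequalities at the level of transformers: it introduces \emph{$\wgcl$-functionals} (extended transformers satisfying the one-step unfolding equation read off from the SOS rules of \cref{fig:sosrules}), shows that the operational transformer $\ext{\opsymbol}$ is the \emph{least} $\wgcl$-functional (the lower-bound half being a truncation-by-path-length argument close in spirit to your exhaustion step), shows that $\ext{\wpsymbol}$ is itself a $\wgcl$-functional by structural induction --- together giving $\ext{\opsymbol} \ewtord \ext{\wpsymbol}$ --- and finally proves $\ext{\wpsymbol} \ewtord \ext{\opsymbol}$ by a second structural induction whose loop case exhibits $\op{\WHILEDO{\guard}{C'}}{f}$ as a \emph{prefixed point} of $\charwp{\guard}{C'}{f}$ and concludes by Park induction (\cref{app:thm:kleene-park}). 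You instead prove the equality in a single structural induction, treating loops by identifying the Kleene iterates $\charwp{\guard}{C'}{f}^{\,n+1}(\snull)$ with the bounded-unrolling sums $P_n$ and invoking $\wp{\WHILEDO{\guard}{C'}}{f} = \bigjoin_{n} \charwp{\guard}{C'}{f}^{\,n}(\snull)$ from \cref{thm:wp-continuous}. The paper's abstraction buys uniformity (one-step unfolding covers every construct at once) and lets Park induction replace any explicit description of the iterates; your route is more concrete and delivers the equality in one pass, but must carry the ordered Fubini-style rearrangement of double sums in the inner induction and the exhaustion argument $\bigjoin_n P_n = \op{\WHILEDO{\guard}{C'}}{f}$ --- machinery the paper also needs in its path-splitting and path-length-truncation lemmas, so neither route is cheaper on that front. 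One caveat worth fixing: you justify the base case $\charwp{\guard}{C'}{f}(\snull) = \iverson{\neg\guard}\ivop f$ by citing strictness from \cref{thm:wp-properties}, but the paper derives that theorem \emph{from} \cref{thm:wp-operational-semantics}, so the citation as stated is circular; the repair is immediate and internal to your induction, since your structural induction hypothesis already yields $\wp{C'}{\snull} = \mbigadd_{\compPath} \pathWeight(\compPath) \smop \mnull = \mnull$ by annihilation.
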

\begin{proof}
    We use a few auxiliary definitions and lemmas that can be found below.
    By \cref{thm:least-wgcl-functional} $\ext{\opsymbol}$ is the least $\wgcl$-functional.
    As $\ext{\wpsymbol}$ also is a $\wgcl$-functional due to \cref{thm:wp-functional}, we get $\ext{\opsymbol} \ewtord \ext{\wpsymbol}$.
    On the other hand, $\ext{\wpsymbol} \ewtord \ext{\opsymbol}$ by \cref{thm:wp-less-than-op}.
    Both imply $\ext{\wpsymbol} = \ext{\opsymbol}$.
    Now, $\wpsymbol = \opsymbol$ due to \cref{def:weighting-transformer}.
    This is the above claim by \cref{def:operational-functional}.
\end{proof}

\begin{definition}
    \label{def:weighting-transformer}
    A map $\Phi\colon \wgcl \to (\Weightings \to \Weightings)$ is called \emph{weighting transformer}.
    The corresponding \emph{extended weighting transformer} $\ext{\Phi}\colon (\wgcl \cup \set{\termState}) \to (\Weightings \to \Weightings)$ is defined via
    \[
		\ewt{\Phi}{C}{f} \qcoloneqq \begin{cases}
			f & \text{ if } C = \termState \\
			\wt{\Phi}{C}{f} & \text{ otherwise }
		\end{cases}
    \]
    for all $C \in \wgcl \cup \set{\termState}$ and $f \in \Weightings$.
    Define the partial order $\ewtord$ for weighting transformers $\Phi, \Psi$ via
    \[
		\ext{\Phi} \ewtord \ext{\Psi}
		\qqiff
		\wt{\Phi}{C}{f} \natord \wt{\Psi}{C}{f} \text{ for all } C \in \wgcl, f \in \Weightings ~.
		\tag*{$\triangle$}
    \]
\end{definition}

\begin{definition}
    \label{def:wgcl-functional}
    An extended weighting transformer $\ext{\Phi}$ is called \emph{$\wgcl$-functional} if for all $\wgcl$ programs $C$, postweightings $f \in \Weightings$, states $\state \in \States$, $n \in \Nats$, and $\beta \in \set{L,R}^{\ast}$,
    \[
		\ewt{\Phi}{C}{f}(\state)
		\qeq \mbigadd_{\sosTrans{\sosState{C}{\state}{n}{\beta}}{a}{\sosState{C'}{\state'}{n+1}{\beta'}}}
			a \smop \ewt{\Phi}{C'}{f}(\state') ~.
		\tag*{$\triangle$}
    \]
\end{definition}

\begin{definition}
    \label{def:operational-functional}
    The map $\opsymbol\colon \wgcl \to (\Weightings \to \Weightings)$ is defined for any $\wgcl$ program $C$, state $\state \in \States$, and post-weighting $f \in \Weightings$ via
    \[
		\op{C}{f}(\state)
		\qcoloneqq \mbigadd_{\compPath \in \termPathsStartingIn{\sosState{C}{\state}{0}{\epsilon}}}
			\pathWeight(\compPath) \smop f(\lastState(\compPath)) ~.
		\tag*{$\triangle$}
    \]
\end{definition}

\begin{lemma}
    \label{thm:least-wgcl-functional}
    The map $\ext{\opsymbol}$ is the least $\wgcl$-functional.
\end{lemma}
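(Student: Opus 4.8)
The plan is to prove the two halves of \enquote{least} separately: first that $\ext{\opsymbol}$ is itself a $\wgcl$-functional, and then that it lies below every $\wgcl$-functional $\ext{\Phi}$ in the order $\ewtord$. The conceptual engine for both halves is a \emph{shift-invariance} observation. Since every SOS-rule in \cref{fig:sosrules} merely increments the step counter and possibly appends one symbol to the branching history $\beta$---and the premises of the rules never inspect $n$ or $\beta$---the subtree of $\compGraph$ rooted at an arbitrary configuration $\conf = \sosState{C}{\state}{n}{\beta}$ is isomorphic, as a weighted tree, to the tree rooted at the initial configuration $\sosStateAbbr{C}{\state}$, via the map that subtracts $n$ from every step count and deletes the prefix $\beta$ from every history. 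This isomorphism preserves edge weights and last states, hence preserves path weights $\pathWeight$ and terminal-state values. Consequently, writing $o(\conf, f)$ for the module sum $\mbigadd_{\pi} \pathWeight(\pi) \smop f(\lastState(\pi))$ ranging over terminal paths issuing from $\conf$, we get $o(\conf, f) = \ewt{\opsymbol}{C}{f}(\state)$ for $C \neq \termState$ and $o(\conf, f) = f(\state)$ when $C = \termState$ (the only terminal path being the empty one, of weight $\monone$). I would state this as an auxiliary lemma; its proof is a routine induction on the rule structure but is the main piece of bookkeeping.

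Granting shift-invariance, the first half is a one-step unfolding. Fix $\conf = \sosState{C}{\state}{n}{\beta}$ with $C \neq \termState$, so $\conf$ is not final. Every terminal path from $\conf$ has length $\geq 1$ and splits uniquely---here I use that $\compGraph$ is a forest without multi-edges, so the first edge is determined by its successor---into an initial transition $\sosTrans{\conf}{a}{\conf'}$ followed by a terminal path $\pi'$ from $\conf'$, with $\pathWeight(\conf \cdot \pi') = a \monop \pathWeight(\pi')$. Reindexing $o(\conf, f)$ by the finitely many successors $\conf'$, then applying associativity of scalar multiplication, i.e.\ $(a \monop w) \smop x = a \smop (w \smop x)$, and the extended distributive law $a \smop \mbigadd_i v_i = \mbigadd_i a \smop v_i$, yields $o(\conf, f) = \mbigadd_{\sosTrans{\conf}{a}{\conf'}} a \smop o(\conf', f)$. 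Substituting the shift-invariance identity on both sides turns this into exactly the defining equation of \cref{def:wgcl-functional} for $\ext{\opsymbol}$; the case $C' = \termState$ is absorbed uniformly because $o(\conf', f) = f(\state')$ there.

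For the second half, let $\ext{\Phi}$ be any $\wgcl$-functional and let $o^{\leq m}(\conf, f)$ be the same module sum restricted to terminal paths of length at most $m$. Since $\compGraph$ is finitely branching, the set of such paths is finite, these sets are increasing in $m$ and cofinal in the directed set of all finite sets of terminal paths, so by the $\omega$-finitary characterisation of infinite sums \eqref{eq:smodule-infinite-sums} we have $o(\conf, f) = \bigjoin_{m} o^{\leq m}(\conf, f)$. The truncations satisfy the same one-step recursion as above but with $o^{\leq m}$ on the right and $o^{\leq m+1}$ on the left. I then prove $o^{\leq m}(\conf, f) \nnatord \ewt{\Phi}{C}{f}(\state)$ by induction on $m$: the base case $m=0$ is immediate ($\mnull$ for non-final $\conf$, equality for final $\conf$), and the step combines the induction hypothesis with monotonicity of $\smop$ and $\madd$ (\cref{app:thm:smodule-natordprops}) over the finite successor sum, using precisely the $\wgcl$-functional equation for $\ext{\Phi}$ to rewrite $\mbigadd_{\sosTrans{\conf}{a}{\conf'}} a \smop \ewt{\Phi}{C'}{f}(\state')$ as $\ewt{\Phi}{C}{f}(\state)$. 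Taking the join over $m$ and invoking shift-invariance gives $\ewt{\opsymbol}{C}{f}(\state) = o(\conf,f) \nnatord \ewt{\Phi}{C}{f}(\state)$ for all $C, f, \state$, i.e.\ $\ext{\opsymbol} \ewtord \ext{\Phi}$.

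The main obstacle is the shift-invariance lemma and the clean path-decomposition it enables: making the subtree isomorphism precise (including the empty-path boundary case and the uniqueness of the first edge granted by the forest structure) is where the real care is needed, whereas the algebraic manipulations reduce to associativity, distributivity, and monotonicity, and the passage to the limit is handled entirely by the $\omega$-finitary sum machinery of \cref{app:preliminaries}.
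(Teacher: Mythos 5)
Your proposal is correct and follows essentially the same route as the paper's proof: the functional property is established by splitting each terminal path at its first transition, pulling out the weight via the extended distributive law, and identifying the subtree at the successor configuration with the one at the corresponding initial configuration (the paper's \enquote{induction on \cref{fig:sosrules}} step is exactly your shift-invariance lemma). Your leastness argument also mirrors the paper's: it introduces the length-truncated sums (the paper's $\eoplen{\ell}{C}{f}$), proves $o^{\leq m} \nnatord \ewt{\Phi}{C}{f}$ by induction on $m$ using the one-step recursion, monotonicity, and the functional equation for $\ext{\Phi}$, and concludes by taking the join over $m$ via the $\omega$-finitary sum characterization.
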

\begin{proof}
    First, we show that $\ext{\opsymbol}$ is a $\wgcl$-functional.
    Given a configuration $\conf_0 = \sosState{C}{\state}{n}{\beta}$, where $C$ is a $\wgcl$ program and $\state \in \States$ a state, we have
    \begin{align*}
		& \eop{C}{f}(\state) \\
		\cmmnt{Def.\ of $\opsymbol$} \\
		\eeq& \mbigadd_{\compPath \in \termPathsStartingIn{\conf_0}}
			\pathWeight(\compPath) \smop f(\lastState(\compPath)) \\
		\cmmnt{Def.\ of $\pathWeight{\compPath}$} \\
		\eeq& \mbigadd_{\conf_0 \conf_1 \ldots \conf_k \in \termPathsStartingIn{\conf_0}}
			(\weight{\conf_0}{\conf_1} \monop \pathWeight(\conf_1 \ldots \conf_k)) \smop f(\lastState(\conf_1 \ldots \conf_k)) \\
		\cmmnt{split path} \\
		\eeq& \mbigadd_{\sosTrans{\conf_0}{a}{\conf_1}} \mbigadd_{\compPath \in \termPathsStartingIn{\conf_1}}
			a \smop \pathWeight(\compPath) \smop f(\lastState(\compPath)) \\
		\cmmnt{Distributivity} \\
		\eeq& \mbigadd_{\sosTrans{\conf_0}{a}{\conf_1}}
			a \smop \mbigadd_{\compPath \in \termPathsStartingIn{\conf_1}}
				\pathWeight(\compPath) \smop f(\lastState(\compPath)) \\
		\cmmnt{Induction on \cref{fig:sosrules}} \\
		\eeq& \mbigadd_{\sosTrans{\sosState{C}{\state}{n}{\beta}}{a}{\sosState{C'}{\state'}{n+1}{\beta'}}}
			a \smop \mbigadd_{\compPath \in \termPathsStartingIn{\sosState{C'}{\state'}{0}{\epsilon}}}
				\pathWeight(\compPath) \smop f(\lastState(\compPath)) \\
		\cmmnt{Def.\ of $\opsymbol$} \\
		\eeq& \mbigadd_{\sosTrans{\sosState{C}{\state}{n}{\beta}}{a}{\sosState{C'}{\state'}{n+1}{\beta'}}}
			a \smop \eop{C'}{f}(\state')
		~.
    \end{align*}
    
    Now let $\ext{\Phi}$ be another $\wgcl$-functional.
    We want to show that $\ext{\opsymbol} \ewtord \ext{\Phi}$.
    Thus, we perform induction over the maximum length $\ell \in \Nats$ of terminating paths from a configuration $\conf_0 = \sosState{C}{\state}{n}{\beta}$.
    To that end, denote
    \[
		\termPathsStartingIn{\conf_0}^{\leq \ell}
		\qcoloneqq
		\Set{ \conf_0 \ldots \conf_k \in \termPathsStartingIn{\conf_0} }{ k \leq \ell }
    \]
    and
    \[
		\eoplen{\ell}{C}{f}(\state)
		\qcoloneqq
		\mbigadd_{\compPath \in \termPathsStartingIn{\conf_0}^{\leq \ell}} \pathWeight(\compPath) \smop f(\lastState(\compPath))
		~.
    \]
    We prove $\eoplen{\ell}{C}{f} \natord \ewt{\Phi}{C}{f}$ for all $\ell \in \Nats$, then the claim follows because the module $\Smodule$ is $\omega$-finitary and hence
    \[
		\eop{C}{f}
		\eeq \bigjoin_{\ell \in \Nats} \eoplen{\ell}{C}{f}
		\nnatord \ewt{\Phi}{C}{f}
		~.
    \]
    Moreover, we may assume $C \neq \termState$ as otherwise $\eop{C}{f} = f = \ewt{C}{\termState}{f}$.
    
    The following forms the induction base.
    Let $\ell = 0$.
    Then, $\termPathsStartingIn{\conf_0}^{\leq 0} = \emptyset$ and
    \[
		\eoplen{0}{C}{f} \eeq \mnull \nnatord \ewt{\Phi}{C}{f} ~.
    \]
    
    The following forms the induction step.
    Let $\ell \in \Nats$ such that $\eoplen{\ell}{C}{f} \natord \ewt{\Phi}{C}{f}$.
    \begin{align*}
		& \eoplen{\ell + 1}{C}{f}(\state) \\
		\cmmnt{$\wgcl$-functional} \\
		\eeq& \mbigadd_{\sosTrans{\sosState{C}{\state}{n}{\beta}}{a}{\sosState{C'}{\state'}{n+1}{\beta'}}}
			a \smop \eoplen{\ell}{C'}{f}(\state') \\
		\cmmnt{Induction Hypothesis} \\
		\nnatord& \mbigadd_{\sosTrans{\sosState{C}{\state}{n}{\beta}}{a}{\sosState{C'}{\state'}{n+1}{\beta'}}}
			a \smop \ewt{\Phi}{C'}{f}(\state') \\
		\cmmnt{$\wgcl$-functional} \\
		\eeq& \ewt{\Phi}{C}{f}(\state)
		~.
    \end{align*}
\end{proof}

\begin{lemma}
    \label{thm:wp-functional}
    The map $\ext{\wpsymbol}$ is a $\wgcl$-functional.
\end{lemma}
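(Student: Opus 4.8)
The plan is to prove the lemma by \emph{structural induction} on the program $C$, in each case matching the outgoing transitions of a configuration $\sosState{C}{\state}{n}{\beta}$ prescribed by the SOS rules in \cref{fig:sosrules} against the defining equation of $\wp{C}{f}$ from \cref{tab:wp-semantics}. Two elementary facts will be used throughout. First, all control-flow transitions carry the unit weight $\monone$, and by the module axiom $\monone \smop a = a$ such factors vanish. Second, by \cref{def:weighting-transformer} we have $\ewt{\wpsymbol}{\termState}{g} = g$, so a step landing in $\termState$ contributes exactly the postweighting. Since $\compGraph$ is finitely branching (every configuration has at most two successors), all occurring $\mbigadd$-sums are finite, so no convergence issues arise.

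First I would dispatch the cases that hold directly. For $C = \ASSIGN{x}{E}$ the only transition (\sosRule{assign}) leads to $\sosState{\termState}{\state\statesubst{x}{\eval{E}{\state}}}{n+1}{\beta}$ with weight $\monone$, so the right-hand side is $f(\state\statesubst{x}{\eval{E}{\state}}) = f\subst{x}{E}(\state) = \wp{\ASSIGN{x}{E}}{f}(\state)$; for $C = \WEIGH{a}$ the only transition (\sosRule{weight}) leads to $\termState$ with weight $a$, giving $a \smop f(\state) = \wp{\WEIGH{a}}{f}(\state)$. The case $\ITE{\guard}{C_1}{C_2}$ is equally direct: by \sosRule{if}/\sosRule{else} there is a single successor ($C_1$ if $\state \models \guard$, else $C_2$) with weight $\monone$, which matches the Iverson-guarded sum $\iverson{\guard} \ivop \wp{C_1}{f} \madd \iverson{\neg\guard} \ivop \wp{C_2}{f}$ after case-splitting on $\guard$. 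The case $\BRANCH{C_1}{C_2}$ has exactly the two successors $C_1$ and $C_2$ (via \sosRule{l.\ branch}/\sosRule{r.\ branch}), each with weight $\monone$, whose contributions sum to $\wp{C_1}{f}(\state) \madd \wp{C_2}{f}(\state)$, as required.

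The only genuinely inductive case is $C = \COMPOSE{C_1}{C_2}$. By \sosRule{seq.\ 1}/\sosRule{seq.\ 2}, the transitions of $\sosState{\COMPOSE{C_1}{C_2}}{\state}{n}{\beta}$ are in weight- and state-preserving bijection with those of $\sosState{C_1}{\state}{n}{\beta}$: a step of $C_1$ to $C_1'$ with weight $a$ induces a step of $\COMPOSE{C_1}{C_2}$ to $C_2$ if $C_1' = \termState$, and to $\COMPOSE{C_1'}{C_2}$ otherwise. Setting $g \coloneqq \wp{C_2}{f}$ and using $\ewt{\wpsymbol}{\termState}{g} = g$ in the first sub-case and $\wp{\COMPOSE{C_1'}{C_2}}{f} = \wp{C_1'}{g}$ in the second, both sub-cases contribute uniformly $a \smop \ewt{\wpsymbol}{C_1'}{g}(\state')$. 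Hence the right-hand side for $\COMPOSE{C_1}{C_2}$ with postweighting $f$ is literally the right-hand side of the functional equation for $C_1$ with postweighting $g$; by the induction hypothesis applied to $C_1$ this equals $\wp{C_1}{g}(\state) = \wp{C_1}{\wp{C_2}{f}}(\state) = \wp{\COMPOSE{C_1}{C_2}}{f}(\state)$.

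Finally, $C = \WHILEDO{\guard}{C'}$ must be handled \emph{without} the induction hypothesis, since its unique successor $\COMPOSE{C'}{\WHILEDO{\guard}{C'}}$ (via \sosRule{while}, when $\state \models \guard$) is not a subprogram; the other successor is $\termState$ (via \sosRule{break}, when $\state \models \neg\guard$), both with weight $\monone$. By \cref{thm:wp-continuous}, $\wp{\WHILEDO{\guard}{C'}}{f}$ is a fixed point of $\charwp{\guard}{C'}{f}$, i.e.\ $\wp{\WHILEDO{\guard}{C'}}{f} = \iverson{\neg\guard} \ivop f \madd \iverson{\guard} \ivop \wp{C'}{\wp{\WHILEDO{\guard}{C'}}{f}}$. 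Evaluating at $\state$ and splitting on $\guard$ gives $f(\state)$ when $\state \models \neg\guard$ (matching the \sosRule{break} successor $\termState$) and $\wp{C'}{\wp{\WHILEDO{\guard}{C'}}{f}}(\state) = \ewt{\wpsymbol}{\COMPOSE{C'}{\WHILEDO{\guard}{C'}}}{f}(\state)$ when $\state \models \guard$ (matching the \sosRule{while} successor, by the definitional unfolding of $\wp$ on sequential composition). I expect the principal obstacle to be precisely the bookkeeping in these two structurally delicate cases: in the composition case, recombining the $C_1' = \termState$ and $C_1' \neq \termState$ sub-cases into a single $\mbigadd$-sum via the $\ewt{\wpsymbol}{\termState}{\cdot}$ convention, and in the loop case, recognizing that structural induction is unavailable and that the fixed-point characterization from \cref{thm:wp-continuous} is exactly what closes the argument.
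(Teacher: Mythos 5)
Your proof is correct and takes essentially the same route as the paper's: structural induction on $C$, matching the SOS transitions against the $\wpsymbol$ rules, applying the induction hypothesis to $C_1$ with the shifted postweighting $\wp{C_2}{f}$ in the composition case, and closing the loop case via the fixed-point property of $\wp{\WHILEDO{\guard}{C'}}{f}$ together with the sequential-composition rule. Your uniform bijection framing of the \sosRule{seq.\ 1}/\sosRule{seq.\ 2} sub-cases is a slightly cleaner packaging of the paper's two-case analysis, but the mathematical content is identical.
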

\begin{proof}
    We employ structural induction on the rules from \cref{fig:sosrules} grouped by the structure of $C$.
    
    The following forms the induction base.
    \begin{itemize}
	\item
        The program $C$ is of the form $\ASSIGN{x}{E}$.
        
        \begin{align*}
			& \ewp{C}{f}(\state) \\
			\cmmnt{Def.\ of $\wpsymbol$} \\
			\eeq& f\subst{x}{E}(\state) \\
			\cmmnt{Def.\ of $f\subst{x}{E}$} \\
			\eeq& f(\state\statesubst{x}{\eval{E}{\state}}) \\
			\cmmnt{Def.\ of $\ext{\wpsymbol}$} \\
			\eeq& \ewp{\termState}{f}(\state\statesubst{x}{\eval{E}{\state}}) \\
			\cmmnt{\cref{fig:sosrules} (assign)} \\
			\eeq& \mbigadd_{\sosTrans{\sosState{C}{\state}{n}{\beta}}{a}{\sosState{C'}{\state'}{n+1}{\beta'}}}
				a \smop \ewp{C'}{f}(\state')
        \end{align*}
        
	\item
        The program $C$ is of the form $\WEIGH{a}$.
        
        \begin{align*}
			& \ewp{C}{f}(\state) \\
			\cmmnt{Def.\ of $\wpsymbol$} \\
			\eeq& (a \smop f)(\state) \\
			\cmmnt{Def.\ of $\ext{\wpsymbol}$} \\
			\eeq& a \smop \ewp{\termState}{f}(\state) \\
			\cmmnt{\cref{fig:sosrules} (weight)} \\
			\eeq& \mbigadd_{\sosTrans{\sosState{C}{\state}{n}{\beta}}{a}{\sosState{C'}{\state'}{n+1}{\beta'}}}
				a \smop \ewp{C'}{f}(\state')
        \end{align*}
        
    \end{itemize}
    
    The following forms the induction step.
    \begin{itemize}
	\item
        The program $C$ is of the form $\COMPOSE{C_1}{C_2}$.
        
        There are the following two exclusive cases by the rules from \cref{fig:sosrules}:
        \begin{enumerate}
		\item
            Case $\sosTrans{ \sosState{C_1}{\state}{n}{\beta} }{a}{ \sosState{\termState}{\state'}{n+1}{\beta'} }$.
            
            \begin{align*}
				& \ewp{C}{f}(\state) \\
				\cmmnt{Def.\ of $\wpsymbol$} \\
				\eeq& \ewp{C_1}{\ewp{C_2}{f}}(\state) \\
				\cmmnt{Induction Hypothesis} \\
				\eeq& \mbigadd_{\sosTrans{\sosState{C_1}{\state}{n}{\beta}}{a}{\sosState{\termState}{\state'}{n+1}{\beta'}}}
					a \smop \ewp{\termState}{\ewp{C_2}{f}}(\state') \\
				\cmmnt{Def.\ of $\ext{\wpsymbol}$} \\
				\eeq& \mbigadd_{\sosTrans{\sosState{C_1}{\state}{n}{\beta}}{a}{\sosState{\termState}{\state'}{n+1}{\beta'}}}
					a \smop \ewp{C_2}{f}(\state') \\
				\cmmnt{\cref{fig:sosrules} (seq.\ 1)} \\
				\eeq& \mbigadd_{\sosTrans{\sosState{C}{\state}{n}{\beta}}{a}{\sosState{C'}{\state'}{n+1}{\beta'}}}
					a \smop \ewp{C'}{f}(\state')
            \end{align*}
            
		\item
            Case $\sosTrans{ \sosState{C_1}{\state}{n}{\beta} }{a}{ \sosState{C_1'}{\state'}{n+1}{\beta'} }$.
            
            \begin{align*}
				& \ewp{C}{f}(\state) \\
				\cmmnt{Def.\ of $\wpsymbol$} \\
				\eeq& \ewp{C_1}{\ewp{C_2}{f}}(\state) \\
				\cmmnt{Induction Hypothesis} \\
				\eeq& \mbigadd_{\sosTrans{\sosState{C_1}{\state}{n}{\beta}}{a}{\sosState{C_1'}{\state'}{n+1}{\beta'}}}
					a \smop \ewp{C_1'}{\ewp{C_2}{f}}(\state') \\
				\cmmnt{Def.\ of $\wpsymbol$} \\
				\eeq& \mbigadd_{\sosTrans{\sosState{C_1}{\state}{n}{\beta}}{a}{\sosState{C_1'}{\state'}{n+1}{\beta'}}}
					a \smop \ewp{\COMPOSE{C_1'}{C_2}}{f}(\state') \\
				\cmmnt{\cref{fig:sosrules} (seq.\ 1)} \\
				\eeq& \mbigadd_{\sosTrans{\sosState{C}{\state}{n}{\beta}}{a}{\sosState{C'}{\state'}{n+1}{\beta'}}}
					a \smop \ewp{C'}{f}(\state')
            \end{align*}
        \end{enumerate}
        
	\item
        The program $C$ is of the form $\ITE{\guard}{C_1}{C_2}$.
        
        There are the following two exclusive cases:
        \begin{enumerate}
		\item
            Case $\state \models \guard$.
            
            \begin{align*}
				& \ewp{C}{f}(\state) \\
				\cmmnt{Def.\ of $\wpsymbol$} \\
				\eeq& (\iverson{\guard} \ivop \ewp{C_1}{f} \mmadd \iverson{\neg\guard} \ivop \ewp{C_2}{f})(\state) \\
				\cmmnt{Case $\state \models \guard$} \\
				\eeq& \ewp{C_1}{f}(\state) \\
				\cmmnt{\cref{fig:sosrules} (if)} \\
				\eeq& \mbigadd_{\sosTrans{\sosState{C}{\state}{n}{\beta}}{a}{\sosState{C'}{\state'}{n+1}{\beta'}}}
					a \smop \ewp{C'}{f}(\state')
            \end{align*}
            
		\item
            Case $\state \not\models \guard$.
            
            \begin{align*}
				& \ewp{C}{f}(\state) \\
				\cmmnt{Def.\ of $\wpsymbol$} \\
				\eeq& (\iverson{\guard} \ivop \ewp{C_1}{f} \mmadd \iverson{\neg\guard} \ivop \ewp{C_2}{f})(\state) \\
				\cmmnt{Case $\state \not\models \guard$} \\
				\eeq& \ewp{C_2}{f}(\state) \\
				\cmmnt{\cref{fig:sosrules} (else)} \\
				\eeq& \mbigadd_{\sosTrans{\sosState{C}{\state}{n}{\beta}}{a}{\sosState{C'}{\state'}{n+1}{\beta'}}}
					a \smop \ewp{C'}{f}(\state')
            \end{align*}
            
        \end{enumerate}
        
	\item
        The program $C$ is of the form $\BRANCH{C_1}{C_2}$.
        
        \begin{align*}
			& \ewp{C}{f}(\state) \\
			\cmmnt{Def.\ of $\wpsymbol$} \\
			\eeq& \ewp{C_1}{f}(\state) \mmadd \ewp{C_2}{f}(\state) \\
			\cmmnt{\cref{fig:sosrules} (l.\ branch), (r.\ branch)} \\
			\eeq& \mbigadd_{\sosTrans{\sosState{C}{\state}{n}{\beta}}{a}{\sosState{C'}{\state'}{n+1}{\beta'}}}
				a \smop \ewp{C'}{f}(\state')
        \end{align*}
        
	\item
        The program $C$ is of the form $\WHILEDO{\guard}{C_1}$.
        
        There are the following two exclusive cases:
        \begin{enumerate}
		\item
            Case $\state \models \guard$.
            
            \begin{align*}
				& \ewp{C}{f}(\state) \\
				\cmmnt{Def.\ of $\wpsymbol$} \\
				\eeq& (\iverson{\neg\guard} \ivop f \mmadd \iverson{\guard} \ivop \wp{C_1}{\wp{C}{f}} )(\state) \\
				\cmmnt{Case $\state \models \guard$} \\
				\eeq& \wp{C_1}{\wp{C}{f}}(\state) \\
				\cmmnt{Def.\ of $\wpsymbol$} \\
				\eeq& \wp{\COMPOSE{C_1}{C}}{f}(\state) \\
				\cmmnt{\cref{fig:sosrules} (while)} \\
				\eeq& \mbigadd_{\sosTrans{\sosState{C}{\state}{n}{\beta}}{a}{\sosState{C'}{\state'}{n+1}{\beta'}}}
					a \smop \ewp{C'}{f}(\state')
            \end{align*}
            
		\item
            Case $\state \not\models \guard$.
            
            \begin{align*}
				& \ewp{C}{f}(\state) \\
				\cmmnt{Def.\ of $\wpsymbol$} \\
				\eeq& (\iverson{\neg\guard} \ivop f \mmadd \iverson{\guard} \ivop \wp{C_1}{\wp{C}{f}} )(\state) \\
				\cmmnt{Case $\state \not\models \guard$} \\
				\eeq& f(\state) \\
				\cmmnt{Def.\ of $\ext{\wpsymbol}$} \\
				\eeq& \ewp{\termState}{f}(\state) \\
				\cmmnt{\cref{fig:sosrules} (break)} \\
				\eeq& \mbigadd_{\sosTrans{\sosState{C}{\state}{n}{\beta}}{a}{\sosState{C'}{\state'}{n+1}{\beta'}}}
					a \smop \ewp{C'}{f}(\state')
            \end{align*}
        \end{enumerate}
    \end{itemize}
\end{proof}

\begin{lemma}
    \label{thm:operational-composition}
    Let $C_1, C_2$ be $\wgcl$ programs and $f \in \Weightings$, then
    \[
		\op{\COMPOSE{C_1}{C_2}}{f} \eeq \op{C_1}{\op{C_2}{f}} ~.
		\tag*{$\triangle$}
    \]
\end{lemma}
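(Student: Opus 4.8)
The plan is to prove the identity pointwise for each initial state $\state$ by setting up a weight- and last-state-preserving bijection between the terminating computation paths of $\COMPOSE{C_1}{C_2}$ and suitable pairs consisting of a terminating path of $C_1$ and a terminating path of $C_2$, and then to push this bijection through the countable module sum defining $\opsymbol$ using associativity of the scalar action $\smop$ and the extended distributive law.

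First I would record a \emph{shift-invariance} property of the computation forest. Inspecting the rules in \cref{fig:sosrules}, every transition increments the step counter by exactly $1$ and either leaves $\beta$ unchanged or appends a single symbol to it, and neither the step counter nor the already-present prefix of $\beta$ influences which rules are applicable or the emitted weight. Consequently, for every $C$, $\state$, $n$, and $\beta$, the map that forgets the offset $n$ and the prefix $\beta$ is a bijection from the terminating paths starting in $\sosState{C}{\state}{n}{\beta}$ onto $\termPathsStartingIn{\sosStateAbbr{C}{\state}}$ that preserves both $\pathWeight$ and $\lastState$. This is the same re-rooting already used implicitly in the proof of \cref{thm:least-wgcl-functional}.

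Next I would establish the decomposition bijection. By \sosRule{seq.\ 1} and \sosRule{seq.\ 2}, any terminal path $\compPath$ from $\sosStateAbbr{\COMPOSE{C_1}{C_2}}{\state}$ has a unique shape: a (possibly empty) block of \sosRule{seq.\ 2}-steps whose first components are of the form $\COMPOSE{C_1'}{C_2}$ and which mirror step-for-step a non-final computation of $C_1$, followed by a single \sosRule{seq.\ 1}-step that fires precisely when the simulated $C_1$-component would step into $\termState$, after which the first component is literally $C_2$ and the remaining steps form an ordinary computation of $C_2$. Reading off these two blocks yields a pair $(\compPath_1, \compPath_2)$ with $\compPath_1 \in \termPathsStartingIn{\sosStateAbbr{C_1}{\state}}$ ending in state $\state' = \lastState(\compPath_1)$ and, after applying shift-invariance, $\compPath_2 \in \termPathsStartingIn{\sosStateAbbr{C_2}{\state'}}$, and this correspondence is a bijection. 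Crucially, the weight emitted by the \sosRule{seq.\ 1}-step coincides with the weight of $C_1$'s final transition into $\termState$, so no weight is lost or duplicated and $\pathWeight(\compPath) = \pathWeight(\compPath_1) \monop \pathWeight(\compPath_2)$, while $\lastState(\compPath) = \lastState(\compPath_2)$.

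Finally I would compute. Starting from $\op{\COMPOSE{C_1}{C_2}}{f}(\state)$, I re-index the defining sum along the bijection as a sum over pairs $(\compPath_1, \compPath_2)$; using $\pathWeight(\compPath) = \pathWeight(\compPath_1) \monop \pathWeight(\compPath_2)$ and associativity of $\smop$, each summand becomes $\pathWeight(\compPath_1) \smop \bigl(\pathWeight(\compPath_2) \smop f(\lastState(\compPath_2))\bigr)$; grouping the countable sum over the partition indexed by $\compPath_1$ and applying the extended distributive law of \cref{app:sec:modules} to pull $\pathWeight(\compPath_1) \smop (-)$ out of the inner sum turns the inner sum into $\op{C_2}{f}(\state')$, leaving exactly $\op{C_1}{\op{C_2}{f}}(\state)$ by the definition of $\opsymbol$. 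The main obstacle is the bookkeeping in the decomposition bijection — in particular justifying that the \sosRule{seq.\ 2}-prefix is in faithful step-for-step correspondence with a $C_1$-computation, and that the grouping of the countable module sum over the partition $\{\compPath_1\}$ is legitimate; the latter rests on the order-independence of $\omega$-finitary sums established in \cref{app:proof-finitary-sums-coincide}, and the algebraic manipulations afterwards are routine.
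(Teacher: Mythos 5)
Your proof is correct, but it takes a genuinely different route from the paper's. The paper derives \cref{thm:operational-composition} from the fact that $\ext{\opsymbol}$ is a $\wgcl$-functional (\cref{thm:least-wgcl-functional}): all path-splitting and distributivity over countable sums is done there, but only for a \emph{single} step, and the composition identity then follows by an induction over the first transition of $C_1$ --- base case a \sosRule{seq.\ 1}-step, inductive step a \sosRule{seq.\ 2}-step with the induction hypothesis applied to $\COMPOSE{C_1'}{C_2}$ (implicitly an induction on lengths of terminating paths, as made explicit in \cref{thm:least-wgcl-functional}). You instead prove the identity globally, decomposing each terminating path of $\COMPOSE{C_1}{C_2}$ at the unique \sosRule{seq.\ 1}-boundary into a terminating $C_1$-path and a re-rooted terminating $C_2$-path, and then rearranging the defining sum. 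What the paper's route buys is that the composition proof itself is purely equational, with all infinite-sum reasoning quarantined in the one-step lemma; what your route buys is a self-contained argument that exposes the semantic content directly (paths of a sequential composition are exactly concatenations of paths), without invoking the functional machinery. One caveat on your citation: regrouping the sum over pairs $(\compPath_1, \compPath_2)$ into an iterated sum indexed by $\compPath_1$ needs the generalized associativity (partitioning) property of countable sums, which is slightly more than the order-independence proved in \cref{app:proof-finitary-sums-coincide}; it does follow from the definition of sums as suprema over finite partial sums together with $\omega$-continuity of $\madd$, and the paper itself performs exactly this regrouping in the \enquote{split path} step of \cref{thm:least-wgcl-functional}, so this is a standard and closable step, but it deserves the sharper justification.
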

\begin{proof}
    Consider any state $\state \in \States$.
    We employ structural induction on the rules from \cref{fig:sosrules}.
    
    The following forms the induction base.
    The case $\sosTrans{ \sosState{C_1}{\state}{n}{\beta} }{a}{ \sosState{\termState}{\state'}{n+1}{\beta'} }$.
    \begin{align*}
		& \eop{\COMPOSE{C_1}{C_2}}{f}(\state) \\
		\cmmnt{$\wgcl$-functional} \\
		\eeq& \mbigadd_{\sosTrans{\sosState{\COMPOSE{C_1}{C_2}}{\state}{n}{\beta}}{a}{\sosState{C'}{\state'}{n+1}{\beta'}}}
			a \smop \eop{C'}{f}(\state') \\
		\cmmnt{\cref{fig:sosrules} (seq.\ 1)} \\
		\eeq& \mbigadd_{\sosTrans{\sosState{C_1}{\state}{n}{\beta}}{a}{\sosState{\termState}{\state'}{n+1}{\beta'}}}
			a \smop \eop{C_2}{f}(\state') \\
		\cmmnt{$\wgcl$-functional} \\
		\eeq& \eop{C_1}{\eop{C_2}{f}}(\state)
    \end{align*}
    
    The following forms the induction step.
    The case $\sosTrans{ \sosState{C_1}{\state}{n}{\beta} }{a}{ \sosState{C_1'}{\state'}{n+1}{\beta'} }$.
    \begin{align*}
		& \eop{\COMPOSE{C_1}{C_2}}{f}(\state) \\
		\cmmnt{$\wgcl$-functional} \\
		\eeq& \mbigadd_{\sosTrans{\sosState{\COMPOSE{C_1}{C_2}}{\state}{n}{\beta}}{a}{\sosState{C'}{\state'}{n+1}{\beta'}}}
			a \smop \eop{C'}{f}(\state') \\
		\cmmnt{\cref{fig:sosrules} (seq.\ 1)} \\
		\eeq& \mbigadd_{\sosTrans{\sosState{C_1}{\state}{n}{\beta}}{a}{\sosState{C_1'}{\state'}{n+1}{\beta'}}}
			a \smop \eop{\COMPOSE{C_1'}{C_2}}{f}(\state') \\
		\cmmnt{Induction Hypothesis} \\
		\eeq& \mbigadd_{\sosTrans{\sosState{C_1}{\state}{n}{\beta}}{a}{\sosState{C_1'}{\state'}{n+1}{\beta'}}}
			a \smop \eop{C_1'}{\eop{C_2}{f}}(\state') \\
		\cmmnt{$\wgcl$-functional} \\
		\eeq& \eop{C_1}{\eop{C_2}{f}}(\state)
    \end{align*}
\end{proof}

\begin{lemma}
    \label{thm:operational-looping}
    Let $C'$ be a $\wgcl$ program and $f \in \Weightings$, then
    \[
		\op{\WHILEDO{\guard}{C'}}{f} \eeq \iverson{\guard} \ivop f \mmadd \iverson{\neg\guard} \ivop \op{C'}{\op{\WHILEDO{\guard}{C'}}{f}} ~.
		\tag*{$\triangle$}
    \]
\end{lemma}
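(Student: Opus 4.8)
The plan is to prove this identity by unrolling the loop a \emph{single} step, relying on the two facts already available: that $\ext{\opsymbol}$ is a $\wgcl$-functional (\cref{thm:least-wgcl-functional}) and that $\opsymbol$ commutes with sequential composition (\cref{thm:operational-composition}). Crucially, the statement is an \emph{unrolling} equation rather than a definition, so no Kleene iteration or fixed-point argument is required here; the self-reference to $\op{\WHILEDO{\guard}{C'}}{f}$ on the right-hand side is resolved purely by the one-step structure of the operational semantics.

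First I would fix an arbitrary state $\state$ and instantiate the defining equation of a $\wgcl$-functional (\cref{def:wgcl-functional}) at $C = \WHILEDO{\guard}{C'}$, which by \cref{thm:least-wgcl-functional} $\ext{\opsymbol}$ satisfies:
\[
\eop{\WHILEDO{\guard}{C'}}{f}(\state) \eeq \mbigadd_{\sosTrans{\sosState{\WHILEDO{\guard}{C'}}{\state}{n}{\beta}}{a}{\sosState{C''}{\state'}{n+1}{\beta'}}} a \smop \eop{C''}{f}(\state')~.
\]
The key observation is that, by the rules \sosRule{while} and \sosRule{break} in \cref{fig:sosrules}, the configuration $\sosState{\WHILEDO{\guard}{C'}}{\state}{n}{\beta}$ has \emph{exactly one} outgoing transition, determined solely by whether $\state \models \guard$; in either case that transition carries weight $\sone$. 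Hence the $\mbigadd$ collapses to a single summand, and since $\sone$ is neutral for $\smop$, the right-hand side reduces to $\eop{\COMPOSE{C'}{\WHILEDO{\guard}{C'}}}{f}(\state)$ when the \sosRule{while} rule fires, and to $\eop{\termState}{f}(\state) = f(\state)$ when the \sosRule{break} rule fires.

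It then remains to rewrite the continuing branch: applying \cref{thm:operational-composition} gives $\op{\COMPOSE{C'}{\WHILEDO{\guard}{C'}}}{f} = \op{C'}{\op{\WHILEDO{\guard}{C'}}{f}}$. Guarding the two outcomes by the respective Iverson brackets and summing then yields exactly
\[
\op{\WHILEDO{\guard}{C'}}{f} \eeq \iverson{\guard} \ivop f \mmadd \iverson{\neg\guard} \ivop \op{C'}{\op{\WHILEDO{\guard}{C'}}{f}}~,
\]
the claimed identity. The main obstacle is not analytic but a matter of careful bookkeeping: one must verify that $\WHILEDO{\guard}{C'}$ genuinely admits a unique successor in each guard case, so that the module sum has a single summand and no spurious cross-term arises between the two branches, and one must track that the transition read off from \cref{fig:sosrules} is selected according to the very state $\state$ at which the Iverson brackets are evaluated. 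Since both supporting lemmas are already in place, the argument is a short case distinction rather than an induction.
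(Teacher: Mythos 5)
Your proposal is correct and follows essentially the same route as the paper's own proof: both instantiate the $\wgcl$-functional property of $\ext{\opsymbol}$ (\cref{thm:least-wgcl-functional}) at the loop configuration, observe that by \sosRule{while} and \sosRule{break} there is exactly one outgoing transition of weight $\sone$ determined by the guard, and rewrite the continuing branch via \cref{thm:operational-composition}, so only a two-case distinction and no fixed-point argument is needed. One small caveat: your own case analysis correctly assigns $\iverson{\guard}$ to the $\op{C'}{\op{\WHILEDO{\guard}{C'}}{f}}$ term and $\iverson{\neg\guard}$ to $f$, whereas your final display (copying the lemma statement, where the brackets appear swapped --- seemingly a typo, since the paper's proof concludes with $\iverson{\neg\guard} \ivop f \mmadd \iverson{\guard} \ivop \op{C'}{\op{\WHILEDO{\guard}{C'}}{f}}$) has the opposite orientation, so you should state the identity in the orientation your case analysis actually derives.
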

\begin{proof}
    Let $\state \in \States$ be a state.
    We distinguish two cases:
    \begin{itemize}
	\item
        $\state \models \guard$.
        
        \begin{align*}
			& \op{\WHILEDO{\guard}{C'}}{f}(\state) \\
			\cmmnt{$\wgcl$-functional} \\
			\eeq& \mbigadd_{\sosTrans{\sosState{C}{\state}{0}{\epsilon}}{a}{\sosState{C'}{\state'}{1}{\beta'}}}
				a \smop \eop{C'}{f}(\state') \\
			\cmmnt{\cref{fig:sosrules} (while)} \\
			\eeq& \op{\COMPOSE{C'}{\WHILEDO{\guard}{C'}}}{f}(\state) \\
			\cmmnt{\cref{thm:operational-composition}} \\
			\eeq& \op{C'}{\op{\WHILEDO{\guard}{C'}}{f}}(\state) \\
			\cmmnt{Case $\state \models \guard$} \\
			\eeq& (\iverson{\neg\guard} \ivop f \mmadd \iverson{\guard} \ivop \op{C'}{\op{\WHILEDO{\guard}{C'}}{f}})(\state) 
			~.
        \end{align*}
        
	\item
		$\state \not\models \guard$.
		
		\begin{align*}
			& \op{\WHILEDO{\guard}{C'}}{f}(\state) \\
			\cmmnt{$\wgcl$-functional} \\
			\eeq& \mbigadd_{\sosTrans{\sosState{C}{\state}{0}{\epsilon}}{a}{\sosState{C'}{\state'}{1}{\beta'}}}
				a \smop \eop{C'}{f}(\state') \\
			\cmmnt{\cref{fig:sosrules} (break)} \\
			\eeq& \eop{\termState}{f}(\state) \\
			\cmmnt{\cref{def:weighting-transformer}} \\
			\eeq& f(\state) \\
			\cmmnt{Case $\state \not\models \guard$} \\
			\eeq& (\iverson{\neg\guard} \ivop f \mmadd \iverson{\guard} \ivop \op{C'}{\op{\WHILEDO{\guard}{C'}}{f}})(\state)
			~.
		\end{align*}
    \end{itemize}
\end{proof}

\begin{lemma}
    \label{thm:wp-less-than-op}
    We have $\ext{\wpsymbol} \ewtord \ext{\opsymbol}$.
    \hfill$\triangle$
\end{lemma}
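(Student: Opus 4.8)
The plan is to prove the pointwise inequality $\wp{C}{f} \natord \op{C}{f}$ for all programs $C \in \wgcl$ and all postweightings $f \in \Weightings$ by induction on the structure of $C$; by \cref{def:weighting-transformer} this is exactly the claim $\ext{\wpsymbol} \ewtord \ext{\opsymbol}$. Two facts drive every case. First, $\ext{\opsymbol}$ is a $\wgcl$-functional by \cref{thm:least-wgcl-functional}, so $\op{C}{f}$ obeys the defining equation of \cref{def:wgcl-functional} (a sum over the outgoing transitions of $\sosStateAbbr{C}{\state}$). Second, each $\wpC{C}$ is $\omega$-continuous, hence monotone, by \cref{thm:wp-continuous}. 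For the base cases $C = \ASSIGN{x}{E}$ and $C = \WEIGH{a}$, I would instantiate the $\wgcl$-functional equation of $\ext{\opsymbol}$ at the unique outgoing transition given by \sosRule{assign}, resp.\ \sosRule{weight}, in \cref{fig:sosrules}. This yields $\op{\ASSIGN{x}{E}}{f} = f\subst{x}{E}$ and $\op{\WEIGH{a}}{f} = a \smop f$, which coincide with the defining clauses of $\wpsymbol$; so here even equality holds.

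For $C = \COMPOSE{C_1}{C_2}$ I would chain the induction hypotheses through monotonicity and \cref{thm:operational-composition}: the hypothesis for $C_2$ gives $\wp{C_2}{f} \natord \op{C_2}{f}$, monotonicity of $\wpC{C_1}$ then gives $\wp{C_1}{\wp{C_2}{f}} \natord \wp{C_1}{\op{C_2}{f}}$, and the hypothesis for $C_1$ applied to the postweighting $\op{C_2}{f}$ gives $\wp{C_1}{\op{C_2}{f}} \natord \op{C_1}{\op{C_2}{f}} = \op{\COMPOSE{C_1}{C_2}}{f}$. The cases $C = \BRANCH{C_1}{C_2}$ and $C = \ITE{\guard}{C_1}{C_2}$ are analogous but more direct: instantiating the functional equation of $\ext{\opsymbol}$ at the transitions \sosRule{l.\ branch}/\sosRule{r.\ branch}, resp.\ \sosRule{if}/\sosRule{else}, shows that $\op{\cdot}{f}$ satisfies the very same clause as $\wp{\cdot}{f}$ but with each $\wp{C_i}{f}$ replaced by $\op{C_i}{f}$; the induction hypotheses together with monotonicity of $\madd$ and of $\iverson{\guard}\ivop(\cdot)$ then close both cases.

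The main obstacle is the loop $C = \WHILEDO{\guard}{C'}$, which I would resolve by Park induction. By \cref{thm:wp-continuous}, $\wp{\WHILEDO{\guard}{C'}}{f}$ is the \emph{least} fixed point of the characteristic functional $\charwp{\guard}{C'}{f}$, so by the Park induction principle (\cref{app:thm:kleene-park}) it suffices to show that $g \coloneqq \op{\WHILEDO{\guard}{C'}}{f}$ is a \emph{prefixed point}, i.e.
\[
    \charwp{\guard}{C'}{f}(g) \qeq \iverson{\neg\guard}\ivop f \mmadd \iverson{\guard}\ivop \wp{C'}{g} \qnatord g ~.
\]
Here the induction hypothesis for the structurally smaller loop body $C'$, instantiated at postweighting $g$, gives $\wp{C'}{g} \natord \op{C'}{g}$, and monotonicity of $\madd$ and $\iverson{\guard}\ivop(\cdot)$ bounds the middle expression above by $\iverson{\neg\guard}\ivop f \madd \iverson{\guard}\ivop \op{C'}{g}$, which is precisely $g$ by \cref{thm:operational-looping}. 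This yields the prefixed-point inequality and hence $\wp{\WHILEDO{\guard}{C'}}{f} \natord g = \op{\WHILEDO{\guard}{C'}}{f}$.

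The delicate point to get right is that Park induction is applied to the $\wpsymbol$-characteristic functional, whereas the fixed-point (self-similarity) property of $g$ comes from the purely \emph{operational} unfolding of \cref{thm:operational-looping}, which has $\op{C'}{\cdot}$ in place of $\wp{C'}{\cdot}$. The induction hypothesis on the loop body $C'$ is exactly the bridge that reconciles these two functionals, turning the operational fixed point into a prefixed point of the denotational one. Everything else is routine, and combined with the reverse inequality $\ext{\opsymbol} \ewtord \ext{\wpsymbol}$ (from \cref{thm:least-wgcl-functional} and \cref{thm:wp-functional}) this lemma completes the proof of \cref{thm:wp-operational-semantics}.
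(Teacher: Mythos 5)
Your proposal is correct and takes essentially the same route as the paper's own proof: structural induction on $C$, with the base cases and the branching/conditional cases handled via the $\wgcl$-functional equations, sequential composition via monotonicity of $\wpC{C_1}$, both induction hypotheses, and \cref{thm:operational-composition}, and the loop case by showing that $\op{\WHILEDO{\guard}{C'}}{f}$ is a prefixed point of $\charwp{\guard}{C'}{f}$ (using the induction hypothesis on the body $C'$ together with the operational unfolding of \cref{thm:operational-looping}) and then concluding by Park induction (\cref{app:thm:kleene-park}). The delicate point you single out --- that the induction hypothesis on $C'$ is the bridge turning the operational fixed point into a prefixed point of the denotational characteristic functional --- is exactly the step the paper's proof performs.
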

\begin{proof}
    We perform induction on the structure of $\wgcl$ programs.
    
    The following forms the induction base.
    \begin{itemize}
	\item
        The program $C$ is of the form $\ASSIGN{x}{E}$.
        
        We have
        \[
			\termPathsStartingIn{\sosState{C}{\state}{0}{\epsilon}}
			\eeq \set{ \sosTrans{ \sosState{C}{\state}{0}{\epsilon} }{\monone}{ \sosState{\termState}{\state\statesubst{x}{\eval{E}{\state}}}{1}{\epsilon} } }
			~.
        \]
        Hence,
        \begin{align*}
			& \eop{C}{f}(\state) \\
			\cmmnt{$\wgcl$-functional} \\
			\eeq& \eop{\termState}{f}(\state\statesubst{x}{\eval{E}{\state}}) \\
			\cmmnt{Def.\ of $\ext{\opsymbol}$} \\
			\eeq& f(\state\statesubst{x}{\eval{E}{\state}}) \\
			\cmmnt{Def.\ of $f\subst{x}{E}$} \\
			\eeq& f\subst{x}{E}(\state) \\
			\cmmnt{Def.\ of $\wpsymbol$} \\
			\eeq& \ewp{C}{f}(\state)
			~.
        \end{align*}
        
	\item
        The program $C$ is of the form $\WEIGH{a}$.
        
        We have
        \[
			\termPathsStartingIn{\sosState{C}{\state}{0}{\epsilon}}
			\eeq \set{ \sosTrans{ \sosState{C}{\state}{0}{\epsilon} }{a}{ \sosState{\termState}{\state}{1}{\epsilon} } }
			~.
		\]
        Hence,
        \begin{align*}
			& \eop{C}{f}(\state) \\
			\cmmnt{$\wgcl$-functional} \\
			\eeq& a \smop \eop{\termState}{f}(\state) \\
			\cmmnt{Def.\ of $\ext{\opsymbol}$} \\
			\eeq& a \smop f(\state) \\
			%
			%
			\cmmnt{Def.\ of $\wpsymbol$} \\
			\eeq& \ewp{C}{f}(\state)
			~.
        \end{align*}
        
    \end{itemize}
    
    The following forms the induction step.
    \begin{itemize}
	\item
        The program $C$ is of the form $\COMPOSE{C_1}{C_2}$.
        
        \begin{align*}
			& \ewp{C}{f}(\state) \\
			\cmmnt{Def.\ of $\wpsymbol$} \\
			\eeq& \ewp{C_1}{\ewp{C_2}{f}}(\state) \\
			\cmmnt{Monotonicity; Induction Hypothesis on $C_2$} \\
			\nnatord& \ewp{C_1}{\eop{C_2}{f}}(\state) \\
			\cmmnt{Induction Hypothesis on $C_1$} \\
			\nnatord& \eop{C_1}{\eop{C_2}{f}}(\state) \\
			\cmmnt{\cref{thm:operational-composition}} \\
			\eeq& \eop{C}{f}(\state)
			~.
        \end{align*}
        
	\item
        The program $C$ is of the form $\BRANCH{C_1}{C_2}$ or $\ITE{\guard}{C_1}{C_2}$.
        
        \begin{align*}
			& \ewp{C}{f}(\state) \\
			\cmmnt{$\wgcl$-functional} \\
			\eeq& \mbigadd_{\sosTrans{\sosState{C}{\state}{0}{\epsilon}}{a}{\sosState{C'}{\state'}{1}{\beta'}}}
				a \smop \ewp{C'}{f}(\state') \\
			\cmmnt{Induction Hypothesis on $C' \in \set{ C_1, C_2 }$} \\
			\nnatord& \mbigadd_{\sosTrans{\sosState{C}{\state}{0}{\epsilon}}{a}{\sosState{C'}{\state'}{1}{\beta'}}}
				a \smop \eop{C'}{f}(\state') \\
			\cmmnt{$\wgcl$-functional} \\
			\eeq& \eop{C}{f}(\state)
        \end{align*}
        
	\item
        The program $C$ is of the form $\WHILEDO{\guard}{C'}$.
        
        Let $\charwp{\guard}{C'}{f}$ be the corresponding characteristic function
        \[
			\charwp{\guard}{C'}{f}(X) \qcoloneqq \iverson{\neg\guard} \ivop f \mmadd \iverson{\guard} \ivop \wp{C'}{X} ~,
        \]
        i.\,e.\ $\wp{C}{f} = \lfpop \charwp{\guard}{C'}{f}$.
        The map $\eop{C}{f}$ is a prefixed point of $\charwp{\guard}{C'}{f}$:
        \begin{align*}
			& \charwp{\guard}{C'}{f}(\eop{C}{f}) \\
			\cmmnt{Def.\ of $\charwp{\guard}{C'}{f}$} \\
			\eeq& \iverson{\neg\guard} \ivop f \mmadd \iverson{\guard} \ivop \wp{C'}{\eop{C}{f}} \\
			\cmmnt{Induction Hypothesis on $C'$} \\
			\nnatord& \iverson{\neg\guard} \ivop f \mmadd \iverson{\guard} \ivop \eop{C'}{\eop{C}{f}} \\
			\cmmnt{\cref{thm:operational-looping}} \\
			\eeq& \eop{C}{f}
			~.
        \end{align*}
        With \cref{app:thm:kleene-park} it follows $\ewp{C}{f} \natord \eop{C}{f}$.
    \end{itemize}
\end{proof}

\subsection{Proof of Theorem \ref{thm:wp-properties}}
\label{app:proof:wp-properties}

\begin{theorem}[\cref{thm:wp-properties}]
	Let the monoid module $\Smodule$ over $\Monoid$ be $\omega$-continuous.
    For all $\Monoid$-$\wgcl$ programs $C$, the $\wpsymbol$ transformer is 
    \begin{itemize}
		\item \emph{monotone}, i.e.\ for all $f, g \in \Weightings$ with $f \natord g$,  \quad $\wp{C}{f} \natord \wp{C}{g}$ ~;
		\item \emph{strict}, i.e.\ \quad $\wp{C}{\snull} \eeq \snull$ ~;
        \item \emph{additive}, i.e.\ for all $f, g \in \Weightings$, \quad $\wp{C}{f \mmadd g} \eeq \wp{C}{f} \mmadd \wp{C}{g}$ ~;
        \item and moreover, if the monoid $\Monoid$ is commutative, then $\wpsymbol$ is \emph{linear}, i.e.\ for all $a \in \monoid$,
        \[
            \quad \wp{C}{a \smop f} \eeq a \smop \wp{C}{f} ~.
            \tag*{$\triangle$}
        \]
    \end{itemize}
\end{theorem}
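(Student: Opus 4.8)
The plan is to prove all four properties by structural induction on the program $C$, taking strictness, additivity, and (in the commutative case) homogeneity as the primary goals and harvesting monotonicity and linearity as immediate corollaries. Monotonicity is in fact already implied by the $\omega$-continuity of \cref{thm:wp-continuous}, but it also drops out of additivity via the natural order: if $f \natord g$ then $g = f \madd c$ for some $c \in \Weightings$, whence $\wp{C}{g} = \wp{C}{f} \madd \wp{C}{c}$, so $\wp{C}{f} \natord \wp{C}{g}$. Linearity is merely the conjunction of additivity and homogeneity. Thus the real content is the three homomorphism-style identities, and throughout I would lean on the module axioms of \cref{def:monoid-module} (associativity and distributivity of $\smop$, and the annihilation law $a \smop \snull = \snull$) together with the distributivity of $\iverson{\guard} \ivop$ over $\madd$ shown at the beginning of \cref{app:proofs4}.

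For the loop-free constructs the induction is routine. \emph{Strictness} unfolds directly: $\snull\subst{x}{E} = \snull$, $a \smop \snull = \snull$, and the composite cases reduce to the induction hypothesis after noting $\iverson{\guard} \ivop \snull = \snull$ and $\snull \madd \snull = \snull$. \emph{Additivity} holds for assignment by pointwise evaluation, for $\WEIGH{a}$ by distributivity of $\smop$ over $\madd$, and for sequencing, branching, and conditional choice by applying the induction hypothesis and rearranging with commutativity and associativity of $\madd$ (using in the conditional case the distributivity of $\iverson{\guard} \ivop$). \emph{Homogeneity} is exactly where commutativity of $\Monoid$ becomes essential: for $\WEIGH{b}$ one has $\wp{\WEIGH{b}}{a \smop f} = (b \monop a) \smop f$ versus $a \smop \wp{\WEIGH{b}}{f} = (a \monop b) \smop f$, and these agree precisely because $a \monop b = b \monop a$. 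The remaining loop-free homogeneity cases use distributivity of $\smop$ over $\madd$ together with the commuting law $\iverson{\guard} \ivop (a \smop h) = a \smop (\iverson{\guard} \ivop h)$, which is checked by the same case split on $\sigma \models \guard$ as the $\iverson{\guard} \ivop$-distributivity, relying on $a \smop \snull = \snull$.

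The main obstacle is the loop $\WHILEDO{\guard}{C'}$, where $\wpsymbol$ is a least fixed point rather than a syntactic object. Here I would invoke the Kleene characterization $\wp{\WHILEDO{\guard}{C'}}{f} = \bigjoin_{n \in \Nats} \charwp{\guard}{C'}{f}^n(\snull)$ from \cref{thm:wp-continuous} and establish each identity first for the finite approximants by an inner induction on $n$. For additivity the inner claim is $\charwp{\guard}{C'}{f \madd g}^n(\snull) = \charwp{\guard}{C'}{f}^n(\snull) \madd \charwp{\guard}{C'}{g}^n(\snull)$; the base case is $\snull = \snull \madd \snull$, and the step combines the \emph{structural} hypothesis (additivity of $\wpC{C'}$) with distributivity of $\iverson{\guard} \ivop$ and $\iverson{\neg\guard} \ivop$ over $\madd$, followed by a rearrangement of the summands. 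For homogeneity the analogous inner claim $\charwp{\guard}{C'}{a \smop f}^n(\snull) = a \smop \charwp{\guard}{C'}{f}^n(\snull)$ is driven by the commuting law, distributivity of $\smop$, and the structural homogeneity of $\wpC{C'}$.

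Finally, passing from the approximants to the fixed points is where $\omega$-continuity is indispensable. I would take suprema on both sides of each inner identity and exchange the join with $\madd$ using \cref{thm:madd-cont-both-args}, and with $\smop$ using the $\omega$-continuity of scalar multiplication from \cref{app:def:continuous-smodule}. This yields $\wp{\WHILEDO{\guard}{C'}}{f \madd g} = \wp{\WHILEDO{\guard}{C'}}{f} \madd \wp{\WHILEDO{\guard}{C'}}{g}$ and, in the commutative case, $\wp{\WHILEDO{\guard}{C'}}{a \smop f} = a \smop \wp{\WHILEDO{\guard}{C'}}{f}$, completing the induction. Collecting the three identities then gives monotonicity and linearity as explained above. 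I expect the loop case, and in particular the careful bookkeeping needed to interchange suprema with $\madd$ and $\smop$, to be the only genuinely delicate part; the loop-free cases are essentially one-line verifications once the distributivity and commuting lemmas for $\iverson{\guard} \ivop$ are in place.
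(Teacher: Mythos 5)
Your proposal is correct, but it takes a genuinely different route from the paper. The paper does \emph{not} perform a structural induction at all: it first establishes the soundness theorem (\cref{thm:wp-operational-semantics}), which characterizes $\wp{C}{f}(\state)$ as the infinite sum $\mbigadd_{\compPath} \pathWeight(\compPath) \smop f(\lastState(\compPath))$ over terminating paths, and then reads off strictness from annihilation ($\pathWeight(\compPath) \smop \mnull = \mnull$), additivity from distributivity of $\smop$ over $\madd$ inside the path sum followed by a rearrangement of summands, and homogeneity by commuting $a$ past each path weight ($\pathWeight(\compPath) \monop a = a \monop \pathWeight(\compPath)$) and pulling it out of the sum via the extended distributive law --- exactly the same algebraic ingredients you use, but applied once globally to the path characterization rather than case by case. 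Your approach instead stays entirely within the denotational semantics: structural induction over $\wgcl$, with the loop case handled by the Kleene approximants $\charwp{\guard}{C'}{f}^n(\snull)$ from \cref{thm:wp-continuous}, an inner induction on $n$, and a final exchange of $\bigjoin$ with $\madd$ (via \cref{thm:madd-cont-both-args}) and with $\smop$ (via $\omega$-continuity of scalar multiplication). What each buys: the paper's proof is short and uniform but is logically downstream of the operational soundness theorem, which is itself a heavyweight result proved through least $\wgcl$-functionals; yours is self-contained relative to well-definedness alone and never invokes the operational semantics, at the price of a two-level induction and the supremum-exchange bookkeeping you correctly identify as the delicate part. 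Your side remarks are also sound: monotonicity indeed follows either from $\omega$-continuity (as the paper argues) or from additivity through the natural order on $\Weightings$, where $f \natord g$ yields a pointwise witness $c$ with $f \madd c = g$, and your observation that commutativity of $\Monoid$ enters precisely at $\wp{\WEIGH{b}}{a \smop f} = (b \monop a) \smop f$ versus $(a \monop b) \smop f$ mirrors the paper's use of commutativity on path weights.
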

\begin{proof}
    \begin{itemize}
		\item \emph{Monotonicity}.
			Follows directly from $\omega$-continuity of $\wpC{C}$, \cref{thm:wp-continuous}.

		\item \emph{Strictness}.
			Follows from \cref{thm:wp-operational-semantics} and annihilation of $\mnull$:
			Let $\state \in \States$.
			\begin{align*}
				& \wp{C}{\mnull}(\state) \\
				\qeq& \mbigadd_{\compPath \in \termPathsStartingIn{\sosState{C}{\state}{n}{v}}} \pathWeight(\compPath) \smop \mnull(\lastState(\compPath)) \\
				\qeq& \mbigadd_{\compPath \in \termPathsStartingIn{\sosState{C}{\state}{n}{v}}} \mnull \\
				\qeq& \mnull  
			\end{align*}

		\item \emph{Additivity}.
			Follows from \cref{thm:wp-operational-semantics}:
			Let $\state \in \States$ and $f, g \in \Weightings$.
			\begin{align*}
				&\wp{C}{f \madd g}(\state) \\
				\qeq& \mbigadd_{\compPath \in \termPathsStartingIn{\sosState{C}{\state}{n}{v}}} \pathWeight(\compPath) \smop (f \madd g)(\lastState(\compPath)) \\
				\qeq& \mbigadd_{\compPath \in \termPathsStartingIn{\sosState{C}{\state}{n}{v}}} \pathWeight(\compPath) \smop (f(\lastState(\compPath) \madd g(\lastState(\compPath)) \\
				\qeq& \mbigadd_{\compPath \in \termPathsStartingIn{\sosState{C}{\state}{n}{v}}} \pathWeight(\compPath) \smop f(\lastState(\compPath) \madd \pathWeight(\compPath) \smop g(\lastState(\compPath)) \\
				\qeq& \mbigadd_{\compPath \in \termPathsStartingIn{\sosState{C}{\state}{n}{v}}} \pathWeight(\compPath) \smop f(\lastState(\compPath) \\
				\mmadd& \mbigadd_{\compPath \in \termPathsStartingIn{\sosState{C}{\state}{n}{v}}} \pathWeight(\compPath) \smop g(\lastState(\compPath)) \\
				\qeq& \wp{C}{f}(\state) \madd \wp{C}{g}(\state)
			\end{align*}

		\item \emph{Linearity}.
			Let the monoid $\Monoid$ be commutative; we apply \cref{thm:wp-operational-semantics}:
			As additivity always holds, we only have to show homogenity.
			Let $\state \in \States$, $a \in \monoid$ and $f \in \Weightings$.
			\begin{align*}
				& \wp{C}{a \smop f}(\state) \\
				\qeq& \mbigadd_{\compPath \in \termPathsStartingIn{\sosState{C}{\state}{n}{v}}} \pathWeight(\compPath) \smop (a \smop f)(\lastState(\compPath)) \\
				\qeq& \mbigadd_{\compPath \in \termPathsStartingIn{\sosState{C}{\state}{n}{v}}} \pathWeight(\compPath) \smop (a \smop f(\lastState(\compPath)) \\
				\qeq& \mbigadd_{\compPath \in \termPathsStartingIn{\sosState{C}{\state}{n}{v}}} (\pathWeight(\compPath) \monop a) \smop f(\lastState(\compPath) \\
				\qeq& \mbigadd_{\compPath \in \termPathsStartingIn{\sosState{C}{\state}{n}{v}}} (a \monop \pathWeight(\compPath)) \smop f(\lastState(\compPath) \\
				\qeq& \mbigadd_{\compPath \in \termPathsStartingIn{\sosState{C}{\state}{n}{v}}} a \smop (a \monop \pathWeight(\compPath) \smop f(\lastState(\compPath)) \\
				\qeq& a \smop \mbigadd_{\compPath \in \termPathsStartingIn{\sosState{C}{\state}{n}{v}}} \pathWeight(\compPath) \smop f(\lastState(\compPath) \\
				\qeq& a \smop \wp{C}{f}(\state)
			\end{align*}
    \end{itemize}
\end{proof}

\subsection{Proof of Theorem \ref{thm:wlp-continuous}}
\begin{theorem}[\cref{thm:wlp-continuous}]
	Let $\Smodule$ be an $\omega$-\emph{co}continuous $\Monoid$-module.
    For all $\Monoid$-$\wgcl$ programs $C$, the weighting transformer $\wlpC{C}$ is a well-defined $\omega$-\emph{co}continuous endofunction on the module of weightings over $\Smodule$.
    In particular, if $C=\WHILEDO{\guard}{C'}$, we have for all $f \in \Weightings$ that
    \[
        \wlp{C}{f} \qeq \bigmeet_{i \in \Nats} \charwlp{}{}{f}^i(\top) ~.
        \tag*{$\triangle$}
    \]
\end{theorem}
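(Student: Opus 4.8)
The plan is to mirror exactly the proof of \cref{thm:wp-continuous} (given in \cref{app:proof:wp-continuous}) but with the order reversed, i.e.\ working in the $\omega$-cocpo $(\smodule, \rnatord)$ instead of $(\smodule, \natord)$. Since $\Smodule$ is $\omega$-cocontinuous, it is an $\omega$-cocpo with greatest element $\top$, and addition $\madd$ and scalar multiplication $\smop$ with constants are $\omega$-cocontinuous by \cref{app:def:continuous-smodule}. The module of weightings $\Weightings = \Smodule^\States$ inherits these properties pointwise. The transformer $\wlpsymbol$ is defined by the same inductive rules as $\wpsymbol$ (\cref{tab:wp-semantics}) except that $\lfpop$ is replaced by $\gfpop$ and every $\wpsymbol$ is replaced by $\wlpsymbol$, so the structural cases carry over verbatim once ``join'' is swapped for ``meet'' and ``ascending $\omega$-chain'' for ``descending $\omega$-chain.''

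First I would prove a cocontinuous analogue of \cref{thm:madd-cont-both-args}: for descending $\omega$-chains $(a_i)_i, (b_i)_i$ we have $\bigmeet_i (a_i \madd b_i) = \bigmeet_i a_i \madd \bigmeet_i b_i$, which follows from $\omega$-cocontinuity of addition with constants. Then I would run the structural induction on $C$. The base cases $\ASSIGN{x}{E}$ and $\WEIGH{a}$ are immediate: assignment commutes with infima pointwise, and $\WEIGH{a}$ uses $\omega$-cocontinuity of $a \smop {\cloze{\,}}$. For the induction step, sequential composition uses the cocontinuous version of \cref{app:thm:compositional-continuity} (composition of $\omega$-cocontinuous functions is $\omega$-cocontinuous); conditional choice and branching use cocontinuity of $\madd$ together with the fact that $\iverson{\guard}\ivop{\cloze{\,}}$ (selecting between the argument and $\mnull$ pointwise) is $\omega$-cocontinuous, exactly as in the $\wpsymbol$ proof.

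For the loop case $\WHILEDO{\guard}{C'}$, I would first show that the characteristic function $\charwlp{\guard}{C'}{f}$ is $\omega$-cocontinuous in its argument $X$ (using cocontinuity of $\madd$, of $\iverson{\guard}\ivop{\cloze{\,}}$, and the induction hypothesis on $\wlpC{C'}$), and $\omega$-cocontinuous in the parameter $f$. Then, applying the greatest-fixed-point half of Kleene iteration, namely \cref{app:thm:kleene-park}\cref{it:gfp}, to the $\omega$-cocpo $(\Weightings, \rnatord)$ gives both well-definedness of $\wlp{\WHILEDO{\guard}{C'}}{f} = \gfpop \charwlp{}{}{f}$ and the explicit formula $\bigmeet_{i \in \Nats} \charwlp{}{}{f}^i(\top)$. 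Cocontinuity of the whole loop transformer in $f$ then follows from cocontinuity of $f \mapsto \charwlp{}{}{f}$ composed with the $\gfpop$ operator, invoking the cocontinuous dual of the fixed-point-operator continuity lemma already cited in \cref{app:proof:wp-continuous}.

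I expect no genuinely new obstacle, since the entire argument is the order-dual of the proof of \cref{thm:wp-continuous}; the main care-point is purely bookkeeping --- verifying that every continuity fact used in the $\wpsymbol$ proof has a clean cocontinuous counterpart under the reversed order $\rnatord$, in particular that $\iverson{\guard}\ivop{\cloze{\,}}$ and the constant-argument operations remain cocontinuous (these hold by hypothesis in \cref{app:def:continuous-smodule}), and that the dual of Kleene's theorem applies to the $\omega$-cocpo. The one conceptual subtlety worth flagging is that the iteration now starts from $\top$ rather than $\mnull$ and descends, so the chain $(\charwlp{}{}{f}^i(\top))_{i}$ is descending under $\natord$; its infimum is the greatest fixed point, which is precisely the semantics we want for nonterminating behavior.
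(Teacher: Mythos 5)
Your proposal is correct and takes exactly the paper's route: the paper's own proof of \cref{thm:wlp-continuous} is literally the one-line remark ``Fully analogous to \cref{thm:wp-continuous},'' i.e.\ the order-dual of the $\wpsymbol$ argument, which is precisely the dualization you carry out (cocontinuous analogue of \cref{thm:madd-cont-both-args}, structural induction, and the greatest-fixed-point half of \cref{app:thm:kleene-park} applied to the characteristic function starting from $\top$). Your write-up simply makes explicit the bookkeeping the paper leaves implicit.
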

\begin{proof}
	Fully analogous to \cref{thm:wp-continuous}.
\end{proof}

\subsection{Proof of Theorem \ref{thm:wlp-decomp}}
\label{app:proof:wlp-decomp}

Let $\Smodule$ be an $\omega$-\emph{bi}continuous $\Monoid$-module.
First we need an auxiliary lemma.

\begin{lemma}
    \label{thm:chains-up-and-down}
    Let $(a_i)_{i \in \Nats}$ be an ascending $\omega$-chain and $(b_i)_{i \in \Nats}$ be a descending $\omega$-chain in an $\omega$-bicontinuous module.
    Furthermore, suppose that $(c_i)_{i \in \Nats}$ is a descending $\omega$-chain such that for all $i \in \Nats$ it holds that $c_i = a_i \madd b_i$.
    Then we have
    \[
        \bigmeet_{i \in \Nats} c_i \qeq \bigjoin_{i \in \Nats} a_i \mmadd \bigmeet_{i \in \Nats} b_i ~.
    \]
\end{lemma}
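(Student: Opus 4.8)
The plan is to establish the claimed equality by proving the two inequalities $\bigmeet_{i} c_i \natord a \madd b$ and $a \madd b \natord \bigmeet_i c_i$ with respect to the natural order and then appeal to antisymmetry. Write $a \coloneqq \bigjoin_i a_i$ and $b \coloneqq \bigmeet_i b_i$. Throughout I would use that, since $\Smodule$ is $\omega$-bicontinuous, module addition $\madd$ is monotone (\cref{app:thm:smodule-natordprops}) and is both $\omega$-continuous and $\omega$-cocontinuous in each argument when the other is held fixed (\cref{app:def:continuous-smodule}, together with commutativity of $\madd$), plus the elementary fact that a tail of an ascending (resp.\ descending) $\omega$-chain has the same supremum (resp.\ infimum) as the whole chain.

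First I would dispatch the easy inequality $\bigmeet_i c_i \natord a \madd b$. Since $a_i \natord a$, monotonicity gives $c_i = a_i \madd b_i \natord a \madd b_i$ for every $i$, whence $\bigmeet_i c_i \natord \bigmeet_i (a \madd b_i)$. Applying $\omega$-cocontinuity of addition with the constant $a$ to the descending chain $(b_i)$ turns the right-hand side into $a \madd \bigmeet_i b_i = a \madd b$, finishing this direction.

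The converse $a \madd b \natord \bigmeet_i c_i$ is where the descending hypothesis on $(c_i)$ is used, and I expect this to be the crux. Fix an index $i$. For every $k \geq i$ the descending assumption gives $c_k \natord c_i$, and since $b \natord b_k$ monotonicity yields $a_k \madd b \natord a_k \madd b_k = c_k$; combining these, $a_k \madd b \natord c_i$ for all $k \geq i$. Thus $c_i$ is an upper bound of the ascending chain $(a_k \madd b)_{k \geq i}$, so $\bigjoin_{k \geq i}(a_k \madd b) \natord c_i$. By $\omega$-continuity of addition with the constant $b$ and the fact that the tail $(a_k)_{k \geq i}$ still has supremum $a$, the left-hand side equals $(\bigjoin_{k \geq i} a_k) \madd b = a \madd b$, so $a \madd b \natord c_i$. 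As $i$ was arbitrary, $a \madd b$ is a lower bound of $(c_i)$, hence $a \madd b \natord \bigmeet_i c_i$, and antisymmetry yields the claim.

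The conceptual heart is the observation that the \emph{monotonicity of $(c_i)$} is exactly what lets a bound placed at the diagonal entry $c_i$ be pushed past the entire tail $(a_k \madd b)_{k \geq i}$, recovering $a \madd b$; without this hypothesis the infimum of the diagonal need not coincide with $a \madd b$ at all. The only remaining subtlety is bookkeeping: each continuity statement is phrased for a single $\omega$-chain, so I must reindex $(a_k)_{k \geq i}$ as a genuine $\omega$-chain and invoke the tail fact before applying $\omega$-continuity. This is routine and I would not spell it out in detail.
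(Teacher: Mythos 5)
Your proof is correct and takes essentially the same route as the paper's: the easy inequality $\bigmeet_i c_i \natord a \madd b$ is handled identically (monotonicity of $\madd$ plus $\omega$-cocontinuity of addition with the constant $a$), and your tail argument for the converse — bounding the ascending chain $(a_k \madd b)_{k \geq i}$ by $c_i$ via $a_k \madd b \natord c_k \natord c_i$ and then applying $\omega$-continuity — is an inlined reorganization of the paper's auxiliary claim that $\bigjoin_i x_i \natord \bigmeet_j y_j$ whenever an ascending chain $(x_i)$ lies pointwise below a descending chain $(y_j)$, instantiated there with $x_i = a_i \madd b$ and $y_i = c_i$.
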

\begin{proof}
	Let $a_{\omega} \coloneq \bigjoin_{i \in \Nats} a_i$.
	We have
	\[
		\bigmeet_{i \in \Nats} c_i
		\qeq \bigmeet_{i \in \Nats} \left( a_i \madd b_i \right)
		\qnatord \bigmeet_{i \in \Nats} \left( a_{\omega} \madd b_i \right)
		\qeq a_{\omega} \madd \bigmeet_{i \in \Nats} b_i ~,
	\]
    where the last equality holds by $\omega$-cocontinuity of $\madd$.

    To prove the inequality in the other direction we will make use of the following auxiliary claim:
    
    \begin{claim}
        Suppose that $(x_i)_{i \in \Nats}$ is an ascending $\omega$-chain and $(y_i)_{i \in \Nats}$ is a descending $\omega$-chain such that for all $i \in \Nats$ we have $x_i \natord y_i$.
		Then $\bigjoin_{i \in \Nats} x_i \natord \bigmeet_{i \in \Nats} y_i$.
    \end{claim}
    \begin{proof}[Proof of Claim]
        We first show the following:
        \[
            \forall i \in \Nats \quad \forall j \in \Nats \colonq x_i \natord y_j ~.
        \]
        To this end let $i, j$ be arbitrary. There are two cases to consider:
        \begin{enumerate}
            \item $i \leq j$. By assumption, $x_j \natord y_j$. Further, since the $x$'s form an ascending chain, we have $x_i \natord x_j$. 
            \item $i > j$. By assumption, $x_i \natord y_i$. Further, since the $y$'s form a descending chain, we have $y_i \natord y_j$.
        \end{enumerate}
		In both cases, $x_i \natord y_j$ holds.
		Now fix $n \in \Nats$.
		We just have shown that $x_i \natord y_n$ for all $i \in \Nats$.
		Thus by definition of $\bigjoin$ we immediately have $\bigjoin_{i \in \Nats} x_i \natord y_n$.
		Since $n$ was arbitrary, we finally obtain $\bigjoin_{i \in \Nats} x_i \natord \bigmeet_{i \in \Nats} y_i$ by definition of $\bigmeet$.
    \end{proof}

    To conclude the proof of \cref{thm:chains-up-and-down}, let $b_{\omega} \coloneq \bigmeet_{i \in \Nats} b_i$ and note that for all $i \in \Nats$,
    \[
        a_i \madd b_\omega \qnatord a_i \madd b_i \qnatord c_i
    \]
    and that $(a_i \madd b_\omega)_{i \in \Nats}$ is an ascending $\omega$-chain.
    Invoking the auxiliary claim above, we obtain
    \[
        \bigjoin_{i \in \Nats} \left( a_i \madd b_\omega \right) \qnatord \bigmeet_{i \in \Nats} c_i ~.
    \]
    Noticing that $\bigjoin_{i \in \Nats} \left( a_i \madd b_\omega \right) = \left( \bigjoin_{i \in \Nats}  a_i \right) \madd b_\omega$ by $\omega$-continuity of $\madd$ concludes the proof.
\end{proof}

\begin{theorem}[\cref{thm:wlp-decomp}]
	Let $\Smodule$ be an $\omega$-\emph{bi}continuous $\Monoid$-module.
    Let $C$ be a $\Monoid$-$\wgcl$ program and $\Smodule$ an $\omega$-bicontinuous $\Monoid$-module.
    Then for all $f \in \Weightings$, 
    \[
        \wlp{C}{f} \qeq \wp{C}{f} \mmadd \wlp{C}{\mnull} ~.
	\]
\end{theorem}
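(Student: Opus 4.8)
The plan is to proceed by structural induction on the program $C$, proving $\wlp{C}{f} = \wp{C}{f} \madd \wlp{C}{\mnull}$ simultaneously for \emph{all} postweightings $f \in \Weightings$, so that the induction hypothesis can be instantiated at the various intermediate weightings that arise. The atomic cases are immediate: for $\ASSIGN{x}{E}$ and $\WEIGH{a}$, the transformers $\wpsymbol$ and $\wlpsymbol$ coincide, and moreover $\wlp{C}{\mnull} = \mnull$ (since $\mnull\subst{x}{E} = \mnull$ and $a \smop \mnull = \mnull$ by the module axioms), so the right-hand side collapses to $\wp{C}{f} \madd \mnull = \wlp{C}{f}$ because $\mnull$ is the additive identity.

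For the non-loop composite cases I would unfold the definitions from \cref{tab:wp-semantics} and push the hypothesis inward. The case $\BRANCH{C_1}{C_2}$ is pure rearrangement of $\madd$ via commutativity and associativity, and $\ITE{\guard}{C_1}{C_2}$ additionally uses that $\iverson{\guard}\ivop$ distributes over $\madd$ (established at the start of this appendix). The genuinely delicate non-loop case is $\COMPOSE{C_1}{C_2}$: starting from $\wlp{C_1}{\wlp{C_2}{f}}$, I apply the hypothesis for $C_1$ at $\wlp{C_2}{f}$, then the hypothesis for $C_2$, and then invoke \emph{additivity of $\wpsymbol$} (\cref{thm:wp-properties}) to split $\wp{C_1}{\wp{C_2}{f} \madd \wlp{C_2}{\mnull}}$. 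Re-applying the hypothesis for $C_1$ at $\wlp{C_2}{\mnull}$ lets me recognize $\wp{C_1}{\wlp{C_2}{\mnull}} \madd \wlp{C_1}{\mnull} = \wlp{C_1}{\wlp{C_2}{\mnull}} = \wlp{\COMPOSE{C_1}{C_2}}{\mnull}$, closing the case. Note that only additivity (not homogeneity) is needed, consistent with the theorem holding for non-commutative $\Monoid$.

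The loop is the main obstacle. Writing $\Phi = \charwp{\guard}{C'}{f}$, $\Psi = \charwlp{\guard}{C'}{f}$, and $\Theta = \charwlp{\guard}{C'}{\mnull}$, \cref{thm:wp-continuous,thm:wlp-continuous} give $\wp{\WHILEDO{\guard}{C'}}{f} = \bigjoin_{i \in \Nats} \Phi^i(\mnull)$, $\wlp{\WHILEDO{\guard}{C'}}{f} = \bigmeet_{i \in \Nats} \Psi^i(\mTop)$, and $\wlp{\WHILEDO{\guard}{C'}}{\mnull} = \bigmeet_{i \in \Nats} \Theta^i(\mTop)$. The core is an \emph{inner} induction on $i$ establishing the pointwise identity $\Psi^i(\mTop) = \Phi^i(\mnull) \madd \Theta^i(\mTop)$; the successor step unfolds all three functionals, uses $\iverson{\neg\guard}\ivop\mnull = \mnull$ to discard the divergence functional's output contribution, then applies additivity of $\wpsymbol$, distributivity of $\iverson{\guard}\ivop$, the \emph{structural} hypothesis applied to the loop body $C'$, and finally the inner hypothesis to recombine into $\Psi^{i+1}(\mTop)$. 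With this identity secured, I would invoke the auxiliary \cref{thm:chains-up-and-down}: $(\Phi^i(\mnull))_{i}$ is an ascending chain, $(\Theta^i(\mTop))_{i}$ and $(\Psi^i(\mTop))_{i}$ are descending chains, and $\Psi^i(\mTop)$ is their termwise sum, whence $\bigmeet_{i} \Psi^i(\mTop) = \bigjoin_{i} \Phi^i(\mnull) \madd \bigmeet_{i} \Theta^i(\mTop)$, which is precisely the claimed decomposition for the loop. I expect the hard part to be reconciling the ascending $\wpsymbol$-iteration with the descending $\wlpsymbol$-iteration; this is exactly where \cref{thm:chains-up-and-down} (and hence $\omega$-\emph{bi}continuity, rather than mere cocontinuity) is essential.
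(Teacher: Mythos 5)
Your proposal is correct and is essentially the paper's own proof: structural induction on $C$, with additivity of $\wpsymbol$ resolving the sequential-composition case, and for loops the same inner induction on $i$ showing that the $i$-th $\wlpsymbol$-iterate from $\mTop$ equals the $i$-th $\wpsymbol$-iterate from $\mnull$ plus the $i$-th iterate of the divergence functional $\charwlp{\guard}{C'}{\mnull}$, finished off by \cref{thm:chains-up-and-down} exactly as you describe. The minor differences (the order in which you invoke the induction hypotheses in the composition case and within the loop's successor step) are immaterial, and your remark that only additivity of $\wpsymbol$ is needed --- hence no commutativity assumption on $\Monoid$ --- is exactly right.
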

\begin{proof}
	We employ induction on the structure of $C$.
	Let $f \in \Weightings$ be an arbitrary postweighting.

	\begin{itemize}
	\item The cases where $C$ is of the form $\ASSIGN{x}{E}$ or $\WEIGH{a}$.
		Then $\wlpC{C} = \wpC{C}$ and thus $\wlp{C}{\mnull} = \wp{C}{\mnull} = \mnull$.
		
	\item $C$ is of the form $\COMPOSE{C_1}{C_2}$.

		In this case, we have
		\begin{align*}
			& \wlp{\COMPOSE{C_1}{C_2}}{f} \\
			\cmmnt{Def.\ of $\wlpsymbol$} \\
			\qeq& \wlp{C_1}{\wlp{C_2}{f}} \\
			\cmmnt{I.H.\ for $C_2$} \\
			\qeq& \wlp{C_1}{\wp{C_2}{f} \madd \wlp{C_2}{\mnull}} \\
			\cmmnt{I.H.\ for $C_1$} \\
			\qeq& \wp{C_1}{\wp{C_2}{f} \madd \wlp{C_2}{\mnull}} \mmadd \wlp{C_1}{\mnull} \\
			\cmmnt{Additivity of $\wpsymbol$} \\
			\qeq& \wp{C_1}{\wp{C_2}{f}} \mmadd \wp{C_1}{\wlp{C_2}{\mnull}} \mmadd \wlp{C_1}{\mnull} \\
			\cmmnt{I.H.\ for $C_1$} \\
			\qeq& \wp{C_1}{\wp{C_2}{f}} \mmadd \wlp{C_1}{\wlp{C_2}{\mnull}} \\
			\cmmnt{Def.\ of $\wpsymbol$ and $\wlpsymbol$} \\
			\qeq& \wp{\COMPOSE{C_1}{C_2}}{f} \mmadd \wlp{\COMPOSE{C_1}{C_2}}{\mnull}
			~.
		\end{align*}
		
		\item $C$ is of the form $\ITE{\guard}{C_1}{C_2}$.
		\begin{align*}
			& \wlp{\ITE{\guard}{C_1}{C_2}}{f} \\
			\cmmnt{Def.\ of $\wlpsymbol$} \\
			\qeq& \iverson{\guard} \ivop \wlp{C_1}{f} \mmadd \iverson{\neg\guard} \ivop \wlp{C_2}{f} \\
			\cmmnt{I.H.\ on $C_1$ and $C_2$}\\
			\qeq& \iverson{\guard} \ivop \left( \wp{C_1}{f} \sadd \wlp{C_1}{\mnull}\right) \mmadd \iverson{\neg\guard} \ivop \left(\wp{C_2}{f} \sadd \wlp{C_2}{\mnull} \right) \\
			\cmmnt{Distributivity} \\
			\qeq& \iverson{\guard} \ivop \wp{C_1}{f} \mmadd \iverson{\guard} \ivop \wlp{C_1}{\mnull} \\
			\qmadd& \iverson{\neg\guard} \ivop \wp{C_2}{f} \mmadd \iverson{\neg\guard} \ivop \wlp{C_2}{\mnull} \\
			\cmmnt{Commutativity of $\madd$} \\
			\qeq& \iverson{\guard} \ivop \wp{C_1}{f} \mmadd \iverson{\neg\guard} \ivop \wp{C_2}{f} \\
			\qmadd& \iverson{\guard} \ivop \wlp{C_1}{\mnull} \mmadd \iverson{\neg\guard} \ivop \wlp{C_2}{\mnull} \\
			\cmmnt{Def.\ of $\wpsymbol$ and $\wlpsymbol$} \\
			\qeq& \wp{\ITE{\guard}{C_1}{C_2}}{f} \mmadd \wlp{\ITE{\guard}{C_1}{C_2}}{\mnull}
			~.
		\end{align*}
		
		\item $C$ is of the form $\BRANCH{C_1}{C_2}$.
		\begin{align*}
			& \wlp{\BRANCH{C_1}{C_2}}{f} \\
			\cmmnt{Def.\ of $\wlpsymbol$} \\
			\qeq& \wlp{C_1}{f} \mmadd \wlp{C_2}{f} \\
			\cmmnt{I.H.\ on $C_1$ and $C_2$} \\
			\qeq& \wp{C_1}{f} \madd \wlp{C_1}{\mnull} \mmadd \wp{C_2}{f} \madd \wlp{C_2}{\mnull} \\
			\cmmnt{Commutativity of $\madd$} \\
			\qeq& \wp{C_1}{f} \madd \wp{C_2}{f} \mmadd \wlp{C_1}{\mnull} \madd \wlp{C_2}{\mnull} \\
			\cmmnt{Def.\ of $\wpsymbol$ and $\wlpsymbol$} \\
			\qeq& \wp{\BRANCH{C_1}{C_2}}{f} \mmadd \wlp{\BRANCH{C_1}{C_2}}{\mnull} \\
		\end{align*}
		
	\item $C$ is of the form $\WHILEDO{\guard}{C'}$.

		Let $\charwlp{\guard}{C'}{f}$ and $\charwp{\guard}{C'}{f}$ be the $\wlpsymbol$- and $\wpsymbol$-characteristic functions of the loop, respectively.
		We claim that for all $n \in \Nats$ it holds that
		\begin{align}
            \label{eq:inner-induction-wlp-decomp}
			\charwlpn{\guard}{C'}{f}{n}(\mTop) \qeq \charwpn{\guard}{C'}{f}{n}(\mnull) \mmadd \charwlpn{\guard}{C'}{\mnull}{n}(\mTop) ~.
		\end{align}
		This claim is proved by induction on $n$ (the I.H.\ of this induction is referred to as ``inner I.H.''):
		\begin{itemize}
		\item $n = 0$.
			In this case, the claim holds trivially.

		\item $n > 0$.
			\begin{align*}
				&\charwlpn{\guard}{C'}{f}{n}(\mTop)\\
				\cmmnt{n > 0}\\
				\qeq& \charwlp{\guard}{C'}{f}(\charwlpn{\guard}{C'}{f}{n-1}(\mTop))\\
				\cmmnt{inner I.H.}\\
				\qeq& \charwlp{\guard}{C'}{f}\left( \charwpn{\guard}{C'}{f}{n-1}(\mnull) \mmadd \charwlpn{\guard}{C'}{\mnull}{n-1}(\mTop) \right)\\
				\cmmnt{Def. of $\charwlp{\guard}{C'}{f}$}\\
				\qeq& \iverson{\neg\guard} \ivop f \mmadd \iverson{\guard} \ivop \wlp{C'}{\charwpn{\guard}{C'}{f}{n-1}(\mnull) \mmadd \charwlpn{\guard}{C'}{\mnull}{n-1}(\mTop) }\\
				\cmmnt{I.H.}\\
				\qeq& \iverson{\neg\guard} \ivop f \mmadd \iverson{\guard} \ivop \left(  \wp{C'}{\charwpn{\guard}{C'}{f}{n-1}(\mnull) \mmadd \charwlpn{\guard}{C'}{\mnull}{n-1}(\mTop)} \sadd \wlp{C'}{\mnull} \right) \\
				\cmmnt{Linearity of $\wpsymbol$}\\
				\qeq& \iverson{\neg\guard} \ivop f \mmadd \iverson{\guard} \ivop \left( \wp{C'}{\charwpn{\guard}{C'}{f}{n-1}(\mnull)} \mmadd  \wp{C'}{\charwlpn{\guard}{C'}{\mnull}{n-1}(\mTop)}  \mmadd \wlp{C'}{\mnull} \right)
				\cmmnt{Distributivity}\\
				\qeq& \iverson{\neg\guard} \ivop f \mmadd \iverson{\guard} \ivop \wp{C'}{\charwpn{\guard}{C'}{f}{n-1}(\mnull)} \\
				&\mmadd \iverson{\guard} \ivop \left( \wp{C'}{\charwlpn{\guard}{C'}{\mnull}{n-1}(\mTop)} \madd \wlp{C'}{\mnull} \right) \\
				\cmmnt{Def. of $\charwp{\guard}{C'}{f}$}\\
				\qeq& \charwpn{\guard}{C'}{f}{n}(\mnull) \mmadd \iverson{\guard} \ivop \left( \wp{C'}{\charwlpn{\guard}{C'}{\mnull}{n-1}(\mTop)}  \sadd \wlp{C'}{\mnull} \right) \\
				\cmmnt{I.H.}\\
				\qeq& \charwpn{\guard}{C'}{f}{n}(\mnull) \mmadd \iverson{\guard} \ivop \left( \wlp{C'}{\charwlpn{\guard}{C'}{\mnull}{n-1}(\mTop)} \right) \\
				\cmmnt{Def. of $\charwlp{\guard}{C'}{\mnull}$}\\
				\qeq& \charwpn{\guard}{C'}{f}{n}(\mnull) \mmadd \charwlpn{\guard}{C'}{\mnull}{n}(\mTop) ~.
			\end{align*}
			We are now in a position to conclude the proof:
			\begin{align*}
				& \wlp{\WHILEDO{\guard}{C'}}{f}\\
				\cmmnt{\cref{thm:wlp-continuous}}\\
				\qeq& \bigmeet_{n \in \Nats} \charwlpn{\guard}{C'}{f}{n}(\mTop) \\
				\cmmnt{By \eqref{eq:inner-induction-wlp-decomp}}\\
				\qeq& \bigmeet_{n \in \Nats} \left( \charwpn{\guard}{C'}{f}{n}(\mnull) \mmadd \charwlpn{\guard}{C'}{\mnull}{n}(\mTop) \right) \\
				\cmmnt{\cref{thm:chains-up-and-down}}\\
				\qeq& \bigjoin_{n \in \Nats}  \charwpn{\guard}{C'}{f}{n}(\mnull) \mmadd \bigmeet_{n \in \Nats} \charwlpn{\guard}{C'}{\mnull}{n}(\mTop)  \\
				\cmmnt{\cref{thm:wp-continuous} \& \cref{thm:wlp-continuous}}\\
				\qeq& \wp{\WHILEDO{\guard}{C'}}{f} \mmadd \wlp{\WHILEDO{\guard}{C'}}{\mnull}
			\end{align*} 
		\end{itemize}
	\end{itemize}
\end{proof}

\subsection{Proof of Theorem \ref{thm:wlp-operational-semantics} (Soundness of $\wlpsymbol$ w.r.t.\ operational semantics)}
\label{app:proof:wlp-soundness}

\begin{theorem}[\cref{thm:wlp-operational-semantics}]
	Let the monoid module $\Smodule$ over $\Monoid$ be $\omega$-\emph{bi}continuous.
    For any $\wgcl$ program $C$ and initial state $\state \in \States$,
    \begin{align*}
		\wlp{C}{\mnull}(\state) 
		\qeq \bigmeet_{n \in \Nats}
			\mbigadd_{\compPath \in \pathsOfLengthStartingIn{n}{\sosStateAbbr{C}{\state}} }
				\pathWeight(\compPath) \smop \mTop
		~.
    \end{align*}
\end{theorem}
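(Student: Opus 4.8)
The plan is to mirror the soundness proof for $\wpsymbol$ (\cref{thm:wp-operational-semantics}): I would identify the right-hand side as a distinguished solution of the one-step operational recursion underlying \cref{def:wgcl-functional} and then sandwich it against $\wlp{C}{\mnull}$ by two inequalities. For a configuration $\conf$ set $V_n(\conf) \coloneqq \mbigadd_{\compPath \in \pathsOfLengthStartingIn{n}{\conf}} \pathWeight(\compPath) \smop \sTop$, so that the claimed value is $\Theta(C)(\state) \coloneqq \bigmeet_{n} V_n(\sosStateAbbr{C}{\state})$. Since $\compGraph$ is a finitely branching forest, splitting a length-$(n{+}1)$ path at its first transition gives the backward recursion $V_{n+1}(\conf) = \mbigadd_{\sosTrans{\conf}{a}{\conf'}} a \smop V_n(\conf')$ with $V_0(\conf) = \sTop$. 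First I would use this to show $(V_n(\conf))_n$ is a \emph{descending} chain: because $\sTop$ is the greatest element, $\mbigadd_{a} a \smop \sTop \natord \sTop$, and monotonicity of $\smop$ and $\madd$ propagates this to $V_{n+1} \natord V_n$, so the infimum exists in the $\omega$-cocontinuous module of weightings.

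The second step is a \emph{squeeze} discarding terminating prefixes. Let $B_n(\conf)$ be the same sum restricted to length-$n$ paths ending in a \emph{non-final} configuration. Only non-final configurations have successors, so the recursion for $V_{n+1}$ really ranges only over those length-$n$ paths, giving $V_{n+1} \natord B_n \natord V_n$ for all $n$; taking infima forces $\bigmeet_n V_n = \bigmeet_n B_n$. This makes precise the informal claim that terminating paths are ``ruled out in the limit'' and is what turns the crude over-approximation by $\sTop$ into exactly the nonterminating content.

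Next I would prove that $\Theta$ is the \emph{greatest} solution of the recursion at postweighting $\mnull$ (note the convention $\Theta(\termState) = \mnull$ is forced). That $\Theta$ is a solution follows by pushing $\bigmeet_n$ through the \emph{finite} sum in the recursion using $\omega$-cocontinuity of $\madd$ and $\smop$. That it is greatest is dual to \cref{thm:least-wgcl-functional}: for any solution $\Phi$, unfolding its defining equation $n$ times collapses every terminated branch to $\mnull$ and leaves $\Phi(C)(\state) = \mbigadd_{\compPath \in B^n} \pathWeight(\compPath) \smop \Phi(\mathrm{end}(\compPath))$ over the running length-$n$ paths; bounding $\Phi(\cdot) \natord \sTop$ yields $\Phi(C)(\state) \natord B_n(\sosStateAbbr{C}{\state})$ and hence $\Phi \natord \Theta$ after taking infima. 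Since $\wlp{\cdot}{\mnull}$ satisfies this very recursion --- checked exactly as in \cref{thm:wp-functional}, the syntactic one-step unfoldings being identical for $\wpsymbol$ and $\wlpsymbol$ --- I immediately obtain $\wlp{C}{\mnull} \natord \Theta(C)$.

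For the converse $\Theta \natord \wlp{\cdot}{\mnull}$ I would induct on program structure, leaning heavily on the decomposition \cref{thm:wlp-decomp} together with \cref{thm:wp-operational-semantics} to rewrite $\wpsymbol$ as its path sum. The compositional cases rest on operational identities for $\Theta$ dual to \cref{thm:operational-composition}, e.g. $\Theta(\COMPOSE{C_1}{C_2}) = \op{C_1}{\Theta(C_2)} \madd \Theta(C_1)$, which by the induction hypothesis and \cref{thm:wlp-decomp} collapse to $\wlp{C_1}{\wlp{C_2}{\mnull}}$. The loop is the crux and I expect it to be the main obstacle. Using the composition identity for $\Theta$ and the analogue of \cref{thm:operational-looping}, one verifies that $\Theta(\WHILEDO{\guard}{C})$ is a genuine \emph{fixed point} of $\charwlp{\guard}{C}{\mnull}$: for $\state \models \guard$ the $\sosRule{while}$ step gives $\Theta(\WHILEDO{\guard}{C})(\state) = \op{C}{\Theta(\WHILEDO{\guard}{C})}(\state) \madd \Theta(C)(\state)$, which by the body's hypothesis and \cref{thm:wlp-decomp} equals $\wlp{C}{\Theta(\WHILEDO{\guard}{C})}(\state)$, while $\state \not\models \guard$ gives $\mnull$ via $\sosRule{break}$. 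Park coinduction (\cref{app:thm:kleene-park}) then yields $\Theta(\WHILEDO{\guard}{C}) \natord \gfpop\,\charwlp{\guard}{C}{\mnull} = \wlp{\WHILEDO{\guard}{C}}{\mnull}$, and the two inequalities give equality; the degenerate UCT case (no infinite path, infimum collapsing to $\mnull$) is covered by König's lemma. The delicate point throughout is keeping the two distinct limiting processes --- the gfp iteration $\charwlp{}{}{\mnull}^{k}(\sTop)$ of \cref{thm:wlp-continuous} versus the path-length infimum $\bigmeet_n V_n$ --- safely inside the $\omega$-bicontinuous regime, where \cref{thm:chains-up-and-down} licenses the relevant exchange of joins and meets.
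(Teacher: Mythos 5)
Your proposal is correct, and at its core it assembles exactly the ingredients of the paper's own proof: the descending-chain lemma for the partial sums (\cref{thm:trailing-paths-form-omega-chain}), the one-step recursion for the path-sum infimum obtained by pushing the meet through the finite successor sum via $\omega$-cocontinuity (\cref{thm:liberal-wgcl-functional}), the composition identity $\olp{\COMPOSE{C_1}{C_2}} = \op{C_1}{\olp{C_2}} \mmadd \olp{C_1}$ (\cref{thm:olp-composition}), the decomposition \cref{thm:wlp-decomp} combined with \cref{thm:wp-operational-semantics} in the compositional cases, and, for loops, the verification that the path-sum quantity is a fixed point of $\charwlp{\guard}{C}{\mnull}$ so that the gfp property yields $\olp{C} \natord \wlp{C}{\mnull}$. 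The one genuine difference is organizational: you prove the converse inequality $\wlp{C}{\mnull} \natord \olp{C}$ once and for all, for every program uniformly, by characterizing the right-hand side as the \emph{greatest} solution of the recursion at postweighting $\mnull$ (via your $B_n$-squeeze and the finite-unfolding bound), dually to \cref{thm:least-wgcl-functional}; the paper instead proves the equality $\wlp{C}{\mnull} = \olp{C}$ by a single structural induction (\cref{thm:wlp-zero-is-olp}) and embeds the corresponding finite-unfolding argument --- an induction on path length $\ell$ bounding $\wlp{C}{\mnull}$ by the length-$\ell$ partial sums --- inside the loop case only, where it is actually needed. Your arrangement is more symmetric with the paper's $\wpsymbol$ soundness proof (least-functional lemma plus a one-directional structural induction) and isolates the over-approximation-by-$\mTop$ argument as a reusable principle; the paper's arrangement keeps everything within one induction and never states the greatest-solution principle explicitly, although it too must invoke at that point that $\ext{\wlpsymbol}$ is a $\wgcl$-functional, exactly as you do. The underlying computations are the same, so this is a difference of bookkeeping rather than of mathematical substance.
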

\begin{proof}
    We use a few auxiliary definitions and lemmas that can be found below.
	First, the right hand side is well-defined by \cref{thm:trailing-paths-form-omega-chain}.
	Now, the claim is exactly \cref{thm:wlp-zero-is-olp} by \cref{def:liberal-operational-functional}.
\end{proof}

\begin{lemma}
	\label{thm:trailing-paths-form-omega-chain}
	For any $\wgcl$ program $C$ and state $\state \in \States$ the sequence $(s_n)_{n \in \Nats}$, where
	\[
		s_n \qcoloneqq \mbigadd_{\compPath \in \pathsOfLengthStartingIn{n}{\sosStateAbbr{C}{\state}}}
			\pathWeight(\compPath) \smop \mTop ~,
	\]
	is a descending $\omega$-chain.
\end{lemma}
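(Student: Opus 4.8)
The plan is to prove the single monotonicity step $s_{n+1} \natord s_n$ for every $n \in \Nats$, which is exactly what it means for $(s_n)_{n \in \Nats}$ to be a descending $\omega$-chain. Well-definedness of each $s_n$ needs no separate argument: the computation forest is finitely branching with a single root $\sosStateAbbr{C}{\state}$, so $\pathsOfLengthStartingIn{n}{\sosStateAbbr{C}{\state}}$ is in fact a \emph{finite} set (at most $2^n$ paths), and hence $s_n$ is a finite module sum.

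First I would decompose length-$(n{+}1)$ paths by their prefixes. Every $\compPath \in \pathsOfLengthStartingIn{n+1}{\sosStateAbbr{C}{\state}}$ arises uniquely as a one-configuration extension $\compPath' \conf'$ of its length-$n$ prefix $\compPath' \in \pathsOfLengthStartingIn{n}{\sosStateAbbr{C}{\state}}$, where $\conf' \in \succConfs(\conf)$ and $\conf$ is the last configuration of $\compPath'$. This gives a bijection between $\pathsOfLengthStartingIn{n+1}{\sosStateAbbr{C}{\state}}$ and the pairs $(\compPath', \conf')$, and since all sums are finite, commutativity and associativity of $\madd$ let me regroup $s_{n+1}$ as a sum over prefixes $\compPath'$ of inner sums over successors $\conf'$. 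Using $\pathWeight(\compPath' \conf') = \pathWeight(\compPath') \monop \weight{\conf}{\conf'}$, associativity of the scalar action (\cref{def:monoid-module}), and the extended distributive law (\cref{app:sec:modules}) to pull $\pathWeight(\compPath')$ out of the inner sum, I arrive at
\[
	s_{n+1} \eeq \mbigadd_{\compPath' \in \pathsOfLengthStartingIn{n}{\sosStateAbbr{C}{\state}}} \pathWeight(\compPath') \smop \Bigl( \mbigadd_{\conf' \in \succConfs(\conf)} \weight{\conf}{\conf'} \smop \mTop \Bigr) ~.
\]

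The crux is then a uniform one-line bound. The bracketed module element is $\natord \mTop$ \emph{trivially}, simply because $\mTop$ is the greatest element of $\Smodule$; this handles final configurations (empty successor set, inner sum $\mnull$) and genuinely branching configurations (two successors) in one stroke, without any case distinction or appeal to idempotence. Monotonicity of $\smop$ (\cref{app:thm:smodule-natordprops}) then yields $\pathWeight(\compPath') \smop (\cdots) \natord \pathWeight(\compPath') \smop \mTop$ for each prefix, and monotonicity of $\madd$ summed over the finitely many prefixes gives $s_{n+1} \natord s_n$, as desired. I expect the only genuine bookkeeping obstacle to be making the prefix/successor re-indexing precise and carefully invoking the extended distributive law together with term-wise monotonicity of the module sum; the algebraic core, by contrast, collapses to the observation that $\mTop$ absorbs everything under $\natord$.
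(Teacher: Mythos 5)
Your proposal is correct and follows essentially the same route as the paper's proof: decompose each length-$(n{+}1)$ path as a one-step extension of its length-$n$ prefix, bound the inner sum over successors by $\mTop$ (simply because $\mTop$ is the greatest element of $\Smodule$), and conclude via monotonicity of $\smop$ and $\madd$. The extra bookkeeping you supply (finiteness of the path sets, the explicit prefix bijection, distributivity to pull $\pathWeight(\compPath')$ out of the inner sum) is left implicit in the paper's calculation but does not change the argument.
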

\begin{proof}
	Let $n \in \Nats$.
	Given $\compPath \in \pathsOfLengthStartingIn{n}{\sosStateAbbr{C}{\state}}$,
	\[
		\mbigadd_{\conf \in \succConfs(\lastState(\compPath))}
			\weight{\lastState(\compPath)}{\conf} \smop \mTop
		\qnatord \mTop ~.
	\]
	As the module's scalar multiplication $\smop$ is $\omega$-continuous and thus monotone in the second argument, it follows
	\begin{align*}
		s_{n+1}
		&\qeq
		\mbigadd_{\compPath \in \pathsOfLengthStartingIn{n+1}{\sosStateAbbr{C}{\state}}}
			\pathWeight(\compPath) \smop \mTop \\
		&\qeq \mbigadd_{\compPath \in \pathsOfLengthStartingIn{n}{\sosStateAbbr{C}{\state}}}
			\pathWeight(\compPath) \smop \mbigadd_{\conf \in \succConfs(\lastState(\compPath))}
				\weight{\lastState(\compPath)}{\conf} \smop \mTop \\
		&\qnatord \mbigadd_{\compPath \in \pathsOfLengthStartingIn{n}{\sosStateAbbr{C}{\state}}}
			\pathWeight(\compPath) \smop \mTop \\
		&\qeq s_n
		~.
	\end{align*}
\end{proof}

\begin{definition}
    \label{def:liberal-operational-functional}
    The map $\olpsymbol\colon \wgcl \to \Weightings$ is defined for any $\wgcl$ program $C$ and state $\state \in \States$ via
    \[
		\olp{C}(\state)
		\qcoloneqq \bigmeet_{n \in \Nats}
			\mbigadd_{\compPath \in \pathsOfLengthStartingIn{n}{\sosStateAbbr{C}{\state}}}
				\pathWeight(\compPath) \smop \mTop ~.
    \]
	It is well-defined by \cref{thm:trailing-paths-form-omega-chain}.
    \hfill$\triangle$
\end{definition}

\begin{lemma}
	\label{thm:liberal-wgcl-functional}
	The map $\olpsymbol$ behaves like a \emph{$\wgcl$-functional} (see \cref{def:wgcl-functional}):\footnote{But it is not an (extended) weighting transformer.}
	\[
		\olp{C} \eeq \mbigadd_{\sosTrans{\sosState{C}{\state}{n}{\beta}}{a}{\sosState{C'}{\state'}{n+1}{\beta'}}} a \smop \olp{C'} ~.
	\]
\end{lemma}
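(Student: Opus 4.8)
The plan is to imitate the proof that $\ext{\opsymbol}$ is a $\wgcl$-functional (\cref{thm:least-wgcl-functional}), but now carrying the outer infimum over path lengths rather than a single infinite sum. Fix an arbitrary state $\state$, abbreviate $\conf_0 = \sosStateAbbr{C}{\state}$, and for a program $D$ and state $\tau$ write $s_n^{D,\tau} \coloneqq \mbigadd_{\compPath \in \pathsOfLengthStartingIn{n}{\sosStateAbbr{D}{\tau}}} \pathWeight(\compPath) \smop \mTop$, so that $\olp{D}(\tau) = \bigmeet_{n \in \Nats} s_n^{D,\tau}$ by \cref{def:liberal-operational-functional}. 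By \cref{thm:trailing-paths-form-omega-chain} each sequence $(s_n^{D,\tau})_{n}$ is a descending $\omega$-chain; dropping its first term does not change the infimum, so it suffices to analyze $\bigmeet_{n} s_{n+1}^{C,\state}$.

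First I would split each length-$(n{+}1)$ path $\conf_0 \conf_1 \ldots \conf_{n+1}$ out of $\conf_0$ at its first edge $\sosTrans{\conf_0}{a}{\conf_1}$, just as in \cref{thm:least-wgcl-functional}. Because $\compGraph$ is finitely branching there are only finitely many paths of any fixed length, so these are ordinary finite module sums; using $\pathWeight(\conf_0 \ldots \conf_{n+1}) = a \monop \pathWeight(\conf_1 \ldots \conf_{n+1})$, associativity of the scalar action, and distributivity of $\smop$ over $\madd$, I obtain $s_{n+1}^{C,\state} = \mbigadd_{\sosTrans{\conf_0}{a}{\conf_1}} a \smop \bigl( \mbigadd_{\compPath \in \pathsOfLengthStartingIn{n}{\conf_1}} \pathWeight(\compPath) \smop \mTop \bigr)$. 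Since the SOS-rules of \cref{fig:sosrules} depend only on the program and state components of a configuration — the step counter and branching history being mere bookkeeping that enforces the forest structure — the subtree rooted at $\conf_1 = \sosState{C'}{\state'}{1}{\beta'}$ is isomorphic, weight by weight, to the tree rooted at $\sosStateAbbr{C'}{\state'}$. This translation invariance (already used implicitly in \cref{thm:least-wgcl-functional}) identifies the inner sum with $s_n^{C',\state'}$, giving $s_{n+1}^{C,\state} = \mbigadd_{\sosTrans{\conf_0}{a}{\conf_1}} a \smop s_n^{C',\state'}$.

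The crux — and the step I expect to be the main obstacle — is to commute the outer infimum $\bigmeet_{n}$ past the transition sum and the scalar action. Here finite branching is decisive: the sum over transitions out of $\conf_0$ is \emph{finite}, so I avoid any infinitary exchange of limits and sums (which is precisely what made the $\opsymbol$-case automatic, its outer operation being a supremum of partial sums). For each fixed transition $(a \smop s_n^{C',\state'})_{n}$ is a descending chain, and $\omega$-cocontinuity of the scalar action yields $\bigmeet_{n} (a \smop s_n^{C',\state'}) = a \smop \bigmeet_{n} s_n^{C',\state'} = a \smop \olp{C'}(\state')$; the descending-chain analogue of \cref{thm:madd-cont-both-args}, valid by $\omega$-cocontinuity of $\madd$, then lets me pull $\bigmeet_{n}$ through the finite sum. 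This is exactly the point at which $\omega$-\emph{bi}continuity of $\Smodule$ (specifically cocontinuity) is needed and where the argument genuinely departs from the $\opsymbol$ case.

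Combining these observations gives $\olp{C}(\state) = \mbigadd_{\sosTrans{\conf_0}{a}{\conf_1}} a \smop \olp{C'}(\state')$. Finally, the same translation invariance shows that for every $n$ and $\beta$ the transitions out of $\sosState{C}{\state}{n}{\beta}$ are in weight- and target-preserving bijection with those out of $\conf_0 = \sosState{C}{\state}{0}{\epsilon}$, so the sum may be reindexed as in the statement; reading the resulting identity pointwise over $\state$ completes the proof.
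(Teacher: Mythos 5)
Your proposal is correct and follows essentially the same route as the paper's own proof: unfold the definition, shift the index of the descending chain, split each path at its first transition, factor via associativity and distributivity, and then use $\omega$-cocontinuity of $\madd$ and $\smop$ to pull the infimum through the (finite) transition sum. Your explicit treatment of the translation invariance of the computation forest and of why finite branching makes the meet/sum exchange unproblematic merely spells out steps the paper leaves implicit, so there is no substantive difference.
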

\begin{proof}
	\begin{align*}
		& \olp{C}(\state) \\
		\cmmnt{Definition} \\
		\qeq& \bigmeet_{n \in \Nats} \quad
			\mbigadd_{\compPath \in \pathsOfLengthStartingIn{n}{\sosStateAbbr{C}{\state}}}
				\pathWeight(\compPath) \smop \mTop \\
		\cmmnt{\cref{thm:trailing-paths-form-omega-chain}} \\
		\qeq& \bigmeet_{\substack{n \in \Nats\\ n \geq 2}} \quad
			\mbigadd_{\conf_0 \conf_1 \ldots \conf_n \in \pathsOfLengthStartingIn{n}{\sosStateAbbr{C}{\state}}}
				(\weight{\conf_0}{\conf_1} \monop \pathWeight(\conf_1 \ldots \conf_n)) \smop \mTop \\
		\cmmnt{split $\conf_0 \conf_1 \ldots \conf_n$} \\
		\qeq& \bigmeet_{\substack{n \in \Nats\\ n \geq 2}} \quad
			\mbigadd_{\sosTrans{ \sosStateAbbr{C}{\state} }{a}{ \conf_1 }} \quad
				\mbigadd_{\conf_1 \ldots \conf_n \in \pathsOfLengthStartingIn{n-1}{\conf_1}}
					a \smop \pathWeight(\compPath) \smop \mTop \\
		\cmmnt{Distributivity} \\
		\qeq& \bigmeet_{n \in \Nats} \quad
			\mbigadd_{\sosTrans{ \sosStateAbbr{C}{\state} }{a}{ \conf_1 } } \quad
				a \smop \mbigadd_{\conf_1 \ldots \conf_n \in \pathsOfLengthStartingIn{n}{\conf_1}}
					\pathWeight(\compPath) \smop \mTop \\
		\cmmnt{$\Smodule$ is $\omega$-cocontinuous} \\
		\qeq& \mbigadd_{\sosTrans{ \sosStateAbbr{C}{\state} }{a}{ \conf_1 } } \quad
			a \smop \bigmeet_{n \in \Nats} \quad
				\mbigadd_{\conf_1 \ldots \conf_n \in \pathsOfLengthStartingIn{n}{\conf_1}}
					\pathWeight(\compPath) \smop \mTop \\
		\cmmnt{Definition} \\
		\qeq& \mbigadd_{\sosTrans{\sosStateAbbr{C}{\state}}{a}{\sosState{C'}{\state'}{1}{\beta}}}
			a \smop \olp{C'}
	\end{align*}
\end{proof}

\begin{lemma}
	\label{thm:olp-composition}
	For any $\wgcl$ programs $C_1, C_2$,
	\[
		\olp{\COMPOSE{C_1}{C_2}} \qeq \op{C_1}{ \olp{C_2} } \mmadd \olp{C_1} ~.
		\tag*{$\triangle$}
	\]
\end{lemma}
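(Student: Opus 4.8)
Fix an initial state $\state$ and abbreviate, for a program $C$,
\[
	s_n^C(\state) \qcoloneqq \mbigadd_{\compPath \in \pathsOfLengthStartingIn{n}{\sosStateAbbr{C}{\state}}} \pathWeight(\compPath) \smop \mTop ~,
\]
so that $\olp{C}(\state) = \bigmeet_{n \in \Nats} s_n^C(\state)$ by \cref{def:liberal-operational-functional}. The intuition is that every infinite run of $\COMPOSE{C_1}{C_2}$ is of exactly one kind: it either never leaves the $C_1$-part — contributing the \enquote{genuine} nontermination $\olp{C_1}$ — or the $C_1$-part terminates after finitely many steps and the infinite behaviour originates from $C_2$, contributing $\op{C_1}{\olp{C_2}}$. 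The plan is to make this precise by decomposing each \emph{finite} computation path of length $n$ and then taking the infimum over $n$, using the one-step unfolding already available to us: by the rules \sosRule{seq.\ 1} and \sosRule{seq.\ 2} of \cref{fig:sosrules} the successors of $\sosStateAbbr{\COMPOSE{C_1}{C_2}}{\state}$ are in weight-preserving bijection with those of $\sosStateAbbr{C_1}{\state}$, a step of $C_1$ to $\termState$ inducing a step of $\COMPOSE{C_1}{C_2}$ to $C_2$ and a step to $C_1' \neq \termState$ inducing one to $\COMPOSE{C_1'}{C_2}$. Feeding this into the $\wgcl$-functional identities for $\olpsymbol$ (\cref{thm:liberal-wgcl-functional}) and $\opsymbol$ (\cref{thm:least-wgcl-functional}), and using that a final configuration is a leaf — so that its operational liberal weighting collapses to $\mnull$ and $a \smop \mnull = \mnull$ — one checks that both $\olp{\COMPOSE{C_1}{C_2}}$ and $\op{C_1}{\olp{C_2}} \mmadd \olp{C_1}$ obey one and the same one-step recurrence over the residuals of $C_1$. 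This serves as the guiding identity and the algebraic cross-check.

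\textbf{The two limits.}
To close the argument rigorously I would not recurse but decompose path lengths. Splitting each length-$n$ path of $\COMPOSE{C_1}{C_2}$ according to the step $m \le n$ at which its $C_1$-part terminates (or recording that it never does within $n$ steps) yields, by regrouping countable sums as licensed by the $\omega$-finitary definition \eqref{eq:smodule-infinite-sums}, a pointwise identity $s_n^{\COMPOSE{C_1}{C_2}} = X_n \mmadd Y_n$, where $X_n$ gathers the paths still inside $C_1$ and $Y_n$ gathers the terminating $C_1$-prefixes of length $m$ extended by length-$(n-m)$ runs of $C_2$. For the first part I would prove the squeeze $s_{n+1}^{C_1} \nnatord X_n \nnatord s_n^{C_1}$: the right inequality holds because $X_n$ merely omits the terminal paths from $s_n^{C_1}$, and the left one because each $(n{+}1)$-path has a unique non-terminal $n$-prefix $q$ whose one-step extensions sum to at most $\pathWeight(q) \smop \mTop$, since $\mbigadd_{\conf \in \succConfs} \weight{}{}\smop \mTop \nnatord \mTop$ exactly as in \cref{thm:trailing-paths-form-omega-chain} (plus monotonicity, \cref{app:thm:smodule-natordprops}). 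Taking infima and shifting the index gives $\bigmeet_n X_n = \olp{C_1}$. For $Y_n$ I would let $n \to \infty$, replacing the inner $s_{n-m}^{C_2}$ by its infimum $\olp{C_2}$ via $\omega$-cocontinuity of $\smop$ and $\madd$, and identifying the supremum of the resulting finite partial sums over all terminating $C_1$-paths with $\op{C_1}{\olp{C_2}}$ through \eqref{eq:smodule-infinite-sums}.

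\textbf{Main obstacle.}
The step I expect to be hardest — and the reason the tempting structural induction on $C_1$ fails — is twofold. First, after a loop step the residual $\COMPOSE{C}{\WHILEDO{\guard}{C}}$ is \emph{not} structurally smaller than $\WHILEDO{\guard}{C}$, so the one-step recurrence from the first paragraph cannot be discharged by induction; this forces the quantitative, infimum-based route. Second, the two limits above are of \emph{opposite} monotonicity: the terminating-prefix contributions grow (an \emph{ascending} chain whose join is $\op{C_1}{\olp{C_2}}$), whereas the stays-in-$C_1$ contributions shrink (a \emph{descending} chain whose meet is $\olp{C_1}$), while their sum $s_n^{\COMPOSE{C_1}{C_2}}$ is itself descending. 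Combining them therefore requires exactly the hypotheses of \cref{thm:chains-up-and-down}, and the care lies in arranging the decomposition $s_n^{\COMPOSE{C_1}{C_2}} = (\text{ascending}) \mmadd (\text{descending})$ so that lemma applies and lets me read off
\[
	\bigmeet_{n \in \Nats} s_n^{\COMPOSE{C_1}{C_2}}(\state) \qeq \op{C_1}{\olp{C_2}}(\state) \mmadd \olp{C_1}(\state) ~,
\]
which is the claimed identity.
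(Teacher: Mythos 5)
Your proposal takes a genuinely different route from the paper's proof, which establishes the identity by one-step unfolding: it combines the $\wgcl$-functional identities (\cref{thm:liberal-wgcl-functional} and \cref{thm:least-wgcl-functional}) with a case analysis on whether the first transition of $C_1$ reaches $\termState$ (rules \sosRule{seq.\ 1}/\sosRule{seq.\ 2}), applying the claim to the residual program --- i.e.\ the paper recurses on precisely the recurrence that you relegate to a \enquote{cross-check} and reject on well-foundedness grounds. Your alternative --- the decomposition $s_n^{\COMPOSE{C_1}{C_2}} = X_n \mmadd Y_n$ and the squeeze $s_{n+1}^{C_1} \nnatord X_n \nnatord s_n^{C_1}$ giving $\bigmeet_n X_n = \olp{C_1}$ --- is sound as far as it goes. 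But there is a genuine gap at exactly the step you flag as needing \enquote{care}: \cref{thm:chains-up-and-down} does not apply to your decomposition, and you exhibit no decomposition to which it does. Writing $Y_n = \mbigadd_{m \leq n} \mbigadd_{\compPath} \pathWeight(\compPath) \smop s_{n-m}^{C_2}(\lastState(\compPath))$ (sum over terminal $C_1$-prefixes $\compPath$ of length $m$), the sequence $(Y_n)_{n}$ is \emph{not} ascending: increasing $n$ adds a new summand but simultaneously shrinks every inner factor $s_{n-m}^{C_2}$, so consecutive $Y_n$ are in general incomparable. Your assertion that \enquote{the terminating-prefix contributions grow, an ascending chain whose join is $\op{C_1}{\olp{C_2}}$} is true only of the modified sums $a_j \coloneqq \mbigadd_{m \leq j} \mbigadd_{\compPath} \pathWeight(\compPath) \smop \olp{C_2}(\lastState(\compPath))$, but for those $X_n \mmadd a_n \neq s_n^{\COMPOSE{C_1}{C_2}}$, so the hypotheses of \cref{thm:chains-up-and-down} (an \emph{exact} splitting into ascending plus descending) fail again; and the obvious repair --- splitting composed paths into terminal and non-terminal ones --- also fails, because terminal paths of length \emph{exactly} $n$ do not accumulate, so that part is not ascending either.

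This matters because the meet/decomposition interchange \emph{is} the entire mathematical content of the lemma; everything else in your plan is bookkeeping. The gap can be closed, but by two one-sided arguments rather than by \cref{thm:chains-up-and-down}. For one direction, $\olp{C_2} \nnatord s_k^{C_2}$ gives $X_n \mmadd a_j \nnatord s_n^{\COMPOSE{C_1}{C_2}}$ for all $n \geq j$; taking the meet over $n$ ($\omega$-cocontinuity of $\madd$ with the constant $a_j$) and then the join over $j$ ($\omega$-continuity with the constant $\olp{C_1}$) yields $\op{C_1}{\olp{C_2}} \mmadd \olp{C_1} \nnatord \olp{\COMPOSE{C_1}{C_2}}$. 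For the converse you need an inequality absent from your write-up, namely that $X_n$ plus the contribution of all prefixes leaving $C_1$ \emph{after} time $j$ is $\nnatord X_j$; this follows from the trailing-path argument of \cref{thm:trailing-paths-form-omega-chain} applied in the composed forest to the paths still inside $C_1$ at time $j$. It gives $s_n^{\COMPOSE{C_1}{C_2}} \nnatord X_j \mmadd \mbigadd_{m \leq j} \mbigadd_{\compPath} \pathWeight(\compPath) \smop s_{n-m}^{C_2}(\lastState(\compPath))$, hence, taking meets over $n$ and using cocontinuity on the finitely many inner terms, $\olp{\COMPOSE{C_1}{C_2}} \nnatord X_j \mmadd a_j \nnatord X_j \mmadd \op{C_1}{\olp{C_2}}$ for every $j$, and the meet over $j$ finishes. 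Without these two arguments (or an exact ascending-plus-descending decomposition, which you do not provide), the proof does not close.
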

\begin{proof}
	Consider any state $\state \in \States$.
    We employ structural induction on the rules from \cref{fig:sosrules}.
    
    The following forms the induction base.
    The case $\sosTrans{ \sosState{C_1}{\state}{n}{\beta} }{a}{ \sosState{\termState}{\state'}{n+1}{\beta'} }$.
	First, $\olp{C_1} = \mnull$ as $\pathsOfLengthStartingIn{n}{\sosStateAbbr{C_1}{\state}} = \emptyset$ by \cref{fig:sosrules} for any $n \geq 2$.
	Hence,
    \begin{align*}
		& \olp{\COMPOSE{C_1}{C_2}}(\state) \\
		\cmmnt{\cref{thm:liberal-wgcl-functional}} \\
		\eeq& \mbigadd_{\sosTrans{\sosState{\COMPOSE{C_1}{C_2}}{\state}{n}{\beta}}{a}{\sosState{C'}{\state'}{n+1}{\beta'}}}
			a \smop \olp{C'}(\state') \\
		\cmmnt{\cref{fig:sosrules} (seq.\ 1)} \\
		\eeq& \mbigadd_{\sosTrans{\sosState{C_1}{\state}{n}{\beta}}{a}{\sosState{\termState}{\state'}{n+1}{\beta'}}}
			a \smop \olp{C_2}(\state') \\
		\cmmnt{$\wgcl$-functional} \\
		\eeq& \op{C_1}{\olp{C_2}}(\state) \\
		\cmmnt{$\olp{C_1} = \mnull$} \\
		\eeq& \op{C_1}{\olp{C_2}}(\state) \mmadd \olp{C_1}(\state)
    \end{align*}
    
    The following forms the induction step.
    The case $\sosTrans{ \sosState{C_1}{\state}{n}{\beta} }{a}{ \sosState{C_1'}{\state'}{n+1}{\beta'} }$.
    \begin{align*}
		& \olp{\COMPOSE{C_1}{C_2}}(\state) \\
		\cmmnt{\cref{thm:liberal-wgcl-functional}} \\
		\eeq& \mbigadd_{\sosTrans{\sosState{\COMPOSE{C_1}{C_2}}{\state}{n}{\beta}}{a}{\sosState{C'}{\state'}{n+1}{\beta'}}}
			a \smop \olp{C'}(\state') \\
		\cmmnt{\cref{fig:sosrules} (seq.\ 1)} \\
		\eeq& \mbigadd_{\sosTrans{\sosState{C_1}{\state}{n}{\beta}}{a}{\sosState{C_1'}{\state'}{n+1}{\beta'}}}
			a \smop \olp{\COMPOSE{C_1'}{C_2}}(\state') \\
		\cmmnt{Induction Hypothesis} \\
		\eeq& \mbigadd_{\sosTrans{\sosState{C_1}{\state}{n}{\beta}}{a}{\sosState{C_1'}{\state'}{n+1}{\beta'}}}
			a \smop \big( \op{C_1'}{\olp{C_2}} \mmadd \olp{C_1'} \big)(\state') \\
		\cmmnt{Distributivity} \\
		\eeq& \mbigadd_{\sosTrans{\sosState{C_1}{\state}{n}{\beta}}{a}{\sosState{C_1'}{\state'}{n+1}{\beta'}}}
			a \smop \op{C_1'}{\olp{C_2}}(\state') \\
		\mmadd& \mbigadd_{\sosTrans{\sosState{C_1}{\state}{n}{\beta}}{a}{\sosState{C_1'}{\state'}{n+1}{\beta'}}}
			a \smop \olp{C_1'}(\state') \\
		\cmmnt{$\wgcl$-functional; \cref{thm:liberal-wgcl-functional}} \\
		\eeq& \op{C_1}{\olp{C_2}{f}}(\state) \mmadd \olp{C_1}(\state)
    \end{align*}
\end{proof}

\begin{lemma}
	\label{thm:wlp-zero-is-olp}
	For any $\wgcl$ program $C$ and state $\state \in \States$, it is $\wlp{C}{\mnull} = \olp{C}$.
    \hfill$\triangle$
\end{lemma}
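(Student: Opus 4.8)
The plan is to prove $\wlp{C}{\mnull} = \olp{C}$ by structural induction on $C$, leaning on the two lemmas just established for $\olpsymbol$ — the $\wgcl$-functional property \cref{thm:liberal-wgcl-functional} and the composition law \cref{thm:olp-composition} — together with the soundness of $\wpsymbol$, which by \cref{thm:wp-operational-semantics} gives $\opsymbol = \wpsymbol$, and the decomposition $\wlp{D}{f} = \wp{D}{f} \madd \wlp{D}{\mnull}$ from \cref{thm:wlp-decomp}. The base cases $\ASSIGN{x}{E}$ and $\WEIGH{a}$ are immediate: here $\wlp{C}{\mnull} = \mnull$, and since there are no computation paths of length $\geq 2$, the descending chain defining $\olp{C}$ is eventually $\mnull$, so its infimum is $\mnull$ as well.

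For $\COMPOSE{C_1}{C_2}$ I would chain the definition of $\wlpsymbol$, the inductive hypotheses on $C_1$ and $C_2$, \cref{thm:wlp-decomp}, and finally \cref{thm:olp-composition}, computing $\wlp{\COMPOSE{C_1}{C_2}}{\mnull} = \wlp{C_1}{\wlp{C_2}{\mnull}} = \wlp{C_1}{\olp{C_2}} = \wp{C_1}{\olp{C_2}} \madd \wlp{C_1}{\mnull} = \op{C_1}{\olp{C_2}} \madd \olp{C_1} = \olp{\COMPOSE{C_1}{C_2}}$. The conditional and branching cases are even simpler: reading off the one-step transitions through \cref{thm:liberal-wgcl-functional} turns $\olp{\ITE{\guard}{C_1}{C_2}}$ into $\iverson{\guard} \ivop \olp{C_1} \madd \iverson{\neg\guard} \ivop \olp{C_2}$ and $\olp{\BRANCH{C_1}{C_2}}$ into $\olp{C_1} \madd \olp{C_2}$, whereupon the inductive hypotheses match the corresponding $\wlpsymbol$-rules.

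The loop $\WHILEDO{\guard}{C'}$ is the crux, and I would split it into two inequalities. For $\olp{C} \natord \wlp{C}{\mnull}$, I would show that $\olp{C}$ is a fixed point of $\charwlp{\guard}{C'}{\mnull}$: applying \cref{thm:liberal-wgcl-functional} to the (while) and (break) rules and then \cref{thm:olp-composition} to $\COMPOSE{C'}{C}$ yields, on states with $\state \models \guard$, the identity $\olp{C} = \op{C'}{\olp{C}} \madd \olp{C'}$, which by $\opsymbol = \wpsymbol$, the inductive hypothesis $\olp{C'} = \wlp{C'}{\mnull}$, and \cref{thm:wlp-decomp} rewrites to $\olp{C} = \wlp{C'}{\olp{C}}$; on states with $\state \not\models \guard$ both sides are $\mnull$; together these give $\olp{C} = \charwlp{\guard}{C'}{\mnull}(\olp{C})$. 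Since $\wlp{C}{\mnull}$ is the \emph{greatest} fixed point (\cref{app:thm:kleene-park}), $\olp{C} \natord \wlp{C}{\mnull}$ follows.

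The opposite inequality $\wlp{C}{\mnull} \natord \olp{C}$ is the main obstacle, since being a fixed point only bounds $\olp{C}$ from below. To recover it I would prove the auxiliary bound $\wlp{D}{\mnull}(\state) \natord \mbigadd_{\compPath \in \pathsOfLengthStartingIn{n}{\sosStateAbbr{D}{\state}}} \pathWeight(\compPath) \smop \mTop$ for every program $D$, state $\state$, and $n \in \Nats$, by induction on $n$. The base case $n=0$ holds because the right-hand side is the greatest element $\mTop$. The step unrolls $\wlp{D}{\mnull}$ by one computation step and then applies the inductive hypothesis under the monotone operations $\smop$ and $\madd$, reassembling the length-$(n{+}1)$ path sum exactly as in the calculation proving \cref{thm:liberal-wgcl-functional}. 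Taking the infimum over $n$ yields $\wlp{D}{\mnull} \natord \olp{D}$, and specialising $D = C$ closes the loop case. The one delicate point is the one-step unrolling law for $\ext{\wlpsymbol}$ (with the convention $\wlp{\termState}{\mnull} = \mnull$), which I would either establish in parallel with \cref{thm:liberal-wgcl-functional} or derive directly from the $\wlpsymbol$-rules together with the loop's fixed-point identity.
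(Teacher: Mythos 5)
Your proposal is correct and follows essentially the same route as the paper's proof: structural induction with the same base cases, the same chain (definition of $\wlpsymbol$, induction hypothesis, \cref{thm:wlp-decomp}, $\opsymbol = \wpsymbol$, and \cref{thm:olp-composition}) for sequential composition, the one-step reading via \cref{thm:liberal-wgcl-functional} for conditionals and branching, and for loops the identical two-inequality argument — $\olp{C}$ is a fixed point of $\charwlp{\guard}{C'}{\mnull}$ hence below the gfp, and the reverse bound by induction on path length against $\olplen{\ell}{D}$, where the paper likewise invokes that $\ext{\wlpsymbol}$ is a $\wgcl$-functional (stated there as ``fully analogous to \cref{thm:wp-functional}''), which is exactly the unrolling law you flag as the delicate point.
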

\begin{proof}
	We employ induction on the structure of $C$.
    
    The following forms the induction base.
    \begin{itemize}
	\item
        The program $C$ is of the form $\ASSIGN{x}{E}$.
		Then, $\pathsOfLengthStartingIn{n}{\sosStateAbbr{C}{\state}} = \emptyset$ by \cref{fig:sosrules} for any $n \geq 2$, hence
		\[
			\wlp{C}{\mnull}
			\eeq \mnull
			\eeq \olp{C}
			~.
		\]

	\item
        The program $C$ is of the form $\WEIGH{a}$.
		Then, $\pathsOfLengthStartingIn{n}{\sosStateAbbr{C}{\state}} = \emptyset$ by \cref{fig:sosrules} for any $n \geq 2$, hence
		\[
			\wlp{C}{\mnull}
			\eeq a \smop \mnull
			\eeq \mnull
			\eeq \olp{C}
			~.
		\]

    \end{itemize}
    
    The following forms the induction step.
    \begin{itemize}
	\item
        The program $C$ is of the form $\COMPOSE{C_1}{C_2}$.
		\begin{align*}
			& \wlp{C}{\mnull} \\
			\cmmnt{Def.\ of $\wlpsymbol$} \\
			\eeq& \wlp{C_1}{ \wlp{C_2}{\mnull} } \\
			\cmmnt{\cref{thm:wlp-decomp}} \\
			\eeq& \wp{C_1}{ \wlp{C_2}{\mnull} } \mmadd \wlp{C_1}{\mnull} \\
			\cmmnt{\cref{thm:wp-operational-semantics}; Induction Hypothesis} \\
			\eeq& \op{C_1}{ \olp{C_2} } \mmadd \olp{C_1} \\
			\cmmnt{\cref{thm:olp-composition}} \\
			\eeq& \olp{\COMPOSE{C_1}{C_2}}
		\end{align*}

	\item
        The program $C$ is of the form $\ITE{\guard}{C_1}{C_2}$.
		Let $\state \in \States$ be a state.
		There are the following two exclusive cases:
        \begin{enumerate}
		\item
            Case $\state \models \guard$.
            \begin{align*}
				& \wlp{C}{\mnull}(\state) \\
				\cmmnt{Def.\ of $\wlpsymbol$} \\
				\eeq& (\iverson{\guard} \ivop \wlp{C_1}{\mnull} \mmadd \iverson{\neg\guard} \ivop \wlp{C_2}{\mnull})(\state) \\
				\cmmnt{Case $\state \models \guard$} \\
				\eeq& \wlp{C_1}{\mnull}(\state) \\
				\cmmnt{\cref{fig:sosrules} (if)} \\
				\eeq& \mbigadd_{\sosTrans{\sosState{C}{\state}{n}{\beta}}{a}{\sosState{C'}{\state'}{n+1}{\beta'}}}
					a \smop \wlp{C'}{\mnull}(\state') \\
				\cmmnt{Induction Hypothesis on $C_1$} \\
				\eeq& \mbigadd_{\sosTrans{\sosState{C}{\state}{n}{\beta}}{a}{\sosState{C'}{\state'}{n+1}{\beta'}}}
					a \smop \olp{C'}(\state') \\
				\cmmnt{\cref{thm:liberal-wgcl-functional}} \\
				\eeq& \olp{C}(\state)
            \end{align*}
            
		\item
            Case $\state \not\models \guard$.
			\begin{align*}
				& \wlp{C}{\mnull}(\state) \\
				\cmmnt{Def.\ of $\wlpsymbol$} \\
				\eeq& (\iverson{\guard} \ivop \wlp{C_1}{\mnull} \mmadd \iverson{\neg\guard} \ivop \wlp{C_2}{\mnull})(\state) \\
				\cmmnt{Case $\state \not\models \neg\guard$} \\
				\eeq& \wlp{C_2}{\mnull}(\state) \\
				\cmmnt{\cref{fig:sosrules} (else)} \\
				\eeq& \mbigadd_{\sosTrans{\sosState{C}{\state}{n}{\beta}}{a}{\sosState{C'}{\state'}{n+1}{\beta'}}}
					a \smop \wlp{C'}{\mnull}(\state') \\
				\cmmnt{Induction Hypothesis on $C_2$} \\
				\eeq& \mbigadd_{\sosTrans{\sosState{C}{\state}{n}{\beta}}{a}{\sosState{C'}{\state'}{n+1}{\beta'}}}
					a \smop \olp{C'}(\state') \\
				\cmmnt{\cref{thm:liberal-wgcl-functional}} \\
				\eeq& \olp{C}(\state)
            \end{align*}

		\end{enumerate}

	\item
        The program $C$ is of the form $\BRANCH{C_1}{C_2}$.
		\begin{align*}
			& \wlp{C}{\mnull}(\state) \\
			\cmmnt{Def.\ of $\wlpsymbol$} \\
			\eeq& \wlp{C_1}{\mnull}(\state) \mmadd \wlp{C_2}{\mnull}(\state) \\
			\cmmnt{Induction Hypothesis} \\
			\eeq& \olp{C_1}(\state) \mmadd \olp{C_2}(\state) \\
			\cmmnt{\cref{fig:sosrules} (l.\ branch), (r.\ branch)} \\
			\eeq& \sbigadd_{\sosTrans{\sosState{C}{\state}{n}{\beta}}{a}{\sosState{C'}{\state'}{n+1}{\beta'}}}
				a \smop \olp{C'}(\state') \\
			\cmmnt{\cref{thm:liberal-wgcl-functional}} \\
			\eeq& \olp{C}(\state)
		\end{align*}

	\item
        The program $C$ is of the form $\WHILEDO{\guard}{C_1}$.
		We show that $\olp{C}$ is a fixed point of the loop characteristic function $\charwlp{\guard}{C}{\mnull}$ for postweight $\mnull$, where
		\begin{align}
			\charwlp{\guard}{C}{\mnull}(X)
			&\qeq \iverson{\neg\guard} \ivop \mnull \mmadd \iverson{\guard} \ivop \wlp{C_1}{X} \nonumber \\
			&\qeq \iverson{\guard} \ivop \wlp{C_1}{X} \nonumber \\
			\cmmnt{\cref{thm:wlp-decomp}} \\
			\label{eq:wlp-charfun-zero}
			&\qeq \iverson{\guard} \ivop (\wp{C_1}{X} \mmadd \wlp{C_1}{\mnull})
			~.
		\end{align}
		Let $\state \in \States$ be a state.
		There are the following two exclusive cases:
        \begin{enumerate}
		\item
            Case $\state \models \guard$.
			\begin{align*}
				& \charwlp{\guard}{C}{\mnull}(\olp{C})(\state) \\
				\cmmnt{(\ref{eq:wlp-charfun-zero}) above} \\
				\eeq& \big( \iverson{\guard} \iivop (\wp{C_1}{ \olp{C} } \mmadd \wlp{C_1}{\mnull}) \big)(\state) \\
				\cmmnt{Case $\state \models \guard$} \\
				\eeq& ( \wp{C_1}{ \olp{C} } \mmadd \wlp{C_1}{\mnull} )(\state) \\
				\cmmnt{\cref{thm:wp-operational-semantics}, Induction Hypothesis on $C_1$} \\
				\eeq& ( \op{C_1}{ \olp{C} } \mmadd \olp{C_1} )(\state) \\
				\cmmnt{\cref{thm:olp-composition}} \\
				\eeq& \olp{\COMPOSE{C_1}{C}}(\state) \\
				\cmmnt{\cref{fig:sosrules} (while)} \\
				\eeq& \sbigadd_{\sosTrans{\sosState{C}{\state}{n}{\beta}}{a}{\sosState{C'}{\state'}{n+1}{\beta'}}}
					a \smop \olp{C'}(\state') \\
				\cmmnt{\cref{thm:liberal-wgcl-functional}} \\
				\eeq& \olp{C}(\state)
			\end{align*}

		\item
            Case $\state \not\models \guard$.
			Then, $\pathsOfLengthStartingIn{n}{\sosStateAbbr{C}{\state}} = \emptyset$ by \cref{fig:sosrules} for any $n \geq 2$, hence $\olp{C}(\state) = \mnull$.
			\begin{align*}
				& \charwlp{\guard}{C}{\mnull}(\olp{C})(\state) \\
				\cmmnt{(\ref{eq:wlp-charfun-zero}) above} \\
				\eeq& \big( \iverson{\guard} \iivop (\wp{C_1}{ \olp{C} } \mmadd \wlp{C_1}{\mnull}) \big)(\state) \\
				\cmmnt{Case $\state \not\models \guard$} \\
				\eeq& \mnull \\
			\end{align*}

		\end{enumerate}
		Overall, we have $\charwlp{\guard}{C}{\mnull}(\olp{C}) = \olp{C}$.
		As $\wlp{C}{\mnull}$ is defined as \emph{greatest} fixed point, $\olp{C} \natord \wlp{C}{\mnull}$.
		Next, we show that the greatest fixed point of $\wlp{C}{\mnull}$ is at most $\olp{C}$.
		To that end, we use induction on the path length $\ell \in \Nats$.
		Denote
		\[
			\olplen{\ell}{C}
			\qcoloneqq \sbigadd_{\compPath \in \pathsOfLengthStartingIn{\ell}{\sosStateAbbr{C}{\state}}}
				\pathWeight(\compPath) \smop \mTop ~.
		\]

		The following forms the induction base.
		Let $\ell = 0$.
		Then, $\pathsOfLengthStartingIn{\ell}{\sosStateAbbr{C}{\state}} = \emptyset$ and
		\[
			\wlp{C}{\mnull} \nnatord \mTop \eeq \olplen{0}{C} ~.
		\]
		
		The following forms the induction step.
		Fully analogous to \cref{thm:wp-functional}, $\ext{\wlpsymbol}$ is a $\wgcl$-functional.
		Let $\ell \in \Nats$ such that $\wlp{C}{\mnull} \natord \olplen{\ell}{C}$, and let $\state \in \States$.
		\begin{align*}
			& \wlp{C}{\mnull}(\state) \\
			\cmmnt{$\wgcl$-functional} \\
			\eeq& \sbigadd_{\sosTrans{\sosState{C}{\state}{n}{\beta}}{a}{\sosState{C'}{\state'}{n+1}{\beta'}}}
				a \smop \wlp{C'}{\mnull}(\state') \\
			\cmmnt{Induction Hypothesis} \\
			\nnatord& \sbigadd_{\sosTrans{\sosState{C}{\state}{\ell}{\beta}}{a}{\sosState{C'}{\state'}{n+1}{\beta'}}}
				a \smop \olplen{\ell}{C'} \\
			\cmmnt{\cref{thm:liberal-wgcl-functional}} \\
			\eeq& \olplen{\ell+1}{C}(\state)
			~.
		\end{align*}
		By Definition of $\meet$, $\wlp{C}{\mnull} \natord \olp{C}$.
		Both inequalities imply $\wlp{C}{\mnull} = \olp{C}$.
    \end{itemize}
\end{proof}


\newpage
\section{Proofs of Section \ref{sec:loop-verification}}
\label{app:proofs5}

\subsection{Proof of \cref{thm:strong_term_for_state_unique_fp}}

We will use the following lemma which is a useful alternative characterization of certain termination:

\begin{lemma}
    \label{thm:koenig}
    $C$ is certainly terminating for initial state $\state \in \States$ iff there exists $b \in \Nats$ such that $\pathsOfLengthStartingIn{n}{\sosStateAbbr{C}{\sigma}} = \emptyset$ for all $n \geq b$, i.e. the length of all computation paths starting in $\sosStateAbbr{C}{\sigma}$ is bounded.
\end{lemma}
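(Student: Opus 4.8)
The plan is to read this as an instance of König's infinity lemma, applied to the single tree $T_{C,\sigma}$ of the computation forest $\compGraph$ rooted at the initial configuration $\sosStateAbbr{C}{\sigma}$. The crucial structural input is that $\compGraph$ is finitely branching, as already observed in \cref{sec:operational-semantics}; every vertex of $T_{C,\sigma}$ therefore has only finitely many children, and the subtree hanging below $\sosStateAbbr{C}{\sigma}$ is exactly the object on which König's argument will run.

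First I would record a small monotonicity fact about path lengths. Any prefix $\conf_0 \cdots \conf_m$ (with $m \leq n$) of a computation path $\conf_0 \cdots \conf_n$ is itself a computation path starting in the same initial configuration, so $\pathsOfLengthStartingIn{n}{\sosStateAbbr{C}{\sigma}} \neq \emptyset$ implies $\pathsOfLengthStartingIn{m}{\sosStateAbbr{C}{\sigma}} \neq \emptyset$ for all $m \leq n$. Consequently the stated bounded-length condition---that $\pathsOfLengthStartingIn{n}{\sosStateAbbr{C}{\sigma}} = \emptyset$ for all $n \geq b$---is equivalent to $\pathsOfLengthStartingIn{n}{\sosStateAbbr{C}{\sigma}} = \emptyset$ for \emph{some} $n$, and its negation is equivalent to $\pathsOfLengthStartingIn{n}{\sosStateAbbr{C}{\sigma}} \neq \emptyset$ for \emph{every} $n \in \Nats$.

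For the direction ``bounded $\Rightarrow$ certainly terminating'' I would argue by contraposition: if an infinite computation path $\conf_0 \conf_1 \conf_2 \cdots$ existed, then each of its length-$n$ prefixes would witness $\pathsOfLengthStartingIn{n}{\sosStateAbbr{C}{\sigma}} \neq \emptyset$ for every $n$, so no bound $b$ could exist. The converse ``certainly terminating $\Rightarrow$ bounded'' is the substantive direction, and again I would prove the contrapositive. Assuming no bound exists, the monotonicity fact gives $\pathsOfLengthStartingIn{n}{\sosStateAbbr{C}{\sigma}} \neq \emptyset$ for every $n$, so $T_{C,\sigma}$ contains a vertex at every depth and is therefore infinite. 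Since $T_{C,\sigma}$ is an infinite, finitely branching tree, König's lemma yields an infinite path, i.e.\ $C$ does \emph{not} terminate certainly on $\sigma$.

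The only non-routine step is the König's lemma argument, which I would spell out via the standard descendant-counting construction: starting at the root $\sosStateAbbr{C}{\sigma}$, repeatedly pick a successor in $\succConfs$ whose rooted subtree is still infinite---such a successor exists at each step because a vertex with finitely many children but infinitely many descendants must have at least one child with infinitely many descendants. The greedily built infinite sequence is then the desired infinite computation path. I expect the main obstacle to be not conceptual but bookkeeping: making sure the construction is carried out on the correct subtree of the forest and invoking finite branching at exactly the point where the pigeonhole step is applied.
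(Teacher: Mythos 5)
Your proposal is correct and follows essentially the same route as the paper's own proof: both directions hinge on K\H{o}nig's infinity lemma applied to the finitely branching computation tree rooted at $\sosStateAbbr{C}{\sigma}$, with the unbounded-length hypothesis forcing the tree to be infinite and hence to contain an infinite path. The only differences are organizational---you argue the substantive direction by contrapositive and build the infinite path greedily from the root, whereas the paper argues by contradiction on the finiteness of the reachable subtree---but these are cosmetic variations of the same argument.
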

\begin{proof}
    The direction from right to left is trivial.
    For the other direction we show that the sub-tree $\compGraph'$ induced by $\succsConfs(\sosStateAbbr{C}{\state}) \coloneq \bigcup_{i \geq 0}\succConfs^i(\sosStateAbbr{C}{\state})$ is finite which implies that all paths have bounded length.
    Assume towards contradiction that $\compGraph'$ is infinite.
    By definition, $\compGraph'$ is finitely branching.
    Thus, by Kőnig's classic infinity lemma, there exists an infinite path $\conf_1 \conf_2 \ldots$ in $\compGraph'$.
    But then there also exists an infinite path starting in $\sosStateAbbr{C}{\state}$ since $\conf_1$ is reachable from there.
    This is a contradiction to the assumption that $C$ is certainly terminating for $\state$.
\end{proof}

\begin{proof}[Proof of \cref{thm:strong_term_for_state_unique_fp}]
    First, since $C'$ is universally certainly terminating we have by \cref{thm:koenig} that there exists $b \in \Nats$ such that $\pathsOfLengthStartingIn{n}{\sosStateAbbr{C}{\state'}} = \emptyset$ for all $n \geq b$ and $\state' \in \States$ and thus by \cref{thm:wlp-operational-semantics} and \cref{thm:wlp-decomp}, it holds for all $g \in \Weightings$ that
    \[
		\wlp{C'}{g} = \wp{C'}{g} ~.
    \]
    This implies that the $\wpsymbol$- and $\wlpsymbol$-characteristic functions of the loop $C = \WHILEDO{\guard}{C'}$ are equal, i.e.
	\begin{align*}
		& \charwp{\guard}{C'}{f} \\
		\eeq& \lam{X} \iverson{\neg\guard} \ivop f \mmadd \iverson{\guard} \ivop \wp{C'}{X} \\
		\eeq& \lam{X} \iverson{\neg\guard} \ivop f \mmadd \iverson{\guard} \ivop \wlp{C'}{X} \\
		\eeq& \charwlp{}{}{f} ~.
	\end{align*}
    Similarly, since $C$ is certainly terminating on $\state$, we also have
    \[
		\wlp{C'}{f}(\state) = \wp{C'}{f}(\state)
    \]
    and thus
    \[
        (\gfpop \charwp{\guard}{C'}{f})(\state) \qeq (\lfpop \charwp{\guard}{C'}{f})(\state) ~.
    \]
    Since $I_1$ and $I_2$ are fixed points of $\charwp{\guard}{C'}{f}$, 
    \[
		\lfpop \charwp{\guard}{C'}{f} \nnatord I_1, I_2 \nnatord \gfpop \charwp{\guard}{C'}{f}
    \]
    and thus $I_1(\sigma) = I_2(\sigma)$.
\end{proof}

\newpage
\section{Annotated Programs}

\subsection{Ski Rental}
\label{sec:app_ski_rental}

	%
	\begin{adjustbox}{max width=\textwidth}
		\centering
		
		\begin{minipage}{0.45\textwidth}
			\begin{align*}
				& \eqannotate{\varvaclen \sadd \varbuycost \gray{\eeq I}} \\
				%
				%
				%
				&\phiannotate{\iverson{\varvaclen = 0}\ivop \sone
					\ssadd  \iverson{\varvaclen > 0}\smul(\varvaclen 
					\sadd
					\varbuycost) 
				} \\
				&\WHILE{\varvaclen > 0} \\
				& \qquad \eqannotate{\varvaclen 
					\sadd
					\varbuycost
				} \\
				& \qquad \wpannotate{1 \smul (\varvaclen - 1)
					\ssadd
					\varbuycost
				} \\
				& \qquad \ASSIGN{\varvaclen}{\varvaclen - 1}\fatsemi \\
				& \qquad \eqannotate{1 \smul \varvaclen
					\ssadd
					\varbuycost
				} \\
				& \qquad \eqannotate{1 \smul \varvaclen \ssadd 1 \smul \varbuycost
					\ssadd
					\varbuycost
				} \\
				& \qquad \wpannotate{1 \smul (\varvaclen \sadd \varbuycost)
					\ssadd
					\varbuycost
				} \\
				& \qquad \{ \qquad \textnormal{\textcolor{gray}{(* rent *)}}  \\
				& \qquad \qquad \wpannotate{1 \smul (\varvaclen \sadd \varbuycost)} \\
				&\qquad \qquad \WEIGH{1} \\
				& \qquad \qquad \annotate{\varvaclen \sadd \varbuycost} \\
				&\qquad \} \BranchSymbol \{  \qquad \textnormal{\textcolor{gray}{(* buy *)}} \\
				& \qquad \qquad \eqannotate{\varbuycost } \\
				& \qquad \qquad \wpannotate{\varbuycost \smul \sone} \\
				& \qquad \qquad \WEIGH{\varbuycost} \fatsemi \\
				& \qquad \qquad \eqannotate{\sone} \\
				& \qquad \qquad \eqannotate{0} \\
				& \qquad \qquad \wpannotate{0 \sadd \varbuycost} \\
				&\qquad \qquad \ASSIGN{\varvaclen}{0}  \qquad \textnormal{\textcolor{gray}{(* terminate *)}}\\
				& \qquad \qquad \annotate{\varvaclen \sadd \varbuycost} \\
				&\qquad \} \\
				& \qquad \starannotate{\varvaclen \sadd \varbuycost \gray{\eeq I}} \\
				& \} \\
				& \annotate{\sone}
			\end{align*}
		\end{minipage}
		\qquad
		\begin{minipage}{0.45\textwidth}
			\begin{align*}
				&\wpannotate{\scriptstyle \iverson{\varvaclen=0} \ivop 0    \mmadd \iverson{0< \varbuycost}\ivop 
					\big( (2\varbuycost  -1) \sadd \iverson{\varvaclen \leq \varbuycost - 1}\ivop \varvaclen \big) 
				} \\
				&\ASSIGN{\varvaccount}{0}\fatsemi \\
				&  \eqannotate{\scriptstyle \iverson{\varvaclen = 0} \sadd  \iverson{\varvaccount  \geq \varbuycost} \ivop \varbuycost  \sadd  \iverson{\varvaccount  < \varbuycost} }   \\
				&\qquad \quad \quad     \phantannotate{\scriptstyle \ivop 
					\big( (2\varbuycost - \varvaccount -1) \sadd \iverson{\varvaclen \leq \varbuycost - \varvaccount - 1}\ivop \varvaclen \big)  \gray{\eeq I} } \\
				&  \phiannotate{\scriptstyle \iverson{\varvaclen = 0} \ivop 0 \sadd \iverson{\varvaclen >0} \ivop \big(  \iverson{\varvaccount +1 \geq \varbuycost} \ivop \varbuycost  \sadd  \iverson{\varvaccount +1 < \varbuycost} }   \\
				&\qquad \quad \quad     \phantannotate{\scriptstyle \ivop 
					\big( (2\varbuycost - \varvaccount -1) \sadd \iverson{\varvaclen \leq \varbuycost - \varvaccount - 1}\ivop \varvaclen \big) \big) } \\
				&\WHILE{\varvaclen > 0} \\
				& \qquad \eqannotate{\scriptstyle  \iverson{\varvaccount+1 \geq \varbuycost} \ivop \varbuycost  \sadd  \iverson{\varvaccount+1 < \varbuycost} }   \\
				&\qquad \quad \quad     \phantannotate{\scriptstyle \ivop \big(  
					(2\varbuycost - \varvaccount -1) \sadd \iverson{\varvaclen \leq \varbuycost - \varvaccount - 1}\ivop \varvaclen \big)  } \\
				& \qquad \eqannotate{\scriptstyle  \iverson{\varvaccount+1 \geq \varbuycost} \ivop \varbuycost  \sadd  \big( 1 \smul \iverson{\varvaclen=1} \ivop 0}  \\
				&\qquad \quad \quad     \phantannotate{\scriptstyle   \sadd \iverson{\varvaccount+1 < \varbuycost}\ivop 
					\big( (2\varbuycost - \varvaccount -1) \sadd \iverson{\varvaclen \leq \varbuycost - \varvaccount - 1}\ivop \varvaclen \big) \big) } \\
				& \qquad \eqannotate{\scriptstyle  \iverson{\varvaccount+1 \geq \varbuycost} \ivop \varbuycost  \sadd  \iverson{\varvaccount+1 < \varbuycost}}  \\
				&\qquad \quad  \phantannotate{\scriptstyle \ivop 1 \smul \big( \iverson{\varvaclen=1} \ivop 0 \mmadd \iverson{\varvaccount +1 \geq \varbuycost}\ivop \varbuycost \big.} \\
				&\qquad \quad \quad     \phantannotate{\scriptstyle   \sadd \iverson{\varvaccount+1 < \varbuycost}\ivop 
					\big( (2\varbuycost - \varvaccount -2) \sadd \iverson{\varvaclen \leq \varbuycost - \varvaccount - 1}\ivop (\varvaclen - 1) \big) \big) } \\
				& \qquad \wpannotate{\scriptstyle \iverson{\varvaccount+1 < \varbuycost} \ivop (1 \smul I\subst{\varvaccount}{\varvaccount+1}\subst{n}{n-1}) \sadd \iverson{\varvaccount+1 \geq \varbuycost} \ivop \varbuycost }  \\
				& \qquad \ASSIGN{\varvaclen}{\varvaclen - 1}\fatsemi \\
				& \qquad \wpannotate{\scriptstyle \iverson{\varvaccount+1 < \varbuycost} \ivop (1 \smul I\subst{\varvaccount}{\varvaccount+1}) \sadd \iverson{\varvaccount+1 \geq \varbuycost} \ivop \varbuycost }  \\
				& \qquad \ASSIGN{\varvaccount}{\varvaccount + 1}\fatsemi \\
				& \qquad \wpannotate{\scriptstyle \iverson{\varvaccount < \varbuycost} \ivop (1 \smul I) \mmadd \iverson{\varvaccount \geq \varbuycost} \ivop \varbuycost }  \\
				& \qquad \IF{\varvaccount < \varbuycost} \\
				& \qquad \wpannotate{\scriptstyle 1 \smul I	} \\
				& \qquad \qquad \WEIGH{1} \\
				& \qquad \annotate{\scriptstyle I	} \\
				& \qquad \ELSE \\
				& \qquad \eqannotate{\scriptstyle \varbuycost} \\
				& \qquad \wpannotate{\scriptstyle \varbuycost \smul 0} \\
				& \qquad \qquad \WEIGH{\varbuycost}\fatsemi \\
				& \qquad \eqannotate{\scriptstyle 0} \\
				& \qquad \wpannotate{\scriptstyle I\subst{n}{0}} \\
				&\qquad \qquad \ASSIGN{\varvaclen}{0} \\
				& \qquad \annotate{\scriptstyle I} \\
				& \qquad \} \\
				& \qquad \starannotate{\scriptstyle \iverson{\varvaclen=0} \ivop 0 \mmadd \iverson{\varvaccount \geq \varbuycost}\ivop \varbuycost } \\
				&\qquad \quad     \phantannotate{\scriptstyle   \mmadd \iverson{\varvaccount < \varbuycost}\ivop 
					\big( (2\varbuycost - \varvaccount -1) \mmadd \iverson{\varvaclen \leq \varbuycost - \varvaccount - 1}\ivop \varvaclen \big) \gray{\eeq I}
				} \\
				& \} \\
				& \annotate{\scriptsize \sone}
			\end{align*}
		\end{minipage}
	\end{adjustbox}
	%

\pagebreak
\subsection{Mutual Exclusion}
\label{sec:app_mutual_exclusion}

\begin{adjustbox}{max height=9.5cm}
    \begin{minipage}{1\textwidth}
    	\begin{align*}
    		& \succeqannotate{\scriptstyle\iverson{1 \leq \varaux \leq \varnumprocs \wedge \varprocs{\varaux} = \valw \wedge \varsem=0} \ivop \letw_\varaux^\omega} \\
    		& \succeqannotate{\scriptstyle  \letw_\varaux  \smul \iverson{1 \leq \varaux \leq \varnumprocs \wedge \varprocs{\varaux} = \valw \wedge \varsem=0} \ivop \letw_\varaux^\omega} \\
    		& \phiannotate{\scriptstyle \sbigadd_{\valchoose=1}^{N} \iverson{\varprocs{\valchoose} = \valw} \ivop (\letw_\valchoose  \smul \iverson{1 \leq \varaux \leq \varnumprocs \wedge \varprocs{\varaux} = \valw \wedge \varsem=0} \ivop \letw_\varaux^\omega)} \\
    		&\WHILE{\true} \\
    		& \qquad \wlpannotate{\scriptstyle \sbigadd_{\valchoose=1}^{N} \iverson{\varprocs{\valchoose} = \valw} \ivop (\letw_\valchoose  \smul \iverson{1 \leq \varaux \leq \varnumprocs \wedge \varprocs{\varaux} = \valw \wedge \varsem=0} \ivop \letw_\varaux^\omega)} \\
    		& \qquad \BIGBRANCH{\valchoose=1}{\varnumprocs}{\ASSIGN{\varprocid}{\valchoose}} \fatsemi \\
    		& \qquad \wlpannotate{\scriptstyle \iverson{\varprocsid = \valw} \ivop (\letw_\varprocid  \smul \iverson{1 \leq \varaux \leq \varnumprocs \wedge \varprocs{\varaux} = \valw \wedge \varsem=0} \ivop \letw_\varaux^\omega)} \\
    		& \qquad \IF{\varprocsid = \valn} \\
    		& \qquad  \qquad \wlpannotate{\scriptstyle \emptyset} \\
    		& \qquad \qquad \ASSIGN{\varprocsid}{\valw} \\
    		& \qquad  \qquad \succeqannotate{\scriptstyle \emptyset} \\
    		& \qquad  \qquad \annotate{\scriptstyle \iverson{1 \leq \varaux \leq \varnumprocs \wedge \varprocs{\varaux} = \valw \wedge \varsem=0} \ivop \letw_\varaux^\omega} \\
    		& \qquad \ELSEIF{\varprocsid = \valw} \\
    		& \qquad \qquad \succeqannotate{\scriptstyle  \letw_\varprocid  \smul \iverson{1 \leq \varaux \leq \varnumprocs \wedge \varprocs{\varaux} = \valw \wedge \varsem=0} \ivop \letw_\varaux^\omega} \\
    		& \qquad \qquad \wlpannotate{\scriptstyle \iverson{\varsem = 0} \ivop ( \letw_\varprocid  \smul \iverson{1 \leq \varaux \leq \varnumprocs \wedge\varprocs{\varaux} = \valw \wedge \varsem=0} \ivop \letw_\varaux^\omega)} \\
    		& \qquad \qquad \IF{\varsem > 0} \\
    		& \qquad \qquad \qquad  \wlpannotate{\scriptstyle \emptyset} \\
    		& \qquad \qquad \qquad \ASSIGN{\varsem}{\varsem - 1}\fatsemi \\
    		& \qquad \qquad \qquad \ASSIGN{\varprocsid}{\valc}\fatsemi \\
    		& \qquad \qquad \qquad \WEIGH{ \letc_\varprocid } \\
    		& \qquad \qquad \qquad  \succeqannotate{\scriptstyle  \emptyset}\\
    		& \qquad \qquad \qquad  \annotate{\scriptstyle \iverson{1 \leq \varaux \leq \varnumprocs \wedge \varprocs{\varaux} = \valw \wedge \varsem=0} \ivop \letw_\varaux^\omega}  \\
    		& \qquad \qquad \ELSE\\
            & \qquad \qquad \qquad \wlpannotate{\scriptstyle  \letw_\varprocid  \smul \iverson{1 \leq \varaux \leq \varnumprocs \wedge \varprocs{\varaux} = \valw \wedge \varsem=0} \ivop \letw_\varaux^\omega} \\
            & \qquad \qquad \qquad \WEIGH{ \letw_\varprocid  } \\
            & \qquad \qquad \qquad \annotate{\scriptstyle \iverson{1 \leq \varaux \leq \varnumprocs \wedge \varprocs{\varaux} = \valw \wedge \varsem=0} \ivop \letw_\varaux^\omega} \\
            & \qquad \qquad \} \\
            & \qquad\qquad   \annotate{\scriptstyle \iverson{1 \leq \varaux \leq \varnumprocs \wedge \varprocs{\varaux} = \valw \wedge \varsem=0} \ivop \letw_\varaux^\omega} \\
            & \qquad \ELSEIF{\varprocsid = \valc} \\
            & \qquad \qquad \wlpannotate{\scriptstyle \emptyset} \\
            & \qquad \qquad \ASSIGN{\varsem}{\varsem + 1}\fatsemi  \\
            & \qquad \qquad \ASSIGN{\varprocsid}{\valn}\fatsemi  \\
            & \qquad \qquad \WEIGH{ \letr_\varprocid  }  \\
            & \qquad \qquad  \succeqannotate{\scriptstyle \emptyset} \\
            & \qquad \qquad  \annotate{\scriptstyle \iverson{1 \leq \varaux \leq \varnumprocs \wedge \varprocs{\varaux} = \valw \wedge \varsem=0} \ivop \letw_\varaux^\omega} \\
            & \qquad \} \\
            & \qquad  \starannotate{\scriptstyle \iverson{1 \leq \varaux \leq \varnumprocs \wedge \varprocs{\varaux} = \valw \wedge \varsem=0} \ivop \letw_\varaux^\omega} \\
            & \} \\
            & \wlpannotate{\scriptstyle \snull}
    	\end{align*}
    \end{minipage}
\end{adjustbox}

\pagebreak
\subsection{Path Counting}
\label{app:path_counting}

\begin{figure}[th!]
    \begin{subfigure}[b]{0.4\textwidth}
        \centering
        \begin{align*}
        & \COMMENT{$\ASSIGN{res}{\texttt{[]}} \,\fatsemi$} \\
        & \COMPOSE{\ASSIGN{m}{0}}{\ASSIGN{c}{0}} \,\fatsemi \\
        & \WHILE{n > 0} \\
        & \quad \ASSIGN{n}{n-1} \\
        & \quad \{ \\
        & \quad\quad \COMMENT{$\APPEND{res}{0} \,\fatsemi$}  \\
        & \quad\quad \ASSIGN{c}{0} \\
        & \quad \} \BranchSymbol \{ \\
        & \quad\quad \COMMENT{$\APPEND{res}{1} \,\fatsemi$}  \\
        & \quad\quad \ASSIGN{c}{c+1} \,\fatsemi \\
        & \quad\quad \ASSIGN{m}{\max(m,c)} \\
        & \quad \} \\
        & \}
        \end{align*}
        \vspace{25mm}
        \caption{The program $\Cfib$ non-deterministically \enquote{creates} bitstrings of length $n$. The maximum number of consecutive 1's is stored in variable $m$.}
    \end{subfigure}
    \hfill
    \begin{subfigure}[b]{0.55\textwidth}
        \centering
        \begin{align*}
        & \eqannotate{\scriptstyle I \quad\quad\quad \gray{(\text{because } I\subst{n}{0} \eeq \iverson{m \leq 1})}} \\
        & \phiannotate{\scriptstyle \iverson{n=0} \ivop \iverson{m \leq 1} \ssadd \iverson{n > 0} \ivop I} \\
        & \WHILE{n > 0} \\
        & \quad \eqannotate{\scriptstyle \iverson{m \leq 1} \ivop (\iverson{c=0} \ivop \fib(n+2) \ssadd \iverson{c>0} \ivop \fib(n+1) ) \gray{\eeq I}} \\
        & \quad \eqannotate{\scriptstyle \iverson{m \leq 1} \ivop (\iverson{c=0} \ivop (\fib(n+1) \sadd  \fib(n)) \ssadd \iverson{c>0} \ivop \fib(n+1) )} \\
        & \quad \eqannotate{\scriptstyle \iverson{m \leq 1} \ivop (\fib(n+1) \ssadd \iverson{c=0} \ivop \fib(n))} \\
        & \quad \wpannotate{\scriptstyle \iverson{m \leq 1} \ivop \fib(n+1) \ssadd \iverson{m \leq 1 \land c=0} \ivop \fib(n)} \\
        & \quad \ASSIGN{n}{n-1} \\
        & \quad \wpannotate{\scriptstyle \iverson{m \leq 1} \ivop \fib(n+2) \ssadd \iverson{m \leq 1 \land c=0} \ivop \fib(n+1)} \\
        & \quad \{ \\
        & \quad \quad \eqannotate{\scriptstyle \iverson{m \leq 1} \ivop \fib(n+2)} \\
        & \quad \quad \wpannotate{\scriptstyle \iverson{m \leq 1} \ivop (\iverson{0=0}\ivop \fib(n+2) \sadd \iverson{0>0}\ivop \fib(n+1))} \\
        & \quad\quad \ASSIGN{c}{0} \\
        & \quad \quad \annotate{\scriptstyle \iverson{m \leq 1} \ivop (\iverson{c=0}\ivop \fib(n+2) \sadd \iverson{c>0}\ivop \fib(n+1))} \\
        & \quad \} \BranchSymbol \{ \\
        & \quad\quad \eqannotate{\scriptstyle \iverson{m \leq 1 \land c=0} \ivop \fib(n+1)} \\
        & \quad\quad \wpannotate{\scriptstyle \iverson{m \leq 1 \land c+1 \leq 1} \ivop (\iverson{c+1=0}\ivop \fib(n+2) \sadd \iverson{c+1>0}\ivop \fib(n+1))} \\
        & \quad\quad \ASSIGN{c}{c+1} \,\fatsemi \\
        & \quad\quad \eqannotate{\scriptstyle \iverson{m \leq 1 \land c \leq 1} \ivop (\iverson{c=0}\ivop \fib(n+2) \sadd \iverson{c>0}\ivop \fib(n+1))} \\
        & \quad\quad \annotate{\scriptstyle \iverson{\max(m,c) \leq 1} \ivop (\iverson{c=0}\ivop \fib(n+2) \sadd \iverson{c>0}\ivop \fib(n+1))} \\
        & \quad\quad \ASSIGN{m}{\max(m,c)} \\
        & \quad\quad \annotate{\scriptstyle \iverson{m \leq 1} \ivop (\iverson{c=0}\ivop \fib(n+2) \sadd \iverson{c>0}\ivop \fib(n+1))} \\
        & \quad \} \\
        & \quad \starannotate{\scriptstyle \iverson{m \leq 1} \ivop (\iverson{c=0}\ivop \fib(n+2) \sadd \iverson{c>0}\ivop \fib(n+1)) \gray{\eeq I} } \\
        & \} \\
        & \annotate{\scriptstyle \iverson{m\leq 1}}
        \end{align*}
        \caption{$\Cfib$ with annotations for verifying the loop invariant.\newline \newline \newline}
        \label{fig:fibAnnotated}
    \end{subfigure}
	%
\end{figure}

\pagebreak
\subsection{Knapsack}

\begin{figure}[htb]
    
    \centering
    \begin{align*}
    & \eqannotate{\scriptstyle 1 \ssadd \iverson{x \geq 8}\ivop 1 \ssadd \iverson{x \geq 13}\ivop 1} \\
    & \wpannotate{\scriptstyle \iverson{0 \leq 1 \land 0 \geq 8-x} \ivop 1\ssadd \iverson{0 = 0}\ivop 1 \ssadd \iverson{0 \leq 4 \land 0 \geq 8} \ivop 1} \\
    & \quad  \phantannotate{\scriptstyle \ssadd \iverson{0 \leq 3 \land 0 \geq 13-x}\ivop 1 \ssadd \iverson{0 \leq 2 \land 0 \geq 5}\ivop 1 \ssadd \iverson{0 \leq 6 \land 0 \geq 13}\ivop 1} \\
    &\COMPOSE{\ASSIGN{t}{0}}{\ASSIGN{r}{0}} \,\fatsemi \\
    & \eqannotate{\scriptstyle \iverson{t \leq 1 \land r \geq 8-x}\ivop 1 \ssadd \iverson{t = 0}\ivop 1 \ssadd \iverson{t \leq 4 \land r \geq 8} \ivop 1} \\
    & \quad  \phantannotate{\scriptstyle \ssadd \iverson{t \leq 3 \land r \geq 13-x}\ivop 1 \ssadd \iverson{t \leq 2 \land r \geq 5}\ivop 1 \ssadd \iverson{t \leq 6 \land r \geq 13}\ivop 1} \\
    & \wpannotate{\scriptstyle \iverson{t+2 \leq 3 \land r+5 \geq 13-x}\ivop 1 \ssadd \iverson{t+2 \leq 2 \land r+5 \geq 5}\ivop 1 \ssadd \iverson{t+2 \leq 6 \land r+5 \geq 13} \ivop 1} \\
    & \quad  \phantannotate{\scriptstyle \ssadd \iverson{t \leq 3 \land r \geq 13-x} \ivop 1\ssadd \iverson{t \leq 2 \land r \geq 5}\ivop 1 \ssadd \iverson{t \leq 6 \land r \geq 13}\ivop 1} \\
    &\BRANCH{\COMPOSE{\ASSIGN{t}{t+2}}{\ASSIGN{r}{r+5}}}{\SKIP} \,\fatsemi \\
    & \eqannotate{\scriptstyle \iverson{t \leq 3 \land r \geq 13-x} \ivop 1\ssadd \iverson{t \leq 2 \land r \geq 5} \ivop 1\ssadd \iverson{t \leq 6 \land r \geq 13}\ivop 1} \\
    & \wpannotate{\scriptstyle \iverson{t+3 \leq 2 \land r+x \geq 5}\ivop 1 \ssadd \iverson{t+3 \leq 6 \land r+x \geq 13} \ivop 1\ssadd \iverson{t \leq 2 \land r \geq 5} \ivop 1\ssadd \iverson{t \leq 6 \land r \geq 13}\ivop 1} \\
    &\BRANCH{\COMPOSE{\ASSIGN{t}{t+3}}{\ASSIGN{r}{r+x}}}{\SKIP} \,\fatsemi \\
    & \eqannotate{\scriptstyle \iverson{t \leq 2 \land r \geq 5}\ivop 1 \ssadd \iverson{t \leq 6 \land r \geq 13}\ivop 1} \\
    & \wpannotate{\scriptstyle \iverson{t+4 \leq 6 \land r+8 \geq 13}\ivop 1 \ssadd \iverson{t \leq 6 \land r \geq 13}\ivop 1} \\
    &\BRANCH{\COMPOSE{\ASSIGN{t}{t+4}}{\ASSIGN{r}{r+8}}}{\SKIP} \\
    & \annotate{\scriptstyle \iverson{t \leq 6 \land r \geq 13} \ivop 1}
    \end{align*}
    \caption{Program modelling the Knapsack problem (see \Cref{sec:knapsack}). The three branchings correspond to choosing the respective task or not. Variables $t$ and $r$ stand for time and reward, respectively.}
\end{figure}


\newpage

\section{Further Applications of Weighted Programming}

\subsection{Reasoning about $n$-th best solutions}
\label{sec:knapsack}
\begin{center}
    \vspace{-1ex}
    \begin{adjustbox}{max width=1\linewidth}
        \fbox{%
            \parbox{1.2\textwidth}{%
                \smallskip%
                \hspace*{.5em}
                \begin{minipage}{.33\linewidth}
                    \begin{tabular}{l@{\quad}l}
                        \textbf{Field:}& Discrete Optimization\\
                        \textbf{Problem:}& Knapsack Problem
                    \end{tabular}%
                \end{minipage}%
                \hfill
                \begin{minipage}{.37\linewidth}
                    \begin{tabular}{l@{\quad}l}
                        \textbf{Model:}& Optimization problem \\
                        \textbf{Semiring:}& Natural numbers
                    \end{tabular}%
                \end{minipage}%
                \hfill
                \begin{minipage}{.25\linewidth}\begin{tabular}{l@{\quad}l}
                        \textbf{Techniques:}& $\wpsymbol$
                    \end{tabular}%
                \end{minipage}%
                \smallskip%
            }%
        }%
    \end{adjustbox}
\end{center}
We apply path counting by weighted programming to \emph{quantify the ambiguity} of the non-determinism in a program. Assume we are given a nondeterministic weighted program $C$ modeling a discrete optimization problem. Each branching $\BRANCH{...}{...}$ in $C$ corresponds to a possible choice to be made by an optimizer.
Further, we assume that all program variables are either initialized explicitly by the program or read-only.
We will refer to such read-only variables as \emph{program parameters}.
Given an initial parameters state $\sigma$, the optimization goal is to reach a final state $\sigma'$ satisfying a given predicate $\guard$ that maximizes $\sigma'(r)$ for some fixed special program variable $r$ --- a \enquote{reward}, or payoff. Examples for the predicate $\guard$ include, e.g.\ thresholds on time- or energy consumption.  A solution to the optimization problem modeled by $C$ corresponds to a \emph{determinization} $C'$ obtained from $C$ by replacing every nondeterministic choice by a deterministic one.
The reward $\sigma'(r)$ achieved by $C'$ on initial state $\sigma$ is called the \emph{score} of $C'$ w.r.t.\ $\sigma$.
We can view the scores as a function $\rho \colon \States \to \Nats$ that only depends on the parameter variables.
The solution $C'$ is called \emph{valid} if for all parameters $\state$, it reaches a final state satisfying $\guard$.
Note that valid solution are not necessarily optimal.

Given some problem parameters $\sigma$, we are now interested in determining the \emph{rank} of a given solution $C'$ relative to the optimal solution implicitly encoded in $C$, e.g.\ \enquote{is $C'$ among the 3 best solutions?}.
Assume that $C'$ is valid.
Then $C'$ is among the best $n$ solutions under parameters $\state$ iff
\[
    \wp{C}{\iverson{\guard \land r \geq \rho} \ivop 1}(\state) \leq n ~.
\]

We illustrate determining ranks of solutions by $\wpsymbol$-reasoning by means of the Knapsack problem:
Suppose we operate a cloud computer earning money by completing computational tasks. There are currently 3 tasks in the queue,
which take 2, 3, and 4 hours to complete, respectively, and generate a reward of 5, $x$, and 8 euros, where $x$ is a program variable (considered an input parameter).
Our goal is to maximize the total reward earned within $6$ hours, i.e. $\guard = (t \leq 6)$.
Hence, we have to decide on a subset of the three tasks to stay within this time limit since completing all three would take 7 hours.
This optimization problem is modeled by the program $C$ in \cref{fig:knapsack}. We initialize the variables $t$ (for time) and $r$ (for reward) by $0$ and may subsequently choose a subset of the available tasks. These choices are modeled by the three nondeterministic branchings.

Choosing the first and the third task always yields a valid solution ($t \leq 6$) generating an accumulated reward of $13$ euros. This solution is modeled by the determinization $C'$ of $C$ in \cref{fig:knapsack}.
We now compute
\[
    \wp{C}{\iverson{t \leq 6 \land r \geq 13 } \ivop 1}   = 1 \sadd \iverson{x \geq 8} \ivop 1 \sadd \iverson{x \geq 13} \ivop 1 ~,
\]
where we can readily read off that $C'$ is (i) the unique optimal solution if $x < 8$, (ii) among the two best solutions if $8 \leq x < 13$, and (iii) among the three best solutions else.

\begin{figure}[t]
    \begin{minipage}{0.45\textwidth}
        \begin{align*}
        & \wpannotate{1 \ssadd \iverson{x \geq 8} \ivop 1 \ssadd \iverson{x \geq 13} \ivop 1}  \\
        %
        %
        &\COMPOSE{\ASSIGN{t}{0}}{\ASSIGN{r}{0}} \,\fatsemi \\
        %
        %
        %
        &\BRANCH{\COMPOSE{\ASSIGN{t}{t+2}}{\ASSIGN{r}{r+5}}}{\SKIP} \,\fatsemi \\
        %
        %
        %
        &\BRANCH{\COMPOSE{\ASSIGN{t}{t+3}}{\ASSIGN{r}{r+x}}}{\SKIP} \,\fatsemi \\
        %
        %
        %
        &\BRANCH{\COMPOSE{\ASSIGN{t}{t+4}}{\ASSIGN{r}{r+8}}}{\SKIP} \\
        & \annotate{\iverson{t \leq 6 \land r \geq 13} \ivop 1}
        \end{align*}
    \end{minipage}
    \qquad
    \begin{minipage}{0.45\textwidth}
        \begin{align*}
        & \phantannotate{\phantom{1 \ssadd \iverson{x \geq 8} \ivop 1 \ssadd \iverson{x \geq 13}}} \\
        %
        %
        &\COMPOSE{\ASSIGN{t}{0}}{\ASSIGN{r}{0}} \,\fatsemi \\
        %
        %
        %
        &\COMPOSE{\ASSIGN{t}{t+2}}{\ASSIGN{r}{r+5}}\,\fatsemi \\
        %
        %
        %
        &\SKIP \,\fatsemi \\
        %
        %
        %
        &\COMPOSE{\ASSIGN{t}{t+4}}{\ASSIGN{r}{r+8}} \\
        & \phantannotate{\phantom{\iverson{t \leq 6 \land r \geq 13}}}
        \end{align*}
    \end{minipage}
    \caption{Program $C$ modeling the Knapsack problem (left) and a valid solution $C'$ (right). The three branchings model the possible choices (either choosing the respective task or not). The variables $t$ and $r$ stand for time and reward, respectively. Variable $x$ is an input parameter.}
    \label{fig:knapsack}
    \centering
\end{figure}

\end{document}